\documentclass[11pt,a4paper]{scrartcl}

\usepackage{amsmath}
\usepackage{amsfonts}
\usepackage{amsthm}
\usepackage[small,sanserif]{complexity}
\usepackage{mdframed}
\usepackage{fullpage}
\usepackage[utf8]{inputenc}
\usepackage{hyperref}
\hypersetup{
    unicode=true,
    pdftitle={A simpler proof of existence of quantum weak coin flipping with arbitrarily small bias},
    pdfauthor={Dorit Aharonov, Andr\'e Chailloux, Maor Ganz, Iordanis Kerenidis, Lo\"ick Magnin},
    colorlinks=true,
    linkcolor=red!50!black,          
    citecolor=blue!50!black,        
    urlcolor=blue!50!black          
}
\usepackage{microtype}
\usepackage[small,bf,sf]{caption}
\usepackage{pifont}
\usepackage{bm}
\usepackage{pgf,pgfarrows,pgfnodes}
\usepackage{tikz}
\usetikzlibrary{shapes.geometric}
\usetikzlibrary{shapes.misc}
\usetikzlibrary{shapes.symbols}
\usetikzlibrary{decorations.text}
\usetikzlibrary{positioning,backgrounds,shadows}
\usepackage{aliascnt}

\newtheoremstyle{sansthesis}%
{6pt}{6pt}{\itshape}{}{\bfseries\sffamily}{}{.7em}{}
\theoremstyle{sansthesis}
\newtheorem{thm}{Theorem}

\newtheorem{definition}{Definition}

\newaliascnt{lemcount}{definition}
\newtheorem{lem}[lemcount]{Lemma}
\aliascntresetthe{lemcount}

\newaliascnt{propcount}{definition}
\newtheorem{proposition}[propcount]{Proposition}
\aliascntresetthe{propcount}

\newaliascnt{claimcount}{definition}

\aliascntresetthe{claimcount}

\newtheorem{fact}[definition]{Fact}

\newaliascnt{corcount}{definition}
\newtheorem{cor}[corcount]{Corollary}
\aliascntresetthe{corcount}

\newaliascnt{remcount}{definition}

\aliascntresetthe{remcount}


\makeatletter
\def\tagform@#1{\maketag@@@{\ignorespaces#1\unskip\@@italiccorr}}
\let\orgtheequation\theequation
\def\theequation{(\orgtheequation)}
\makeatother

\newcommand{\ket}[1]{| #1 \rangle}
\newcommand{\bra}[1]{\langle #1 |}
\newcommand{\braket}[2]{\langle #1 | #2 \rangle}

\newcommand{\proj}[1]{| #1 \rangle\!\langle #1 |}
\newcommand{\triple}[3]{\langle{#1}|{#2}|{#3}\rangle}
\newcommand{\norm}[1]{\left\| #1 \right\|}
\newcommand{\abs}[1]{\left| #1 \right|}
\renewcommand{\sp}{\mathrm{sp}}
\DeclareMathOperator{\prob}{prob}
\DeclareMathOperator{\supp}{supp}
\newcommand{\ahalf}{\frac{1}{2}}
\newcommand{\eps}{\varepsilon}
\newcommand{\dt}[1]{\emph{#1}}
\renewcommand{\d}{\mathrm{d}}

\DeclareMathOperator{\tr}{Tr}
\DeclareMathOperator{\Tr}{Tr}
\DeclareMathOperator{\cl}{cl}
\DeclareMathOperator{\interior}{int}
\newcommand\I{\mathbb{I}}
\newcommand\N{\mathbb{N}}
\newcommand{\fpp}{f^{\prime\prime}}

\renewcommand{\AA}{\mathcal{A}}
\newcommand{\BB}{\mathcal{B}}

\newcommand{\HH}{\mathcal{H}}

\newcommand{\MM}{\mathcal{M}}

\renewcommand{\PP}{\mathcal{P}}

\renewcommand{\R}{\mathbb{R}}

\newcommand{\COMMENT}[1]{}

\newcommand{\dnote}[1]{\textcolor{blue} {{\textbf{(Dorit:}}
#1\textbf{)}}}

\title{A simpler proof of existence of quantum weak coin flipping with arbitrarily small bias}
\author{
	Dorit Aharonov\thanks{The School of Computer Science and Engineering, The Hebrew University of Jerusalem, Israel},
	Andr\'e Chailloux\thanks{SECRET Project --- INRIA Rocquencourt, 78153 Le Chesnay Cedex, France},
	Maor Ganz$^*$\!\!, \\
	Iordanis Kerenidis\thanks{LIAFA, Universit\'e  Paris Diderot; CNRS} , and 
	Lo\"ick Magnin$^*$}
\begin{document}

\maketitle

\begin{abstract}
Mochon's proof \cite{Moc07} of existence of quantum weak coin flipping with arbitrarily  small bias is a fundamental result in quantum cryptography,  but at the same time one of the least understood. 
Though used several times as a black box in important follow-up results  \cite{Ganz09, CK09, AS10, CK11,KZ13} the result has not been peer-reviewed,  its novel techniques (and in particular Kitaev's point game formalism)  have not been applied anywhere else, and an explicit protocol is missing. We believe that truly understanding the existence proof and the novel techniques it relies on would constitute a major step in quantum information theory, leading to deeper understanding of entanglement and of quantum protocols in general. In this work, we make a first step in this direction. We simplify parts of Mochon's construction considerably, making about $20$ pages of analysis in the original proof superfluous, clarifying some other parts of the proof on the way, and presenting the proof in a way which is conceptually easier to grasp.  We believe the resulting proof of existence is easier to understand, more readable, and certainly verifiable.  Moreover, we analyze the resources needed to achieve a bias $\eps$ and show that the number of qubits is $O(\log \frac{1}{\eps})$, while the number of rounds is $\left(\frac{1}{\eps}\right)^{O(\frac{1}{\eps})}$. A true understanding of the proof,  including Kitaev's point-game techniques and their applicability, as well as completing the task of constructing an explicit (and also simpler and more efficient)  protocol, are left to future work. 
\end{abstract}

\section*{Introduction}
\phantomsection
\addcontentsline{toc}{section}{Introduction}

Coin flipping is a cryptographic primitive that enables two distrustful and far apart parties, Alice and Bob, to create a random bit that remains unbiased even if one of the players tries to force a specific outcome.
It was first proposed by Blum~\cite{Blu83} and has since found numerous applications in two-party secure computation~\cite{Gol09}. 
In the classical world, coin flipping is possible under computational  assumptions, such as the hardness of the factoring or the discrete log problems. However, in the information theoretical setting,  i.e., without any computational assumptions, it has been shown by Cleve~\cite{Cle86} that in any classical protocol, one of the players can always force his or her desired outcome with probability $1$.

Quantum information has given us the opportunity to revisit the notion of information theoretical security in cryptography. 
The first breakthrough result was a protocol of Bennett and Brassard~\cite{BB84} that showed how to securely distribute a secret key between two players in the presence of an omnipotent eavesdropper.
Thenceforth, a long series of work has focused on which other cryptographic primitives are possible with the help of quantum information.
Unfortunately, the subsequent results were not positive.
Mayers \cite{May97} and Lo and Chau \cite{LC97} proved the impossibility of secure quantum bit commitment and oblivious transfer and consequently of any type of two-party secure computation~\cite{May97,LC97,DKSW07}.
However, several weaker variants of these primitives have been shown to be possible~\cite{HK04,BCH+08}.

The case of coin flipping is one of the most intriguing primitives in this respect.
Even though the results of Mayers and of Lo and Chau  show that information theoretically secure perfect coin flipping (i.e. where the resulting coin is perfectly unbiased) is impossible also in the quantum world,  they left open the question of whether one can construct a quantum coin flipping protocol where no player could bias the coin with probability arbitrarily close to $0$.
The subject of this paper is exactly this question; we start with some  historical background. 

\paragraph{Quantum coin flipping}
We begin with a more precise definition of coin flipping. 
Two variants of quantum coin flipping (CF) have been  studied: \emph{strong coin flipping} and \emph{weak coin flipping}. 
A  strong coin  flipping protocol with bias $\eps$ is a protocol in which Alice and Bob exchange messages such that the following holds. 
First, if both  players follow the protocol, then they agree on the outcome and the outcome is $0$ or $1$ with equal probability.
Moreover, it is guaranteed that neither  Alice nor Bob can force the outcome $0$ or $1$ with probability more than $1/2+\eps$, if they try to cheat.
In other words, no dishonest  player (playing against an honest player) can bias the coin towards {\it any} of the outcomes with probability higher than $\eps$.
For weak coin flipping (WCF), Alice and Bob have an \textit{a priori} desired coin outcome. 
In other words the two values of the coin can be thought of as `Alice wins' and `Bob wins'. 
A weak coin flipping protocol with bias $\eps$ guarantees that no dishonest player (playing against an honest player) can bias the coin towards his or her desired outcome with probability 
greater than $\eps$.
The subtle difference between the weak and strong CF versions seems unimportant at first sight; indeed, in the classical setting it does not make a difference.
In the quantum world, however, the two are very different. 
Note that obviously, strong CF implies weak CF with the same bias. 

Aharonov, Ta-Shma, Vazirani, and Yao~\cite{ATVY00}  provided the first quantum strong CF protocol with bias
bounded below $1/2$; strictly speaking, their bias was $\eps< 0.4143$.
Then, Ambainis~\cite{Amb04} described a quantum strong CF protocol with an improved bias of $1/4$. 
Subsequently, a number of different protocols have been  proposed~\cite{SR01,NS03,KN04} that achieved the same bound of $1/4$.

On the other side, in a breakthrough result, Kitaev~\cite{Kit03} proved a lower bound on the best possible bias 
of any strong CF protocol.
Using a formulation of quantum CF protocols as semidefinite programs, and the duality of semidefinite programming, he showed that the bias of any strong CF protocol is bounded from below by $1/\sqrt{2}-1/2$ (for a proof see e.g.~\cite{ABDR04}).
Kitaev's result rules out the existence of strong quantum CF protocols with arbitrarily small bias. 
Historically, this result had a positive, rather than a negative effect; it highlighted the fact that the difference between weak and strong CF is meaningful in the quantum setting, since its proof does not apply to the weak case.
Hence, this negative result in fact gave hope that a quantum WCF protocol with arbitrarily small bias might exist. 

Around that time, a series of works started to provide 
better and better understanding of 
WCF. 
First, Spekkens and Rudolph~\cite{SR02} constructed a WCF protocol with bias $1/\sqrt{2}-1/2$. This is  
a strange coincidence, since Kitaev's lower bound of $1/\sqrt{2}-1/2$ 
applies solely for strong CF protocols and not for weak ones. 
Ambainis~\cite{Amb04} then proved that the number of rounds of 
communication between Alice and Bob for achieving a bias of $\eps$ in a 
(weak or strong) CF protocol is lower-bounded by $\Omega(\log\log 1/\eps)$, 
and thus a WCF protocol with arbitrarily small bias, if exists, 
cannot be achieved with a constant number of rounds.
Mochon then described a WCF 
protocol with bias $0.192$~\cite{Moc04}; 
this was the first result in which Kitaev's lower bound on strong 
CF was broken by an explicit WCF protocol. 
Mochon later showed that this protocol was a member of a 
family of protocols, the best ones achieving a bias of $1/6$~\cite{Moc05}.
Finally, in a breakthrough result, Mochon
resolved the question of the existence of near-perfect 
quantum weak coin flipping to the affirmative, 
and proved the existence of a protocol with bias $\eps$ for 
any $\eps>0$ ~\cite{Moc07}. 

The work of Mochon is a major advance not only because of its result, 
which resolved an intriguing question which was open for a long time, 
but also, and perhaps mainly, because of its techniques. 
The central tool, used also in 
Mochon's earlier work~\cite{Moc05} is a formalism due to Kitaev, of 
\textit{point games}. In this formalism, a protocol which is a sequence 
of unitaries to be applied by Alice and Bob in turns,
is viewed as a semidefinite program; the dual of this program
provides a bound on the 
security of the protocol; and, most importantly, 
the pair of primal and dual together   
are represented by {\it a sequence of sets of points on the $2$-dimensional 
plane}, called a {\it point game}.  
The aim is then to construct such point games whose 
parameters correspond to a protocol with arbitrarily small bias. 
This project is highly complicated and was approached by Mochon 
using further ideas 
related to operator monotone functions; in his work achieving the almost 
perfect protocol \cite{Moc07}, 
Mochon further develops the techniques to time {\it independent} point games, 
(which he attributes to Kitaev as well); he   
applies quite heavy analysis to derive time independent point games 
with bias arbitrarily close to zero. 

There have been several interesting 
applications of this result so far. 
Ganz \cite{Ganz09} and Aharon and Silman \cite{AS10} 
derived a quantum leader election protocol 
of logarithmically many rounds, using Mochon's protocol as a subroutine.  
Leader election is a cryptographic primitive which generalizes CF 
to $n$ players who need to choose a leader among them with equal probability. 
Chailloux and Kerenidis \cite{CK09} used the result of almost perfect WCF to 
derive an optimal strong coin flipping protocol, of the best 
bias possible, namely $1/\sqrt{2}-1/2 + \delta$ for arbitrarily small 
$\delta$. They do this using   
a classical reduction from any weak CF protocol with 
bias $\eps$ to a strong CF protocol 
with bias at most $1/\sqrt{2}-1/2 + 2\eps$. 
This closed the question of strong CF since 
this result is tight, given Kitaev's lower bound on strong CF. 
Mochon's weak CF protocol has 
also beed used in a quantum reduction to derive an optimal 
quantum bit commitment protocol~\cite{CK11}. 
More recently, this protocol has been used to allow two players to
achieve correlated equilibria of games without any computational assumptions and without a mediator~\cite{KZ13}.

Undoubtedly, Mochon's result is a fundamental result in quantum cryptography. 
Nevertheless, it remained so far one of the least understood results. 
In particular, the paper was not peer-reviewed, and no explicit protocol 
is known or had been derived from this existence proof. Moreover,  
all the applications of the result mentioned above use the result 
as a black box, and its novel and beautiful
techniques were never given another interpretation, or applied 
in any other context. 
Perhaps this lack of understanding, almost $7$ 
years after the result was 
derived, can be attributed at least in part to the fact that  
Mochon's paper is $80$-page long and extremely technical. The paper 
delves into semidefinite programming duality, operator monotone functions, perturbation theory,
time dependent and time independent point games and, in the end, 
arrives at an existence proof of a protocol which is non-explicit and 
cannot be described in a simple interactive manner between Alice and Bob.
  
We believe that an understanding of the proof of Mochon's breakthrough result 
and its techniques 
would constitute a major advance in quantum information theory,
quantum cryptography, and 
quantum protocols. Possibly, it will also deepen  
our understanding of the flippant notion of quantum entanglement,   
since in essence, quantum CF is simply a protocol 
to distribute entanglement (more precisely, an EPR state) 
between two parties, where at least one of them is honest. 
Such an understanding of the result 
should probably be demonstrated by an understandable 
and an explicit description of a CF protocol.  
The contribution of this paper is far from achieving these important goals, 
but we believe it constitutes an important step, in that it provides a simpler, 
shorter proof, which is readable and verifiable, and also whose structure 
is conceptually comprehensible.  
We proceed to provide an overview of the proof and subsequently describe where and
in what ways this proof deviates from Mochon's proof.

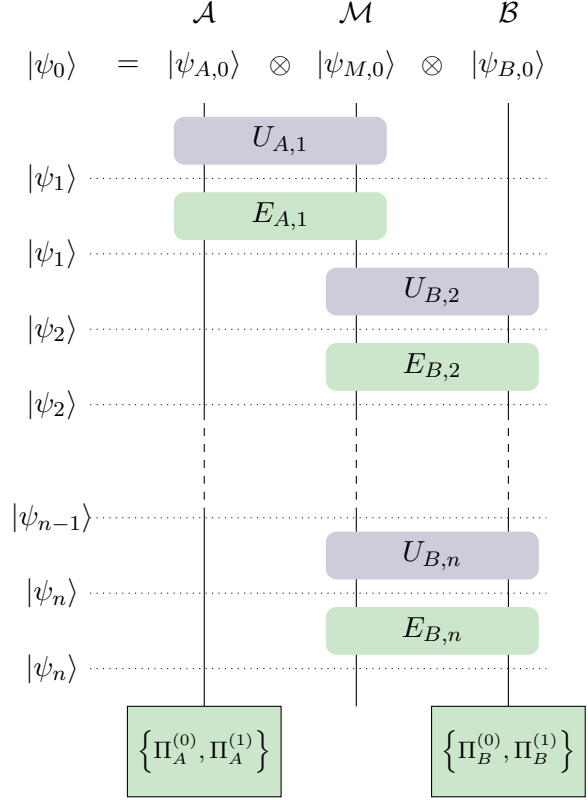
\begin{figure}
\begin{center}
\begin{tikzpicture}[
	unitary/.style={rectangle, rounded corners,  fill=blue!30!black!20, very thick, minimum width=2.8cm},
	proj/.style={rectangle, rounded corners,  fill=green!50!black!20, very thick, minimum width=2.8cm}]
	
	\newcommand{\xa}{2}
	\newcommand{\xm}{4}
	\newcommand{\xb}{6}
	\newcommand{\xam}{3}
	\newcommand{\xmb}{5}

	\draw (\xa,-1) -- (\xa,-5.1); \draw[dashed] (\xa,-5.1) -- (\xa,-6.4); \draw (\xa,-6.4) -- (\xa, -9);
	\draw (\xm,-1) -- (\xm,-5.1); \draw[dashed] (\xm,-5.1) -- (\xm,-6.4); \draw (\xm,-6.4) -- (\xm, -9);
	\draw (\xb,-1) -- (\xb,-5.1); \draw[dashed] (\xb,-5.1) -- (\xb,-6.4); \draw (\xb,-6.4) -- (\xb, -9);

	\node at (\xa,.2) {$\AA$}; \node at (\xm,.2) {$\MM$}; \node at (\xb,.2) {$\BB$};
	
	\node at (0,-.5) {$\ket{\psi_0}$};  \node at (1,-.5) {$=$};
	\node at (\xa,-.5) {$\ket{\psi_{A,0}}$}; \node at (\xam,-.5) {$\otimes$};
	 \node at (\xm,-.5) {$\ket{\psi_{M,0}}$}; \node at (\xmb,-.5) {$\otimes$};
	 \node at (\xb,-.5) {$\ket{\psi_{B,0}}$};

	\node[unitary] at (\xam, -1.5) {$U_{A,1}$};
	\node[proj] at (\xam, -2.5) {$E_{A,1}$};
	\node at (0,-2) {$\ket{\psi_1}$}; \draw[dotted] (.5,-2) -- (6.5,-2);
	\node at (0,-3) {$\ket{\psi_1}$}; \draw[dotted] (.5,-3) -- (6.5,-3);

	\node[unitary] at (\xmb, -3.5) {$U_{B,2}$}; 
	\node[proj] at (\xmb, -4.5) {$E_{B,2}$};
	\node at (0,-4) {$\ket{\psi_2}$}; \draw[dotted] (.5,-4) -- (6.5,-4);
	\node at (0,-5) {$\ket{\psi_2}$}; \draw[dotted] (.5,-5) -- (6.5,-5);
	
	\node[unitary] at (\xmb, -7) {$U_{B,n}$}; 
	\node[proj] at (\xmb, -8) {$E_{B,n}$};
	\node at (0,-6.5) {$\ket{\psi_{n-1}}$}; \draw[dotted] (.5,-6.5) -- (6.5,-6.5);
	\node at (0,-7.5) {$\ket{\psi_n}$}; \draw[dotted] (.5,-7.5) -- (6.5,-7.5);
	\node at (0,-8.5) {$\ket{\psi_n}$}; \draw[dotted] (.5,-8.5) -- (6.5,-8.5);

	\node[rectangle, draw, minimum width=1.8cm,minimum height=1.2cm, fill=green!50!black!20] at (\xa,-9.6) {\footnotesize{$\left\{\Pi_A^{(0)} ,\Pi_A^{(1)}\right\}$}};
	\node[rectangle, draw, minimum width=1.8cm,minimum height=1.2cm, fill=green!50!black!20] at (\xb,-9.6) {\footnotesize{$\left\{\Pi_B^{(0)} ,\Pi_B^{(1)}\right\}$}};
\end{tikzpicture}
\end{center}
\caption{Representation of a coin flipping protocol. Alice and Bob start with a separable state on the spaces $\AA \otimes \MM \otimes \BB$. At every odd round $i$ Alice applies a unitary $U_{A,i}$ and a  projection $E_{A,i}$ on the space $\AA \otimes \MM$ and at every even round $i$ Bob applies a unitary $U_{B,i}$ and a  projection $E_{B,i}$ on the space $\MM \otimes \BB$. At the end, they measure their private registers to obtain their final outcomes.}
\label{fig:cfprotocol}
\end{figure}

\paragraph{Overview of the proof}\ \\

\noindent
\textit{Step 1: SDPs and dual feasible points.}\quad 
A CF protocol is defined as in \autoref{fig:cfprotocol}. The first step involves presenting such an interactive protocol in terms of a semidefinite program. Define Bob's optimal 
cheating probability $P_B^*$ to be the maximum probability (over any possible strategy of Bob) that
Alice outputs ``1'' when she plays honestly, that is, 
she declares Bob the winner of the coin flip. Likewise, define 
$P_A^*$ to be the maximum probability (over any possible strategy of Alice) 
that Bob declares Alice to be the winner and outputs ``0''.  

We can write $P_B^*$ as 
the value of a semidefinite program as follows. For each round $i$, we consider a variable $\rho_{AM,i}$, which is the reduced density matrix of the state at time $i$ for the union of Alice's and the message qubits. The constraints on those variables are of two types. At her turn, Alice honestly applies the operation $U_{A,i}$ (as well as a projection $E_{A,i}$, which we add for technical reasons). This happens at every odd round $i$. 
On even rounds, we only require that Alice's state on the space $\AA$ does not change; 
this follows from the fact that Bob is the one to operate at those steps. We now optimize over all variables that satisfy these constraints; this corresponds to Bob optimizing his operations, in order to maximize the probability that at the end of the protocol Alice gets the outcome ``1'', when she performs her final projection. 
(\autoref{thm:primal}). We can of course write $P_A^*$ similarly as a semindefinite program 
with variables involving Bob's reduced density matrices. 

This primal formulation is, unfortunately, not suitable for proving upper bounds on the cheating probabilities $P_A^*$ and $P_B^*$, since they are defined by a maximization: any set of matrices $\{\rho_i\}$ that satisfies the constraints will lead to a lower bound on the cheating probabilities, rather than an upper bound. 

We therefore consider the dual of these semidefinite programs (\autoref{thm:dual}). For each primal constraint, i.e. for each protocol round, we define a dual variable $Z_{A,i}$, which is a positive semidefinite matrix that satisfies certain constraints (that arise from dualizing the primal constraints). 
Since the constraints are on matrices, so are the variables $Z_{A,i}$ of this dual program.
Any solution of the dual program is referred to as
a  \emph{dual feasible point}, and can be shown to provide an 
\textit{upper bound} on Bob's 
optimal cheating probability, by the 
duality of semidefinite programming. In fact, it turns out that 
$P_B^*\leq \triple{\psi_0}{Z_{A,0} \otimes \I_\MM \otimes \I_\BB}{\psi_0}$, where $\ket{\psi_0}$ is the initially shared separable state.   
In other words, a dual feasible point, which is a set of matrices 
$\{Z_{A, i}\}$ satisfying certain constraints, can be seen as a witness to the 
security of the protocol against a cheating Bob.
Similarly, we can write a dual SDP for Alice's cheating probability $P_A^*$ and upper bound $P_A^*$ by a dual feasible point $\{Z_{B,i}\}$ as $P_A^*\leq \triple{\psi_0}{\I_\AA \otimes \I_\MM \otimes Z_{B,0}}{\psi_0}$.

\COMMENT{
Note that taking the dual creates a ``reversal of time''. The constraints
in the primal express what states $\rho_{AM,i+1}$ are allowed given the state $\rho_{AM,i}$ and starting from the fixed state $\rho_{AM,0}$ that corresponds to the joint state at the start of the protocol. On the other hand, the constraints of the dual express what positive matrices $Z_{A,i-1}$ are allowed given the matrix $Z_{A,i}$ 
\dnote{I don't think this is explained here at all. I didn't understand it in any case from the above explanation. Can we explain it or omit this paragraph from here and move it to where it can be explained? It is an important point and we want it to be clear}
and starting from the fixed final matrix $Z_{A,n}=\Pi_{A}^{(1)}$ that corresponds to the final projection of the protocol where Alice output ``1''. 
}

Our goal is to find a WCF protocol, 
together with two dual feasible points (one for a cheating Alice 
and one for a cheating Bob) which would
provide a certificate for the security of the protocol. 
Ideally, we would like to optimize over all such protocols to 
find the best CF protocol, and find the best possible bias. 
However, even though given a protocol one can find an upper bound on the cheating probability via the dual semidefinite program, this formulation of coin flipping does not provide any intuition on what type of protocols one should be looking for.
The solution proposed by Mochon, following Kitaev, consists of finding a different but 
equivalent representation of the problem. 


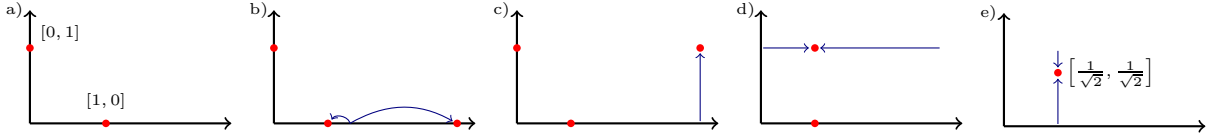
\begin{figure}
\begin{tikzpicture}
	\node at (-.2, 1.5) {\tiny{a)}};
	\draw[->,thick] (0,0) -- (0,1.5);
	\draw[->,thick] (0,0) -- (2.65,0);
	\fill[color=red] (0,1) circle (.05);
	\fill[color=red] (1,0) circle (.05);
	\node at (.4,1.2) {\tiny{$[0,1]$}};
	\node at (1,.3)  {\tiny{$[1,0]$}};
\end{tikzpicture}
\begin{tikzpicture}
	\node at (-.2, 1.5) {\tiny{b)}};
	\draw[->,thick] (0,0) -- (0,1.5);
	\draw[->,thick] (0,0) -- (2.65,0);
	\fill[color=red] (0,1) circle (.05);
	\fill[color=red] (.71,0) circle (.05);
	\fill[color=red] (2.41,0) circle (.05);
	\draw[->,blue!50!black] (1,0) to [out=100,in=60] (.76,0.05);
	\draw[->,blue!50!black] (1,0) to [out=30, in = 150] (2.35,0.03);
\end{tikzpicture}
\begin{tikzpicture}
	\node at (-.2, 1.5) {\tiny{c)}};
	\draw[->,thick] (0,0) -- (0,1.5);
	\draw[->,thick] (0,0) -- (2.65,0);
	\fill[color=red] (0,1) circle (.05);
	\fill[color=red] (.71,0) circle (.05);
	\fill[color=red] (2.41,1) circle (.05);
	\draw[->,blue!50!black] (2.41,0.03) -- (2.41, .93);
\end{tikzpicture}
\begin{tikzpicture}
	\node at (-.2, 1.5) {\tiny{d)}};
	\draw[->,thick] (0,0) -- (0,1.5);
	\draw[->,thick] (0,0) -- (2.65,0);
	\fill[color=red] (.71,0) circle (.05);
	\fill[color=red] (.71,1) circle (.05);
	\draw[->,blue!50!black] (0.03,1) -- (.63,1);
	\draw[->,blue!50!black] (2.35,1) -- (.78,1);
\end{tikzpicture}
\begin{tikzpicture}
	\node at (-.2, 1.5) {\tiny{e)}};
	\draw[->,thick] (0,0) -- (0,1.5);
	\draw[->,thick] (0,0) -- (2.65,0);
	\fill[color=red] (.71,.71) circle (.05);
	\draw[->,blue!50!black] (.71,.03) -- (.71,.63);
	\draw[->,blue!50!black] (.71,1) -- (.71,.78);
	\node at (1.4,.71) {\tiny{$\left[\frac{1}{\sqrt2},\frac{1}{\sqrt2}\right]$}};
\end{tikzpicture}
\caption{A ``simple'' point game for the \cite{SR02} protocol with bias $1/\sqrt{2}-1/2$. The game starts with the uniform distribution over two points. The first transition, between a) and b), is a horizontal transition as the point on the x-axis is split into two points. The second transition, between b) and c) is a vertical transition as one point is raised vertically. The last two transitions are two merges, respectively horizontally and vertically. We omitted the weights of the points in the distributions to simplify the drawings.}
\label{fig:examplePG}
\end{figure}

{~}

\noindent
\textit{Step 2: Point games with EBM transitions.}\quad 
Kitaev~\cite{Moc07} defined a  
graphical way of representing a WCF protocol accompanied with two dual feasible points, one for Alice and one for Bob. He called this representation 
\textit{point games}. At a high level, 
a point game is an ordered sequence of (possibly unnormalized) probability distributions 
supported each on some finite set of points on 
the $2$-dimensional plane.
(See for example \autoref{fig:examplePG}). 



How are point games connected to WCF protocols and their two dual feasible points? 
Recall that we have bounded $P_B^*$ by the quantity $\triple{\psi_0}{Z_{A,0} \otimes \I_\MM \otimes \I_\BB}{\psi_0}$ and likewise $P_A^*\leq \triple{\psi_0}{\I_\AA \otimes \I_\MM \otimes Z_{B,0}}{\psi_0}.$
The point game we associate with the protocol and its dual feasible points is in fact a 2-dimensional representation of the evolution of the above quantities during the protocol. 
More precisely, we consider the expression  
$\bra{\psi_{n-i}}Z_{A,n-i} \otimes\I_\MM\otimes Z_{B,n-i}\ket{\psi_{n-i}}$ for 
$i$ going from $0$ to $n$.  
For each $i$, we associate to the above expression 
a distribution over points on the plane as follows. 
Consider the measurement of the honest state $\ket{\psi_{n-i}}$ by the
observable $Z_{A,n-i} \otimes \I_\MM \otimes
Z_{B,n-i}$. The possible outcomes can be identified with pairs
of eigenvalues of the form $[z_A,z_B]$ (where $z_A$ is an eigenvalue of
$Z_{A,n-i}$ and likewise $z_B$ is an
eigenvalue of $Z_{B,n-i}$); the weight assigned to such a pair is the
projection of the honest state at
time $(n-i)$ onto the eigenspace corresponding to these eigenvalues.
This associates with any protocol of $n$ rounds and its dual feasible points 
a sequence of size $n$ of probability distributions over points in the plane. 

A key point is 
to classify what kind of distributions and transitions can originate from WCF protocols and two dual feasible points.  
As a start, it turns out  that the point game that one derives
this way must start with an initial uniform distribution over the
 two points $[0,1]$ and $[1,0]$; and a final set of points that consists of 
a single point $[\beta, \alpha]$. The initial uniform distribution corresponds to the fact that for an honest protocol both players agree that Alice wins with probability $1/2$ (the $[1,0]$ point) and they both agree that Bob wins with probability $1/2$ (the $[0,1]$ point). Moreover, the coordinates of the final point provide upper bounds on the 
optimal cheating probabilities, $P_A^* \leq \alpha$ and $P_B^* \leq \beta$. 

Our goal is now to find the exact rules that the distributions and transitions between them must satisfy if they arise from a WCF and two dual feasible points. 
The idea is that if we have such a point game which satisfies these rules, we can 
(at least in principle) derive a protocol and a security guranteee from it.  
\COMMENT{ define formally a class of point games, which we will call {\em expressible by matrices} or EBM point games, that include the above mentioned point games and more importantly are equivalent to them. In other words, we would like that given an EBM point game we can construct a protocol and its dual feasible points. }

\COMMENT{
\textit{From EBM point games back to protocols.}\quad

The above definition of EBM point games
suggests that to design a WCF protocol, one 
may design an EBM point game. For that, we need to understand what kinds 
of point games emerge as a result of WCF protocols with dual feasible 
points, and also how to find such a corresponding WCF 
protocol given an EBM point game. 
Let us start with the first question.

We would, first,  like to understand what kind of transitions should be allowed, in other words, }

To understand what transitions may occur in a point game that arises from a WCF protocol and its dual feasible points, consider two rules.  
First, the fact that at each round of the protocol only Alice or Bob act non-trivially 
on the state, is translated to the dual constraints $Z_{A,i-1}=Z_{A,i}$ for even $i$ and  $Z_{B,i-1}=Z_{B,i}$ for odd $i$. This implies that at even steps the points are redistributed along 
vertical lines whereas at odd steps they are redistributed along horizontal lines (See \autoref{fig:examplePG}).  

The second rule describes how a set of points may move along the same vertical or horizontal line during one step. This is solely derived from the requirement that the operations of the cheating players must be quantum operations.
More specifically, let us describe how a set of points $S$ with first coordinates $\{x_j\}_{j \in S}$, the same second coordinate $y$ and weights $\{w_j\}_{j \in S}$ can transition to a set of points $S'$ with first coordinates $\{x'_k\}_{k \in S'}$, the same second coordinate $y$ 
and weights $\{w'_k\}_{k \in S'}$, i.e. a horizontal transition. 
Let us represent the first set of points and its distribution by the function $l$ with finite support such that $l(x_j) = w_j$, and $l(x)=0$ everywhere else; and the second set of point and its distribution by the function $r$ with finite support such that $r(x'_k) = w'_k$, and $r(x) = 0$ everywhere else. 

\begin{definition}\label{def:ebmtransition} 
We say a transition from $l$ to $r$ is a horizontal EBM (Expressible by Matrix) line transition if there
exist two semidefinite positive matrices $0 \preceq X \preceq Y$ 
and a (not necessarily unit) vector $\ket{\psi}$ such that
\begin{align*}
	l(x) = \left\{ \begin{array}{cl}
		\bra{\psi} \Pi_X^{[x]} \ket{\psi} & \text{if}\ x \in \sp(X) \\
		0 & \text{otherwise}
	\end{array} \right.
	\quad \text{and}\quad 
	 r(x) = \left\{ \begin{array}{cl}
		\bra{\psi} \Pi_Y^{[x]} \ket{\psi} & \text{if}\ x \in\sp(Y) \\
		0 & \text{otherwise}
	\end{array} \right.
\end{align*}
where $\Pi_X^{[x]}$ is the projector 
onto the eigenspace of $X$ of eigenvalue $x$ and $\sp()$ denotes the spectrum of the matrix.

A transition from $p$ to $q$ is a horizontal EBM transition if it is a horizontal EBM line transition on every horizontal line. Vertical EBM transitions are defined by symmetry. An EBM point game consists of a sequence of EBM transitions.
\end{definition}

It is rather easy to prove that the point games that arise from a WCF protocol and its dual feasible points are in fact EBM point games. The matrices $X$ and $Y$ will be defined through the dual feasible points  $\{Z_{A,i}\}$ and $\{Z_{B,i}\}$ and the constraint $X \preceq Y$ will immediately follow from the dual constraints.

More importantly, the reverse implication is also true (\autoref{thm:EBMReverse}): if there exists an EBM point game with initial uniform distribution $\frac{1}{2}[0,1]+\frac{1}{2}[1,0]$ and a final distribution concentrated on the point $[\beta,\alpha]$, then there exists a WCF protocol with optimal cheating probabilities $P_A^* \leq \alpha$ and $P_B^* \leq \beta$ and dual feasible points witnessing these upper bounds. 

We note that the proof of this theorem is non-constructive. An EBM point game implies that for every line transition there exist matrices $X$ and $Y$ and a vector $\ket{\psi}$ that would witness the fact that it is an EBM line transition. However, we do not know yet of any algorithm better than brute-force for finding these matrices and vector, even when we know that they exist. Note that once these matrices and vectors are known, then we can efficiently construct a WCF protocol.  

The equivalence between EBM point games and WCF protocols together with their dual feasible points, is still a bit of a mystery. This can in fact be tracked back to Kitaev's proof of the lower bound on strong coin flipping, which is far simpler but still contains the same ``magic''. 

Nevertheless, we have reduced the question of existence of a WCF protocol with bias $\eps$ to that of the existence of an EBM point game with final point $[1/2+\eps,1/2+\eps]$.

{~}

\textit{Step 3: Point games with valid transitions.}\quad 
Unfortunately, finding an EBM point game does not seem to be an easy task; 
yet another reduction is required. We will use a different characterisation of EBM transitions which is easier to work with.
Let us first make some small technical detour and 
shift the view from \textit{transitions} to \textit{functions}. 
We denote the EBM transition from $l$ to $r$ (who are non-negative functions with finite support)
by the \textit{EBM function} $(r-l)$, which is also a function with finite support, but this time it can
have positive as well as negative values.
EBM functions have an interesting geometrical  
property: they form a convex cone. We would like to now represent EBM functions using a different language, and for that we will use duality of convex cones. 
Importantly, the dual of the cone of EBM functions is a well known object: 
it turns out to be the set of {\it operator monotone functions} 
\cite{Bha97}. We recall that a function
$f:\R\mapsto \R$ is said to be operator monotone if it  
preserves the order of positive definite matrices, namely, if $X \preceq Y$ as PSDs, 
then $f(X) \preceq f(Y)$ (also as PSDs; a function is applied on a PSD 
in the usual way of diagonalizing and applying the function on the 
eigenvalues). Consider now yet again the dual of this set. 
We call the dual of the set of operator monotone functions  
\dt{set of valid functions}. We now use the following basic fact from convex geometry: 
$C^{**}$ is the closure of the smallest convex cone containing $C$. Since in our case $C$, the set 
of EBM functions, is itself a cone, 
this means that the sets of valid functions and EBM functions are the 
same, up to closures.  Hence, we can show that from any point game with valid transitions, we can construct an EBM point game whose final point is arbitrarily close to the original one (\autoref{thm:ValidReverse}). 
The map from EBM functions to valid functions, being essentially equal sets up to closures, would have been very easy had we been considering a finite dimensional space. Unfortunately, the space of functions we are handling is infinite dimensional and moreover, somewhat pathological, and hence the proof of \autoref{thm:ValidReverse} requires some technical effort. Still, in spirit, the idea is that we move to the dual of the dual and by that we gain the advantage of considering the dual of the well studied object of operator monotone functions; this turns out useful later on in the actual construction of the final point game.    

The advantage is that valid  functions have a very simple analytical
characterization that follows from the characterisation of operator monotone functions (see \autoref{lem:DecompositionOMF}), and hence, checking that a given function $h$ is valid 
corresponds to checking two simple mathematical statements :
\begin{align*}
	\sum_x h(x) = 0
	\quad\text{and}\quad
	 \forall\lambda > 0,\ \sum_x \frac{h(x)}{\lambda+x} \leq 0.
\end{align*} 

Even though verifying a valid point game is rather straightforward, finding valid point games remains again a bit of a mystery. Despite the fact that there are strong tools that enable doing this (mainly \autoref{thm:WeightPolynomial}) there still seems to be missing an intuitive interpretation of what valid transitions are and how to construct them.

Nevertheless, finding a WCF protocol with arbitrarily small bias has been reduced
to the problem of finding a point game with \textit{valid} transitions, which
ends at the point $[\beta,\alpha]$ with both $\alpha, \beta$ arbitrarily
close to $1/2$. 

{~}

\textit{Step 4: Time independent point game.} \quad The introduction of valid 
transitions makes it easier to check that a point game is valid or not. 
But such verification will become very tedious for point games with 
a bias $\eps$ which is arbitrarily small,  
as the number of transitions tends to infinity when $\eps$ tends to $0$. 
More importantly, we need a tool that will help us find constructively 
such valid point games, rather than verifying that a given game is valid. 

The last model in our sequence of reductions, also introduced by Kitaev,  
is called \textit{time independent point games} (TIPG), and essentially addresses this problem. 
The key observation  is that if $f_1$ and $f_2$ are valid functions, either both horizontal or 
both vertical,  then $f_1+f_2$ is also a valid function. Hence, given a valid point game
 with valid 
horizontal functions $\{h_1,h_2,\dots,h_n\}$ and valid vertical functions $\{v_1,v_2,
\dots,v_n\}$ 
and final point $[\beta,\alpha]$, we can define $h = \sum_i h_i$ and $v =\sum_i v_i$, which are valid horizontal and vertical functions respectively. Moreover, we will see that if we sum these two functions then everything cancels apart form the initial and final points. In other words, we have
\begin{align*}
h + v = [\beta,\alpha] - \frac{1}{2}[1,0] - \frac{1}{2}[0,1].
\end{align*}
Hence, a time independent point game is defined
 as one valid horizontal function $h$, one valid vertical function $v$, and one point $[\beta,\alpha]$, that satisfy together the above equation. Verifying that a TIPG is correct involves checking that only two functions are valid, this is in sharp contrast to point games with valid transitions.

It turns out that the reverse direction is also ``approximately'' true. 
This is the direction we will be interested in since we are about to design a TIPG 
and argue that it implies a valid point game. 
The approximate claim is that for all $\eps>0$, a TIPG with final point $[\beta,\alpha]$ and valid 
functions $h$ and $v$ can be turned into a point game with valid transitions and final point $[\beta+\eps, \alpha+\eps]$ (\autoref{prop:TipgToTdpg}), whose number of transitions depends on $\eps$. 
Ideally we would like to start with the TIPG with final point $[\beta,\alpha]$ and exhibit a point game with valid transitions, with initial set of points $\frac{1}{2}[1,0] + \frac{1}{2}[0,1]$ and final point $[\beta,\alpha]$. Unfortunately, we do not know how to do this.  However, there is a very nice trick that makes it possible: one can add a set of points that acts as a ``catalyst'', meaning that the points remain unchanged through the transitions and their weight can be made arbitrarily small, yet, they enable transitions that were not possible without them.  
To make this statement more precise, denote by $h^-, h^+, v^-$, and $v^+$ the negative and positive parts of the functions $h$ and $v$. One can show that there exists a point game that, for any $\gamma>0$, achieves
$\frac{1}{2}[1,0] + \frac{1}{2}[0,1] + \gamma v^- \rightarrow [\beta,\alpha] + \gamma v^-$.
Last, it is possible to remove completely these low-weight catalyst points in the expense of moving the final point to $[\beta+\eps,\alpha+\eps]$.


Hence, in the end, proving the existence of WCF protocol with arbitrarily small bias consists of designing two valid functions $h$ and $v$,
 which define a TIPG with final point $[1/2+\eps, 1/2+\eps]$. 
This is the last model that is involved in this proof. See \autoref{fig:models} for a summary of the models. 

{~}

\textit{Step 5: Construction.}\quad 
In this last part, we describe a family of point games with parameter $k$, whose final point is $[\alpha,\alpha]=[\frac{1}{2}+\frac{c}{k},\frac{1}{2}+\frac{c}{k}]$ for some constant $c$; $k$ can be made to be an arbitrarily large integer, making the bias arbitrarily small. 

The game consists of three main steps (see \autoref{fig:overview}). First, the initial points are split into a large number of points along the axes. We assume that all points are on a grid of step $\omega$ (which will be inverse polynomial in $k$), and that $\Gamma \omega$ is the largest coordinate of a point ($\Gamma$ will be a polynomial in $k$). Second, all the axes points are moved to two points $[\alpha-k\omega,\alpha]$ and $[\alpha, \alpha-k\omega]$, where $k\omega$ will be inverse polynomial in $k$. This is the main part of the construction. It involves a large number of other points on a grid, and all those points together form a {\em ladder} of height $\Gamma$: for each different height, there is one point on the axis and $2k$ points symmetrically put close to the diagonal (see \autoref{fig:ladder}). Third, the last two points are raised to the final point $[\alpha,\alpha]$.  

The main difficulty is to find weights for the points on the axes and the points in the ladder, such that the initial splits are valid and in addition the transitions that involve the points of the ladder in the second step, are also valid. We consider the second part of the point game as a TIPG and hence, we will need to prove the validity of only two functions. 
We show that for any $k$, we can find $\omega$ and $\Gamma$ as functions of $k$ such that the point game is valid and the final point is $\alpha = \frac{1}{2} + \frac{c}{k}$ for some constant $c$.

\COMMENT{
A key technical lemma \dnote{can we ref it to where it is inside the paper} shows that, in high level \dnote{is high related to large k?}, 
for any set of $(2k+1)$ points and any polynomial of degree  at most $(2k-1)$, there exists a valid function involving these points \dnote{this is completely vague, either we say it 
in a more comperhensible way or not at all.. how is the conclusion related to the assumption? the conclusion says nothing about a polynomial. It should be "such that something about the polynomial is satisfied..."}. Using this lemma for the $2k+1$ points at each different height, we find weights for the points of the ladder such that the ladder is valid \dnote{again, this is very vague. What does it mean for the ladder to be valid? }. This has the effect of also fixing the weight of the points of the axes. The proof is concluded by showing  that the initial split is also valid for these weights, as long as $\alpha > \frac{1}{2} + \frac{c}{k}$ for some constant $c$ and appropriate small value of $\omega$ and large value of $\Gamma$ as a function of $k$.
}


Finally, we analyse the resources of the protocols and prove that the number of qubits used in a protocol with bias $\eps$ is only $O(\log \frac{1}{\eps})$, while the number of rounds is $\left(\frac{1}{\eps}\right)^{O(\frac{1}{\eps})}$ (\autoref{thm:resources}). 

We still do not have much intuition about this point game. The main problem being that we have no intuitive understanding of the large number of transitions that involve the points in the ladder during the second step of the protocol. It remains an open question to find a simpler construction that possibly also uses a smaller number of rounds.

\begin{figure}
\definecolor{theyellow}{rgb}{1,1,0.55}
\begin{tikzpicture}[model/.style={rectangle, rounded corners,  fill=green!50!black!20, very thick, minimum width=2.8cm, minimum height= 3cm},
proto/.style={rectangle, rounded corners,  fill=blue!70!black!20, very thick, minimum height= 1cm},
dual/.style={rectangle, rounded corners,  fill=theyellow, very thick, minimum height= 1cm},
linet/.style={very thick,draw=green!30!black!70, shorten >=2pt, shorten <=2pt},
xscale=1.09
]

\node[model,text width=2cm,text centered] at (0,0) (ubp) {};
\node[proto, text width=2cm,text centered] at(0,-.67) (protocol) {\small Protocol};
\node[dual,  text width=2cm,text centered] at (0,.67) (dual) {\small Dual feasible points};
\node[model,text width=2cm,text centered] at (4,0) (ebm) {\small Point game with EBM transitions};
\node[model,text width=2cm,text centered] at (8,0) (valid) {\small Point game with valid transitions};
\node[model,text width=2cm,text centered] at (12,0) (tipg) {\small Time independent point game};

\draw[->,linet] (tipg) -- (valid) node[midway, above ]{\hyperref[prop:TipgToTdpg]{Th.~\ref*{prop:TipgToTdpg}}};
\draw[->,linet] (valid) -- (ebm) node[midway,above] {\hyperref[thm:ValidReverse]{Th.~\ref*{thm:ValidReverse}}};
\draw[->,linet] (ebm) -- (ubp) node[midway, above] {\hyperref[thm:EBMReverse]{Th.~\ref*{thm:EBMReverse}}} ;
\end{tikzpicture}
\caption{The succession of models we will consider. An arrow from model A to model B means that proving the existence of an $\eps$ biased protocol in A implies the existence of an $\eps+\eps'$ biased protocol in B (for all $\eps'>0$).}
\label{fig:models}
\end{figure}
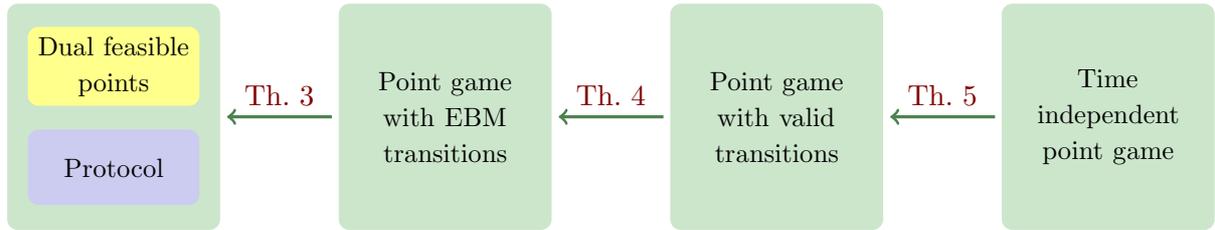

\paragraph{Our contributions} 

Our initial goal was 
to verify the correctness of Mochon's proof, 
and make it more easily verifiable to others, 
since the paper had never been formally peer-reviewed. 
During the process, we managed to simplify and shorten the construction. 
Our contribution is three-fold. 
\begin{enumerate}
\item Our main technical contribution is to give a 
significantly simpler proof to go from valid transitions to EBM 
transitions (\autoref{thm:ValidReverse}). This theorem makes 
the use of operator monotone functions in the paper clearer, 
since now they appear as the dual of the set of EBM functions, whereas in Mochon's paper they were not clearly motivated. 
Then, the definition of valid functions comes naturally as the bidual 
of EBM functions. We conclude the proof of the theorem by using
a topological argument. This replaces about 20 pages of difficult analysis 
using integrals and perturbation theory in Mochon's original manuscript \cite[Appendix C]{Moc07}.

\item On the conceptual side, we have reorganized Mochon's paper 
throughout steps $2$ to $4$. We emphasized the key steps of the proof 
and clarified some arguments which were only implicit in Mochon's paper; 
the proof is structurally much simpler and cleaner.  

\item The point game with arbitrarily small bias that we present in \autoref{sec:construction} is the same as in Mochon's paper. However, in his proof he looked at the limit of $\omega\to 0$ and $\Gamma \to \infty$. While this is enough for a proof of existence, it does not give any bound on the resources needed for the protocol. Here, we make the analysis more precise and present for the first time a bound on the resources necessary to achieve bias $\eps$. The number of qubits is $O(\log \frac{1}{\eps})$ and the number of rounds is at most $\left(\frac{1}{\eps}\right)^{O(\frac{1}{\eps})}$ (\autoref{thm:resources}).  
\end{enumerate}
Hopefully
this work will make Mochon's proof more understandable to 
the community.
Our work is but the beginning rather than the end, on Mochon's important 
result. 
In particular, it remains to further understand how Kitaev's formalism 
can be applied in a way which is both more intuitive and more general, to this as well as 
to other tasks. This also includes an improvement of 
the construction of a TIPG (step 5), which
is still very heavy and should be simplified.
Moreover, much of the proof is still non-explicit, and we believe this 
should also be attributed to our lack of understanding of it. 
It remains to find an explicit  description of a
WCF protocol that achieves bias $\eps$ for any $\eps > 0$. 
This might have of course practical implications; moreover, 
understanding the quantum mechanisms behind this WCF protocol 
could have applications to other quantum communication protocols, 
and in particular, to protocols that involve distribution of entanglement in an adversarial environment.  \\

\noindent
The organization of the paper follows the five steps of the proof described above. 

\COMMENT{
\paragraph{Organization of the paper}
The paper is organized as follows: In \autoref{sec:wcf}, we define coin flipping protocols and their bias. We also show that the optimal bias can be expressed as a semidefinite program, and that the points in this dual, \emph{dual feasible points}, are an upper bound on the bias. In \autoref{sec:essence}, we show the equivalence between a protocol and its dual feasible points on one hand, and a point game on the other hand. We consider three variants of point games, namely point games with \emph{EBM transitions}, \emph{valid transitions} and \emph{time independent} point games, each model being a little bit easier to manipulate mathematically. We summarize the succession of models in \autoref{fig:models}.  Last, we present Mochon's construction of a game with final point $[1/2 +\eps, 1/2 + \eps]$ in \autoref{sec:construction}.
}

\section{SDPs and dual feasible points}
\label{sec:wcf}


\subsection{Definitions}

We formally define a quantum weak coin flipping protocol with bias $\eps$.

\begin{definition}[Weak coin flipping protocol (WCF) with bias $\eps$]
\label{def:balancedwcf}
	For $n$ even, an $n$-message weak coin flipping protocol between two players, Alice and Bob, is described by:
\begin{itemize}
	\item Three Hilbert spaces $\AA,\BB$ corresponding to Alice and Bob private workspaces (Bob does not have any access to $\AA$ and Alice to $\BB$), and a message space $\MM$;
	\item An initial product state $\ket{\psi_0} = \ket{\psi_{A,0}} \otimes \ket{\psi_{M,0}} \otimes \ket{\psi_{B,0}} \in \AA \otimes \MM \otimes \BB$;
	\item A set of $n$ unitaries $\{U_1,\dots,U_n\}$ acting on $\AA\otimes\MM\otimes\BB$, with $U_i = U_{A,i}\otimes\I_\BB$ for $i$ odd, and $U_i = \I_\AA \otimes U_{B,i}$ for $i$ even;
	\item A set of \dt{honest states} $\{\ket{\psi_i},\ i\in [n] \}$ defined by $\ket{\psi_i} = U_iU_{i-1}\cdots U_1 \ket{\psi_0}$;
	\item A set of $n$ projectors $\{E_1,\dots,E_n\}$ acting on $\AA\otimes\MM\otimes\BB$, with $E_i = E_{A,i}\otimes\I_\BB$ for $i$ odd, and $E_i = \I_\AA \otimes E_{B,i}$ for $i$ even, such that $E_i\ket{\psi_i} = \ket{\psi_i}$;
\item Two final POVM $\left\{\Pi_{A}^{(0)},\Pi_{A}^{(1)}\right\}$ acting on $\AA$ and  $\left\{\Pi_{B}^{(0)},\Pi_{B}^{(1)}\right\}$ acting on $\BB$.
\end{itemize}
The WCF protocol proceeds as follows:
\begin{enumerate}
	\item In the beginning, Alice holds $\ket{\psi_{A,0}}\ket{\psi_{M,0}}$ and Bob $\ket{\psi_{B,0}}$.
	\item For $i=1$ to $n$: \\
		\ --- If $i$ is odd, Alice applies $U_i$ and measures the resulting state with the POVM $\{E_i, \I - E_i\}$. On the first outcome, Alice sends the message qubits to Bob; on the second outcome, she ends the protocol by outputting ``0'', i.e. Alice declares herself winner. \\
		\ --- If $i$ is even, Bob applies $U_i$ and measures the resulting state with the POVM $\{E_i, \I - E_i\}$. On the first outcome, Bob sends the message qubits to Alice; on the second outcome, he ends the protocol by outputting ``1'', i.e. Bob declares himself winner.
	\item Alice and Bob measure their part of the state with the final POVM and output the outcome of their measurements. Alice wins on outcome ``0'' and Bob on outcome ``1''.
\end{enumerate}

The WCF protocol has the following properties:
\begin{itemize}
	\item \textbf{Correctness}: When both players are honest, Alice and Bob's outcomes are always the same:  $\Pi_{A}^{(0)}\otimes\I_\MM\otimes\Pi_{B}^{(1)}\ket{\psi_n} = \Pi_{A}^{(1)}\otimes\I_\MM\otimes\Pi_{B}^{(0)}\ket{\psi_n} = 0$.
	\item \textbf{Balanced:} When both players are honest, they win with probability 1/2: \\$P_A = \norm{\Pi_{A}^{(0)}\otimes\I_\MM\otimes\Pi_{B}^{(0)}\ket{\psi_n}}^2 = \frac{1}{2}$ and $P_B = \norm{\Pi_{A}^{(1)}\otimes\I_\MM\otimes\Pi_{B}^{(1)}\ket{\psi_n}}^2 = \frac{1}{2}$.
	\item \textbf{$\bm\eps$ biased:} When Alice is honest, the probability that both players agree on Bob winning is $P_B^* \leq 1/2 + \eps$. And conversely, if Bob is honest, the probability that both players agree on Alice winning is $P_A^* \leq 1/2 + \eps$.
\end{itemize}
\end{definition}

This definition of a weak coin flipping protocol differs from the usual one in the sense that we added the projections $\{E_i\}$. The goal of these projections is to catch a cheating player, since they do not change the honest states. Intuitively they can only decrease the bias compared to a protocol without them. This can be proved, but it is not necessary in our case since we will directly prove upper bounds on the cheating probabilities for this specific type of protocols.

\subsection{Cheating probabilities as SDPs}
The cheating probabilities $P_A^*$ and $P_B^*$ cannot be easily computed from the definition above. Kitaev showed that they can be expressed as semidefinite programs (SDP) \cite{Kit03} and a written proof can be found in \cite{ABDR04}. 

Fix a weak coin flipping protocol, and assume that Alice is honest. We describe a semidefinite program with variables the states $\rho_{AM,i}$, i.e. the states after round $i$ once Bob's workspace is traced out. The probability that Bob wins is the probability that Alice outputs ``1'' when applying the POVM $\left\{\Pi_{A}^{(0)},\Pi_{A}^{(1)}\right\}$ to her part of the final state, or equivalently $ \Tr((\Pi_{A}^{(1)}\otimes\I_\MM)\rho_{AM,n})$.  Since Alice is honest, the state in her workspace is not arbitrary, but rather satisfies some constraints. In the beginning of the protocol, Alice held the state $\tr_\MM(\rho_{AM,0}) = \proj{\psi_{A,0}}$. Moreover, the evolution of Alice's state is only due to her own actions, namely $\tr_\MM(\rho_{AM,i}) = \tr_\MM(\rho_{AM,i-1})$  if $i$ is even and 
$\tr_\MM(\rho_{AM,i}) = \tr_\MM(E_iU_i \rho_{AM,i-1} U_i^\dagger E_i)$ if $i$ is odd. Bob's cheating probability is the maximum over all his strategies, i.e. over all states $\{ \rho_{AM,i}\}$ that satisfy these constraints.

The evolution of the states $\rho_{AM,i}$ is not unitary due to the presence of the projections, so they are not necessarily normalized. However, $\tr((\Pi_{A}^{(1)}\otimes\I)\rho_{AM,n})$ represents the probability that Alice and Bob agree on Bob winning when Alice is honest. If Bob got caught cheating by the projections, Alice already declared herself the winner. The non-normalization of the states $\rho_{AM,i}$ reflects the probability that the protocol ended prematurely by one of the players declaring oneself winner, because the other player was caught cheating.

This reasoning leads to the following two semidefinite programs:
\begin{thm}[Primal] \label{thm:primal} \ \newline
$P_B^* = \max \Tr((\Pi_{A}^{(1)}\otimes\I_\MM)\rho_{AM,n})$ over all $\rho_{AM,i}$ satisfying the constraints:
\begin{itemize}
	\item $\tr_\MM(\rho_{AM,0}) = \tr_{\MM\BB}(\proj{\psi_0}) = \proj{\psi_{A,0}}$;
	\item for $i$ odd, $\tr_\MM(\rho_{AM,i}) = \tr_\MM(E_iU_i \rho_{AM,i-1} U_i^\dagger E_i)$;
   \item for $i$ even, $\Tr_\MM(\rho_{AM,i}) = \Tr_\MM(\rho_{AM,i-1})$.
\end{itemize}

\noindent $P_A^* = \max \Tr((\I_\MM\otimes \Pi_{B}^{(0)}) \rho_{MB,n})$ over all $\rho_{BM,i}$ satisfying the constraints:
\begin{itemize}
	\item $\tr_\MM(\rho_{MB,0}) = \tr_{\AA\MM}(\proj{\psi_0}) = \proj{\psi_{B,0}}$;
	\item for $i$ even, $\tr_\MM(\rho_{MB,i}) = \tr_\MM(E_iU_i \rho_{MB,i-1} U_i^\dagger E_i)$;
   \item for $i$ odd, $\tr_\MM(\rho_{MB,i}) = \Tr_\MM(\rho_{MB,i-1})$.
\end{itemize}
\end{thm}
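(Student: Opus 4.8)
The plan is to prove the two claimed identities by a pair of opposite inequalities, and, since the protocol and the program for $P_A^*$ are obtained from those for $P_B^*$ by exchanging the roles of Alice and Bob (and of $\AA$ and $\BB$), it suffices to treat $P_B^*$; the identity for $P_A^*$ then follows by symmetry. So I fix a WCF protocol, assume Alice plays honestly, and must show that the maximal probability, over all strategies of a cheating Bob, that Alice's final POVM outputs ``$1$'' (equivalently, that $\Tr((\Pi_A^{(1)}\otimes\I_\MM)\rho_{AM,n})$ where $\rho_{AM,n}$ is the reduced state in the branch where Alice never caught Bob cheating) equals the value of the stated semidefinite program.

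\textbf{From a cheating Bob to a feasible point ($P_B^*\le$ SDP value).} I would take an arbitrary strategy of a cheating Bob. By Stinespring dilation I may assume Bob enlarges $\BB$ to a space $\BB'$ and performs, at each even round, a unitary on $\MM\otimes\BB'$; moreover a Bob aiming at ``both agree that Bob wins'' never declares himself the winner at an even round, since that forces Alice to output ``$0$''. Running the protocol, I let $\ket{\widetilde\phi_i}\in\AA\otimes\MM\otimes\BB'$ be the subnormalized global state in the branch where all of Alice's projections $E_1,\dots,E_i$ have succeeded, and set $\rho_{AM,i}:=\tr_{\BB'}\!\proj{\widetilde\phi_i}$. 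A direct check shows feasibility: the initial constraint holds because $\ket{\psi_0}$ is the honest state; at an odd round Alice applies $U_i$ and projects with $E_i$, both acting as the identity on $\BB'$, which after tracing out $\MM\BB'$ gives the odd-round constraint; at an even round Bob's unitary acts on $\MM\otimes\BB'$ and is therefore trace-preserving on that subsystem, giving $\tr_\MM\rho_{AM,i}=\tr_\MM\rho_{AM,i-1}$. Finally $\Tr((\Pi_A^{(1)}\otimes\I_\MM)\rho_{AM,n})$ is exactly the probability that all of Alice's projections succeed and her final measurement returns ``$1$'', i.e. the winning probability of this strategy; taking the supremum over Bob's strategies yields $P_B^*\le$ (SDP value).

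\textbf{From a feasible point to a cheating Bob (SDP value $\le P_B^*$).} This is the substantive direction and relies on Uhlmann's theorem on the unitary freedom of purifications. Given a feasible point $\{\rho_{AM,i}\}$, I enlarge $\BB$ to a space $\BB'$ large enough to purify every $\rho_{AM,i}$ and fix purifications $\ket{\phi_i}\in\AA\otimes\MM\otimes\BB'$; the initial and first-round constraints leave Bob no freedom before his first move, so I may take $\ket{\phi_0}=\ket{\psi_0}$ (tensored with a fixed ancilla in Bob's hands). Bob's strategy is to do nothing extra at odd rounds (Alice's moves are fixed) and, at an even round $j$, to apply the unitary on $\MM\otimes\BB'$ supplied by Uhlmann's theorem that carries $E_{j-1}U_{j-1}\ket{\phi_{j-2}}$ to $\ket{\phi_j}$. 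Such a unitary exists because these two vectors have the same reduced state on $\AA$ — the $\AA$-marginal of $E_{j-1}U_{j-1}\ket{\phi_{j-2}}$ is $\tr_\MM(E_{j-1}U_{j-1}\rho_{AM,j-2}U_{j-1}^\dagger E_{j-1})=\tr_\MM\rho_{AM,j-1}$ by the odd-round constraint, which equals $\tr_\MM\rho_{AM,j}$ by the even-round constraint, and this is the $\AA$-marginal of $\ket{\phi_j}$ — and equal norm, since taking traces gives $\Tr\rho_{AM,j-1}=\Tr\rho_{AM,j}$. Inductively, after each of Bob's even rounds the global state is exactly $\ket{\phi_i}$, so the final state purifies $\rho_{AM,n}$ and the probability that Alice's projections all succeed and she outputs ``$1$'' equals $\Tr((\Pi_A^{(1)}\otimes\I_\MM)\rho_{AM,n})$. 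Applying this to an SDP-optimal $\{\rho_{AM,i}\}$ exhibits a Bob strategy attaining the SDP value, so (SDP value) $\le P_B^*$.

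\textbf{Main obstacle.} I expect the reverse direction to be the delicate part: one must check that at \emph{every} round the two vectors to be matched agree on their $\AA$-marginal and have equal norm — this is precisely where the algebraic form of the SDP constraints and the bookkeeping introduced by the subnormalizing projections $E_i$ get used — and one must make sure that each Uhlmann unitary is a legitimate move for Bob, i.e. acts only on $\MM\otimes\BB'$ and is applied at a round at which Bob actually holds the message register. The forward inequality, the reduction to unitary strategies, and the symmetry argument for $P_A^*$ are routine.
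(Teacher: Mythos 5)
The paper itself does not prove \autoref{thm:primal}; it cites Kitaev and the written proof in \cite{ABDR04}, and gives only the informal discussion that precedes the theorem statement. Your proof is the standard self-contained argument that those sources use: for the easy inequality, reduce to unitary cheating via Stinespring dilation and read off a feasible point; for the hard one, purify each $\rho_{AM,i}$ and apply Uhlmann's theorem round by round to build a cheating strategy. The bookkeeping you set up — matching the $\AA$-marginals and norms of $E_{j-1}U_{j-1}\ket{\phi_{j-2}}$ and $\ket{\phi_j}$ via the odd- then even-round constraints, and noting that the resulting Uhlmann unitary acts only on $\MM\otimes\BB'$ — is exactly right, and you correctly observed that one can bypass $\ket{\phi_{j-1}}$ (whose full $\AA\otimes\MM$ state is not pinned down by the trace constraint).

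The one genuine gap is at round $0$. You claim ``the initial and first-round constraints leave Bob no freedom before his first move, so I may take $\ket{\phi_0}=\ket{\psi_0}$,'' but as written the initial constraint is only $\tr_\MM(\rho_{AM,0})=\proj{\psi_{A,0}}$. Since $\proj{\psi_{A,0}}$ is pure this forces $\rho_{AM,0}=\proj{\psi_{A,0}}\otimes\sigma_M$, but $\sigma_M$ is otherwise free — and for $\sigma_M\neq\proj{\psi_{M,0}}$, $\ket{\psi_0}$ is not a purification of $\rho_{AM,0}$, nor can Bob realize it, since he does not hold $\MM$ at round $0$. This looseness is not harmless: with $E_1=\I$ and $U_1$ a swap on $\AA\otimes\MM$, the trace constraint lets $\tr_\MM(\rho_{AM,1})$ equal any $\sigma_M$, whereas the protocol always produces $\proj{\psi_{M,0}}$. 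So you must either read the initial constraint as the exact equality $\rho_{AM,0}=\tr_\BB(\proj{\psi_0})=\proj{\psi_{A,0}}\otimes\proj{\psi_{M,0}}$ (which is how the cited formulations state it, and under which your inductive base is sound), or argue separately that feasible points with $\sigma_M\neq\proj{\psi_{M,0}}$ never increase the objective — which fails in general. The cleanest fix is to strengthen the round-$0$ constraint to the full equality before running the Uhlmann induction.
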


\subsection{Upper bounds on the cheating probabilities via the dual feasible points}
In order to prove upper bounds on the cheating probabilities $P_A^*$ and $P_B^*$ we consider the dual SDPs. The following theorem provides the dual, as well as 
a statement that the maximum is indeed achieved and is equal to the 
optimal value of the primal, namely, strong duality holds. A complete proof of this theorem can be found in \cite{Kit03,ABDR04}. 
\begin{thm}[Dual]\label{thm:dual}\ \newline
	$P_B^* = \min \Tr(Z_{A,0}\proj{\psi_{A,0}})$  over all $Z_{A,i}$ under the constraints:
	\begin{enumerate}
		\item[\ding{172}] $\forall i, \ Z_{A,i} \succeq 0$;
		\item[\ding{173}] for $i$ odd, $Z_{A,i-1}\otimes\I_\MM \succeq U_{A,i}^\dagger E_{A,i}(Z_{A,i}\otimes\I_\MM) E_{A,i}U_{A,i}$;
      		\item[\ding{174}] for $i$ even, $Z_{A,i-1} = Z_{A,i}$;
		\item[\ding{175}] $Z_{A,n} = \Pi_{A}^{(1)}$.
	\end{enumerate}
\noindent $P_A^* = \min \Tr(Z_{B,0}\proj{\psi_{B,0}})$ over all $Z_{B,i}$ under the constraints:
	\begin{enumerate}
		\item[\ding{172}] $\forall i, \ Z_{B,i} \succeq 0$;
		\item[\ding{173}] for $i$ even, $\I_\MM \otimes Z_{B,i-1} \succeq U_{B,i}^\dagger E_{B,i}(\I_\MM \otimes Z_{B,i}) E_{B,i}U_{B,i}$;
      		\item[\ding{174}] for $i$ odd, $Z_{B,i-1} = Z_{B,i}$;
		\item[\ding{175}] $Z_{B,n} = \Pi_{B}^{(0)}$.
	\end{enumerate}
\end{thm}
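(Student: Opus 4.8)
The statement to prove is the dual-SDP characterization (\autoref{thm:dual}) of the cheating probabilities $P_A^*$ and $P_B^*$. The plan is to derive it from the primal SDP of \autoref{thm:primal} by the standard Lagrangian duality machinery for semidefinite programs, and then to argue that strong duality holds so that the optimal values coincide and the minimum in the dual is attained. By symmetry it suffices to treat the program for $P_B^*$; the one for $P_A^*$ follows by swapping the roles of Alice and Bob (and of the outcomes $0,1$, and of parity of rounds).

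First I would set up the Lagrangian. The primal has variables $\rho_{AM,i}$ (for $i=0,\dots,n$), all implicitly constrained to be positive semidefinite, and equality constraints coming in three flavors: the initial condition $\tr_\MM(\rho_{AM,0})=\proj{\psi_{A,0}}$, the odd-round evolution $\tr_\MM(\rho_{AM,i})=\tr_\MM(E_iU_i\rho_{AM,i-1}U_i^\dagger E_i)$, and the even-round invariance $\tr_\MM(\rho_{AM,i})=\tr_\MM(\rho_{AM,i-1})$. I would attach a Hermitian Lagrange multiplier $Z_{A,i-1}$ (acting on $\AA$) to the constraint that determines the $\AA$-marginal after round $i$, and $Z_{A,n}$ (which the dual constraint \ding{175} will pin to $\Pi_A^{(1)}$) to the objective via the final state. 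Writing the Lagrangian $\mathcal L(\{\rho_{AM,i}\},\{Z_{A,i}\})$ and using the adjoint identity $\tr\big(Z\,\tr_\MM(\sigma)\big)=\tr\big((Z\otimes\I_\MM)\,\sigma\big)$ together with $\tr\big(Z\,\tr_\MM(E_iU_i\rho U_i^\dagger E_i)\big)=\tr\big(U_i^\dagger E_i (Z\otimes\I_\MM)E_iU_i\,\rho\big)$, the Lagrangian becomes linear in each $\rho_{AM,i}$, of the form $\tr(M_i\,\rho_{AM,i})$ plus a constant term $\tr(Z_{A,0}\proj{\psi_{A,0}})$. Maximizing over $\rho_{AM,i}\succeq 0$ is finite only if each coefficient matrix $M_i\preceq 0$, and this is exactly where constraints \ding{173} (from odd $i$) and \ding{174} (from even $i$) come from; \ding{172} arises because the multiplier attached to the final objective must be $\succeq 0$ for the last-round inequality to make sense together with \ding{175}. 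Collecting these gives the dual program $\min \tr(Z_{A,0}\proj{\psi_{A,0}})$ subject to \ding{172}--\ding{175}, and weak duality $P_B^*\le \min(\text{dual})$ is immediate from the construction.

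Next I would establish strong duality and attainment. Here I would invoke Slater-type conditions (or cite \cite{Kit03,ABDR04}, as the theorem statement permits): the primal is clearly feasible (the honest states $\rho_{AM,i}=\tr_\BB(\proj{\psi_i})$ satisfy all constraints, using $E_i\ket{\psi_i}=\ket{\psi_i}$), and the feasible set is bounded since the trace of $\rho_{AM,i}$ is nonincreasing in $i$ and bounded by $1$; on the dual side one exhibits a strictly feasible point by taking $Z_{A,i}$ suitably large multiples of the identity propagated backward from $Z_{A,n}=\Pi_A^{(1)}$, so that the $\preceq$ constraints \ding{173} hold strictly. Compactness of the dual feasible region intersected with a sublevel set of the objective then yields attainment of the minimum. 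Since the paper explicitly says ``A complete proof of this theorem can be found in \cite{Kit03,ABDR04},'' I would present the Lagrangian derivation in enough detail to make the form of \ding{172}--\ding{175} transparent and defer the quantitative Slater/compactness verification to those references.

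The main obstacle, and the step deserving the most care, is bookkeeping the ``reversal of time'' correctly: matching each dual variable $Z_{A,i}$ to the right primal constraint so that the backward recursion \ding{173}--\ding{175} comes out with the adjoints $U_{A,i}^\dagger E_{A,i}(\cdot)E_{A,i}U_{A,i}$ on the correct side and with the correct parity convention (odd rounds give the nontrivial inequality, even rounds give equality, because Alice acts on odd rounds only). A secondary subtlety is that the primal variables are not normalized — the $E_i$ make the evolution trace-decreasing — so one must be careful that the duality still goes through; this is handled automatically because the constraints are stated as equalities on the marginals rather than normalization conditions, and the multiplier $Z_{A,n}=\Pi_A^{(1)}\preceq\I$ absorbs the ``lost'' weight. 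Once these are pinned down, the rest is routine.
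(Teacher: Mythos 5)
The paper itself does not prove this theorem; it explicitly defers the full proof to \cite{Kit03,ABDR04}. So there is no internal proof to compare against, and I will assess your sketch on its own terms.

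Your plan — Lagrangian SDP duality on the primal of \autoref{thm:primal}, using the adjoint identity $\tr\bigl(Z\,\tr_\MM(\sigma)\bigr)=\tr\bigl((Z\otimes\I_\MM)\sigma\bigr)$ to pull the multipliers through, followed by a Slater/compactness argument for strong duality and attainment — is the standard and correct route, and is what the cited references do. Your identification of the genuine subtleties (time reversal, trace-nonincreasing primal variables) is also on point. However, you then commit exactly the bookkeeping error you warn against: you attach $Z_{A,i-1}$ to the constraint determining $\tr_\MM(\rho_{AM,i})$, leaving the initial constraint $\tr_\MM(\rho_{AM,0})=\proj{\psi_{A,0}}$ orphaned and shifting everything by one. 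The correct assignment is $Z_{A,i}$ to the constraint on $\tr_\MM(\rho_{AM,i})$ for each $i=0,\dots,n$. With that, the coefficient of $\rho_{AM,j}$ in the Lagrangian is $\bigl(U_{j+1}^\dagger E_{j+1}(Z_{A,j+1}\otimes\I_\MM)E_{j+1}U_{j+1}\bigr)-Z_{A,j}\otimes\I_\MM$ for $j+1$ odd, $(Z_{A,j+1}-Z_{A,j})\otimes\I_\MM$ for $j+1$ even, and $(\Pi_A^{(1)}-Z_{A,n})\otimes\I_\MM$ for $j=n$; requiring these $\preceq 0$ yields \ding{173} exactly and yields \ding{174}, \ding{175} as inequalities $Z_{A,i-1}\succeq Z_{A,i}$ and $Z_{A,n}\succeq\Pi_A^{(1)}$, which can be tightened to equalities without loss since the objective $\tr(Z_{A,0}\proj{\psi_{A,0}})$ only benefits from decreasing the $Z$'s backward. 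The constant term is then $\tr(Z_{A,0}\proj{\psi_{A,0}})$, as required. One more small point: \ding{172} does not drop out of the Lagrangian (multipliers for equality constraints are unrestricted Hermitian); rather, $Z_{A,i}\succeq 0$ follows from \ding{173}--\ding{175} by backward induction from $Z_{A,n}=\Pi_A^{(1)}\succeq 0$, so it is a harmless redundant constraint. With these corrections, your sketch is a sound derivation of the dual form, and deferring the Slater/attainment verification to \cite{Kit03,ABDR04} is consistent with what the paper itself does.
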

\noindent
We add one more constraint to the above dual SDPs:
	\begin{enumerate}
\item[\ding{176}] $\ket{\psi_{A,0}}$ is an eigenvector of $Z_{A,0}$, i.e. there exists $\beta>0$ such that $Z_{A,0}\ket{\psi_{A,0}}= \beta \ket{\psi_{A,0}}$, 
	\item[\ding{176}] $\ket{\psi_{B,0}}$ is an eigenvector of $Z_{B,0}$, i.e. there exists $\alpha>0$ such that $Z_{B,0}\ket{\psi_{B,0}}= \alpha \ket{\psi_{B,0}}$.
\end{enumerate}
The reason why we are adding this constraint will become clear a bit later. Notice that this constraint is not positive semidefinite, and thus the following definition is a slight abuse:
\begin{definition}[Dual feasible points]\label{dfp}
We call \emph{dual feasible points} any two sets of matrices $\{Z_{A,0},\dots,Z_{A,n}\}$ and $\{Z_{B,0},\dots,Z_{B,n}\}$ that satisfy the corresponding conditions \ding{172} to \ding{176}.
\end{definition}
However, this additional constraint does not change the value of the dual SDPs:
\begin{proposition}\label{prop:AlphaAndBeta}
	$P^*_A = \inf \alpha$ and $P_B^* = \inf \beta$ where the infimum is over all dual feasible points and $\alpha,\beta$ are defined in constraint \ding{176} of the definition of the dual feasible points (\autoref{dfp}).
\end{proposition}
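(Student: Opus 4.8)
The plan is to show that adding constraint \ding{176} does not change the optimal value, i.e.\ that $\inf\beta = P_B^*$ (the argument for $\alpha$ and $P_A^*$ is identical by symmetry). One inequality is immediate: any dual feasible point in the sense of \autoref{dfp} is in particular a feasible solution of the dual SDP of \autoref{thm:dual}, and for such a point $\Tr(Z_{A,0}\proj{\psi_{A,0}}) = \beta\,\braket{\psi_{A,0}}{\psi_{A,0}} = \beta$, so $\inf\beta \geq \min\Tr(Z_{A,0}\proj{\psi_{A,0}}) = P_B^*$. The work is in the reverse inequality: given an arbitrary feasible $\{Z_{A,i}\}$ for the SDP of \autoref{thm:dual}, I want to produce a (possibly slightly worse, but arbitrarily close) feasible point that additionally satisfies \ding{176}.

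The key step is a symmetrization/averaging trick. Let $V$ be any unitary on $\AA$ that fixes $\ket{\psi_{A,0}}$ (i.e.\ acts as the identity on the line $\mathbb{C}\ket{\psi_{A,0}}$ and arbitrarily on its orthogonal complement). Observe that the whole protocol data can be left-multiplied/conjugated by such $V$ without affecting honesty of Alice: replacing $U_{A,1}$ by $U_{A,1}(V\otimes\I_\MM)$ and simultaneously conjugating everything downstream leaves $\ket{\psi_0}$, the honest states, and all constraints \ding{172}--\ding{175} intact (since $V\ket{\psi_{A,0}}=\ket{\psi_{A,0}}$). Hence if $\{Z_{A,i}\}$ is feasible, so is $\{V^\dagger Z_{A,i} V\}$ appropriately, with the same objective value $\Tr(Z_{A,0}\proj{\psi_{A,0}})$. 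Now average $Z_{A,0}':=\int V^\dagger Z_{A,0} V\, \d V$ over the Haar measure on the group $\{V : V\ket{\psi_{A,0}}=\ket{\psi_{A,0}}\}\cong U(\dim\AA-1)$; by convexity of the feasible set the averaged collection is still feasible and has the same objective value, and $Z_{A,0}'$ now commutes with every such $V$, which forces $Z_{A,0}'$ to be block-diagonal in the decomposition $\mathbb{C}\ket{\psi_{A,0}}\oplus \ket{\psi_{A,0}}^{\perp}$ and to be a scalar $\beta$ on the first block. Thus $\ket{\psi_{A,0}}$ is an eigenvector of $Z_{A,0}'$ with eigenvalue $\beta=\Tr(Z_{A,0}'\proj{\psi_{A,0}})$. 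One small subtlety: constraint \ding{176} demands $\beta>0$ strictly; if the optimum has $\beta=0$ one perturbs $Z_{A,0}'\mapsto Z_{A,0}'+\delta\I$ and propagates, increasing the objective by at most $\delta\|\I\|$-type terms, so the infimum is unaffected — this is why the statement is phrased with $\inf$ rather than $\min$.

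The main obstacle I anticipate is bookkeeping rather than conceptual: one must check carefully that conjugating the protocol by $V$ (and correspondingly transforming the later $Z_{A,i}$, $E_{A,i}$, $U_{A,i}$) genuinely preserves each of \ding{172}, \ding{173} (the operator inequality involving $U_{A,i}^\dagger E_{A,i}(\cdot)E_{A,i}U_{A,i}$), \ding{174}, and \ding{175}; and that the Haar average of matrices satisfying an operator inequality still satisfies it (true, since $\{X : X\preceq Y\}$ is convex and closed, and $Z\mapsto \int V^\dagger Z V\,\d V$ is an average). Since \ding{175} pins $Z_{A,n}=\Pi_A^{(1)}$, one should run the symmetrization only on the $\AA$-register freedom available at the very first round, which is exactly the group fixing $\ket{\psi_{A,0}}$, so no conflict with the terminal constraint arises. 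Putting the two inequalities together gives $P_B^* = \inf\beta$, and the symmetric argument on $\BB$ gives $P_A^* = \inf\alpha$, completing the proof.
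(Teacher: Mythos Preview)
Your symmetrization idea does not produce a dual feasible point for the \emph{original} protocol, and that is what the proposition requires. The constraints \ding{172}--\ding{175} are tied to the fixed protocol data $U_{A,i}, E_{A,i}, \Pi_A^{(1)}$; in particular the only constraint involving $Z_{A,0}$ is \ding{173} for $i=1$, namely $Z_{A,0}\otimes\I_\MM \succeq U_{A,1}^\dagger E_{A,1}(Z_{A,1}\otimes\I_\MM)E_{A,1}U_{A,1}$. If you replace $Z_{A,0}$ by $V^\dagger Z_{A,0} V$ while keeping $Z_{A,1},\dots,Z_{A,n}$ fixed, this inequality generally fails: conjugating both sides by $V\otimes\I_\MM$ shows that $V^\dagger Z_{A,0} V$ satisfies \ding{173} only for the \emph{modified} protocol with $U_{A,1}$ replaced by $U_{A,1}(V\otimes\I_\MM)$, not for the original one. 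Since each $V$ yields a different SDP, averaging feasible points across these different SDPs has no reason to give a feasible point of any single one of them --- and certainly not of the original protocol, whose $U_{A,1}$ is fixed. (Nor does the Haar average $\int V^\dagger Z_{A,0}V\,\d V$ dominate $Z_{A,0}$ in the PSD order, so you cannot rescue the argument that way either.)

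The paper's route is much more elementary and avoids touching the protocol at all. It uses only that $Z_{A,0}$ enters the dual through the one-sided inequality $Z_{A,0}\otimes\I_\MM \succeq (\text{something fixed})$, so any $Z'_{A,0}\succeq Z_{A,0}$ remains feasible together with the unchanged $Z_{A,1},\dots,Z_{A,n}$. One then simply takes
\[
Z'_{A,0} \;=\; \bigl(\bra{\psi_{A,0}}Z_{A,0}\ket{\psi_{A,0}}+\eps\bigr)\,\proj{\psi_{A,0}} \;+\; \Lambda\,\bigl(\I-\proj{\psi_{A,0}}\bigr),
\]
which manifestly has $\ket{\psi_{A,0}}$ as an eigenvector, and verifies by a direct quadratic-form computation that $Z'_{A,0}\succeq Z_{A,0}$ once $\Lambda \geq \max\{\norm{Z_{A,0}},\,\norm{Z_{A,0}}^2/\eps\}$. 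This gives $\inf\beta \le \Tr(Z_{A,0}\proj{\psi_{A,0}})+\eps$ for every $\eps>0$, hence $\inf\beta \le P_B^*$; combined with the easy inequality you already noted, the proposition follows.
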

\begin{proof}
Fix $\{Z_{A,0},\dots,Z_{A,n}\}$ a set of matrices that satisfies the constraints \ding{172} to \ding{175} and $\eps>0$. Let us construct a matrix $Z'_{A,0}$ such that the set $\{Z'_{A,0},Z_{A,1},\dots,Z_{A,n} \}$ satisfies the constraints \ding{172} to \ding{176} with $\alpha = \tr(Z'_{A,0}\proj{\psi_{A,0}}) = \tr(Z_{A,0}\proj{\psi_{A,0}}) + \eps$. 

The proof relies on the following fact: there exists $\Lambda>0$ such that:
\begin{align*}
	Z'_{A,0} = (\bra{\psi_{A,0}}Z_{A,0}\ket{\psi_{A,0}} + \eps)\proj{\psi_{A,0}} + \Lambda(\I - \proj{\psi_{A,0}}) \succeq Z_{A,0}.
\end{align*}
As a consequence $Z'_{A,0}\otimes \I_\MM \succeq Z_{A,0} \otimes \I_\MM$, hence $\{Z'_{A,0},Z_{A,1},\dots,Z_{A,n} \}$ is a dual feasible point and also $\tr(Z'_{A,0}\proj{\psi_{A,0}}) = \tr(Z_{A,0}\proj{\psi_{A,0}}) + \eps$. 
%

We now prove that, indeed, there exists $\Lambda>0$, for which $Z'_{A,0} \succeq Z_{A,0}$. Let $\ket{\phi}$ be a vector in $\AA$ and decompose it as $\ket{\phi}  = a\ket{\psi_{A,0}} + b \ket{\psi_{A,0}^\bot}$ where $\braket{\psi_{A,0}}{\psi_{A,0}^\bot} = 0$. We can restrict ourselves to $b\in\R$ and $\abs{a}^2 + \abs{b}^2 = 1$, thus we have:
\begin{align}
\label{eqn:za0}
\bra{\phi}(Z_{A,0}'-Z_{A,0})\ket{\phi} = \abs{a}^2\eps+b^2(\Lambda- \bra{\psi_{A,0}^\bot}Z_{A,0}\ket{\psi_{A,0}^\bot}) -2b \Re(a\bra{\psi_{A,0}^\bot}Z_{A,0}\ket{\psi_{A,0}}).
\end{align}
We show how to pick a $\Lambda$ independent of $\ket{\phi}$, i.e. of $a,b$ and $\ket{\psi_{A,0}^\bot}$, so that the above expression is always non-negative:
\begin{description}
\item[Case $\bm{a=0}$] We want $\Lambda \geq \bra{\psi_{A,0}^\bot}Z_{A,0}\ket{\psi_{A,0}^\bot}$ for all $\ket{\psi_{A,0}^\bot}$. Hence, we need to choose $\Lambda \geq \norm{Z_{A,0}}$. 

\item[Case $\bm{a\neq0}$] See \autoref{eqn:za0} as a polynomial in $b$. The leading coefficient being non negative, we need to pick $\Lambda$ so that the discriminant is negative. The discriminant reads $4\Re(a\bra{\psi_{A,0}^\bot}Z_{A,0}\ket{\psi_{A,0}})^2 - 4\abs{a}^2\eps(\Lambda - \bra{\psi_{A,0}^\bot}Z_{A,0}\ket{\psi_{A,0}^\bot})$. Since for any complex number $x$, $\Re(x) \leq \abs{x}$, it is sufficient to have, $\abs{a}^2\abs{\bra{\psi_{A,0}^\bot}Z_{A,0}\ket{\psi_{A,0}}}^2 - \abs{a}^2 \eps (\Lambda - \bra{\psi_{A,0}^\bot}Z_{A,0}\ket{\psi_{A,0}^\bot}) \leq 0$. Since $a\neq0$, we want $\Lambda \geq \frac{1}{\eps} (\abs{\bra{\psi_{A,0}^\bot}Z_{A,0}\ket{\psi_{A,0}}}^2 - \bra{\psi_{A,0}^\bot}Z_{A,0}\ket{\psi_{A,0}^\bot})$. Hence, we need to choose $\Lambda \geq \norm{Z_{A,0}}^2 / \eps$.
\end{description}

Choosing $\Lambda \geq \max \{ \norm{Z_{A,0}}, \norm{Z_{A,0}}^2 / \eps \}$ concludes the proof.

\end{proof}





Let us note that the primal formulation sets constraints on the evolution of $\rho_{i-1}$ to $\rho_{i}$, starting from a fixed state $\rho_0$, and the optimization quantity that depends on $\rho_n$. In the dual formulation, the constraints are on the evolution of $Z_{i}$ to $Z_{i-1}$, starting form a fixed $Z_n$ and the optimisation quantity depends on $Z_0$. For this reason we will reverse the time evolution and consider point games that run backwards in time.


\section{Point games with EBM transitions}
\label{sec:EBM}
\label{sec:essence}

In the previous section we saw how to upper-bound the cheating probability of any weak coin flipping protocol by looking at the dual SDP formulation of the protocol and by providing dual feasible points, namely  sets of matrices $\{Z_{A,0},\dots,Z_{A,n}\}$ and  $\{Z_{B,0},\dots,Z_{B,n}\}$ that satisfy a number of conditions. One can think of these matrices as a security witness of the protocol. 

As we have explained, in order to keep track of the two cheating probabilities together, we will be interested in the quantity $\triple{\psi_i}{Z_{A,i} \otimes \I_\MM \otimes Z_{B,i}}{\psi_i}.$ 
For $i=0$, this quantity indeed provides an upper bound on the product of the two cheating probabilities (this only holds for $i=0$, since the state $\ket{\psi_0}$ is separable). Note also that this is the same quantity used by Kitaev in the proof of the lower bound on quantum strong coin flipping~\cite{Kit03}. 

This motivates the definition of EBM point games as graphical representations of the above quantity. This section is devoted to first formally defining EBM point games, and then showing that they are equivalent to WCF protocols and their dual feasible points. We first define the following function $\emph{prob}$.

\begin{definition}[prob]
Let $Z$ be a positive semidefinite matrix and denote by $\Pi^{[z]}$ the projector on the eigenspace of eigenvalue $z\in\sp(Z)$. We have $Z = \sum_z z \Pi^{[z]}$. Let $\ket{\psi}$ be a (not necessarily unit) vector. We define the function with finite support $\prob[Z,\psi]:[0,\infty) \to [0,\infty)$ as:
\begin{align*}
	\prob[Z,\psi](z)  &= \left\{ \begin{array}{cl}
		\bra{\psi} \Pi^{[z]} \ket{\psi} & \text{if}\ z\in\sp(Z) \\
		0 & \text{otherwise.}
	\end{array} \right.
\end{align*}
If $Z= Z_A \otimes \I_\MM \otimes Z_B$, using the same notation, we define the 2-variate function with finite support  $\prob[Z_A,Z_B,\psi]:[0,\infty) \times [0,\infty) \to [0,\infty)$ as:
\begin{align*}
	\prob[Z_A,Z_B,\psi](z_A,z_B)  &= \left\{ \begin{array}{cl}
		\bra{\psi} \Pi^{[z_A]} \otimes \I_\MM \otimes \Pi^{[z_B]} \ket{\psi} & \text{if}\ (z_A,z_B) \in\left(\mathrm{sp}(Z_A),\mathrm{sp}(Z_B)\right) \\
		0 & \text{otherwise.}
	\end{array} \right.
\end{align*}
\end{definition}

We now define EBM transitions and EBM point games.

\begin{definition}[EBM line transition]\label{def:EBM_Line_Transition}
	\label{def:EBMtransition}
	Let $l,r:[0,\infty) \to [0,\infty)$ be two functions with finite supports. The line transition $l\to r$ is \dt{expressible by matrices} (EBM) if there exist two positive semidefinite matrices $0 \preceq X \preceq Y$ and a (not necessarily unit) vector $\ket\psi$ such that $l = \prob[X,\psi] \quad\textrm{and}\quad r=\prob[Y,\psi]$.\end{definition}
	
\begin{definition}[EBM transition]\label{def:EBM_Transition}
	Let $p,q:[0,\infty)\times[0,\infty)\to[0,\infty)$ be two functions 
	with finite supports. The transition $p \to q$ is an \dt{EBM horizontal transition} if for all $y\in[0,\infty)$, $p(\cdot,y) \to q(\cdot,y)$ is an EBM line transition, and an \dt{EBM vertical transition} if for all $x\in[0,\infty)$, $p(x,\cdot) \to q(x,\cdot)$ is an EBM line transition.
\end{definition}

\begin{definition}[EBM point game]\label{def:EBM_Point_Game}
An \emph{EBM point game} is a sequence of functions $\{ p_0,  p_1 , \cdots , p_n \}$ with finite support such that:
\begin{itemize}
	\item $p_0 = 1/2 [0,1] + 1/2 [1,0]$;
	\item For all even $i$, $p_{i} \to p_{i+1}$ is an EBM vertical transition;
	\item For all odd $i$, $p_{i} \to p_{i+1}$ is an EBM horizontal transition;
	\item $p_n = 1 [\beta,\alpha]$.
\end{itemize}
\end{definition}

For completeness, we first show how to go from a given WCF protocol accompanied with two dual feasible points to an EBM point game.  
\begin{proposition}\label{Prop:Forward}
Given a WCF protocol with cheating probabilities $P^*_A$ and $P^*_B$, then, 
for any $\delta > 0$, there exists an EBM point game with final point $[P^*_B+\delta,P^*_A+\delta]$.
\end{proposition}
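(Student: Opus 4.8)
\textbf{Proof plan for \autoref{Prop:Forward}.}

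The plan is to build the EBM point game directly from the dual feasible points, using the function $\prob$ to turn the matrices $Z_{A,i}, Z_{B,i}$ and honest states $\ket{\psi_i}$ into distributions on the plane, and to run the construction backwards in time (consistent with the remark at the end of \autoref{sec:wcf} that the dual SDP constraints run from $Z_n$ to $Z_0$). First I would fix a WCF protocol together with two dual feasible points $\{Z_{A,i}\}$ and $\{Z_{B,i}\}$ whose parameters $\alpha,\beta$ (from constraint \ding{176}) satisfy $\alpha \le P_A^* + \delta$ and $\beta \le P_B^* + \delta$; these exist by \autoref{prop:AlphaAndBeta}. Then for each $i$ from $0$ to $n$ I define the candidate distribution $p_{n-i} := \prob[Z_{A,n-i}, Z_{B,n-i}, \psi_{n-i}]$, a function with finite support on $[0,\infty)\times[0,\infty)$. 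The bulk of the proof is to verify that the sequence $\{p_0,\dots,p_n\}$ (after the relabelling/time-reversal) is a legal EBM point game in the sense of \autoref{def:EBM_Point_Game}: correct endpoints, and each consecutive pair related by a horizontal or vertical EBM transition.

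For the endpoints: at the ``last'' index (corresponding to $i=n$ in the protocol, i.e. the start of the game) we use $Z_{A,n} = \Pi_A^{(1)}$, $Z_{B,n} = \Pi_B^{(0)}$ (constraint \ding{175}) together with correctness and balancedness of the protocol — the honest final state $\ket{\psi_n}$ is supported on the eigenvalue-$1$ spaces of both projectors with total weight... actually splitting into the $[1,0]$ and $[0,1]$ eigenvalue pairs each with weight $1/2$ — so this $p$ equals $\frac12[0,1] + \frac12[1,0]$, the required initial point $p_0$ of the game. At the ``first'' index ($i=0$), constraint \ding{176} says $\ket{\psi_{A,0}}$ and $\ket{\psi_{B,0}}$ are eigenvectors of $Z_{A,0}$ and $Z_{B,0}$ with eigenvalues $\beta$ and $\alpha$, and since $\ket{\psi_0}$ is a product state, $\prob[Z_{A,0},Z_{B,0},\psi_0]$ is concentrated on the single point $[\beta,\alpha] = 1[\beta,\alpha]$, the required final point $p_n$ of the game. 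This is where constraint \ding{176} earns its keep, which explains the earlier promise that ``the reason will become clear later''.

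For the transitions: on odd protocol rounds $i$, constraint \ding{174} for Bob gives $Z_{B,i-1} = Z_{B,i}$, so the second coordinate is untouched; and $\ket{\psi_i} = U_{A,i}\otimes\I_\BB \ket{\psi_{i-1}}$ with $E_{A,i}\ket{\psi_i}=\ket{\psi_i}$. Fixing an eigenvalue $y$ of $Z_{B,i}=Z_{B,i-1}$ and projecting $\ket{\psi_{i-1}}$ and $\ket{\psi_i}$ onto $\I_\AA\otimes\I_\MM\otimes\Pi^{[y]}$, I would set $\ket\psi := (\I\otimes\I\otimes\Pi^{[y]})\ket{\psi_i}$, $X := U_{A,i}^\dagger E_{A,i}(Z_{A,i}\otimes\I_\MM)E_{A,i}U_{A,i}$ restricted to the message-side eigenspace, and $Y := Z_{A,i-1}\otimes\I_\MM$ similarly; constraint \ding{173} gives exactly $0\preceq X\preceq Y$, and a short computation shows $\prob[X,\psi]$ is the slice $p_i(\cdot,y)$ (before the round, using that $E_{A,i}U_{A,i}\ket{\psi_{i-1}} = \ket{\psi_i}$ and that $Z_{A,i}\otimes\I$ commutes appropriately) and $\prob[Y,\psi]$ is the slice after the round. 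Hence each odd-round pair is a horizontal EBM transition; even rounds are symmetric with Alice/Bob and horizontal/vertical swapped. Stringing these together and reversing time gives the EBM point game. The main obstacle I anticipate is bookkeeping: getting the slices of $\prob[Z_A,Z_B,\psi]$ to match $\prob[X,\psi]$ and $\prob[Y,\psi]$ exactly requires care about how eigenspaces of $Z_{A,i}\otimes\I_\MM$ decompose under $U_{A,i}$ and $E_{A,i}$, and about the fact that $E_{A,i}U_{A,i}$ maps $\ket{\psi_{i-1}}$ to $\ket{\psi_i}$ on the honest branch while the projector $E_{A,i}$ only makes things smaller in the $X\preceq Y$ ordering — but no deep idea beyond Kitaev's original correspondence is needed, which is why this direction is the ``easy'' one and the converse (\autoref{thm:EBMReverse}) is the hard one.
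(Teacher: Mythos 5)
Your proposal is correct and is essentially the paper's own proof: obtain dual feasible points from \autoref{prop:AlphaAndBeta}, turn them into distributions via $\prob$, derive the endpoints of the game from constraints \ding{175} and \ding{176} together with correctness and balance, and derive the horizontal/vertical EBM transitions from constraints \ding{173} and \ding{174}. Two cosmetic index slips worth fixing: the definition should read $p_{n-i}:=\prob[Z_{A,i},Z_{B,i},\psi_i]$, so that the game's initial distribution $p_0$ comes from protocol step $n$ (as your prose, but not your formula, says), and the witness vector $\ket{\psi}$ for a horizontal transition at protocol round $i$ should be the $\Pi^{[y]}$-slice of $\ket{\psi_{i-1}}$ rather than of $\ket{\psi_i}$, consistent with $Y = Z_{A,i-1}\otimes\I_\MM$ acting on the pre-round state.
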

\begin{proof}
From \autoref{prop:AlphaAndBeta}, we have that $P_A^* = \inf \alpha$ and $P_B^* = \inf \beta$, where the infimum is taken over all dual feasible points, $\alpha$ is the eigenvalue of $Z_{A,0}$ that corresponds to the eigenvector $\ket{\psi_{A,0}}$ and similarly for $\beta$. Hence, for any $\delta >  0$ there exist two dual feasible points $\left\{Z_{A,i}\right\}$ and $\left\{Z_{B,i}\right\}$ with $\alpha = P_A^*+\delta$  and $\beta = P_B^*+\delta$.

Let 
\begin{align}
	p_{n-i} &= \prob[Z_{A,i},Z_{B,i},\psi_i] \nonumber\\
	&= \sum_{(z_{A,i},z_{B,i}) \in\left(\sp(Z_{A,i}),\sp(Z_{B,i})\right)} \bra{\psi_i} \Pi^{[z_{A,i}]} \otimes \I_\MM \otimes \Pi^{[z_{B,i}]} \ket{\psi_i} \ [z_{A,i},z_{B,i}]. \label{eqn:pnMinusi}
\end{align}
We show that the sequence of functions $\{p_0,\dots,p_n\}$ is indeed an EBM point game with $p_n = [P^*_B+\delta,P^*_A+\delta]$.
We first get the initial and final condition:
\begin{align*}
p_0 & = \prob[Z_{A,n},Z_{B,n},\psi_n] = \prob[\Pi_{A}^{(1)},\Pi_{B}^{(0)},\psi_n] = \ahalf [1,0] + \ahalf [0,1],\\
p_n & = \prob[Z_{A,0},Z_{B,0},\psi_0] = 1 [\beta,\alpha] = [P_B^*+\delta, P_A^* + \delta]. 
\end{align*}
Recall that we added an extra condition in \autoref{thm:dual}, namely that $\ket{\psi_{A,0}}$ and $\ket{\psi_{B,0}}$ are eigenstates of $Z_{A,0}$ and $Z_{B,0}$ respectively. This condition ensures us that the game ends with one final point, and not several points.

We now show that $p_i \rightarrow p_{i+1}$ are EBM transitions.
Let us assume that $i$ is odd. According to \autoref{eqn:pnMinusi}, the function $p_{n-i}$ (resp. $p_{n-i+1}$) corresponds to the matrix $Z_{i}$ (resp. $Z_{i-1}$) and the state $\ket{\psi_{i}}$ (resp. $\ket{\psi_{i-1}}$). Since $i$ is odd, the conditions of the dual SDP state that $Z_{B,i} = Z_{B,i-1}$ and also $\ket{\psi_{i}} = E_{A,i}U_{A,i}  \otimes \I_\BB \ket{\psi_{i-1}} $.
Using this we will now prove that the points only move horizontally and moreover, the total weight on every horizontal line remains unchanged (thus so does the total weight).
To see this, write: 

\begin{align*}
p_{n-i+1} = \sum_{(z_{A,i-1},z_{B,i-1})} \bra{\psi_{i-1}} \Pi^{[z_{A,i-1}]} \otimes \I_\MM \otimes \Pi^{[z_{B,i-1}]} \ket{\psi_{i-1}} \ [z_{A,i-1},z_{B,i-1}].
\end{align*}
\begin{align*}
p_{n-i} & =  \sum_{(z_{A,i},z_{B,i-1})}  \bra{\psi_{i}} \Pi^{[z_{A,i}]} \otimes \I_\MM \otimes \Pi^{[z_{B,i-1}]} \ket{\psi_i} \ [z_{A,i},z_{B,i-1}] ,\\
& =   \sum_{(z_{A,i},z_{B,i-1})} \bra{\psi_{i-1}} U_{A,i}^\dagger E_{A,i} (\Pi^{[z_{A,i}]} \otimes \I_\MM ) E_{A,i}U_{A,i} \otimes \Pi^{[z_{B,i-1}]} \ket{\psi_{i-1}} \ [z_{A,i},z_{B,i-1}].
\end{align*}

First, notice that $\sp(Z_{B,i}) = \sp(Z_{B,i-1})$ and hence the possible values for the second coordinate of the points remain the same. Second, the sum of the weights of the points in each horizontal line with second coordinate $z_{B,i-1}$ remains the same and equal to $\bra{\psi_{i-1}}  \I_\AA \otimes \I_\MM \otimes \Pi^{[z_{B,i-1}]}  \ket{\psi_{i-1}}$. Note that the projections $E_i$ leave the honest states unchanged. Note also that for every $z_{B,i-1}$, i.e. for every horizontal line, we can define the functions 
\begin{align*}
 p_{n-i+1}(\cdot,z_{B,i-1}) & =  \prob[  Z_{A,i-1} \otimes \I_\MM \otimes \Pi^{[z_{B,i-1}]} ,{\psi_{i-1}}],  \\
p_{n-i}(\cdot,z_{B,i-1}) & =  \prob[
U_{A,i}^\dagger E_{A,i} (Z_{A,i} \otimes \I_\MM ) E_{A,i}U_{A,i} \otimes \Pi^{[z_{B,i-1}]},{\psi_{i-1}}],
\end{align*}
and from the dual SDP condition \ding{173} in \autoref{thm:dual}, we have
$ Z_{A,i-1} \otimes \I_\MM \otimes \Pi^{[z_{B,i-1}]} \succeq U_{A,i}^\dagger E_{A,i} (Z_{A,i} \otimes \I_\MM ) E_{A,i}U_{A,i} \otimes \Pi^{[z_{B,i-1}]}$. From there, we conclude that $p_{n-i} \rightarrow p_{n-i+1}$ is a horizontal EBM transition.
Similarly, for $i$ even, the points move only vertically and the total weight on every vertical line remains unchanged. 
\end{proof}

Quite surprisingly, the reverse implication is also true.

\begin{thm}[EBM to protocol]
	\label{thm:EBMReverse}
	Given an EBM point game with final point $[\beta,\alpha]$, there exists a WCF protocol and two dual feasible points proving that the optimal cheating probabilities are $P_A^* \leq \alpha$ and $P_B^*\leq\beta$.
\end{thm}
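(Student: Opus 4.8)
The plan is to invert the construction carried out in the forward direction (\autoref{Prop:Forward}), i.e., to start from an EBM point game $\{p_0,\dots,p_n\}$ with $p_0 = \frac12[0,1]+\frac12[1,0]$ and $p_n = 1[\beta,\alpha]$, and produce a WCF protocol together with dual feasible points $\{Z_{A,i}\}$, $\{Z_{B,i}\}$ witnessing $P_A^*\le\alpha$, $P_B^*\le\beta$. The key idea is that each EBM transition comes, by \autoref{def:EBMtransition}, with witnessing data: for a horizontal transition $p_i\to p_{i+1}$ and each horizontal line $y$ we get matrices $0\preceq X^{(y)}\preceq Y^{(y)}$ and a vector $\ket{\psi^{(y)}}$ with $p_i(\cdot,y)=\mathrm{prob}[X^{(y)},\psi^{(y)}]$ and $p_{i+1}(\cdot,y)=\mathrm{prob}[Y^{(y)},\psi^{(y)}]$; symmetrically for vertical transitions. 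I would first assemble, for each time step, a single pair of matrices acting on an appropriately enlarged space by taking direct sums over the (finitely many) active lines — so that one matrix $Z_{A,i}$ (or $Z_{B,i}$, depending on parity) and the constraint $X\preceq Y$ on each line translate into exactly the dual SDP constraint \ding{173}, while on the ``inactive'' coordinate the matrix is carried over unchanged, giving \ding{174}. The spaces $\AA,\MM,\BB$ of the protocol will be built from the eigenspace structure of these matrices, and the honest states $\ket{\psi_i}$ from the witnessing vectors $\ket{\psi^{(y)}}$ (or $\ket{\psi^{(x)}}$).

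Concretely, reading the point game backwards (recall \autoref{Prop:Forward} sets $p_{n-i}=\mathrm{prob}[Z_{A,i},Z_{B,i},\psi_i]$, so time in the protocol runs opposite to time in the game), I would: (i) define $Z_{A,n}=\Pi_A^{(1)}$, $Z_{B,n}=\Pi_B^{(0)}$ to be the rank-one projectors matching $p_0=\frac12[1,0]+\frac12[0,1]$, and let the initial honest state $\ket{\psi_0}$ encode the two-point uniform distribution as a separable state on $\AA\otimes\MM\otimes\BB$; (ii) at each step, use the EBM witnesses to define the next matrix in the sequence and the unitary $U_i$ plus projector $E_i$ that map $\ket{\psi_{i-1}}$ to $\ket{\psi_i}$ — here the ``extra'' part of $Y$ relative to $X$ (since we only require $X\preceq Y$, not equality) must be absorbed by allowing $E_i$ to be a genuine (non-identity) projector and by padding with fresh message qubits, which is exactly why the protocol in \autoref{def:balancedwcf} was set up with the projectors $\{E_i\}$ and why the $\rho_{AM,i}$ are sub-normalized; (iii) check that constraint \ding{176} holds at the final matrices $Z_{A,0}, Z_{B,0}$ — this is where the hypothesis $p_n = 1[\beta,\alpha]$ is a \emph{single} point is used: a one-point final distribution forces $\ket{\psi_{A,0}}$ to lie in a single eigenspace of $Z_{A,0}$, with eigenvalue $\beta$, and likewise $\alpha$ for Bob. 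Then $P_B^*\le\mathrm{Tr}(Z_{A,0}\proj{\psi_{A,0}})=\beta$ and $P_A^*\le\alpha$ by \autoref{thm:dual}.

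The technical heart — and the step I expect to be the main obstacle — is reconciling the \emph{per-line} witnessing data across a single transition into a coherent global object living on a tensor-product space with a fixed message register, while simultaneously making the construction consistent from one step to the next. The per-line matrices $X^{(y)}, Y^{(y)}$ and vectors $\ket{\psi^{(y)}}$ are produced independently by \autoref{def:EBM_Transition}, with no a priori relation between the dimensions or bases used on different lines, or between what one transition outputs and what the next expects as input; one must glue them (via direct sums over lines, tensoring in the spectator register, and introducing ancillary message qubits to reconcile dimensions) so that the result genuinely factors as $Z_{A,i}\otimes\I_\MM\otimes Z_{B,i}$ and so that a \emph{single} honest state $\ket{\psi_i}$ is compatible with both the transition into step $i$ and the transition out of it. Making this bookkeeping precise — in particular defining the unitary $U_i$ that realizes the change of eigenbasis dictated by $X\preceq Y$ on the enlarged space, and verifying that the reconstructed $\{\rho_{AM,i}\}$ and $\{Z_{A,i}\}$ satisfy all the primal/dual constraints of \autoref{thm:primal} and \autoref{thm:dual} — is the bulk of the work; once done, strong duality (\autoref{thm:dual}) immediately yields the claimed bounds on $P_A^*$ and $P_B^*$.
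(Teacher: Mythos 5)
Your high-level plan is the same as the paper's: read the game backwards, set $Z_{A,n}=\Pi_A^{(1)}$ and $Z_{B,n}=\Pi_B^{(0)}$, encode each point $[z_A,z_B]$ as an orthogonal basis state of $\AA\otimes\MM\otimes\BB$, define the honest state $\ket{\psi_i}$ with amplitudes $\sqrt{p_{n-i}}$, extract $U_i$ and $E_i$ from the EBM witnesses, and use the ``catch'' part of $Y$ (the slack in $X\preceq Y$) together with the projectors $E_i$ and padding to absorb the sub-normalization. You also correctly identify both the role of the single final point in enforcing constraint \ding{176} and the genuine obstacle: the per-line witnesses $(X^{(y)},Y^{(y)},\ket{\psi^{(y)}})$ from \autoref{def:EBMtransition} are produced independently, with no shared dimension, basis, or spectrum, and they must be glued into a single object of the form $Z_{A,i}\otimes\I_\MM$ that is also coherent from one round to the next. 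However, the portion you label ``the bulk of the work'' is exactly the content of the theorem, and the fix you sketch --- direct sums over lines plus ancillary qubits to reconcile dimensions --- does not by itself close it. Padding by ancillas equalizes dimensions but not spectra, and a direct sum indexed by $z_B$ lives on $\AA\otimes\MM$ (or worse), so it does not automatically produce anything that factors as $Z_{A,i}\otimes\I_\MM$; you would still have to \emph{prove} the existence of some $Z_{A,i}$ dominating $\sum_{z_B}X_i^{(z_B)}\otimes\proj{z_B}$ with the right eigenvalue profile, and to make the output state of transition $i$ agree, as a vector, with the input state of transition $i+1$.

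The missing ingredient is the canonicalization result the paper isolates as \autoref{thm:SpectraXY}: every EBM line transition $l\to r$ over a coordinate set $S$ admits witnesses on the fixed $2|S|^2$-dimensional space $\mathrm{span}\{\ket{b,z,z'}\}$ in which $Y$ is \emph{always} the same matrix $\sum_{z\in S}z\proj{0,z,z}+\Lambda\sum_{z\in S}\proj{1,z,z}$, the matrix $X=\sum_{z\in S}z\proj{\varphi(z)}$ is a rotated copy of the same operator, and the vector is simultaneously $\ket{\psi}=\sum_z\sqrt{r(z)}\ket{0,z,z}=\sum_z\sqrt{l(z)}\ket{\varphi(z)}$. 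With this normal form in hand, the unitary $U_{A,i}^{(z_B)}$ is just the basis change $\ket{\varphi_i^{(z_B)}(z_A)}\mapsto\ket{0,z_A,z_A}$, the projector $E_A=\sum_{z_A}\proj{0,z_A,z_A}\otimes\I_{\BB'}$ enforces that the opponent uses only the proper coordinate as control, and --- crucially --- one may take a \emph{single constant} $Z_A$ for all rounds $0,\dots,n-1$ (and likewise $Z_B$), because $Y$ never changes. This eliminates the cross-round consistency problem entirely: there is nothing to reconcile between consecutive transitions because the dual matrix is fixed, and the motion of the game is carried entirely by the honest states and unitaries. Your proposal instead contemplates round-dependent $Z_{A,i}$ with interfaces that must match, which is permitted by the SDP but is precisely where a naïve gluing stalls. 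Without \autoref{thm:SpectraXY} (or an equivalent normal-form lemma), the construction as you have sketched it does not go through.
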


The proof of this fact is somewhat lengthy, though rather simple. In \autoref{thm:SpectraXY} we show that for every EBM line transition, there exist matrices $X$ and $Y$ and a vector $\ket{\psi}$ that certify that the transition is EBM and they have some extra nice properties. This is the non-constructive part of the proof, since we do not know how to find these matrices in an efficient way. Given these matrices and vectors it is not hard to construct the WCF protocol, namely define the honest states, the unitaries, and the projections. Alice and Bob will share a superposition of all points that appear in the EBM game and they will take turns changing the amplitudes according to the distributions over these points specified by the EBM transitions.

\begin{proof}
Consider an EBM point game with transitions $p_0 \to p_1 \to \cdots \to p_n = [\beta,\alpha]$ and let us define the sets of all possible first and second coordinates $z_A$ and $z_B$ of all the points that appear in the game:
\begin{align*}
	S_A &= \left\{ z_A \geq0\ |\ \exists i\in\{0,\ldots,n\},\ \exists z_B \geq0,\ p_i(z_A,z_B) >0 \right\}, \\
	S_B &= \left\{ z_B \geq 0\ |\ \exists i\in \{0,\ldots,n\},\ \exists z_A \geq0,\ p_i(z_A,z_B) >0 \right\}.
\end{align*}
We wish to find a protocol (honest states, unitaries, projections) and dual feasible points that guarantee that in this protocol Alice's and Bob's cheating probabilities are upper-bounded by $\alpha$ and $\beta$ respectively. 
The idea is the following: every point $[z_A,z_B]$ of the game will be represented as an orthogonal state $\ket{0,z_A}\ket{z_A,z_B}\ket{z_B,0} \in \AA \otimes \MM \otimes \BB$, where
\begin{align*}
	\AA  &= \mathrm{span} \{ \ket{b,z_A},\ b\in\{0,1\},\ z_A\in S_A \}, \\
	\MM &= \AA' \otimes \BB' =  \mathrm{span} \{ \ket{z_A,z_B},\  z_A\in S_A,\ z_B \in S_B \}, \\
	\BB  &= \mathrm{span} \{ \ket{z_B,b},\ b\in\{0,1\},\ z_B\in S_B \}.
\end{align*}

The honest state $\ket{\psi_i}$ of the protocol at round $i$ will be
\begin{align*}
	\ket{\psi_i} = \sum_{z_A\in S_A,z_B\in S_B} \sqrt{p_{n-i}(z_A,z_B)} \ket{0,z_A}\ket{z_A,z_B}\ket{z_B,0}.
\end{align*}

The message space $\MM$ contains the states that correspond to all the points of the game so that both players have alternate access to them and can manipulate their amplitudes by applying a unitary operation. The role of the unitary $U_i$ is to transform the state $\ket{\psi_{i-1}}$ to the state $\ket{\psi_i}$, in other words $U_i \ket{\psi_{i-1}} = \ket{\psi_i}$. Moreover, we need to ensure that when Alice (resp. Bob) applies a unitary, then it corresponds to a horizontal (resp. vertical) line transition; in other words that the sum of the squares of the amplitudes of the states with a fixed second (resp. first) coordinate remains unchanged. 
The way to achieve this, is to force Alice to perform a unitary on the message space and her workspace which only uses the second coordinate of the points as a control (similarly we need to force Bob to perform a unitary which only uses the first coordinate of the points as a control). For this reason, Alice (resp. Bob) keeps a copy of the first (resp. second) coordinate of the points and each has a qubit that becomes $1$ (via the unitary operation) when they catch the other player cheating, ie. not using the coordinate as control. We define the cheating detection projections $E_{A,i} = E_A=  \sum_{z_A} \proj{0,z_A,z_A}\otimes \I_{\BB'}$ and $E_{B,i} =  E_B = \sum_{z_B} \I_{\AA'} \otimes \proj{z_B,z_B,0}$ that allow Alice and Bob to prematurely end the protocol and declare themselves winner. Note these projections leave the honest states invariant.    

It remains to find the unitaries $U_i$ and the matrices $Z_i$. Let us assume that $i$ is odd (similarly for $i$ even), hence the transition $p_{n-i} \to p_{n-i+1}$ is horizontal; that is Alice applies the unitary $U_i=U_{A,i}\otimes\I_\BB$. Since we want this unitary to use the second coordinate only as control we have $U_{A,i} =  \sum_{z_B } U_{A,i}^{(z_B)} \otimes \proj{z_B}$.  Define the (non-normalized) states $\ket{\psi^{(z_B)}_{i-1}} = \sum_{z_A\in S_A} \sqrt{p_{n-i+1}(z_A,z_B)} \ket{0,z_A,z_A}$. Then in order to have $U_i \ket{\psi_{i-1}} = \ket{\psi_i}$, we need that
$ U_{A,i}^{(z_B)} \ket{\psi^{(z_B)}_{i-1}} = \ket{\psi^{(z_B)}_{i}}.$

We will show, in fact, that the state $\ket{\psi^{(z_B)}_{i-1}}$ can be rewritten in a different basis, so that the amplitudes are now given by the function $p_{n-i}$. Then, the unitaries $U_{A,i}^{(z_B)}$ will just perform this basis change. More precisely, we find the unitaries $U_{A,i}^{(z_B)}$ for $i=1,\ldots, n$ and a single matrix $Z_A (= Z_{A,0}=\cdots = Z_{A,n-1})$ by expressing each EBM line transition $p_{n-i}(\cdot,z_B)\to p_{n-i+1}(\cdot,z_B)$ as $\prob[X,\psi] \to \prob[Y,\psi]$, where the matrices $X,Y$ and the state $\ket{\psi}$ satisfy the properties of the following lemma:


\begin{lem}
\label{thm:SpectraXY}
	Let $l \to r$ be an EBM line transition and denote by $\supp(l)$ and $\supp(r)$ the supports of $l$ and $r$ respectively. Let $S$ be a set such that $\supp(l) \cup \supp(r) \subseteq S$ and $\Lambda > \max \{z: z\in S\}$. Given a set of orthonormal vectors $\{\ket{z}, z\in S\}$, there exists a family of $|S|$ orthonormal vectors $\{\ket{\varphi(z)},\ z\in S\}$ in the $2|S|^2$-dimensional space $\mathrm{span}\{\ket{b,z,z'},\ b\in\{0,1\},\ z,z'\in S\}$ such that
	\begin{itemize}
		\item the state $\ket{\psi} = \sum_z \sqrt{r(z)} \ket{0,z,z}$ can be expressed as $\ket{\psi} = \sum_z \sqrt{l(z)}\ket{\varphi(z)}$,
		\item $l = \prob[X,\psi]$ and $r = \prob[Y,\psi]$, with $0\preceq X \preceq Y$, and
\begin{align*}
 Y =  \sum_{z\in S} z \proj{0,z,z} + \Lambda \sum_{z\in S} \proj{1,z,z} 
 \quad\text{and}\quad
  X = \sum_{z\in S} z \proj{\varphi(z)}.
\end{align*}
\end{itemize}
\end{lem}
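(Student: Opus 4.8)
\emph{Proof proposal.}\quad
The plan is to realise the given EBM line transition in the canonical form by transporting an abstract witness into the $2|S|$--dimensional subspace $\mathcal{W} := \mathrm{span}\{\ket{0,z,z},\ket{1,z,z}: z\in S\}$ of the ambient space, via a carefully chosen isometry, using the $\Lambda$--eigenspace of $Y$ to absorb whatever part of the witness the state $\ket{\psi}$ does not ``see''. Throughout write $\ket{\psi} := \sum_{z\in S}\sqrt{r(z)}\,\ket{0,z,z}$ as in the statement.

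First I would observe that two of the three required properties come for free once the $\ket{\varphi(z)}$ are in hand: suppose $\{\ket{\varphi(z)}\}_{z\in S}$ is orthonormal with $\ket{\psi} = \sum_z\sqrt{l(z)}\,\ket{\varphi(z)}$, and let $X,Y$ be given by the displayed formulas. Since the eigenspace of $Y$ for an eigenvalue $z\in S$ contains $\ket{0,z,z}$ and $\ket{\psi}$ has no component on $\ket{1,z,z}$ nor on any $\ket{b,z,z'}$ with $z\neq z'$, one gets $\bra{\psi}\Pi_Y^{[z]}\ket{\psi} = r(z)$, i.e.\ $r=\prob[Y,\psi]$. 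Likewise $\Pi_X^{[z]} = \proj{\varphi(z)}$ for $z\neq 0$, so $\bra{\psi}\Pi_X^{[z]}\ket{\psi} = |\braket{\varphi(z)}{\psi}|^2 = l(z)$ by orthonormality, while the eigenvalue $0$ is handled by noting $\ket{\psi}\in\mathrm{span}\{\ket{\varphi(z)}\}$. Moreover, since $Y$ annihilates $\mathcal{W}^\perp$ and $X\succeq 0$, the inequality $X\preceq Y$ forces $\mathrm{range}(X)\subseteq\mathcal{W}$; hence the entire problem reduces to producing an orthonormal family $\{\ket{\varphi(z)}\}$ \emph{inside} $\mathcal{W}$ with $\ket{\psi} = \sum_z\sqrt{l(z)}\,\ket{\varphi(z)}$ and $X\preceq Y$.

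Next I would take an abstract witness $0\preceq X_0\preceq Y_0$ on a space $\mathcal{H}_0$, together with $\ket{\psi_0}$, realising $l = \prob[X_0,\psi_0]$, $r=\prob[Y_0,\psi_0]$ (provided by the definition of an EBM line transition), and normalise it. Restricting to the smallest $X_0$-- and $Y_0$--invariant subspace containing $\ket{\psi_0}$, relabelling the eigenvalues of $X_0$ not seen by $\ket{\psi_0}$ as $0$, and replacing $Y_0$ by a witness with $\|Y_0\|\le\Lambda$, one should be able to assume $\sp(X_0)\subseteq\supp(l)\cup\{0\}$, $\sp(Y_0)\subseteq\supp(r)\cup\{\Lambda\}$, and $\mathcal{H}_0 = \mathcal{V}_l+\mathcal{V}_r$, where $\mathcal{V}_l := \mathrm{span}\{\ket{\ell_x}: x\in\supp(l)\}$ with $\ket{\ell_x}$ the unit $X_0$--eigenvector (so $\ket{\psi_0} = \sum_x\sqrt{l(x)}\,\ket{\ell_x}$) and $\mathcal{V}_r := \mathrm{span}\{\ket{r_y}: y\in\supp(r)\}$ (so $\ket{\psi_0} = \sum_y\sqrt{r(y)}\,\ket{r_y}$); then $\dim\mathcal{H}_0\le|\supp(l)|+|\supp(r)|\le 2|S|$ and the $Y_0$--invariant complement $\mathcal{N} := \mathcal{H}_0\ominus\mathcal{V}_r$ satisfies $\dim\mathcal{N}\le|\supp(l)|\le|S|$. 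I expect this normalisation --- and specifically producing a witness with $\|Y_0\|\le\Lambda$, which is exactly where one must use that the transition is genuinely EBM and not merely a limit of EBM line transitions --- to be the main obstacle; everything else is bookkeeping with compressions and isometries.

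Finally, with the normalised witness I would define the isometry $V:\mathcal{H}_0\to\mathcal{W}$ by $V\ket{r_y} = \ket{0,y,y}$ on $\mathcal{V}_r$ and by an arbitrary isometry $\mathcal{N}\hookrightarrow\mathrm{span}\{\ket{1,z,z}: z\in S\}$ on $\mathcal{N}$ (possible as $\dim\mathcal{N}\le|S|$), set $\ket{\varphi(x)} := V\ket{\ell_x}$ for $x\in\supp(l)$, and complete $\{\ket{\varphi(z)}\}_{z\in S}$ to an orthonormal family using vectors in $\mathrm{span}\{\ket{1,z,z}\}$. Then orthonormality is clear and $\sum_z\sqrt{l(z)}\,\ket{\varphi(z)} = V\ket{\psi_0} = \ket{\psi}$ since $\ket{\psi_0} = \sum_y\sqrt{r(y)}\,\ket{r_y}$, so only $X\preceq Y$ remains. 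For this I would argue blockwise with respect to $\mathcal{W} = V\mathcal{V}_r\oplus V\mathcal{N}\oplus(\text{rest})$, using that $Y$ is block-diagonal there (all $\ket{0,z,z},\ket{1,z,z}$ being $Y$--eigenvectors): on $V\mathcal{V}_r$ the block of $Y-X$ is the isometric image of the compression $\Pi_{\mathcal{V}_r}(Y_0-X_0)\Pi_{\mathcal{V}_r}\succeq 0$; on $V\mathcal{N}$ the operator $Y$ acts as $\Lambda\,I$, which dominates $X$ there since any part of $X$ supported on $\mathrm{span}\{\ket{1,z,z}\}$ has norm $\le\max S<\Lambda$; and once the bottom-right corner $Y_0|_{\mathcal{N}}$ is replaced by $\Lambda\,I$, the full block inequality becomes $(Y_0-X_0)+\bigl(0\oplus(\Lambda\,I-Y_0|_{\mathcal{N}})\bigr)\succeq 0$, which holds because $\|Y_0|_{\mathcal{N}}\|\le\|Y_0\|\le\Lambda$. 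This yields $X\preceq Y$ and finishes the construction.
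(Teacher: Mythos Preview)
Your strategy is essentially the paper's: normalise an abstract witness $(X_0,Y_0,\ket{\psi_0})$ so that its spectra and dimension are controlled, then transport it by an isometry into the canonical block $\mathcal W=\mathrm{span}\{\ket{b,z,z}\}$. You are also right that the normalisation---specifically, arranging $\|Y_0\|\le\Lambda$ so that the $\Lambda$-eigenspace of $Y$ can absorb the part of the witness not seen by $\ket\psi$---is the crux. The paper handles this point differently: rather than bounding $\|Y_0\|$, it discards the part of $Y_0$ orthogonal to $\mathcal V_r$, redefining $Y:=\sum_y y\proj{r_y}$ (supported on $\mathcal V_r$, hence automatically with norm $\le\max S$), and then pads $X,Y$ by direct sums $X\leftarrow X\oplus[z]$, $Y\leftarrow Y\oplus[\Lambda]$ to reach the target spectra.

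The obstacle you flag is genuine, and in fact neither route works for arbitrary $\Lambda>\max S$: the paper asserts $X\preceq Y$ immediately after its reduction, but this is false in general. For the EBM merge $\tfrac12[1]+\tfrac12[3]\to[\tfrac52]$ (witnessed by $X_0=\mathrm{diag}(1,3)$, $Y_0=\tfrac52\proj{+}+5\proj{-}$, $\ket{\psi_0}=\ket{+}$) the paper's reduced $Y=\tfrac52\proj{+}$ gives $\tr(Y-X_0)=-\tfrac32$; and in the canonical form, writing $\ket{\varphi(1)},\ket{\varphi(3)}=\tfrac1{\sqrt2}(\ket{0,\tfrac52,\tfrac52}\pm\ket\xi)$, positivity of the compression of $Y-X$ to $\mathrm{span}\{\ket{0,\tfrac52,\tfrac52},\ket\xi\}$ forces $\bra\xi Y\ket\xi\ge4$, impossible when $\Lambda<4$. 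Both your route and the paper's are repaired by taking $\Lambda$ large enough (at least $\|Y_0\|$ for some witness of each line transition), which is harmless in the only application, \autoref{thm:EBMReverse}, where a single $\Lambda$ is chosen for the whole game and may be taken as large as needed. With that amendment your block argument in the final paragraph goes through via the chain $\sum_x x\proj{\ell_x}\preceq\Pi_{\mathcal H_0}X_0\Pi_{\mathcal H_0}\preceq\Pi_{\mathcal H_0}Y_0\Pi_{\mathcal H_0}=Y_0|_{\mathcal V_r}\oplus\Pi_{\mathcal N}Y_0\Pi_{\mathcal N}\preceq Y_0|_{\mathcal V_r}\oplus\Lambda I_{\mathcal N}$, and your claim that the rest is bookkeeping is accurate. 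One small correction to your second paragraph: restricting to the smallest $X_0$- and $Y_0$-invariant subspace does \emph{not} yield $\mathcal H_0=\mathcal V_l+\mathcal V_r$ (the invariant subspace can be much larger), so drop that step and work directly with $\mathcal H_0:=\mathcal V_l+\mathcal V_r$, using compressions $\Pi_{\mathcal H_0}(\cdot)\Pi_{\mathcal H_0}$ rather than restrictions.
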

We defer the proof of this lemma to the end of the section and continue with the proof of the theorem.

For each $z_B \in S_B$ and each EBM line transition $p_{n-i}(\cdot,z_B) \to p_{n-i+1}(\cdot,z_B)$, we apply \autoref{thm:SpectraXY} with $S = S_A$. This defines
\begin{align*}
	X_i^{(z_B)} & = \sum_{z_A \in S_A} z_A \proj{\varphi_i^{(z_B)}(z_A)},\\
           Y& =   \sum_{z_A \in S_A} z_A \proj{0,z_A,z_A} + \Lambda \sum_{z_A \in S_A} \proj{1,z_A,z_A}, \\
 	\ket{\psi^{(z_B)}_{i-1}} & = \sum_{z_A\in S_A} \sqrt{p_{n-i+1}(z_A,z_B)} \ket{0,z_A,z_A} = \sum_{z_A} \sqrt{p_{n-i}(z_A,z_B)} \ket{ \varphi_i^{(z_B)}(z_A)}.
\end{align*}
 We can now define the unitary $U^{(z_B)}_{A,i}$ by its action on a subspace of $\AA\otimes \AA'$:
\begin{align*}
U_{A,i}^{(z_B)}:  \ket{\varphi_i^{(z_B)}(z_A)} \mapsto \ket{0,z_A,z_A}.
\end{align*}
We complete $U_{A,i}^{(z_B)}$ so that it is a unitary on $\AA\otimes\AA'$. Note that we have:
\begin{align*}
	U_{A,i}^{(z_B)}\sum_{z_A} \sqrt{p_{n-i+1}(z_A,z_B)} \ket{0,z_A,z_A} 
		& = U_{A,i}^{(z_B)}\sum_{z_A} \sqrt{p_{n-i}(z_A,z_B)} \ket{ \varphi_i^{(z_B)}(z_A)}, \\
		&= \sum_{z_A} \sqrt{p_{n-i}(z_A,z_B)} \ket{0, z_A,z_A},
\end{align*}
and thus have $U_i\ket{\psi_{i-1}} = \ket{\psi_{i}}$. Moreover, by the definition of the unitary and the cheating detection projection, we can see that indeed Bob is forced to use the first coordinate only as control. 
Note that we also have $X_i^{(z_B)} = U_{A,i}^\dagger E_{A} Y E_{A}U_{A,i} $.

We now need to define $Z_A$ acting only on the space $\AA$. We take $Z_A=\sum_{z_A \in S_A} z_A \proj{0,z_A} + \Lambda \sum_{z_A\in S_A} \proj{1,z_A}$ and note that the support of $\prob[Y,{\psi^{(z_B)}_{i-1}}]$ is equal to the support of $\prob[Z_A \otimes \I_{\AA'} , {\psi^{(z_B)}_{i-1}}]$. 

In other words, by the definition of the honest states, the projections, the unitaries and the dual feasible point, we have shown that any EBM line transition can be expressed as
\begin{align*}
p_{n-i+1}(\cdot,z_{B}) & = \prob[Y,{\psi^{(z_B)}_{i-1}}] =  \prob[  Z_A \otimes \I_{\AA'} ,{\psi^{(z_B)}_{i-1}}] = \prob[  Z_{A} \otimes \I_\MM \otimes \Pi^{[z_{B}]} ,{\psi_{i-1}}],  \\
p_{n-i}(\cdot,z_{B}) & =  \prob[X,{\psi^{(z_B)}_{i-1}}] =  \prob[U_{A,i}^{(z_B)\dagger} E_{A} (Z_{A} \otimes \I_{\AA'} ) E_{A}U_{A,i}^{z_B},{\psi^{(z_B)}_{i-1}}  ] \\
&= \prob[U_{A,i}^\dagger E_{A} (Z_{A} \otimes \I_\MM ) E_{A}U_{A,i} \otimes \Pi^{[z_{B}]},{\psi_{i-1}}].
\end{align*}
This is precisely the type of EBM line transitions that arose when we started from a protocol and a dual feasible point.

We need to verify that the $Z_A$'s we defined are a dual feasible point. According to the constraints of the dual, we pick $Z_{A,n} = \Pi_A^{(1)} = \proj{0,1}$. Since the initial points of the point game are $[0,1]$ and $[1,0]$, then 1 is an eigenvalue of $Z_A$, so we have $\Pi_A^{(1)} \preceq Z_A$, i.e. $Z_{A,n} \preceq Z_{A,n-1}$. For $i=\{ 0,\ldots, n-1\}$, we have
\begin{align*}
U^\dagger_{A,i} E_{A} (Z_A\otimes \I_\MM) E_{A} U_{A,i} 
	&= U^\dagger_{A,i} \left(\sum_{z_A}z_A  \proj{0,z_A,z_A} \otimes \sum_{z_B}\proj{z_B} \right) U_{A,i}, \\	
	&= \sum_{z_B} \sum_{z_A} z_A U^{(z_B)\dagger}_{A,i} \proj{0,z_A,z_A}U^{(z_B)} _{A,i}  \otimes \proj{z_B}, \\
	&= \sum_{z_B} X_i^{(z_B)} \otimes \proj{z_B},\\
	&\preceq Y \otimes \I_{\BB'}, \\
	&= \left( \sum_{z_A\in S_A} z_A \proj{0,z_A,z_A} + \Lambda \proj{1,z_A,z_A} \right) \otimes \I_{\BB'}, \\
	&\preceq  \left( \sum_{z_A,z'_A\in S_A} z_A \proj{0,z_A,z'_A} + \Lambda \proj{1,z_A,z'_A} \right) \otimes \I_{\BB'}, \\
	& = Z_A\otimes \I_\MM.
\end{align*}
To see that the first inequality is correct, consider a state $\ket{\zeta}$ in $\AA\otimes\MM$, $\ket\zeta= \sum_{z_B}\ket{\zeta_{z_B}}\ket{z_B}$. We get $\bra{\zeta} \left( \sum_{z_B} X_i^{(z_B)} \otimes \proj{z_B} \right) \ket{\zeta} = \sum_{z_B} \bra{\zeta_{z_B}} X_i^{(z_B)}\ket{\zeta_{z_B}} \leq \sum_{z_B} \bra{\zeta_{z_B}} Y \ket{\zeta_{z_B}} = \bra{\zeta}Y\otimes\I_{\BB'}\ket{\zeta}$ by \autoref{thm:SpectraXY}, hence $\sum_{z_B} X_i^{(z_B)} \otimes \proj{z_B} \preceq  Y \otimes \I_{\BB'}$. 
\end{proof}

\begin{proof}[Proof of \autoref{thm:SpectraXY}]
Let $l \to r$ be an EBM line transition, so by definition there exist two positive semidefinite matrices $X_0 \preceq  Y_0$ and a vector $\ket{\psi_0}$ such that $l = \prob[X_0, \psi_0]$ and $r = \prob[ Y_0, \psi_0]$. We will now make a succession of transformations to $X_0,Y_0$, and $\ket{\psi_0}$ in order to show that they can satisfy the properties of the Lemma. 

Notice that the size  of the matrices $X_0$ and $Y_0$ is unknown. We first see that we can decrease their size to at most $\abs{S}$. We start by diagonalizing $X_0$ and $Y_0$:
\begin{align*}
	X_0 = \sum_{x} x \Pi_{X_0}^{[x]}  \quad\quad\text{and}\quad\quad Y_0 = \sum_{y} y \Pi_{Y_0}^{[y]}.
\end{align*}
To remove the multiplicities of the eigenvalues, we go into the Hilbert space $\HH$, spanned by $\{\Pi_{X_0}^{[x]}\ket{\psi_0},  \Pi_{Y_0}^{[y]}\ket{\psi_0} \}$. This space has dimension at most $\abs{\supp(p) \cup \supp(q)} \leq \abs{S}$. We define the new $\ket{\psi} = \Pi_\HH \ket{\psi_0}$ as the projection of $\ket{\psi_0}$ on $\HH$ and the matrices $X$ and $Y$ by
\begin{align*}
	X = \sum_x x \Pi_X^{[x]} \quad\text{and}\quad\quad Y = \sum_y y \Pi_Y^{[y]},
\end{align*}
where $\Pi^{[x]}_X$ is the projector onto the one-dimensional space spanned by $\Pi_{X_0}^{[x]}\ket{\psi_0}$ and $\Pi^{[y]}_Y$ is the projector onto the one-dimensional space spanned by $\Pi_{Y_0}^{[y]}\ket{\psi_0}$.
These matrices have size at most $\abs{S}$.  By construction, the matrices $X$, $Y$ and the vector $\ket\psi$ satisfy the four properties 
\begin{itemize}
	\item $X \preceq Y$;
	\item $l = \prob[X,\psi]$ and $r = \prob[Y, \psi]$;
	\item The eigenvalues of $X$ are in $\supp(l)$ with multiplicity 1;
	\item The eigenvalues of $Y$ are in $\supp(r)$ with multiplicity 1.
\end{itemize}
Then, we will append the values in $S$ that are not yet into the spectra of $X$ and $Y$. This is done by increasing the dimension of the matrices and the vector $\ket\psi$ by the following algorithm: \\
For each $z$ in $S$ do:
\begin{itemize}
	\item if $z$ is in the spectrum of $X$ but not $Y$, $X \leftarrow X \oplus [0]$ and $Y \leftarrow  Y \oplus [z]$;
	\item if $z$ is in the spectrum of $Y$ but not $X$, $X \leftarrow X \oplus [z]$ and $Y \leftarrow Y \oplus  [\Lambda]$;
	\item if $z$ is neither in the spectrum of $X$ nor $Y$, $X \leftarrow X \oplus [z]$ and $Y \leftarrow Y  \oplus [z]$.
\end{itemize}
The output of this algorithm are matrices of size less or equal to $2\abs{S}$. We append extra 0 to $X$ and extra $\Lambda$ to $Y$ until they have exactly size $2\abs{S}$. We also increase the dimension of $\ket{\psi}$ by appending 0's. 

We have constructed two matrices $0 \preceq X \preceq Y$ and a vector $\ket\psi$ of dimension $2\abs{S}$ such that $l = \prob[X,\psi]$ and $r = \prob[Y,\psi]$. Moreover the spectrum of $X$ is exactly $\{0\}\cup S$ and all non zero eigenvalues have multiplicity one; the spectrum of $Y$ is exactly $S \cup \{\Lambda\}$ and all the eigenvalues in $S$ have multiplicity one. Thus, they can be decomposed as:
\begin{align*}
	X = \sum_{z\in S} z \proj{u_z} \quad\text{and}\quad\quad Y = \sum_{z\in S} z \proj{v_z} + \Lambda P,
\end{align*}
where the $\{ \ket{u_z} \}$ and $\{ \ket{v_z} \}$ are orthonormal families of vectors and $P$ is the projector onto the complement of $\mathrm{span}\{ \ket{v_z}\}$.

We now increase the size of $X$, $Y$, and $\ket\psi$ by appending 0's to all of them until they reach size $2|S|^2$. In particular, we can write  $X = \sum_{z\in S} z \proj{u'_z}$  and $Y = \sum_{z\in S} z \proj{v'_z} + \Lambda P'$ where $\ket{u'_z} = \ket{u_z} \otimes \ket{0^S}$, $\ket{v'_z} = \ket{v_z}\otimes\ket{0^S}$, and $P' = P \otimes \proj{0^S}$. As a consequence, $P'$ is a projector on a $|S|$-dimensional subspace of the $2|S|^2$-dimensional space. Then, let $U$ be a unitary that maps $\ket{v'_z}$ to $\ket{0,z,z}$ and sends $P'$ to $\sum_z \proj{1,z,z}$ (Such unitary exists since $P'$ is a projector onto a space of size $|S|$ orthogonal to the space spanned by the vectors $\{\ket{u'_z},\ z\in S\}$). We define $\ket{\varphi(z)} = Ue^{i\theta_z}\ket{u'_z}$ so that applying $U$ to $X$, $Y$, and $\ket\psi$ leads to:
\begin{align*}
	X = \sum_{z\in S} z \proj{\varphi(z)} \ ; \ Y = \sum_{z\in S}   z \proj{0,z,z} + \Lambda \proj{1,z,z} \ \text{and} \ \ket{\psi} = \sum_z \sqrt{l(z)}\ket{\varphi(z)}.
\end{align*}
\end{proof}
 
 \COMMENT{
\lnew{
\begin{lem}
This construction gives a protocol with $n$ messages and $2\lceil\log(S_A)\rceil + 2\lceil\log(S_B)\rceil+2$ qubits.
\end{lem}
}
 }


\section{Point games with valid transitions}
\label{sec:pointgames-valid}

Here is where we stand now: we have defined points games with EBM transitions, starting at points $1/2 [0,1] + 1/2[1,0]$ and ending at some point $[\beta,\alpha]$. We have also shown that for each point game with final point $[\beta,\alpha]$,  we can construct a weak coin flipping protocol and dual feasible points proving that the cheating probabilities are $P_A^* \leq \alpha$ and $P_B^* \leq \beta$. The final goal will thus be to find an EBM point game with final point $[1/2+\eps, 1/2+\eps]$ for any $\eps>0$. 

This task is quite challenging. We do not currently know of a direct way to handle EBM transitions ($\prob$ functions, matrices, vectors). The aim of this Section is to find an alternative characterization of EBM transitions that is easier to manipulate; those would be called valid transitions. 
For that matter, we turn to the help of convex geometry. If the space which we are considering were finite dimensional, a rather simple argument would suffice; we explain it below, but unfortunately we do not know how to make this simple argument work in the infinite dimensional case and hence more elaborate work is needed. 

\subsection{Moving between transitions and functions}
The first idea is to shift our view from transitions to functions, instead of considering transitions $p\to q$, we will look at functions $(q-p)$. The reasons behind this change are the following: it is easier to have one function $(q-p)$ than a pair $(p,q)$; the set of functions arising from EBM transitions has ``good'' geometry; and for all intents and purposes functions and transitions behave the same.


We start by defining the set $K$ of EBM functions and explain how point games with functions are equivalent to point games with transitions. This will allow us in the following subsections to look at the geometric properties of $K$.  We remark that in anticipation of the difficulties arising in infinite dimensions, we will also provide these definitions with a parameter $\Lambda$. For $\Lambda=\infty$ we have precisely the set of EBM functions, however later on we will need to consider sets with a finite $\Lambda>0$. 
For now, it might be instructive to just think of $\Lambda$ as $\infty$ and consider only the first items of the following definitions and lemmata.

\begin{definition}[$K$, EBM functions]
A function $h:[0,\infty)\to\R$ with finite support is an \dt{EBM function} if the line transition $h^- \to h^+$ is EBM, where $h^+:[0,\infty)\to[0,\infty)$ and $h^-:[0,\infty)\to[0,\infty)$ denote respectively the positive and the negative part of $h$ ($h=h^+-h^-$). We denote by $K$ the set of EBM functions.

For any finite $\Lambda \in (0,\infty)$, a function $h:[0,\Lambda)\to\R$ with finite support is an \dt{EBM function with support on $[0,\Lambda]$} if the line transition $h^- \to h^+$ is expressible by matrices with spectrum in $[0,\Lambda]$, where $h^+$ and $h^-$ denote respectively the positive and the negative part of $h$. We denote by $K_\Lambda$ the set of EBM functions with support on  $[0,\Lambda]$.
\end{definition}
As with transitions, we can also extend the notions of horizontal and vertical transitions to functions. 
\begin{definition}
A $\PP$-function $h:[0,\infty)\to\R$ is a function with finite support that has the property $\PP$. A function $t:[0,\infty)\times [0,\infty) \to \R$ is a 
\begin{itemize}
	\item \dt{horizontal $\PP$-function} if for all $y\geq0,\ t(\cdot,y)$ is a $\PP$-function;
	\item \dt{vertical $\PP$-function} if for all $x\geq0,\ t(x,\cdot)$ is a $\PP$-function.
\end{itemize}
\end{definition}
For now, we have only have seen $\PP$ being EBM, but we later see other properties, namely, valid and strictly valid. These  definitions are useful for defining point games with $\PP$-functions. To see how, consider a point game $1/2[0,1]+1/2[1,0] \to p_1 \to p_2 \to \cdots \to p_n$ and define the functions $t_0 = 1/2[0,1]+1/2[1,0]$ and $t_i = p_i - p_{i-1}$. It is easy to see that we have $p_i = \sum_{j=0}^i t_i$. Hence:
\begin{definition}[Point game with $\PP$- functions]
\label{def:pointgamewithfunctions}
A \textit{point game with $\PP$-functions} is a set $\{t_1,\dots,t_n\}$ of $n$ $\PP$-functions alternatively horizontal and vertical such that:
\begin{itemize}
	\item $1/2[0,1] +  1/2[1,0] + \sum_{i=1}^n t_i= [\beta,\alpha]$;
	\item $\forall j \in \{1,n\}, 1/2[0,1] + 1/2[1,0] +\sum_{i=1}^j t_i  \geq 0$.
\end{itemize}
We call $[\beta,\alpha]$ the final point of the game.
\end{definition}
The first condition simply rewrites the initial and final points; the second one expresses the fact that the $p_i$'s are non-negative.

We have seen how a point game with EBM transitions can be translated into a point game with EBM functions. The reverse also holds:
\begin{lem}
Given a point game with $n$ EBM functions and final point $[\beta,\alpha]$, we can construct a point game with $n$ EBM transitions and final point $[\beta,\alpha]$.
\end{lem}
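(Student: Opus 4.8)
The plan is to realize the transitions as the partial sums of the functions, and then reduce the ``EBM-ness'' of each consecutive transition to a simple padding argument on matrices.

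First I would set $p_0 = \frac{1}{2}[0,1] + \frac{1}{2}[1,0]$ and, for $j = 1, \dots, n$, define $p_j = p_0 + \sum_{i=1}^{j} t_i$, so that $p_j - p_{j-1} = t_j$. The two conditions in \autoref{def:pointgamewithfunctions} say precisely that every $p_j$ is a non-negative function with finite support and that $p_n = [\beta,\alpha]$; the orientations of the transitions $p_{j-1} \to p_j$ inherit the alternating horizontal/vertical pattern of the $t_j$'s, and $p_0$ is the required initial distribution. So the only thing left to check is that each $p_{j-1} \to p_j$ is an EBM transition, given only that $t_j$ is an EBM function, i.e.\ that on each relevant line the line transition $t_j^- \to t_j^+$ is EBM.

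The key observation I would use is that $p_{j-1} \geq t_j^-$ pointwise. Indeed, $p_j = p_{j-1} + t_j \geq 0$ forces $p_{j-1} \geq -t_j$; on the support of $t_j^-$, where $t_j^+$ vanishes, this is exactly $p_{j-1} \geq t_j^-$, and off that support $t_j^- = 0 \leq p_{j-1}$. Hence $s_j := p_{j-1} - t_j^-$ is a non-negative finite-support function, and a one-line computation gives $p_{j-1} = t_j^- + s_j$ and $p_j = t_j^+ + s_j$. In other words $p_{j-1} \to p_j$ is obtained from the EBM transition $t_j^- \to t_j^+$ by adding the common non-negative function $s_j$ to both endpoints.

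It therefore suffices to establish a padding lemma: if $l \to r$ is an EBM line transition and $s : [0,\infty) \to [0,\infty)$ has finite support, then $l+s \to r+s$ is an EBM line transition. This I would prove by a direct sum: writing $l = \prob[X,\psi]$ and $r = \prob[Y,\psi]$ with $0 \preceq X \preceq Y$ as in \autoref{def:EBM_Line_Transition}, put $D = \sum_{z \in \supp(s)} z \proj{e_z}$ on fresh orthonormal vectors $\{\ket{e_z}\}$ and $\ket{\chi} = \sum_{z} \sqrt{s(z)}\,\ket{e_z}$; then $X' = X \oplus D$, $Y' = Y \oplus D$ and $\ket{\psi'} = \ket{\psi} \oplus \ket{\chi}$ still satisfy $0 \preceq X' \preceq Y'$, while $\prob[X',\psi'] = l + s$ and $\prob[Y',\psi'] = r + s$ (when an eigenvalue of $D$ coincides with one of $X$ or $Y$ the two weights simply add, which is exactly what $l+s$ and $r+s$ prescribe). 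Applying this on every horizontal line if $t_j$ is horizontal, and on every vertical line if $t_j$ is vertical, shows $p_{j-1} \to p_j$ is an EBM transition of the correct orientation, so $\{p_0, \dots, p_n\}$ is the desired EBM point game. The only place requiring any care is the eigenvalue bookkeeping in the padding lemma, and even that is routine; I do not anticipate a genuine obstacle.
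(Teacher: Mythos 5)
Your proof follows the paper's argument exactly: define the $p_j$'s as partial sums, observe $p_{j-1}\geq t_j^-$ so that $s_j=p_{j-1}-t_j^-$ is non-negative, and write $p_{j-1}\to p_j$ as $s_j+t_j^-\to s_j+t_j^+$. The one place you go further is in explicitly proving (via the direct-sum padding of $X,Y,\ket\psi$) that adding a common non-negative function to both sides of an EBM line transition preserves EBM-ness — the paper simply asserts this final step, so your write-up is actually slightly more complete.
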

\begin{proof}
Let us define $p_i= \sum_{j=1}^i f_j + 1/2[0,1] + 1/2[0,1]$. As a consequence, we have $p_0 = 1/2[0,1] + 1/2[0,1]$, $p_n = [\beta,\alpha]$, and for all $i\in\{0,n\}, p_i \geq 0$. Moreover, we have $p_{i+1} = p_i + f_{i+1}^+ - f_{i+1}^- \geq 0$. Since $f_{i+1}^+$ and $f_{i+1}^-$ have disjoint support, $\zeta = p_i - f_{i+1}^-  \geq 0$. We can rewrite the transition $p_i \to p_{i+1}$ by $\zeta + f_{i+1}^- \to \zeta + f_{i+1}^+$. This is an EBM transition since $f_{i+1}$ is an EBM function.
\end{proof}

As a consequence, from now on we can equivalently use functions or transitions, depending on what is most handy.

\subsection{Operator monotone functions and valid functions}
\label{sec:players}


We have defined the set $K$ of EBM functions. First, we show that the set $K$ is a convex cone.

\begin{definition}[Convex cone]
A set $C$ in a vector space $V$ is a cone if for all $x \in C$ and all $\lambda > 0$, $\lambda x \in C$. It is convex if for all $x,y \in C$, $x + y \in C$. 
\end{definition}

\COMMENT{
\begin{definition}[Closed set]
A set $C$ in a topological space $V$ is closed if for any sequence $\{t_i\}_i$ of points in $C$ that converges to a point $t$, we have $t \in C$.
\end{definition}
\begin{definition}[Dual cone]
We denote by $V'$ the space of continuous linear functionals from $V$ to $\R$. The dual cone of a set $C \subseteq V$ is \begin{align*}
C^* = \{\Phi \in V' \ |\  \forall h \in C,\ \Phi(h) \ge 0\}.
\end{align*}
\end{definition}
A closed convex cone has the property that it can be reconstructed from its dual cone
\begin{proposition}[\cite{Don09}]
Define $C^{**}$ as $C^{**}  = \{h \in V \ |\ \forall \Phi \in C^*, \Phi(h) \ge 0\}$. If $C$ is a convex cone, then $C^{**} = \cl(C)$, hence in particular if $C$ is a closed convex cone, then $C^{**} = C$.
\end{proposition}
}

Let us first describe the normed vector space we will be working in. This is the set  $V$ of functions from $[0,\infty)$ to $\R$ with finite support. $V$ is an infinite dimensional vector space spanned by the canonical basis $\{[x]\}_{x \in [0,\infty)}$ where $[x](y) = \delta_{x,y}$ is the Kronecker delta function. Each element $v$ of $V$ can be written as $v = \sum_x v(x)[x]$. The usual norm on this space is the 1-norm, which is defined for any $v = \sum_x v(x)[x]$ as $\norm{v}_1 = \sum_x |v(x)|$.


\begin{lem}\label{lem:ConvexCone}
$K$ is a convex cone. Also, for any $\Lambda \in (0,\infty)$, $K_\Lambda$ is a convex cone. 
\end{lem}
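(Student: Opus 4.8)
The plan is to verify the two defining properties of a convex cone directly from the definition of EBM functions, reducing everything to manipulations of the witnessing matrices and vectors. Recall that $h \in K$ means the line transition $h^- \to h^+$ is EBM, i.e.\ there exist $0 \preceq X \preceq Y$ and a vector $\ket{\psi}$ with $h^- = \prob[X,\psi]$ and $h^+ = \prob[Y,\psi]$. So I need: (i) if $h \in K$ and $\lambda > 0$ then $\lambda h \in K$; and (ii) if $h_1, h_2 \in K$ then $h_1 + h_2 \in K$.

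For the scaling property (i): given a witness $(X, Y, \ket{\psi})$ for $h$, I would simply rescale the vector to $\sqrt{\lambda}\,\ket{\psi}$, keeping $X$ and $Y$ unchanged. Then $\prob[X, \sqrt{\lambda}\psi](x) = \lambda\bra{\psi}\Pi_X^{[x]}\ket{\psi} = \lambda h^-(x)$, and similarly for $Y$; since $(\lambda h)^+ = \lambda h^+$ and $(\lambda h)^- = \lambda h^-$ (as $\lambda > 0$ preserves signs), the triple $(X, Y, \sqrt{\lambda}\psi)$ witnesses $\lambda h \in K$. The same argument works verbatim for $K_\Lambda$ since rescaling the vector does not touch the spectra of $X, Y$, so they stay in $[0,\Lambda]$.

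For the additivity property (ii), which I expect to be the only real content: let $(X_1, Y_1, \ket{\psi_1})$ and $(X_2, Y_2, \ket{\psi_2})$ witness $h_1, h_2 \in K$. The natural move is to take the direct sum: set $X = X_1 \oplus X_2$, $Y = Y_1 \oplus Y_2$, and $\ket{\psi} = \ket{\psi_1} \oplus \ket{\psi_2}$ on the direct-sum Hilbert space. Then $0 \preceq X \preceq Y$ is immediate from $X_i \preceq Y_i$ blockwise, and for each eigenvalue $x$ the projector $\Pi_X^{[x]} = \Pi_{X_1}^{[x]} \oplus \Pi_{X_2}^{[x]}$, so $\prob[X,\psi](x) = \prob[X_1,\psi_1](x) + \prob[X_2,\psi_2](x) = h_1^-(x) + h_2^-(x)$, and likewise $\prob[Y,\psi] = h_1^+ + h_2^+$. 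The subtlety is that this witnesses the transition $(h_1^- + h_2^-) \to (h_1^+ + h_2^+)$, whereas membership $h_1 + h_2 \in K$ requires witnessing $(h_1+h_2)^- \to (h_1+h_2)^+$, and in general $(h_1+h_2)^- \neq h_1^- + h_2^-$ because of cancellation at points where $h_1$ and $h_2$ have opposite signs. To handle this I would invoke the fact (which follows from the same direct-sum / padding constructions already used in the proof of \autoref{thm:SpectraXY}, and which is essentially the observation that adding a common point $[x]$ to both $l$ and $r$ of an EBM line transition preserves EBM-ness): an EBM line transition $l \to r$ remains EBM after subtracting from both $l$ and $r$ the common ``overlap'' $\min(l,r)$ pointwise. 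Applying this to $l = h_1^- + h_2^-$, $r = h_1^+ + h_2^+$ and removing the pointwise minimum of the two sides yields exactly the transition $(h_1+h_2)^- \to (h_1+h_2)^+$, so $h_1 + h_2 \in K$. The argument for $K_\Lambda$ is identical, noting that the direct sum and the overlap-removal only ever use eigenvalues already present in $X_1, Y_1, X_2, Y_2$, hence stay within $[0,\Lambda]$.

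The main obstacle is thus not the direct-sum construction, which is routine, but cleanly justifying the overlap-removal step — that one may cancel common mass on the two sides of an EBM transition without leaving $K$. I would isolate this as a small auxiliary claim: \emph{if $l \to r$ is an EBM line transition and $c$ is any non-negative finite-support function with $c \leq l$ and $c \leq r$ pointwise, then $(l - c) \to (r - c)$ is also EBM.} This follows by taking a witness $(X, Y, \ket{\psi})$ for $l \to r$, enlarging it with extra dimensions carrying eigenvalue $x$ in both $X$ and $Y$ for each unit of mass to be removed (exactly the ``$z$ in the spectrum of both'' branch of the algorithm in the proof of \autoref{thm:SpectraXY}), which lets us re-express $l = \prob[X,\psi]$ with arbitrarily large multiplicities; then projecting $\ket{\psi}$ to kill off $c$'s worth of weight on each relevant eigenspace simultaneously in $X$ and $Y$ preserves $X \preceq Y$ and produces a witness for $(l-c) \to (r-c)$.
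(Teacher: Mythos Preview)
Your core construction --- rescaling the witness vector for the cone property, and taking the block-diagonal direct sum $(X_1\oplus X_2,\,Y_1\oplus Y_2,\,\psi_1\oplus\psi_2)$ for additivity --- is exactly what the paper does. The paper's proof in fact stops precisely at the point where you pause: it exhibits a witness for $(g^-+h^-)\to(g^++h^+)$ and then simply declares $g+h\in K_\Lambda$, never mentioning that $(g+h)^\pm$ may differ from $g^\pm+h^\pm$. So on this point you are being more careful than the paper, and the subtlety you flag is genuine.

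That said, your proposed repair does not go through as sketched. Enlarging $X$ and $Y$ by a common diagonal block with eigenvalue $z$ only adds an eigenspace in which $\psi$ has \emph{zero} component; this can increase the masses $\|\Pi_X^{[z]}\psi\|^2$ and $\|\Pi_Y^{[z]}\psi\|^2$ but never decrease them. And ``projecting $\psi$ to kill off $c$'s worth of weight in the $z$-eigenspaces of $X$ and $Y$ simultaneously'' tacitly assumes these two eigenspaces share a direction along which $\psi$ has mass --- which is false in general. For a concrete obstruction take $X=\mathrm{diag}(1,2)$ and $Y=\bigl(\begin{smallmatrix}3&1\\1&3\end{smallmatrix}\bigr)$, so that $0\preceq X\preceq Y$ and both have $2$ as an eigenvalue: the eigenvalue-$2$ eigenspaces are $\mathrm{span}\{e_2\}$ and $\mathrm{span}\{e_1-e_2\}$, which intersect trivially, and any projection of $\psi$ that removes weight from one necessarily disturbs the weights of $\psi$ at \emph{other} eigenvalues of the other matrix. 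The overlap-removal statement you isolate is ultimately true, but an elementary self-contained proof is not obvious; in the paper's logical flow it only becomes transparent once one has the bidual identity $K_\Lambda=K_\Lambda^{**}$, which would of course be circular here. The paper simply leaves this point implicit.
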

\begin{proof}
Fix $\Lambda >0$. Let $g,h\in K_{\Lambda}$, so $g^- \to g^+$ and $h^- \to h^+$ are two EBM line transitions, i.e. we can write them as: $g^- = \prob[X_g,\psi_g]$, $g^+ = \prob[Y_g,\psi_g]$, $h^-  = \prob[X_h,\psi_h]$ and $h^+ = \prob[Y_h,\psi_h]$.
(note that the dimensions of $X_g$ and $Y_g$ are not necessarily the same as the ones of $X_h$ and $Y_h$)

$K_{\Lambda}$ is a cone since for all $\lambda \geq 0,\ \lambda g = \lambda g^+ - \lambda g^- = \prob[Y_g,\sqrt\lambda\ket{\psi_g}]-\prob[X_g,\sqrt\lambda\ket{\psi_g}]$ and hence $\lambda g^- \to \lambda g^+$ is expressible by matrices with spectra in in $[0,\Lambda]$.

Let us finally show $K_{\Lambda}$ is convex. It is enough to prove that $g+h\in K_{\Lambda}$. Construct $X = X_g \oplus X_h = \left[ \begin{matrix} X_g & 0 \\ 0 & X_h \end{matrix} \right]$,  $Y = Y_g \oplus Y_h = \left[ \begin{matrix} Y_g & 0 \\ 0 & Y_h \end{matrix} \right]$ and $\ket{\psi} = \ket{\psi_g}\oplus\ket{\psi_h} = \left[\begin{matrix}\psi_g \\ \psi_h \end{matrix} \right].$ We now have
\begin{align*}
	g^- + h^- = \prob[X,\psi] \quad \text{and} \quad g^+ + h^+ = \prob[Y,\psi].
\end{align*}
Since we also have that $\sp(X),\sp(Y) \subset [0,\Lambda]$, we can conclude that $K_{\Lambda}$ is convex.
Notice that this proof holds for $\Lambda=\infty$, hence $K$ is also convex.
\end{proof} 

Let us now consider the dual cone of the set of EBM functions, denoted by $K^*$. 
\begin{definition}[Dual cone]
Let $C$ be a cone in a normed vector space $V$. We denote by $V'$ the space of continuous linear functionals from $V$ to $\R$. The dual cone of a set $C \subseteq V$ is \begin{align*}
C^* = \{\Phi \in V' \ |\  \forall h \in C,\ \Phi(h) \ge 0\}.
\end{align*}
\end{definition} 
It turns out that the dual cone of the set $K$ of EBM functions is exactly the set of operator monotone functions and a similar characterization holds for any $K_\Lambda$.

\begin{definition}[Operator monotone functions]
	A function $f:[0,\infty) \to\R$ is \emph{operator monotone} if for all positive semidefinite matrices $X\preceq Y$, we have $f(X)\preceq f(Y)$. 
	
	A function $f:[0,\Lambda] \to\R$ is \emph{operator monotone on $[0,\Lambda]$} if for all positive semidefinite matrices $X\preceq Y$ with spectrum in $[0,\Lambda]$, we have $f(X)\preceq f(Y)$. 
\end{definition}

We now show that the set $K^*$ is indeed equal to the set of operator monotone functions, up to an isomorphism. Indeed, note that there is a bijective mapping between $\Phi\in V'$ and $f_{\Phi}$ where $f_{\Phi}$ is a function on reals defined by $f_\Phi(x) = \Phi([x])$. This gives us by linearity of $\Phi$ that for a function $h=\sum_x h(x)[x]$ we have $\Phi(\sum_x h(x)[x]) = \sum_x h(x) f_\Phi(x)$. With this mapping, we can see elements of $K^*_{\Lambda}$ as functions on reals. Up to this mapping, the set $K^*_{\Lambda}$ is the set of operator monotone functions on $[0,\Lambda]$.

\begin{lem}\label{prop:OperatorMonotone}
$\Phi \in K^*$ if and only if $f_\Phi$ is operator monotone on $[0,\infty]$. 
Also, for any $\Lambda \in (0,\infty)$, $\Phi \in K^*_{\Lambda}$ if and only if  $f_\Phi$ is operator monotone on $[0,\Lambda]$.
\end{lem}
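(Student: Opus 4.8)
**Proof plan for Lemma (characterization of $K^*$ as operator monotone functions).**

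The plan is to prove both directions of the equivalence by unwinding the definition of the dual cone and relating it directly to the definition of an EBM line transition. Recall that $\Phi \in K^*$ means $\Phi(h) \geq 0$ for every EBM function $h$, and that, via the bijection $\Phi \leftrightarrow f_\Phi$, this reads $\sum_x h(x) f_\Phi(x) \geq 0$. Writing $h = h^+ - h^-$ where $h^- \to h^+$ is an EBM line transition witnessed by matrices $0 \preceq X \preceq Y$ and a vector $\ket{\psi}$, the key identity is that $\sum_x h^+(x) f_\Phi(x) = \sum_{y \in \sp(Y)} f_\Phi(y) \bra{\psi}\Pi_Y^{[y]}\ket{\psi} = \bra{\psi} f_\Phi(Y) \ket{\psi}$, and similarly $\sum_x h^-(x) f_\Phi(x) = \bra{\psi} f_\Phi(X) \ket{\psi}$. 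Hence $\Phi(h) = \bra{\psi}\bigl(f_\Phi(Y) - f_\Phi(X)\bigr)\ket{\psi}$.

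For the direction ``$f_\Phi$ operator monotone $\implies \Phi \in K^*$'': if $f_\Phi$ is operator monotone then for any EBM function $h$ with witnesses $0 \preceq X \preceq Y$, $\ket{\psi}$, we have $f_\Phi(X) \preceq f_\Phi(Y)$, so $\bra{\psi}(f_\Phi(Y)-f_\Phi(X))\ket{\psi} \geq 0$, i.e. $\Phi(h) \geq 0$. For the converse ``$\Phi \in K^*$ $\implies f_\Phi$ operator monotone'': given arbitrary positive semidefinite $X \preceq Y$, I must exhibit, for every vector $\ket{\psi}$, an EBM function $h$ with $\Phi(h) = \bra{\psi}(f_\Phi(Y) - f_\Phi(X))\ket{\psi}$; but this is exactly the function $h = \prob[Y,\psi] - \prob[X,\psi]$, which by \autoref{def:EBMtransition} is an EBM function since $0 \preceq X \preceq Y$ (note $h^+$ and $h^-$ need not literally be the positive/negative parts pointwise, but one checks the transition $\prob[X,\psi] \to \prob[Y,\psi]$ decomposes into a genuine EBM transition after cancelling common mass, which still yields $\Phi(h) \geq 0$ and hence the desired inequality $\bra{\psi}f_\Phi(X)\ket{\psi} \leq \bra{\psi}f_\Phi(Y)\ket{\psi}$ for all $\ket{\psi}$, i.e. $f_\Phi(X) \preceq f_\Phi(Y)$). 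The bounded case $K_\Lambda$ is identical except that all matrices $X, Y$ are required to have spectrum in $[0,\Lambda]$, so one tests against, and only obtains, operator monotonicity on $[0,\Lambda]$.

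The main technical point to be careful about is the continuity/boundedness of the functional $\Phi$ and the subtlety that $h^+, h^-$ as the pointwise positive and negative parts of $h$ may have overlapping "cancelled" mass compared with the raw $\prob[X,\psi], \prob[Y,\psi]$; one should check that subtracting a common non-negative function $\zeta$ from both sides (as in the preceding lemma on functions versus transitions) does not affect the value of $\Phi$ on the difference, since $\Phi$ is linear and $\Phi(h) = \Phi(\prob[Y,\psi]) - \Phi(\prob[X,\psi])$ regardless. I expect the genuine obstacle, if any, to be purely bookkeeping: making sure the correspondence between EBM line transitions (\autoref{def:EBMtransition}) and EBM functions (the definition of $K$) is applied correctly so that testing $\Phi$ against all of $K$ really is equivalent to testing $f_\Phi(Y) - f_\Phi(X) \succeq 0$ against all admissible $X \preceq Y$ and all $\ket{\psi}$ — no hard analysis is needed here, only the spectral calculus identity above.
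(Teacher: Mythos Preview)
Your approach is essentially the same as the paper's: both hinge on the spectral identity $\sum_x \prob[X,\psi](x)\,f(x)=\bra{\psi}f(X)\ket{\psi}$ and then run the two implications exactly as you describe. Your remark about cancelling common mass between $\prob[X,\psi]$ and $\prob[Y,\psi]$ is fine and the paper does not even bother with it, since linearity of $\Phi$ makes the point moot.

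There is one place where you flag an issue but do not close it: continuity. By definition, $K^*_\Lambda$ consists of \emph{continuous} linear functionals on $(V,\|\cdot\|_1)$, so in the direction ``$f_\Phi$ operator monotone $\Rightarrow \Phi\in K^*_\Lambda$'' it is not enough to check $\Phi(h)\ge 0$ for all $h\in K_\Lambda$; you must also verify $\Phi$ is bounded. The paper supplies this step explicitly for finite $\Lambda$: since $f_\Phi$ is operator monotone on $[0,\Lambda]$ it is increasing, hence $f_\Phi(x)\in[f_\Phi(0),f_\Phi(\Lambda)]$ for all $x\in[0,\Lambda]$, so $\|f_\Phi\|_\infty<\infty$ and $|\Phi(h)|\le \|f_\Phi\|_\infty\|h\|_1$. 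You should include this argument rather than just noting it as ``a technical point to be careful about''. (For $\Lambda=\infty$ this boundedness argument no longer applies as stated---e.g.\ $f(x)=x$ is operator monotone but unbounded---and the paper itself is somewhat loose on this point; the case that matters downstream is the finite-$\Lambda$ one.)
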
 
\begin{proof}
Fix $\Lambda > 0$. Forward implication.  We first notice that $\Phi \in K^*_\Lambda$ implies
\begin{align}
	\forall h \in K_{\Lambda}, \ \sum_x f_\Phi(x)h(x) \geq 0. \label{eqn:OMFequiv}
\end{align}
This is immediate from the definition of $f_\Phi$. We now prove that a function $f$ with finite support on $[0,\Lambda]$ satisfies \autoref{eqn:OMFequiv} if and only if $f$ is operator monotone on $[0,\Lambda]$. The proof of this equivalence is based on the following observation:
\begin{align*}
	\sum_{x\in\sp(X)} f(x)\prob[X,\psi](x)  = \sum_{x \in \sp(X)} f(x) \bra{\psi} \Pi^{[x]} \ket{\psi}  =  \bra{\psi} \sum_{x \in \sp(X)}f(x) \Pi^{[x]} \ket{\psi}  = \bra{\psi} f(X) \ket{\psi}. 
\end{align*}
Then,
\begin{align*}
\forall h \in K_{\Lambda}, &\ \sum_x f(x)h(x) \geq 0 \\
& \Leftrightarrow 
\forall \ket{\psi}, \ \forall \ 0 \preceq X \preceq Y \mbox{ with } \sp(X),\sp(Y) \subset [0,\Lambda], \\&\quad\quad\quad
\sum_x f(x)\left(\prob[Y,\psi](x)  - \prob[X,\psi](x) \right) \geq 0 \\
& \Leftrightarrow 
\forall \ket{\psi}, \ \forall \ 0 \preceq X \preceq Y \mbox{ with } \sp(X),\sp(Y) \subset [0,\Lambda], \ \bra{\psi} f(X) \ket{\psi} \le \bra{\psi} f(Y) \ket{\psi} \\
& \Leftrightarrow \forall \ 0 \preceq X \preceq Y \mbox{ with } \sp(X),\sp(Y) \subset [0,\Lambda], \ f(X) \preceq f(Y) \\
& \Leftrightarrow f \textrm{ is operator monotone on } [0,\Lambda].
\end{align*}

For the reverse implication, consider a pair $(f_\Phi,\Phi)$ where $f_\Phi$ is a function with finite support on $[0,\Lambda]$ and $\Phi$ is its associated function in $V'$.  Hence by the previous series of equivalence, we have $\forall h \in K_{\Lambda},\ \Phi(h) \geq 0$. In order to prove that $\Phi\in K_\Lambda^*$ we need to show that $\Phi$ is continuous. Since $f_\Phi$ is operator monotone on $[0,\Lambda]$, $f_\Phi$ is increasing and $\forall x\in[0,\Lambda],\ f_\Phi(x) \in [f_\Phi(0),f_\Phi(\Lambda)]$, which means that $\norm{f_\Phi}_{\infty} < + \infty$. Thus, for any $h = \sum_x h(x)[x]$, we have $\Phi_f(h) = \sum_x h(x) f_\Phi(x) \le \norm{h}_1\norm{f_\Phi}_{\infty}$, and hence $\Phi$ is continuous.

Note that the proof holds for $K^*$, the set of operator monotone functions, up to the same mapping.
\end{proof}

Operator monotone functions are very well studied objects. 
In particular, we have the following analytic characterization:
\begin{lem}[\cite{Bha97}]
	\label{lem:DecompositionOMF}
	Any operator monotone function $f:[0,\infty)\to\R$ can be written as
	\begin{align*}
		f(t) = c_0 + c_1 t + \int_{0}^{+ \infty} \frac{\lambda t}{\lambda + t} \d w(\lambda),
	\end{align*}
	for a measure $w$ satisfying $\int_0^{+ \infty} \frac{\lambda}{1 + \lambda} \d w(\lambda) < + \infty$. \\
	Any operator monotone function $f:[0,\Lambda]\to\R$ can be written as
	\begin{align*}
		f(t) = c_0 + c_1 t + \int \frac{\lambda t}{\lambda + t} \d w(\lambda),
	\end{align*}
	with the integral ranging over $\lambda \in (-\infty, - \Lambda)\ \cup \ (0,+\infty)$.
\end{lem}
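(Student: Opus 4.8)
The final statement to prove is the integral representation of operator monotone functions, \autoref{lem:DecompositionOMF}, which is classically attributed to Löwner and is covered in Bhatia's book \cite{Bha97}. Since this is a well-known theorem, the natural plan is to cite the standard proof and sketch its structure rather than reprove it from scratch; but let me outline the argument one would give.

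\medskip

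\noindent\textbf{Proof proposal.}
The plan is to invoke the classical Löwner theory of operator monotone functions, whose proof proceeds in three stages. First, one establishes the \emph{local} consequence of operator monotonicity: if $f$ is operator monotone on an interval, then for any points $t_1,\dots,t_m$ in the interior, the Löwner matrix $\left[\frac{f(t_i)-f(t_j)}{t_i-t_j}\right]_{i,j}$ (with diagonal entries $f'(t_i)$) is positive semidefinite. This is obtained by applying the defining inequality $f(X)\preceq f(Y)$ to a one-parameter family $Y=X+\varepsilon H$ with $H\succeq 0$ of small rank and differentiating; the Daleckii--Krein formula for the derivative of $f$ at a diagonal matrix produces exactly the Löwner matrix as a Schur (Hadamard) factor, and positivity of the derivative forces positivity of that matrix. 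A consequence one extracts along the way is that $f$ is automatically $C^1$ (indeed analytic) on the interior.

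\medskip

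\noindent Second, one uses the Löwner matrix condition to show that $f$ extends to an analytic function (a Pick/Nevanlinna function) mapping the open upper half-plane $\mathbb{H}$ into $\overline{\mathbb H}$. The bridge is an integral-kernel / Hankel-determinant argument (or, following Bhatia, the equivalence with positivity of all Löwner matrices being the same as $f$ having a ``Pick'' extension); alternatively one can use the approach via the fact that $t\mapsto f(t)$ being operator monotone is equivalent to $z\mapsto f(z)$ analytically continuing to a map of $\mathbb H$ to $\mathbb H$. Once $f$ is known to be a Pick function, the classical Nevanlinna representation theorem gives
\begin{align*}
f(z) = \alpha + \beta z + \int_{\mathbb R} \frac{1+\lambda z}{\lambda - z}\, d\mu(\lambda),
\end{align*}
with $\beta\ge 0$ and $\mu$ a finite positive measure.

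\medskip

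\noindent Third, one specializes the support of $\mu$ using the domain of $f$. For $f$ operator monotone on all of $[0,\infty)$, the representing measure must be supported on $(-\infty,0]$ (so that the continuation is analytic across $(0,\infty)$), and after the change of variables $\lambda \mapsto -\lambda$ and absorbing normalizing constants one rewrites the kernel $\frac{1+\lambda z}{\lambda-z}$ in the stated form $\frac{\lambda t}{\lambda+t}$, collecting the pieces into the constants $c_0$ (value-type term), $c_1\ge 0$ (the $\beta z$ term plus a contribution from the kernel's asymptotics), and the measure $w$; the finiteness condition $\int_0^\infty \frac{\lambda}{1+\lambda}\,dw(\lambda)<\infty$ is exactly the restatement of $\mu$ being a finite measure under this reparametrization. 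For $f$ operator monotone on a bounded interval $[0,\Lambda]$, the only constraint is that $\mu$ give no mass to the open interval where $f$ must stay analytic, i.e. the representing measure lives on the complement, which after the reparametrization becomes $\lambda\in(-\infty,-\Lambda)\cup(0,\infty)$, giving the second displayed formula. This last bookkeeping step — matching the Nevanlinna kernel to the paper's normalization and tracking where each constant comes from — is routine but must be done carefully.

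\medskip

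\noindent The main obstacle, if one were to write this out in full rather than cite it, is the second stage: proving that pointwise positivity of all Löwner matrices actually forces a Pick-function extension. This is the genuinely nontrivial analytic heart of Löwner's theorem and occupies a substantial portion of Chapter V of \cite{Bha97}; everything else is either a short differentiation argument or bookkeeping. Since the lemma is stated with a citation to \cite{Bha97}, the right move here is simply to reference that treatment (Theorem V.4.7 and the surrounding discussion) and note the reparametrization that converts the Nevanlinna form into the form used above.
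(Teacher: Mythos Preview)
Your proposal is correct and matches the paper's approach: the paper does not prove this lemma at all but simply cites \cite{Bha97}, and you do the same while additionally sketching the standard Löwner--Nevanlinna argument behind it. Your outline (Löwner matrix positivity $\Rightarrow$ Pick extension $\Rightarrow$ Nevanlinna representation $\Rightarrow$ support/reparametrization bookkeeping) is accurate and is exactly the content of the cited chapter, so there is nothing to correct.
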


This characterization is what makes the usage of duality in our context helpful. In particular, we can now consider the dual of the set of operator monotone functions, $K^*$. 
This dual is denoted $K^{**}$, and is called the set of {\it valid functions}. 

\begin{definition}[Valid function]
	\label{def:validfunction}
	A function $h:[0,\infty) \to \R$ with finite support  is \dt{valid} if for every operator monotone function $f:[0,\infty)\to\R$, we have $\sum_{x\in\supp(h)}f(x)h(x) \geq 0$.
\end{definition}
The above is just restating the definition of the dual, but we provide it here for easier readability. Valid functions are strongly related to EBM functions since the bidual of a convex cone is the closure of the original cone:
\begin{lem}[\cite{BV04}]
Let $C \subseteq V$ be a convex cone, then $C^{**} = \cl(C)$.
\end{lem}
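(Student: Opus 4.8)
The plan is to prove the two inclusions $\cl(C)\subseteq C^{**}$ and $C^{**}\subseteq\cl(C)$ separately, where $C^{**}=\{h\in V : \Phi(h)\ge 0 \text{ for all } \Phi\in C^*\}$; the second inclusion is the substantive one and rests on the Hahn--Banach separation theorem.

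The inclusion $\cl(C)\subseteq C^{**}$ is the soft half. First, $C\subseteq C^{**}$ is immediate from the definitions: if $h\in C$ then $\Phi(h)\ge 0$ for every $\Phi\in C^*$, which is precisely the defining condition of $C^{**}$. Second, $C^{**}$ is closed, because $C^{**}=\bigcap_{\Phi\in C^*}\Phi^{-1}\big([0,\infty)\big)$ is an intersection of closed sets, each $\Phi\in V'$ being continuous on the normed space $V$. Since $\cl(C)$ is the smallest closed set containing $C$, and $C^{**}$ is a closed set containing $C$, we conclude $\cl(C)\subseteq C^{**}$.

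For the reverse inclusion I would argue by contraposition: given $h_0\in V$ with $h_0\notin\cl(C)$, I will produce a functional $\Phi\in C^*$ with $\Phi(h_0)<0$, which shows $h_0\notin C^{**}$. The set $\cl(C)$ is closed and convex (the closure of a convex set is convex) and does not contain $h_0$, so the separation form of Hahn--Banach in the normed space $V$ yields $\Phi\in V'$ and a scalar $c$ with $\Phi(h_0)<c\le\Phi(x)$ for all $x\in\cl(C)$. Now $\cl(C)$ is also a cone (the closure of a cone is a cone), so $0\in\cl(C)$, which forces $c\le 0$; and for every $x\in C$ and every $\lambda>0$ we have $\lambda x\in C\subseteq\cl(C)$, hence $\lambda\,\Phi(x)=\Phi(\lambda x)\ge c$ for all $\lambda>0$, which is impossible unless $\Phi(x)\ge 0$. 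Thus $\Phi\in C^*$, while $\Phi(h_0)<c\le 0$, completing the argument.

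The one step that genuinely requires care is the appeal to separation: $V$ is infinite dimensional and not complete, so one must invoke the geometric (separation) form of the Hahn--Banach theorem valid in an arbitrary normed space rather than a finite-dimensional statement. Concretely, since $d\big(h_0,\cl(C)\big)>0$ one separates the open ball of radius $\tfrac{1}{2}d\big(h_0,\cl(C)\big)$ about $h_0$ from the convex set $\cl(C)$; completeness of $V$ plays no role. Everything else --- the closedness of $C^{**}$ and the cone manipulations turning a lower bound into nonnegativity --- is routine.
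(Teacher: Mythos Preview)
The paper does not supply its own proof of this lemma; it is simply quoted from \cite{BV04}. Your argument is the standard one via Hahn--Banach separation and is correct.

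One small remark: when you write ``$\cl(C)$ is also a cone, so $0\in\cl(C)$'', note that the paper's definition of cone only requires closure under positive scalars, not that $0$ belong to the cone. The missing step is that $C$ is nonempty (true in the paper's application, where $C=K$), so for any $x\in C$ the sequence $\tfrac{1}{n}x\in C$ converges to $0$, giving $0\in\cl(C)$. With that detail filled in, everything goes through.
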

Unfortunately $K$ is not closed, so the valid functions are a superset of the EBM functions. In the next subsection, we see how we could circumvent this problem easily if we were in finite dimensions.

\subsection{The remaining of an ``easy'' argument}

Our goal remains to find an easy characterization of the set of EBM functions. So far, we have defined the set of EBM functions $K$, looked at its dual $K^*$, which is the set of operator monotone functions, and at the dual of the set of operator monotone functions, $K^{**}$, which we called the set of valid functions. 
However $K^{**}$ is larger than $K$, but intuitively not much larger since it is the closure of $K$, and $K$ and $K^{**}$ have the same interior since $K$ is convex:
\begin{fact}
	Let $C$ be a convex set, then $\interior(C) = \interior(\cl(C))$.
\end{fact}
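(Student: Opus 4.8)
The plan is to prove the two inclusions separately. The inclusion $\interior(C) \subseteq \interior(\cl(C))$ is immediate: $\interior(C)$ is an open set contained in $C \subseteq \cl(C)$, hence contained in the largest open subset of $\cl(C)$, which is by definition $\interior(\cl(C))$. All the content is in the reverse inclusion $\interior(\cl(C)) \subseteq \interior(C)$, and for this the key tool is the classical ``segment lemma'' for convex sets.

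First I would establish the segment lemma: if $y \in \interior(C)$ and $z \in \cl(C)$, then $w := (1-\lambda) y + \lambda z \in \interior(C)$ for every $\lambda \in [0,1)$. Choose $\eps > 0$ with the ball $B(y,\eps) \subseteq C$, and, given $z \in \cl(C)$, pick $z' \in C$ with $\norm{z - z'}$ as small as we like. For a point $u$ near $w$, write $u = (1-\lambda) y' + \lambda z'$ with $y' := \tfrac{1}{1-\lambda}(u - \lambda z')$; then $\norm{y' - y} \le \tfrac{1}{1-\lambda}\bigl(\norm{u - w} + \lambda\norm{z - z'}\bigr)$, so if we first fix $z'$ with $\lambda\norm{z-z'} < (1-\lambda)\eps/2$ and then take $\norm{u - w} < (1-\lambda)\eps/2$, we get $y' \in B(y,\eps) \subseteq C$, whence $u \in C$ by convexity of $C$. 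Thus $B\bigl(w, (1-\lambda)\eps/2\bigr) \subseteq C$, i.e. $w \in \interior(C)$.

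With the lemma in hand, the reverse inclusion follows by pushing the target point slightly outward along a ray through an interior point. We may assume $\interior(C) \neq \emptyset$ (the only case we use; in our application $C = K$ and the Fact is invoked precisely when an interior point is available, e.g. a strictly valid function), so fix $y_0 \in \interior(C)$. Let $x \in \interior(\cl(C))$, so $B(x,r) \subseteq \cl(C)$ for some $r>0$. For $t>0$ small enough, $z := (1+t)x - t y_0 \in B(x,r) \subseteq \cl(C)$, and the identity $x = \tfrac{1}{1+t} z + \tfrac{t}{1+t} y_0$ exhibits $x$ as a proper convex combination of $z \in \cl(C)$ and $y_0 \in \interior(C)$; the segment lemma then yields $x \in \interior(C)$. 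The only delicate point in the whole argument is the quantitative estimate in the segment lemma together with the observation — genuinely needed in the infinite-dimensional setting, since the identity can fail when $\interior(C)$ is empty — that one must keep track of where the nonemptiness of $\interior(C)$ is used.
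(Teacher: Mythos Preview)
The paper does not prove this statement: it is recorded as a \emph{Fact} (a standard result from convex analysis) and left unproved. Your argument via the segment lemma is the classical one and is correct.

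One contextual remark: your parenthetical ``in our application $C = K$ and the Fact is invoked precisely when an interior point is available, e.g.\ a strictly valid function'' misreads the paper. Immediately after stating this Fact, the paper points out that in the infinite-dimensional setting at hand $K$ and $K^{**}$ actually have \emph{empty} interior, so the Fact, together with the finite-dimensional description of $\interior(C^*)$, is part of an ``easy'' argument that is sketched and then explicitly abandoned. Your observation that the identity can fail in infinite dimensions when $\interior(C)=\emptyset$ is exactly right and is, in effect, why the paper cannot use this Fact directly and instead passes to the closed subcones $K_\Lambda$.
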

Moreover, in finite dimensions, the interior of a dual cone can be expressed directly as a function of the primal cone:
\begin{lem}[\cite{BV04}]
Let $C$ be a cone in finite dimensional vector space $V$, then $\interior(C^*) = \{ \Phi\in V'\, | \, \forall h\in C-\{0\},\, \Phi(h) > 0\}$.
\end{lem}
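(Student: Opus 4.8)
The plan is to prove the two inclusions separately, with all the substance concentrated in one of them and resting on a single finite-dimensional fact — compactness of bounded closed sets — which is precisely the tool unavailable in the pathological infinite-dimensional space $V$ at hand, and hence exactly the reason this slick characterisation cannot simply be transplanted to prove \autoref{thm:ValidReverse}. Throughout I would fix an arbitrary norm on $V$ (all norms being equivalent in finite dimensions) together with the induced operator norm on $V'$, and note at the outset that, since $C^*$ is unchanged if $C$ is replaced by $\cl(\operatorname{conv} C)$, the statement should really be read with $C$ a closed convex cone — the case arising in our applications, and the one that makes the nontrivial inclusion hold.

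For the inclusion $\interior(C^*)\subseteq\{\Phi:\ \forall h\in C\setminus\{0\},\ \Phi(h)>0\}$ I would argue by contradiction. Given $\Phi\in\interior(C^*)$ and $h\in C\setminus\{0\}$, membership $\Phi\in C^*$ already gives $\Phi(h)\geq0$; if $\Phi(h)=0$, pick any $\ell\in V'$ with $\ell(h)>0$ and observe that for $\eps>0$ small enough $\Phi-\eps\ell$ still lies in the ball around $\Phi$ contained in $C^*$, while $(\Phi-\eps\ell)(h)=-\eps\ell(h)<0$ — contradicting $h\in C$. This half is routine and uses neither closedness nor finite-dimensionality.

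The content is the reverse inclusion. Given $\Phi$ with $\Phi(h)>0$ for every $h\in C\setminus\{0\}$ (so $\Phi\in C^*$), I would introduce the ``unit sphere of $C$'', $\Sigma=\{h\in C:\ \norm{h}=1\}$, which is compact because $C$ is closed and $\dim V<\infty$; continuity of $\Phi$ then gives a minimum $m=\min_{h\in\Sigma}\Phi(h)$, and $m>0$ since every point of $\Sigma$ is a nonzero element of $C$. Then I would check that any $\Psi\in V'$ with $\norm{\Psi-\Phi}<m$ still lies in $C^*$: for $h\in C\setminus\{0\}$, writing $h=\norm{h}\,\hat h$ with $\hat h\in\Sigma$,
\[
 \Psi(h)=\norm{h}\bigl(\Phi(\hat h)+(\Psi-\Phi)(\hat h)\bigr)\;\geq\;\norm{h}\bigl(m-\norm{\Psi-\Phi}\bigr)\;>\;0,
\]
so the open $m$-ball around $\Phi$ sits inside $C^*$ and $\Phi\in\interior(C^*)$. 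The one and only obstacle is the compactness of $\Sigma$: it is where closedness of $C$ and finite-dimensionality are both genuinely needed, and its breakdown in infinite dimensions is exactly what the surrounding discussion flags as the source of the extra difficulty addressed later.
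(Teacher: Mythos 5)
The paper cites this fact from~\cite{BV04} and supplies no proof of its own, so there is nothing to compare against; your argument is the standard one and is correct. The forward inclusion is handled cleanly, and you rightly observe that it uses neither closedness nor finite dimension. For the reverse inclusion, reducing to compactness of the norm-one slice $\Sigma=\{h\in C:\norm{h}=1\}$ is exactly the right move, and you correctly isolate this as the one step where both finite dimensionality and closedness of $C$ enter. Your closedness caveat is genuinely needed and the paper's statement silently omits it: taking $C=\{(x,y):x>0\}\cup\{(0,0)\}\subset\R^2$, a convex cone, one finds $C^*=\{(\Phi_1,0):\Phi_1\geq 0\}$ with empty interior while the right-hand side is the nonempty ray $\{(\Phi_1,0):\Phi_1>0\}$. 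The caveat is harmless for the paper's intended application, however, since the lemma would be invoked with $C=K^*$, and a dual cone is automatically closed. The only omission is the degenerate case $C=\{0\}$, where $\Sigma$ is empty and $m$ is undefined; there both sides equal $V'$ and the claim is trivial.
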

This motivates the following definition.
\begin{definition}[Strictly valid function]
	\label{def:validfunction}
	A function $h:[0,\infty) \to \R$ with finite support  is \dt{strictly valid} if for every non-constant operator monotone function $f:[0,\infty)\to\R$, we have $\sum_{x\in\supp(h)}f(x)h(x) > 0$.
\end{definition}

At this point, if we were in finite dimensions, we could directly conclude that strictly valid functions are also EBM functions.
However, we are in infinite dimensions and this line of reasoning fails. The main difficulty lies in the fact that, as most cones in infinite dimensions, $K$ and $K^{**}$ have empty interiors whereas the set of strictly valid functions is not empty, and thus the set of strictly valid functions is not the interior of the set of valid functions. Nonetheless strictly valid functions play a similar role as we will prove the same statement, that strictly valid functions are EBM functions, but with a more cumbersome proof.

\subsection{Strictly valid functions are EBM functions}

Here is the main idea: We will look not at the set of EBM functions but at closed subsets of this set, that we have defined as $K_{\Lambda}$, for finite $\Lambda>0$. Since these sets are closed, we can exactly characterize them by their bidual. In other words, $K_\Lambda^{**}=K_\Lambda$. Now, if we look again at a strictly valid function, we can prove that it is in $K_\Lambda$ for some $\Lambda$ and hence in $K$. Hence, the set of strictly valid functions, which has a simple characterization as the strict dual of operator monotone functions, is a subset of EBM functions. 

The core of the argument is  that for any finite $\Lambda > 0$, $K_\Lambda$ is closed. Note that $K$, however, is not closed.

\begin{definition}[Closed set]
	A set $C$ in a topological space $V$ is closed if for any sequence $\{t_i\}_i$ of points in $C$ that converges to a point $t$, we have $t \in C$.
\end{definition}

\begin{lem}
	\label{thm:KLambdaIsClosed}
	For any finite $\Lambda > 0$, $K_{\Lambda}$ is closed.
\end{lem}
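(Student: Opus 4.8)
The plan is to take a sequence $\{h_i\}_i$ in $K_\Lambda$ converging (in the $1$-norm on $V$) to some $h\in V$, and produce matrices $0\preceq X\preceq Y$ with spectra in $[0,\Lambda]$ and a vector $\ket\psi$ witnessing that $h^-\to h^+$ is EBM with spectrum in $[0,\Lambda]$. The first step is to control dimension: by \autoref{thm:SpectraXY} (or rather the dimension-reduction step in its proof), each $h_i$ can be witnessed by matrices $X_i,Y_i$ and a vector $\ket{\psi_i}$ living in a fixed finite-dimensional space whose dimension depends only on $|\supp(h_i)|$. Since $h_i\to h$ in $1$-norm, the supports of $h_i$ are eventually contained in a fixed finite set (once $\|h_i-h\|_1$ is smaller than the smallest nonzero value that can occur — here one uses that $1$-norm convergence forces $h_i(x)\to h(x)$ pointwise and that values outside a neighbourhood of $\supp(h)$ must go to $0$; a little care is needed because supports could a priori wander, but any point mass bounded away from $0$ in the limit must be present, and spurious small masses can be absorbed into the witnessing construction). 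So we may assume all $X_i,Y_i,\ket{\psi_i}$ act on one fixed finite-dimensional Hilbert space $\HH$ of dimension $d$.

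The second step is compactness. The matrices $X_i,Y_i$ are Hermitian with spectra in the compact set $[0,\Lambda]$, hence lie in a compact subset of the $d$-dimensional matrix space; the vectors $\ket{\psi_i}$ have norms $\|\psi_i\|^2=\sum_x h_i^-(x)+(\text{mass of }h_i^+)$ which is bounded since $\|h_i\|_1$ converges, so they too lie in a compact ball. Passing to a subsequence, $X_i\to X$, $Y_i\to Y$, $\ket{\psi_i}\to\ket\psi$. The constraints $0\preceq X_i\preceq Y_i$ and $\operatorname{sp}(X_i),\operatorname{sp}(Y_i)\subseteq[0,\Lambda]$ are closed conditions, so $0\preceq X\preceq Y$ with spectra in $[0,\Lambda]$.

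The third and most delicate step is to check that the limiting triple actually witnesses $h$, i.e. that $\prob[X,\psi]=h^-$ and $\prob[Y,\psi]=h^+$. The subtlety is that $z\mapsto\prob[X,\psi](z)=\bra\psi\Pi_X^{[z]}\ket\psi$ is \emph{not} continuous in $X$: eigenvalue collisions as $i\to\infty$ can merge point masses, and eigenvalue splittings can create new ones in the limit. I would handle this by working with the cumulative/spectral-projection formulation: for any interval $I$, $\bra{\psi_i}\Pi_{X_i}(I)\ket{\psi_i}$ behaves well under limits when $\partial I$ avoids the eigenvalues of the limit $X$, via the functional calculus applied to continuous bump functions approximating $\mathbf 1_I$ (the map $M\mapsto f(M)$ is continuous for continuous $f$, and $\bra{\psi_i}f(X_i)\ket{\psi_i}\to\bra\psi f(X)\ket\psi$). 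Summing $f(x)$ against $\prob[X_i,\psi_i]$ equals $\bra{\psi_i}f(X_i)\ket{\psi_i}$ for every continuous $f$, so the measures $\prob[X_i,\psi_i]$ converge weakly to $\prob[X,\psi]$; since $\sum_x f(x)h_i^-(x)\to\sum_x f(x)h^-(x)$ as well (from $\|h_i-h\|_1\to0$ and $f$ bounded on $[0,\Lambda]$), weak limits are unique and we get $\prob[X,\psi]=h^-$ as finitely-supported measures, and likewise $\prob[Y,\psi]=h^+$. This shows $h\in K_\Lambda$, completing the proof. The main obstacle is precisely this third step — reconciling the discontinuity of the eigenvalue decomposition with the need for an \emph{exact} equality of finitely-supported functions in the limit — which is why one phrases everything through continuous test functions and weak convergence rather than tracking individual eigenvalues and eigenprojections.
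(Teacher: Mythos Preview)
Your overall architecture (reduce to a fixed finite-dimensional problem, then use compactness, then verify the limit witnesses $h$) matches the paper's, and your weak-convergence argument in the third step is correct and nicely done. The genuine gap is in your first step.

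The claim that ``the supports of $h_i$ are eventually contained in a fixed finite set'' is simply false. Take for instance $h_i = h + \tfrac{1}{i}\bigl([x_i]-[x'_i]\bigr)$ with all $x_i,x'_i\in[0,\Lambda]$ distinct, $x_i>x'_i$, and $x_i,x'_i\notin\supp(h)$; since $K_\Lambda$ is a convex cone containing $h$ and every point raise, each $h_i\in K_\Lambda$, and $\|h_i-h\|_1=2/i\to 0$, yet the supports wander forever. Consequently $|\supp(h_i)|$ need not be bounded, the matrices produced by \autoref{thm:SpectraXY} live in spaces of unbounded dimension, and your compactness step has nothing to act on. You do flag this (``spurious small masses can be absorbed into the witnessing construction''), but you never say how, and this absorption is the entire content of the lemma.

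Here is what the paper does, and what you are missing. Let $S=\supp(h)$. Given any witnesses $0\preceq X_i\preceq Y_i$ for $h_i$, replace every eigenvalue of $X_i$ that is \emph{not} in $S$ by $0$, and every eigenvalue of $Y_i$ that is not in $S$ by $\Lambda$; call the results $A_i,B_i$. By construction $0\preceq A_i\preceq X_i\preceq Y_i\preceq B_i$, so $(A_i,B_i,\psi_i)$ still witnesses an EBM function $t_i'\in K_\Lambda$, and now $\supp(t_i')\subseteq\{0\}\cup S\cup\{\Lambda\}$ is a \emph{fixed} finite set. Moreover the total mass moved is at most $\sum_{x\notin S}|h_i(x)|\le\|h_i-h\|_1\to 0$, so $t_i'\to h$ as well. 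Only after this step can one invoke \autoref{thm:SpectraXY} to get witnesses of uniform size $2|S|$ and run compactness. Note also that once the spectra lie in a fixed finite set, your weak-convergence argument collapses to something trivial: for each $z$ in that set the spectral projection $\Pi^{[z]}$ equals $g(A_i')$ for a continuous $g$ that is $1$ at $z$ and $0$ at the other points of the set, so continuity of the functional calculus gives the eigenprojections directly. Your test-function approach would still be needed, however, if one wanted to avoid the $A_i,B_i$ trick and instead handle arbitrary spectra in $[0,\Lambda]$ --- but then you would first have to find a different route to a uniform dimension bound, and none is apparent.
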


\begin{proof}
Fix a finite $\Lambda > 0$.
Let $\{t_i\}_{i \in \mathbb{N}}$ be a converging sequence of functions in $K_\Lambda$, and denote the limit of this sequence $t =  \lim_{i \to \infty} t_i$. The rest of the proof is devoted to show that $t \in K_\Lambda$. Denote $t= \sum_x t(x) [x]$ and $S$ the support of $t$, that is the set $S = \{x : t(x) \neq 0\}$. Note that $t$ is an element of $V$ so $t$ has finite support. Since the $t_i$ are EBM, we write $
t_i = \prob[Y_i, \psi_i] - \prob[X_i, \psi_i]$, with $0 \preceq X_i \preceq Y_i$. Each of the $X_i$'s and $Y_i$'s can be diagonalized:
\begin{align*}
	X_i = \sum_{x^{(i)}} x^{(i)} \Pi^{[x^{(i)}]} \quad\text{ and }\quad Y_i = \sum_{y^{(i)}} y^{(i)}  \Pi^{[y^{(i)}]},
\end{align*}
where $\Pi^{[x^{(i)}]}$ is the projector onto the eigenspace of $X_i$ with eigenvalue $x^{(i)}$. Since there will be no confusion, we drop the exponent $(i)$ from now on. Let us define the matrices:
\begin{align*}
	A_i = \sum_{x \in S} x \Pi^{[x]} + \sum_{x \notin S} 0 \cdot \Pi^{[x]}
	\quad\text{ and }\quad
	B_i = \sum_{y \in S} y \Pi^{[y]} + \sum_{y \notin S} \Lambda \cdot \Pi^{[y]}.
\end{align*}
First note that we immediately have $0 \preceq A_i \preceq X_i  \preceq Y_i \preceq B_i$ so we can define an EBM function $t'_i =  \prob[B_i, \psi_i] - \prob[A_i, \psi_i]$. The dimension of the matrices $A_i$ are not necessarily  identical, but this is not a problem. As done in the proof of \autoref{thm:SpectraXY} (``getting rid of the multiplicities'' and ``appending the missing eigenvalues''), we construct the positive semidefinite matrices $A'_i$, $B'_i$ of size $s=2\abs{S}$ and the vectors $\ket{\psi'_i}$ also of dimension $s$ such that $t'_i = \prob[B'_i, \psi'_i] - \prob[A'_i, \psi'_i]$.  Notice also that the spectra of the $A'_i$ and the $B'_i$ are in the interval $[0,\Lambda]$.

We show that $\lim_{i \to \infty} t'_i = t$. We write each $t_i$ as $ t_i = u_i + v_i, $ where $u_i = \sum_{x \in S} t_i(x) [x]$ and $v_i = \sum_{x \notin S} t_i(x) [x]$. Let $\eps_i = \sum_{x \notin S} t_i(x)$. Since $\lim_{i \to \infty} t_i = t$, we have $\lim_{i \to \infty} \eps_i = 0$.
Our construction of $t'_i$ implies that $t'_i = u_i + \eps^+_i [\Lambda] - \eps^-_i [0]$ with $\eps^+_i +\eps^-_i = \eps_i$. This means in particular that $\norm{t'_i - t_i}_1 \le \eps_i$. Since $\lim_{i \to \infty} \eps_i = 0$ and $\lim_{i \to \infty} t_i = t$, we conclude that $\lim_{i \to \infty} t'_i = t$.

We will now show that the limit of the sequence $\{t'_i\}_{i \in \mathbb{N}}$ is an element $t' \in K_{\Lambda}$ which will conclude the proof. We consider the sequence of triplets $\{(A'_i,B'_i,\ket{\psi'_i})\}_{i \in \mathbb{N}}$.  
Let $X_{\Lambda}^s$ the set of positive semidefinite matrices with spectrum in $[0,\Lambda]$ and $Y^s$ the set of quantum states of dimension $s$. An element of the sequence is an element of $X_{\Lambda}^s \times X_{\Lambda}^s \times Y^s$. Since $X_{\Lambda}^s$ and $Y^s$ are two compact sets, $X_{\Lambda}^s \times X_{\Lambda}^s \times Y^s$ is also a compact set. This means that our sequence of triplets has an accumulation point ($A'$, $B'$, $\ket{\psi'}$) even if this sequence does not necessarily converge.

Let us now define $t' = \prob[B', \psi'] - \prob[A', \psi']$. Since $0\preceq A'\preceq B'$, we have $t' \in K_{\Lambda}$. We can also see that $t'$ is an accumulation point of the sequence $\{t'_i\}_i$. Since the sequence of $t'_i$'s converges to $t$, we conclude that $t = t'$ and  $t \in K_{\Lambda}$.
\end{proof}

Since $K_\Lambda$ is a closed convex cone, it is characterized by its dual cone. 
\begin{cor}
	\label{prop:EquivalenceDualCone}
For any finite $\Lambda >0$, $K_\Lambda = \{h\in V \ |\ \forall \Phi \in K_\Lambda^*,\ \Phi(h) \ge 0\}$.
\end{cor}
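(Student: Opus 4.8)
The plan is to observe that \autoref{prop:EquivalenceDualCone} is an immediate packaging of two facts already established, namely that $K_\Lambda$ is a \emph{closed convex cone}. Note first that the set on the right-hand side of the claimed identity, $\{h\in V \mid \forall \Phi \in K_\Lambda^*,\ \Phi(h)\ge 0\}$, is by definition exactly the bidual cone $K_\Lambda^{**}$. By \autoref{lem:ConvexCone}, $K_\Lambda$ is a convex cone, and by \autoref{thm:KLambdaIsClosed} it is closed (with respect to the $1$-norm on $V$). The bidual lemma of \cite{BV04} states that for any convex cone $C$ one has $C^{**}=\cl(C)$; applying it with $C=K_\Lambda$ and using closedness gives $K_\Lambda^{**}=\cl(K_\Lambda)=K_\Lambda$, which is precisely the statement.

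The only point that deserves a word of care is that the topology in which $K_\Lambda$ is ``closed'' in \autoref{thm:KLambdaIsClosed} (the one induced by $\norm{\cdot}_1$ on $V$) agrees with the one implicitly used in the bidual lemma, where $K_\Lambda^{*}$ consists of $\norm{\cdot}_1$-continuous linear functionals. This is indeed so: the separation argument behind $C^{**}=\cl(C)$ is just the Hahn--Banach theorem in the normed space $(V,\norm{\cdot}_1)$. Concretely, for $h\notin K_\Lambda=\cl(K_\Lambda)$ one separates $\{h\}$ from the closed convex cone $K_\Lambda$ by a continuous functional $\Phi$; the cone property of $K_\Lambda$ forces $\Phi\ge 0$ on $K_\Lambda$, so $\Phi\in K_\Lambda^{*}$, while $\Phi(h)<0$, so $h\notin K_\Lambda^{**}$. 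The reverse inclusion $K_\Lambda\subseteq K_\Lambda^{**}$ is trivial straight from the definition of the dual cone, and together these give $K_\Lambda = K_\Lambda^{**}$.

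I do not expect any genuine obstacle here: all the substance sits in \autoref{thm:KLambdaIsClosed} (closedness of $K_\Lambda$) and \autoref{lem:ConvexCone} (convex cone structure), and this corollary merely routes those through the standard bidual identity. If one preferred not to cite the bidual lemma as a black box, one could instead write out the one-line Hahn--Banach separation sketched above directly; the two approaches are interchangeable.
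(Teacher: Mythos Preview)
Your proposal is correct and follows exactly the paper's approach: the corollary is stated immediately after \autoref{thm:KLambdaIsClosed} with the one-line justification that $K_\Lambda$ is a closed convex cone and hence equals its bidual via the lemma $C^{**}=\cl(C)$ cited from \cite{BV04}. Your added remark about Hahn--Banach in $(V,\norm{\cdot}_1)$ is a welcome clarification of the infinite-dimensional subtlety that the paper leaves implicit, but it does not constitute a different route.
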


Using the characterization of operator monotone functions on $[0,\Lambda]$ given by  \autoref{lem:DecompositionOMF}, we can restate \autoref{prop:EquivalenceDualCone} and characterize EBM functions on $[0,\Lambda]$ by three necessary and sufficient properties:
\begin{cor}\label{prop:EBM_Lambda_Characterization}
A function $h:[0,\Lambda]\to \R$ with finite support on $[0,\Lambda]$ is EBM on $[0,\Lambda]$ if and only if 
$\sum_{x} h(x) = 0$, $\sum_x x h(x) \ge 0$, and $ \forall \lambda \in (-\infty,-\Lambda] \cup (0,\infty), \ \sum_x \frac{\lambda x }{\lambda + x}h(x) \ge 0$.
\end{cor}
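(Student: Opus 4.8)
The plan is to combine the duality statement of Corollary~\ref{prop:EquivalenceDualCone} with the integral representation of operator monotone functions on $[0,\Lambda]$ from Lemma~\ref{lem:DecompositionOMF}. By Corollary~\ref{prop:EquivalenceDualCone}, a finitely supported $h:[0,\Lambda]\to\R$ is EBM on $[0,\Lambda]$ if and only if $\sum_x f(x)h(x)\ge 0$ for every operator monotone $f$ on $[0,\Lambda]$ (using the isomorphism between $\Phi\in K_\Lambda^*$ and $f_\Phi$ established before Lemma~\ref{prop:OperatorMonotone}). So the whole statement reduces to re-expressing the condition ``$\sum_x f(x)h(x)\ge 0$ for all operator monotone $f$'' in the three concrete inequalities claimed.

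First I would prove the forward direction: assume $h$ is EBM on $[0,\Lambda]$. Applying the defining inequality to $f(t)=\pm 1$ (both constants are operator monotone, being nondecreasing and trivially preserving matrix order) gives $\sum_x h(x)=0$. Applying it to $f(t)=t$ (operator monotone) gives $\sum_x x\,h(x)\ge 0$. Applying it to $f(t)=\frac{\lambda t}{\lambda+t}$ for each admissible $\lambda\in(-\infty,-\Lambda]\cup(0,\infty)$ — these are exactly the building blocks appearing in Lemma~\ref{lem:DecompositionOMF}, and each is operator monotone on $[0,\Lambda]$ (for $\lambda>0$ this is classical; for $\lambda<-\Lambda$ one checks the function is well-defined and increasing with no pole in $[0,\Lambda]$, and it is a Möbius transform of the operator monotone $t$) — gives the third family of inequalities. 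I should double-check the endpoint $\lambda=-\Lambda$ separately, since there $\frac{\lambda t}{\lambda+t}$ has its pole exactly at $t=\Lambda$; one handles this either by a limiting argument from $\lambda<-\Lambda$ or by noting the representation in Lemma~\ref{lem:DecompositionOMF} ranges over $\lambda\in(-\infty,-\Lambda)$ strictly and the closed endpoint follows by continuity of $\lambda\mapsto\sum_x\frac{\lambda x}{\lambda+x}h(x)$ on the finite support.

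For the reverse direction, assume the three conditions hold and let $f$ be an arbitrary operator monotone function on $[0,\Lambda]$. Write $f(t)=c_0+c_1 t+\int\frac{\lambda t}{\lambda+t}\,\d w(\lambda)$ via Lemma~\ref{lem:DecompositionOMF}, where $c_1\ge 0$ (the linear coefficient of an operator monotone function is nonnegative) and $w$ is a nonnegative measure on $(-\infty,-\Lambda)\cup(0,\infty)$. Then
\begin{align*}
\sum_x f(x)h(x) = c_0\sum_x h(x) + c_1\sum_x x\,h(x) + \int\Bigl(\sum_x \tfrac{\lambda x}{\lambda+x}h(x)\Bigr)\d w(\lambda).
\end{align*}
The first term vanishes by $\sum_x h(x)=0$; the second is $\ge 0$ since $c_1\ge 0$ and $\sum_x x\,h(x)\ge 0$; the integrand is $\ge 0$ for $w$-a.e.\ $\lambda$ by the third hypothesis, so the integral is $\ge 0$. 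Interchanging the finite sum with the integral is legitimate because $h$ has finite support and $\int\frac{\lambda}{1+\lambda}\d w(\lambda)<\infty$ controls the integrand on the compact set $[0,\Lambda]$. Hence $\sum_x f(x)h(x)\ge 0$ for all operator monotone $f$, so $h$ is EBM on $[0,\Lambda]$ by Corollary~\ref{prop:EquivalenceDualCone}.

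The main obstacle I anticipate is purely in the bookkeeping of signs and endpoints: verifying that each $\frac{\lambda t}{\lambda+t}$ with $\lambda\le-\Lambda$ really is operator monotone on $[0,\Lambda]$ (and in particular that nothing blows up at $t=\Lambda$ when $\lambda=-\Lambda$), and justifying the sum–integral interchange using the integrability condition on $w$. Everything else is a routine application of the two cited results.
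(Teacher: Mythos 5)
Your proof is correct and follows exactly the route the paper intends: the paper states the corollary as an immediate consequence of combining Corollary~\ref{prop:EquivalenceDualCone} (the bidual characterization of $K_\Lambda$) with the integral representation of operator monotone functions on $[0,\Lambda]$ from Lemma~\ref{lem:DecompositionOMF}, and gives no further argument. You simply fill in the verification — including the sign $c_1\ge 0$, the operator monotonicity of the building blocks $t\mapsto\frac{\lambda t}{\lambda+t}$ for $\lambda\le-\Lambda$, and the sum–integral interchange — that the paper leaves implicit.
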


%
%

Our goal is to show that every strictly valid function is an EBM function. For this, we first find a characterization of strictly valid functions that looks similar to conditions \autoref{prop:EBM_Lambda_Characterization}. 

\begin{lem}
	\label{thm:simpleconditionsonvalidity}
	Let $h:[0,\infty)\to\R$ be a function with finite support such that $\sum_x h(x)=0$. The function $h$ is a strictly valid function if and only if for all $\lambda > 0,\ \sum_x \frac{-h(x)}{\lambda+x} > 0$, and is valid if and only if this inequality is large.
\end{lem}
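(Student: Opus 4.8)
The whole statement turns on one elementary identity. Because $h$ has finite support and $\sum_x h(x)=0$, writing $\frac{\lambda x}{\lambda+x}=\lambda-\frac{\lambda^2}{\lambda+x}$ gives, for every $\lambda>0$,
\[
\sum_x \frac{\lambda x}{\lambda+x}\,h(x)=\lambda\sum_x h(x)-\lambda^2\sum_x\frac{h(x)}{\lambda+x}=\lambda^2\sum_x\frac{-h(x)}{\lambda+x}.
\]
So if I abbreviate $g(\lambda)=\sum_x\frac{-h(x)}{\lambda+x}$, testing $h$ against the particular operator monotone functions $f_\lambda(t)=\frac{\lambda t}{\lambda+t}$ is exactly the same as evaluating $g$ at $\lambda$. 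The plan is to exploit this in both directions, using the integral representation of operator monotone functions from \autoref{lem:DecompositionOMF}.

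\emph{Necessity.} For each fixed $\lambda_0>0$ the function $f_{\lambda_0}(t)=\frac{\lambda_0 t}{\lambda_0+t}$ is operator monotone on $[0,\infty)$ and is not constant. Hence if $h$ is valid then $\sum_x f_{\lambda_0}(x)h(x)\ge 0$, and if $h$ is strictly valid this quantity is $>0$; by the identity above it equals $\lambda_0^2\,g(\lambda_0)$, so $g(\lambda_0)\ge 0$ (resp. $>0$). Since $\lambda_0$ was arbitrary, this gives the stated inequalities.

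\emph{Sufficiency.} Let $f:[0,\infty)\to\R$ be operator monotone and write it, by \autoref{lem:DecompositionOMF}, as $f(t)=c_0+c_1 t+\int_0^{\infty}\frac{\lambda t}{\lambda+t}\,\d w(\lambda)$ with $c_1\ge 0$ and $w$ a positive measure satisfying the integrability condition. Using $\sum_x h(x)=0$ to discard the constant term, exchanging the finite sum over $\supp(h)$ with the integral (legitimate since $\supp(h)$ is finite and $w$ meets the growth bound of \autoref{lem:DecompositionOMF}), and applying the identity, I get
\[
\sum_x f(x)\,h(x)=c_1\sum_x x\,h(x)+\int_0^{\infty}\lambda^2\,g(\lambda)\,\d w(\lambda).
\]
Expanding $g$ near $\lambda=\infty$ shows $\lambda^2 g(\lambda)\to\sum_x x\,h(x)$, so $g\ge 0$ on $(0,\infty)$ forces $\sum_x x\,h(x)\ge 0$; then both terms above are $\ge 0$ and $h$ is valid. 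For the strict version assume $g>0$ on $(0,\infty)$ and take $f$ non-constant, so $c_1>0$ or $w\ne 0$. If $w\ne 0$, the integrand $\lambda\mapsto\lambda^2 g(\lambda)$ is continuous and strictly positive on $(0,\infty)$, so restricting $w$ to a compact subinterval of positive $w$-mass shows the integral is $>0$, and with the non-negative linear term we obtain $\sum_x f(x)h(x)>0$. The one remaining case is $w=0$, $c_1>0$, where $\sum_x f(x)h(x)=c_1\sum_x x\,h(x)$ and one additionally needs $\sum_x x\,h(x)>0$ — which is precisely the inequality obtained by testing against $f(t)=t$, so the clean strict characterization should be understood together with this extra inequality.

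The crux is this strict direction: it requires the not-entirely-obvious (in infinite dimensions) fact that the integral of a strictly positive continuous function against a non-zero positive measure is strictly positive even when the measure accumulates near $0$ or escapes to $\infty$, and it requires the justification of the sum–integral interchange via the finiteness of $\supp(h)$ and the integrability condition in \autoref{lem:DecompositionOMF}; one must also keep separate track of the affine operator monotone functions, which the single identity does not detect. Everything else is routine bookkeeping around that identity.
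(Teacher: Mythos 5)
Your proof follows the same route as the paper: the identity $\frac{\lambda x}{\lambda+x} = \lambda - \frac{\lambda^2}{\lambda+x}$ combined with the integral representation of \autoref{lem:DecompositionOMF}. But you are more careful in the strict case, and the caveat you raise at the end is a genuine flaw in the lemma as stated. The paper's proof asserts that the condition $\sum_x x\,h(x) > 0$ is implied by the family of strict inequalities $\sum_x \frac{\lambda x}{\lambda+x}h(x)>0$ ``in the limit $\lambda\to\infty$''; but strict inequalities do not survive passage to a limit. Concretely, take $h$ supported on $\{1,2,3\}$ with $h(1)=h(3)=-1$ and $h(2)=2$. Then $\sum_x h(x)=0$ and
\begin{align*}
\sum_x\frac{-h(x)}{\lambda+x} = \frac{1}{\lambda+1}-\frac{2}{\lambda+2}+\frac{1}{\lambda+3} = \frac{2}{(\lambda+1)(\lambda+2)(\lambda+3)} > 0 \quad\text{for all }\lambda>0,
\end{align*}
yet $\sum_x x\,h(x)=-1+4-3=0$, so $h$ is not strictly valid (test against the non-constant operator monotone $f(t)=t$). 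Hence ``$\forall\lambda>0,\ \sum_x \frac{-h(x)}{\lambda+x}>0$'' does not by itself imply strict validity; one needs the additional hypothesis $\sum_x x\,h(x)>0$, exactly the term your decomposition isolates in the case $w=0$, $c_1>0$. The non-strict half of the lemma is unaffected, since the limit argument does give $\sum_x x\,h(x)\ge 0$; and the subsequent use in \autoref{thm:strictlyvalidisebm} only invokes the other (correct) implication, that strict validity forces $\sum_x x\,h(x)>0$, so the overall existence proof survives. But the lemma's strict statement is imprecise, and your suggestion to carry the extra inequality $\sum_x x\,h(x)>0$ as a separate hypothesis is the right fix.
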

\begin{proof}
An immediate consequence of \autoref{lem:DecompositionOMF} and the \hyperref[def:strictlyvalidfunction]{definition of strictly valid functions} is that $h$ is a strictly valid function if and only if 
\begin{enumerate}
	\item[\ding{182}] $\sum_x h(x) = 0$;
	\item[\ding{183}] for all $\lambda > 0,\ \sum_x \frac{\lambda x}{\lambda+x} h(x) > 0$;
	\item[\ding{184}] $\sum_x x\cdot h(x) > 0$.
\end{enumerate}
Condition \ding{184} is implied by condition \ding{183} in the limit $\lambda\to\infty$. Moreover, for all $\lambda>0$ we have:
\begin{align*}
\sum_x \frac{\lambda x}{\lambda+x} h(x) > 0 
&  \Leftrightarrow  \sum_x \left(1+ \frac{-\lambda}{\lambda+x}\right) h(x) \geq 0 
  \Leftrightarrow  \sum_x \frac{-1}{\lambda+x} h(x) > 0.
\end{align*}
The last equivalence is shown by using property \ding{182}.

The proof can be easily extended to handle the case of valid functions.
\end{proof}

%

%
%


We are now ready to show the main statement of this subsection:
\begin{lem}
	\label{thm:strictlyvalidisebm}
	Any strictly valid function is an EBM function.
\end{lem}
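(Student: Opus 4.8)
The plan is to reduce the claim to the already-proven closed case via the finite-$\Lambda$ approximation $K_\Lambda$. Let $h$ be a strictly valid function, i.e. $\sum_x h(x) = 0$ and, by \autoref{thm:simpleconditionsonvalidity}, $\sum_x \frac{-h(x)}{\lambda+x} > 0$ for every $\lambda > 0$. The support of $h$ is a finite subset of $[0,\infty)$, so it lies in some bounded interval; I will pick $\Lambda$ large enough that $\supp(h) \subseteq [0,\Lambda]$, but crucially I will also need $\Lambda$ large enough that the \emph{extra} operator-monotone-on-$[0,\Lambda]$ conditions — the ones coming from $\lambda \in (-\infty,-\Lambda]$ in \autoref{lem:DecompositionOMF} and hence in \autoref{prop:EBM_Lambda_Characterization} — are also satisfied. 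Concretely, by \autoref{prop:EBM_Lambda_Characterization}, $h \in K_\Lambda$ iff $\sum_x h(x) = 0$, $\sum_x x\,h(x) \ge 0$, and $\sum_x \frac{\lambda x}{\lambda+x} h(x) \ge 0$ for all $\lambda \in (-\infty,-\Lambda] \cup (0,\infty)$. The first two hold for free (the second is the $\lambda \to \infty$ limit of strict validity, which is strictly positive). The $\lambda > 0$ conditions hold because $h$ is strictly valid. So the only thing to check is the $\lambda \le -\Lambda$ range, and then $h \in K_\Lambda \subseteq K$, which is exactly the claim.

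So the heart of the proof is this: show that for $\Lambda$ sufficiently large (depending on $h$), $\sum_x \frac{\lambda x}{\lambda+x} h(x) \ge 0$ for all $\lambda \le -\Lambda$. I would rewrite $\frac{\lambda x}{\lambda + x}$ for $\lambda$ negative. Writing $\lambda = -\mu$ with $\mu \ge \Lambda > \max\supp(h)$, we get $\frac{\lambda x}{\lambda+x} = \frac{-\mu x}{-\mu + x} = \frac{\mu x}{\mu - x}$, which for $x \in \supp(h)$ and $\mu > \max\supp(h)$ is well-defined and positive. The key observation is the asymptotic expansion as $\mu \to \infty$: $\frac{\mu x}{\mu - x} = x \cdot \frac{1}{1 - x/\mu} = x + \frac{x^2}{\mu} + \frac{x^3}{\mu^2} + \cdots = x + \bigO{1/\mu}$ uniformly over the finite set $\supp(h)$. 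Hence $\sum_x \frac{\mu x}{\mu - x} h(x) = \sum_x x\,h(x) + \bigO{1/\mu}$. Since $\sum_x x\,h(x) > 0$ strictly (from strict validity in the $\lambda \to \infty$ limit), there exists $\Lambda$ large enough that the error term is dominated, so $\sum_x \frac{\mu x}{\mu - x} h(x) > 0$ for all $\mu \ge \Lambda$. This is exactly the needed family of inequalities for $\lambda = -\mu \le -\Lambda$.

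Putting it together: choose $\Lambda > \max\supp(h)$ large enough that both $\supp(h) \subseteq [0,\Lambda]$ and the negative-$\lambda$ inequalities hold (by the asymptotic argument above). Then $h$ satisfies all three conditions of \autoref{prop:EBM_Lambda_Characterization}, so $h \in K_\Lambda$. Since trivially $K_\Lambda \subseteq K$ (matrices with spectrum in $[0,\Lambda]$ are in particular positive semidefinite matrices), $h$ is an EBM function. The main obstacle — though it is not really hard, just the one place care is needed — is controlling the $\lambda \in (-\infty,-\Lambda]$ conditions uniformly; one has to notice that the worst case among these is the limit $\lambda \to -\infty$ (equivalently $\mu \to \infty$), where the expression tends to $\sum_x x\,h(x) > 0$, and that strict positivity there gives a margin that survives for all sufficiently large $\Lambda$. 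I would also double-check the edge behavior: since $\mu \ge \Lambda > \max\supp(h)$, every denominator $\mu - x$ is bounded below by $\Lambda - \max\supp(h) > 0$, so there is no singularity and all the expansions converge uniformly on the finite support.
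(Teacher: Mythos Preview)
Your proof is correct and follows essentially the same route as the paper: reduce to showing $h\in K_\Lambda$ for some large $\Lambda$ via \autoref{prop:EBM_Lambda_Characterization}, observe that the only nontrivial conditions are those for $\lambda\le -\Lambda$, and dispatch them using that the expression tends to $\sum_x x\,h(x)>0$ as $\lambda\to -\infty$. The paper phrases this last step as ``continuity plus a positive limit,'' whereas you give the explicit geometric-series expansion yielding the $\bigO{1/\mu}$ error term; the arguments are equivalent.
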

\begin{proof}
Fix $h$, a strictly valid function. We prove that there exists a $\Lambda>0$ such that $h$ is EBM on $[0,\Lambda]$ and hence is EBM. The proof easily extends to horizontal and vertical functions. 

The conditions of \autoref{thm:simpleconditionsonvalidity} are very close to the conditions in \autoref{prop:EBM_Lambda_Characterization}. We just need to show that there exists a $\Lambda>0$ such that 
\begin{align*}
	\forall \lambda < - \Lambda, \ \ \sum_x \frac{\lambda x}{\lambda + x}h(x) \geq  0.
\end{align*}
We have $\lim_{\lambda \rightarrow -\infty}  \ \ \sum_x \frac{\lambda x}{\lambda + x} h(x) = \sum_x x \ h(x) > 0$.
Consider the quantity $\sum_x \frac{\lambda x}{\lambda + x} h(x)$ as a function of $\lambda$. It is continuous in $\lambda$, so there exists a $\Lambda > 0$ such that
\begin{align*}
	\forall \lambda < - \Lambda, \ \ \sum_x \frac{\lambda x}{\lambda + x} h(x) \geq 0.
\end{align*}
\end{proof}

\subsection{From valid functions to EBM functions}
We have seen so far that a point game with strictly valid functions implies a point game with EBM functions. In this section we extend this result to valid functions, since, in the end, it will be easier to find a point game with valid transitions than with strictly valid ones.
Note that for all $\Lambda >0,\ K_\Lambda \subset K \subset K^{**}$, hence every EBM function is a valid function and thus a point game with EBM functions is also a point game with valid functions. More importantly, we prove that the converse is also ``approximately'' true:

\begin{thm}[Valid to EBM]
	\label{thm:ValidReverse}
	Given a point game with $2m$ valid functions and final point $[\beta,\alpha]$ and any $\eps >0$, we can construct a point game with $2m$ EBM functions and final point $[\beta+\eps,\alpha+\eps]$.
\end{thm}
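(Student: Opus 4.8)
The plan is to convert each valid function in the point game into an EBM function by a small perturbation, paying a little in the position of the final point. The key tool is \autoref{thm:strictlyvalidisebm}: strictly valid functions are already EBM, so it suffices to perturb each valid function into a strictly valid one. First I would recall, from \autoref{thm:simpleconditionsonvalidity}, that a finite-support function $h$ with $\sum_x h(x)=0$ is valid iff $\sum_x \frac{-h(x)}{\lambda+x}\ge 0$ for all $\lambda>0$, and strictly valid iff the inequality is strict for all $\lambda>0$. The natural candidate perturbation is to add a ``strictly valid correction'' of small weight: for a horizontal function $h$ at height $y$ supported on points with first coordinates in some bounded set, consider a fixed strictly valid function $g$ (for example a split-type function such as $g = [x_0] - \frac{1}{2}[x_1] - \frac{1}{2}[x_2]$ with $0\le x_1 < x_0 < x_2$ chosen appropriately, or more simply a function that moves mass outward along the axis), and set $h_\delta = h + \delta\, g$. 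Since the cone of valid functions is convex and $g$ is strictly valid, $h_\delta$ is strictly valid for every $\delta>0$, hence EBM by \autoref{thm:strictlyvalidisebm}.

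The difficulty is that adding a correction $\delta g$ to a single transition changes the running sum $p_j = \frac12[0,1]+\frac12[1,0]+\sum_{i\le j} t_i$, so I must make sure the perturbations telescope correctly and that all intermediate $p_j$ remain non-negative. The clean way is to pair each perturbation with its inverse in the next transition of the same orientation: after the $i$-th horizontal function I add $\delta g_i$, and in the following horizontal function I subtract $\delta g_i$ (folded into that function together with its own forward correction). Concretely, replace the horizontal functions $h_1,\dots,h_m$ (indexing only the horizontal ones) by $\tilde h_1 = h_1 + \delta g_1$, $\tilde h_k = h_k - \delta g_{k-1} + \delta g_k$ for $1<k<m$, and $\tilde h_m = h_m - \delta g_{m-1} + \delta g_{\mathrm{final}}$, and symmetrically for the vertical functions. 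Each $\tilde h_k$ is valid plus a strictly valid piece of positive weight, hence strictly valid, hence EBM; and the telescoping leaves only a leftover $\delta g_{\mathrm{final}}$ (horizontal) and $\delta g'_{\mathrm{final}}$ (vertical) at the end, which displaces the final point from $[\beta,\alpha]$ to $[\beta,\alpha] + \delta(\text{something bounded})$.

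To control the displacement and the non-negativity of intermediate distributions, I would choose $g_i$ (and $g'_i$) to have support only on points already present in $p_{j}$ near that step — e.g., pushing a tiny amount of weight from an existing axis point outward — so that $p_j + \delta(\text{partial sums of corrections})$ stays non-negative for $\delta$ small enough, uniformly over the finitely many steps. One must check that the leftover terminal correction $g_{\mathrm{final}}$ can be taken to move the final point only in the direction of increasing both coordinates; this is arranged by choosing $g_{\mathrm{final}}$ to split the point $[\beta,\alpha]$ into points whose coordinates are at most $\beta+\eps$ and $\alpha+\eps$ respectively (a valid, indeed strictly valid, move for the outward direction), so that after absorbing it the final distribution is a single point $[\beta+\eps',\alpha+\eps']$ with $\eps',\eps'' \le \eps$; if it ends at several points, one extra horizontal and vertical merge (which are EBM) collapses them, at the cost already accounted for. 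Choosing $\delta = \delta(\eps)$ small enough then yields a point game with $2m$ EBM functions and final point $[\beta+\eps,\alpha+\eps]$.

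The main obstacle I anticipate is bookkeeping rather than conceptual: ensuring simultaneously that (a) every perturbed function is genuinely strictly valid — which requires the added correction to have strictly positive weight and the ``$\sum h(x)=0$'' normalization to be preserved, so the $g_i$ must themselves be valid functions summing to zero; (b) the intermediate $p_j$ stay non-negative, forcing the corrections to be supported on already-occupied points or to have weight $O(\delta)$ below the minimal positive weight appearing in the game; and (c) the net terminal displacement is in the correct (outward) direction and bounded by $\eps$. None of these is deep, but getting a single choice of $\delta$ that works for all $2m$ steps at once requires a short uniform estimate over the finite point game.
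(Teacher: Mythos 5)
Your overall strategy is the right one---the paper also reduces to \autoref{thm:strictlyvalidisebm} and then perturbs the valid point game into a strictly valid one---but the perturbation mechanism you propose has a real logical gap. You define $\tilde h_k = h_k - \delta g_{k-1} + \delta g_k$ and assert that this is ``valid plus a strictly valid piece of positive weight, hence strictly valid.'' That decomposition would require $h_k - \delta g_{k-1}$ to be valid, which does not follow: $h_k$ is only assumed valid (not strictly valid), so there may exist a non-constant operator monotone $f$ with $\sum_x h_k(x) f(x) = 0$, and subtracting any fraction of a strictly valid $g_{k-1}$ then makes this quantity negative. Put differently, $\tilde h_k$ is ``valid $+$ strictly valid $-$ strictly valid,'' and the last term can destroy (strict) validity. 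In the degenerate case $g_{k-1}=g_k$ the corrections cancel entirely and $\tilde h_k = h_k$, which is not strictly valid; in general the subtracted correction has no reason to be dominated by the added one uniformly over all operator monotone witnesses. Your bookkeeping concerns (a)--(c) are real but secondary; the fundamental issue is that the telescoping cancellation is incompatible with preserving strict validity at each step.

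The paper's \autoref{thm:validtostrictlyvalidgame} uses a different and cleaner perturbation: it leaves the weights alone and instead shifts the \emph{positions} of the points, translating the positive part of the $i$-th horizontal transition an extra $\eps/m$ to the right relative to its negative part (and analogously vertically). Because non-constant operator monotone functions are strictly increasing, $f(x+\eps/m) > f(x)$, and strict validity follows immediately; the cumulative shift at the end is exactly $\eps$ in each coordinate, giving the final point $[\beta+\eps,\alpha+\eps]$ with no leftover ``correction functions'' to absorb and no non-negativity issues for the intermediate distributions (the shifts are rigid translations). If you want to salvage a weight-perturbation approach, you would need to establish a uniform strictness lower bound for $g_k$ over a suitably normalized family of operator monotone functions that dominates the loss from $-\delta g_{k-1}$; this is not impossible but is considerably more delicate than the coordinate-shift argument, and is not supplied in your sketch.
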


The rest of this subsection is devoted to proving this theorem. 

To prove \autoref{thm:ValidReverse},  we, first, use \autoref{thm:strictlyvalidisebm} that shows that a point game with strictly valid functions is a point game with EBM functions.
Second, we show how to transform a point game with valid functions and final point $[\beta,\alpha]$ into a point game with strictly valid functions, hence EBM functions, and final point $[\beta+\eps,\alpha+\eps]$ for any $\eps>0$.
\begin{lem}
	\label{thm:validtostrictlyvalidgame}
	Fix $\eps>0$. Given a point game with $2m$ valid functions and final point $[\beta,\alpha]$, we can construct a point game with $2m$ strictly valid functions and final point $[\beta+\eps,\alpha+\eps]$.
\end{lem}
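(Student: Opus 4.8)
The plan is to perturb the given valid point game by adding a small multiple of a fixed ``universally strictly valid'' function to each of the $2m$ transitions, in a way that telescopes so that only the final point moves, and moves by exactly $[\eps,\eps]$ (or less, which we can then pad). The key building block is a single horizontal function that is strictly valid, has total mass zero, and effects a net displacement of a point to the right; the simplest candidate is the ``point-splitting'' function $g_t = [t] - \tfrac{1}{2}[t-\delta] - \tfrac{1}{2}[t+\delta]$ for suitable $t,\delta$, or more conveniently a function supported on three points that raises the mean. By \autoref{thm:simpleconditionsonvalidity}, checking strict validity of such a function reduces to verifying $\sum_x h(x)=0$ and $\sum_x \frac{-h(x)}{\lambda+x}>0$ for all $\lambda>0$; for a three-point function with positive mass in the middle and the middle abscissa strictly between the two outer ones, strict convexity of $x\mapsto \frac{-1}{\lambda+x}$ on $(0,\infty)$ gives the strict inequality.

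First I would write the valid point game as functions $\{t_1,\dots,t_{2m}\}$, alternately horizontal and vertical, with $\tfrac12[0,1]+\tfrac12[1,0]+\sum_i t_i = [\beta,\alpha]$ and all partial sums $p_j = \tfrac12[0,1]+\tfrac12[1,0]+\sum_{i\le j} t_i \ge 0$. For each horizontal transition index $i$ I would add $\eta\, g^{\text{hor}}$ and for each vertical transition $\eta\, g^{\text{vert}}$, where $g^{\text{hor}},g^{\text{vert}}$ are fixed strictly valid horizontal/vertical functions of total mass zero, and $\eta>0$ is a small parameter to be chosen. Since a valid function plus a strictly valid function is strictly valid (the defining inequality becomes strict because one summand already gives strict inequality for every non-constant operator monotone $f$, while the other gives $\ge 0$), each perturbed function $t_i + \eta g^{(\cdot)}$ is strictly valid. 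The subtlety is that I want these additions to telescope so the intermediate configurations stay non-negative and the net effect on the endpoint is controlled: I would instead add to transition $i$ the function $\eta(g_i - g_{i-1})$ where $g_i$ is a strictly valid function that moves a small amount of mass, chosen so that after all $2m$ steps only $\eta g_{2m}$ survives, contributing a displacement bounded by $\eps$ in each coordinate; then I pad with one extra trivial move, or simply note that a point at $[\beta',\alpha']$ with $\beta'\le\beta+\eps$, $\alpha'\le\alpha+\eps$ can always be pushed to exactly $[\beta+\eps,\alpha+\eps]$ by one more strictly valid raise-and-lift (a vertical then horizontal strictly valid move raising a single point), absorbing the slack.

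The main obstacle I anticipate is maintaining the non-negativity of all partial sums $p_j$ after perturbation: adding $\eta g_i$ can momentarily create negative mass at a point where the original $p_j$ had weight zero or tiny weight. I would handle this by choosing, for each step, the support of the perturbation $g_i$ to consist of points already carrying strictly positive weight in $p_{j}$ (possible since $p_j$ is a fixed finitely-supported nonneg function and we may place the three support points of $g_i$ arbitrarily close together near an existing point of positive weight), and by taking $\eta$ small enough — smaller than a fixed fraction of $\min\{p_j(x): x\in\supp(p_j),\ 0\le j\le 2m\}$ over all the finitely many partial configurations — so that the perturbed partial sums remain $\ge 0$. Because there are only finitely many steps and each $p_j$ has finite support with a strictly positive minimum on its support, such an $\eta>0$ exists; shrinking $\eta$ further if needed also guarantees the accumulated endpoint displacement is at most $\eps$ in each coordinate. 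Finally I would invoke \autoref{thm:simpleconditionsonvalidity} once more to confirm each $t_i + \eta g_i^{(\cdot)}$ (a valid function plus a strictly valid function with matching orientation) is strictly valid, completing the construction of a strictly valid point game with $2m$ functions and final point $[\beta+\eps,\alpha+\eps]$.
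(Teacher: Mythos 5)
Your proposal takes a genuinely different route from the paper — perturbation by addition rather than coordinate shift — but the telescoping step as stated breaks down. You propose adding $\eta(g_i - g_{i-1})$ to the $i$-th transition, but subtracting a strictly valid function from a valid function does not preserve validity: if $t_i$ is valid and $g_{i-1}$ is strictly valid, then $\sum_x f(x)\bigl(t_i(x) + \eta g_i(x) - \eta g_{i-1}(x)\bigr)$ can be negative, since the $-\eta g_{i-1}$ term contributes a strictly negative quantity that can overwhelm the nonnegative contribution from $t_i$. Moreover, $g_i$ and $g_{i-1}$ have opposite orientations (horizontal vs.\ vertical), so $g_i - g_{i-1}$ is neither horizontal nor vertical, and adding it to $t_i$ destroys the alternating horizontal/vertical structure that a point game requires. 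Your proposal is also internally inconsistent: after introducing the $\eta(g_i - g_{i-1})$ telescoping, you conclude by invoking strict validity of $t_i + \eta g_i$ without the subtracted term, which is a different object whose sum over $i$ no longer concentrates on a single final point.

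The paper's mechanism sidesteps all of this by \emph{translating} rather than \emph{adding}. It defines $t'_i$ so that the positive part of $t_i$ is shifted by $i\eps/m$ and the negative part by $(i-1)\eps/m$ in the appropriate coordinate (e.g.\ $t_i'(x,y) = t_i^+(x-i\eps/m, \cdot) - t_i^-(x-(i-1)\eps/m,\cdot)$ for $i$ odd). Strict validity then falls out of one line: replacing $f$ by the operator monotone function $x\mapsto f(x+(i-1)\eps/m)$ reuses the validity of $t_i$, and then the \emph{strict} inequality $f(x+\eps/m)>f(x)$ for non-constant operator monotone $f$ gives strictness. The shift keeps the orientation of each $t'_i$ intact, the intermediate partial sums remain nonnegative automatically because they are just coordinate-translated copies of the original nonnegative $p_j$, and after $m$ horizontal and $m$ vertical shifts the final point lands exactly at $[\beta+\eps,\alpha+\eps]$. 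Your anticipated obstacle — preserving nonnegativity of the $p_j$ and controlling the endpoint — simply does not arise in the shift framing, which is what makes it the right move here.
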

\begin{proof}
Consider a game with valid functions $\{t_1,\cdots,t_{2m}\}$. We will construct a new game with strictly valid functions $\{t'_1,\cdots,t'_{2m}\}$. The idea to ensure strict validity, is to shift each point by an extra $\eps/m$ to the right for horizontal functions and to shift them up by $\eps/m$ for vertical functions. After $2m$ funtions ($m$ horizontal and $m$ vertical), the final point will then be $[\beta+\eps,\alpha+\eps]$ as desired. 

For all $i \in \{1,2m\}$ and $\forall (x,y) \in [0,\infty)$, we define the shifted functions as:
\begin{align*}
	t'_i(x,y) =t_i^+(x-i\eps/m,y-(i-1)\eps/m) - t_i^-(x-(i-1)\eps/m,y-(i-1)\eps/m)    & \;\;\; \text{if } i \text{ is odd,}\\
	t'_i(x,y) =t_i^+(x-(i-1)\eps/m,y-i\eps/m) - t_i^-(x-(i-1)\eps/m,y-(i-1)\eps/m)  & \;\;\; \text{if } i \text{ is even.}
\end{align*}

Fix $i$ even. We prove that the function $t'_i$ is a strictly valid horizontal function. 
Note first that $\sum_x t'_i(x,y) = \sum_x t_i(x,y) = 0$.
Then, for all $y\in[0,\infty)$ and for all non-constant operator monotones functions we have:
\begin{align*}
	\sum_{x \in \supp(t_i^{\prime+})} t_i^{\prime+} (x,y)f(x)
		&= \sum_{x \in \supp(t'_i)} t_i^+(x - i\eps/m,y-(i-1)\eps/m)f(x) \\
		&= \sum_{x \in \supp(t_i^+)} t_i^+(x,y-(i-1)\eps/m)f(x+i\eps/m) \\
		&\geq \sum_{x \in \supp(t_i^-)} t_i^-(x,y-(i-i)\eps/m)f(x+i\eps/m) \\
		&= \sum_{x \in \supp(t_i^-)} t_i^{\prime-}(x + (i-1)\eps/m,y)f(x+i\eps/m) \\
		&= \sum_{x \in \supp(t_i^{\prime-})} t_i^{\prime-}(x,y)f(x + \eps/m), \\
      		& > \sum_{x \in \supp(t_i^{\prime-})}t_i^{\prime-}(x,y)f(x).
\end{align*}
The first inequality follows from the validity of $t_i$ and by noticing that if $f(x)$ is an operator monotone function in $x$ then $f(x+(i+1)\eps/m)$ is also an operator monotone function in $x$. The second strict inequality follows from the fact that every no constant operator monotone function is strictly increasing. 
A similar proof holds for vertical functions.
\end{proof}

\subsection{Examples of valid line transitions}
\label{sec:importantvalidtransitions}
As we said earlier, we can go back-and-forth between functions and transitions. Functions are more suited for proving equivalences between different types of point games, transitions are more suited for describing the point games and mapping them back to protocols. 

For the sake of completeness, we define now valid and strictly valid transitions and provide some examples of valid or strictly valid transitions.

\begin{definition}
Let $l,r:[0,\infty)\to[0,\infty)$ be two functions with finite support. The transition $l\to r$ is valid (resp. strictly valid) if the function $r-l$ is valid (resp. strictly valid).
\end{definition}

Let us now have a quick look at three transitions. Despite their simple expressions, these transitions play an important role in creating point games with arbitrarily small bias, and we will use them in the next two sections. Moreover, by using only these three transitions, Mochon gives a family of point games converging to final point $[2/3,2/3]$~\cite{Moc07}, i.e. a protocol with bias arbitrarily close to $1/6$.
\paragraph{Point raise} $w[x] \to w[x']$ with $x'\geq x$. \\
It is easy to see that for every operator monotone function $f$, $wf(x) \le wf(x\rq{})$ if and only if $x\rq{} \ge x$. By taking $f(x) = x$, we see that the condition is necessary. It is also sufficient since every operator monotone function is increasing.

\paragraph{Point merge} $w_1[x_1] + w_2[x_2] \to (w_1+w_2)[x_3]$ with $x_3 \geq \frac{w_1x_1 + w_2x_2}{w_1 + w_2}$. \\
Again, for every operator monotone function $f$, $w_1f(x_1) + w_2f(x_2) \le (w_1+w_2)f(x_3)$ if and only if $x_3 \geq \frac{w_1x_1 + w_2x_2}{w_1 + w_2}$. By taking $f(x) = x$, the above condition is necessary. This condition is also sufficient because operator monotone functions are concave.

\paragraph{Point split} $w[x] \to w_1 [x_1] + w_2[x_2]$ with $w = w_1 + w_2$ and $\frac{w}{x} \ge \frac{w_1}{x_1} + \frac{w_2}{x_2}$.\\
For every operator monotone function $f$, $wf(x) \le w_1 f(x_1) + w_2f(x_2)$ if and only if $\frac{w}{x} \ge \frac{w_1}{x_1} + \frac{w_2}{x_2}$. By considering the function $f(x) = -\frac{1}{\lambda + x}$ and the case where $\lambda \rightarrow 0$, we have $-\frac{w}{x} \le -\frac{w_1}{x_1} - \frac{w_2}{x_2}$ which shows that the above condition is necessary. 
We now show that the above condition is also sufficient. Assume that $\frac{w}{x} \ge \frac{w_1}{x_1} + \frac{w_2}{x_2}$. We want to verify that $wf(x) \le w_1 f(x_1) + w_2f(x_2)$ for $f(x) = -\frac{1}{\lambda + x}$. Let $q = \frac{1}{x}, q\rq{}_i = \frac{1}{x_i}$. Let a function $g(t) = -\frac{t}{1 + \lambda t}$. We have $-\frac{1}{\lambda + x} = g(q)$ and $-\frac{1}{\lambda + x_i} = g(q_i)$. This gives us
\begin{align*}
w f(x) = g(q) \le g\!\left(\frac{w_1q_1 + w_2q_2}{w_1 + w_2}\right) \le w_1g(q_1) + w_2 g(q_2) = w_1f(x_1) + w_2f(x_2).
\end{align*}
The first inequality holds because $g$ is decreasing and the second inequality holds because $g$ is convex. The special case of $f(x) = x$ follows by considering the limit $\lambda \to \infty$ when considering function $f(x) = \frac{\lambda x}{\lambda + x} = \lambda\left(1 + \lambda\cdot\frac{-1}{\lambda + x}\right)$.

A last property that will be useful later on is that no valid point game puts any weight on the point $[0,0]$.
\begin{lem}
	\label{thm:nopointattheorigine}
	A point game with valid transitions has no transition involving the point $[0,0]$.
\end{lem}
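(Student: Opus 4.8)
The plan is to argue that whenever a valid line transition involves the point $[0,0]$ — meaning that $0$ lies in the support of the function $l$ being transformed along the horizontal line $y=0$, or symmetrically along a vertical line $x=0$ — one can exhibit an operator monotone function $f$ that violates the valid-function inequality $\sum_x f(x) h(x) \ge 0$ for $h = r - l$. First I would reduce to a single line transition: by \autoref{def:ebmtransition} and the discussion of point games with $\PP$-functions, a valid transition is valid line-by-line, so it suffices to show that no valid horizontal line transition on the line $y$ can have $l(0) > 0$ unless the line as a whole is trivial; and actually we only need the statement for the line $y = 0$ (and its vertical mirror $x=0$), since a point $[0,0]$ in $p_i$ must have arisen from, or must persist into, such a line. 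Concretely, suppose a valid line transition $l \to r$ has $l(0) = w_0 > 0$. Using \autoref{thm:simpleconditionsonvalidity}, validity says $\sum_x h(x) = 0$ and $\forall \lambda > 0,\ \sum_x \frac{-h(x)}{\lambda + x} \ge 0$, where $h = r - l$.

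The key step is to look at the behavior of $\sum_x \frac{-h(x)}{\lambda+x}$ as $\lambda \to 0^+$. The term coming from $x = 0$ is $\frac{-h(0)}{\lambda} = \frac{l(0) - r(0)}{\lambda}$, which blows up like $1/\lambda$ unless $l(0) = r(0)$; every other term stays bounded since all other support points are strictly positive. So if $l(0) \neq r(0)$ the sign of the whole sum for small $\lambda$ is governed by $\mathrm{sign}(l(0) - r(0))$: if $l(0) > r(0)$ the sum tends to $+\infty$, giving no contradiction, but if $l(0) < r(0)$ it tends to $-\infty$, violating validity. Hence validity forces $r(0) \geq l(0)$; running the same argument "the other way" — i.e. using that in a point game, the total weight is conserved and weight on the origin can only move one direction — one sees weight can only accumulate at $[0,0]$, never leave it, while the game must end at the single point $[\beta,\alpha]$ with $\beta,\alpha$ close to $1/2 > 0$. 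Since the initial distribution $\tfrac12[0,1] + \tfrac12[1,0]$ has no weight on $[0,0]$ and no valid transition can create weight there (the only way to feed weight into $x=0$ along a horizontal line is from points with $x > 0$, but the point-merge/point-split/point-raise analysis in \autoref{sec:importantvalidtransitions} shows any transition ending at a point with coordinate $0$ must start at points all having that coordinate $0$, because $f(x) = x$ forces the weighted mean to not decrease), the point $[0,0]$ never appears.

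More carefully, I would package the "can't create weight at a zero coordinate" claim as follows: for a valid horizontal line transition $l \to r$ on line $y$, taking the operator monotone function $f(x) = x$ gives $\sum_x x\, r(x) \ge \sum_x x\, l(x)$, and taking $f(x) = -\tfrac{1}{\lambda+x}$ and letting $\lambda \to 0^+$ gives, after multiplying by $\lambda$, exactly $r(0) \ge l(0)$ — no wait, that gives $l(0) \ge r(0)$ after tracking signs; I would do the sign bookkeeping explicitly in the write-up. The upshot is that a transition cannot move weight *onto* $\{x = 0\}$, only off it. Combined with the vertical version (no transition moves weight onto $\{y = 0\}$) and with conservation of total mass, any weight sitting at $[0,0]$ at some stage could never have been put there and could never be removed, contradicting the fact that $p_0$ puts zero weight at the origin and $p_n = 1[\beta,\alpha]$ with $\beta, \alpha > 0$.

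The main obstacle I anticipate is handling the bookkeeping cleanly when a transition touches a line through the origin: one must be careful that "the point $[0,0]$ is involved in a transition" could mean weight appears in $p_i(0,0)$ for some intermediate $i$, or that a horizontal/vertical function $t_i$ has $0$ in the support of its positive or negative part along the $y=0$ or $x=0$ line. I would state the lemma's conclusion as: for every $i$, $p_i(0,0) = 0$, and moreover no $t_i$ has any mass at $[0,0]$ in either $t_i^+$ or $t_i^-$. The induction is: $p_0(0,0) = 0$; assuming $p_i(0,0) = 0$, a horizontal transition $p_i \to p_{i+1}$ acts on the line $y = 0$ by a valid line transition $l = p_i(\cdot, 0) \to r = p_{i+1}(\cdot,0)$ with $l(0) = 0$, and the limiting-$\lambda$ argument shows $r(0) \le l(0) = 0$, hence $r(0) = 0$; the vertical case is symmetric; the final point $[\beta,\alpha]$ automatically has $\beta,\alpha > 0$. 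That closes the induction and gives the lemma.
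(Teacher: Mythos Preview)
Your approach is essentially the same as the paper's: use the characterization from \autoref{thm:simpleconditionsonvalidity}, examine the term at $x=0$ in $\sum_x \frac{-h(x)}{\lambda+x}$, and let $\lambda\to 0^+$ to see that a valid line transition cannot create weight at the coordinate $0$; then induct from $p_0(0,0)=0$. Your write-up is considerably more meandering than needed --- the first sign slip (``validity forces $r(0)\ge l(0)$'') is an error you only fix two paragraphs later, and the detours through point merges/splits/raises and through ``weight can only accumulate at $[0,0]$, never leave it'' are unnecessary --- but the final induction paragraph is correct and is exactly the paper's argument compressed into one clean step.
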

\begin{proof}
It is sufficient to prove that there is no valid line transition $l \to w[0] + (1-w)r$ where $l$ and $r$ are positive functions with finite support and $l(0)=r(0)=0$. By contradiction, assume there exists such transition. In that case the second condition of \autoref{thm:simpleconditionsonvalidity} implies that for all $\lambda >0$, we have $(1-w)\sum_x \frac{-\lambda}{\lambda+x} r(x) - w\geq \sum_x \frac{-\lambda}{\lambda+x} l(x)$. The contradiction is obtained by taking the limit $\lambda\to0$.
\end{proof}


\section{Time independent point games}
\label{sec:pointgames-tipg}

In the previous section, we showed that if  there exists a point game with valid transitions and final point $[\beta, \alpha]$, then for any $\eps>0$, there exists a weak coin flipping protocol with $P_A^* \leq \alpha + \eps$ and $P_B^* \leq \beta + \eps$. Moreover, given a line transition, it is easy to verify whether it is valid or not. Nevertheless, it is still not straightforward how to find a valid point game with arbitrarily small bias. 
We now introduce the last model, namely \dt{time independent point games} (TIPG).

As its name suggests, the idea behind a time independent point game is to remove the time ordering of the transitions. 
 This is done by dropping the second condition in \autoref{def:pointgamewithfunctions}, imposing that all the points should exist before being ``transitioned'', and by summing together all the horizontal functions on one hand, and all the vertical ones on the other hand.
\begin{definition}[Time independent point game]
A \emph{time independent point game} is a valid horizontal function $h$ and a valid vertical function $v$ such that
\begin{align*}
	h + v = 1[\beta,\alpha] - \ahalf [0,1] - \ahalf [1,0],
\end{align*}
for some $\alpha,\beta > 1/2$. We call the point $[\beta,\alpha]$ the final point of the game.
\end{definition}

The interest of this model in comparison to point games with valid transitions is obvious: we only need to find two valid functions, instead of a sequence with an appropriate order. Even simpler, since $h+v = 0$ almost everywhere (except in $[0,1],\ [1,0]$, and $[\beta, \alpha]$), our task basically boils down to finding a single valid function.

It is easy to construct a time independent point game with final point $[\beta,\alpha]$ from a point game with valid horizontal functions $(h_1,h_2,\ldots,h_n)$, valid vertical functions $(v_1,v_2,\ldots, v_n)$ and final point $[\beta,\alpha]$. As a matter of fact, we take $h=\sum_{i=1}^n h_i$ and $v = \sum_{i=1}^n v_i$. More interestingly, the reverse also holds:

\begin{thm}[TIPG to valid point games]
\label{prop:TipgToTdpg}
Given a time independent point game with a valid horizontal function $h$ and a valid vertical function $v$ such that
$h + v = 1[\beta,\alpha] - \ahalf [0,1] - \ahalf [1,0]$, we can construct, for all $\eps>0$, a valid point game with final point $[\beta+\eps,\alpha+\eps]$ and a number of transitions that depends on $\epsilon$.
\end{thm}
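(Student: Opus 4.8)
The plan is to realize the single pair of valid functions $(h,v)$ as an actual time-ordered sequence of valid line transitions, starting from $\tfrac12[0,1]+\tfrac12[1,0]$. The naive idea — first do all horizontal moves encoded by $h$, then all vertical moves encoded by $v$ — fails because when we want to perform the horizontal transition $h^-\to h^+$ the points of $h^-$ are simply not present on the plane yet: the vertical part $v$ has not been applied to create them. The fix, as announced in the overview, is a \emph{catalyst}: we add a small copy $\gamma v^-$ of the negative part of $v$ (for an arbitrarily small weight $\gamma>0$) to the board at the start; these points sit unchanged through the horizontal step, get consumed and regenerated by the vertical step, and are finally removed at the cost of nudging the final point. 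So the construction has four stages: (i) introduce the catalyst $\gamma v^-$; (ii) perform the horizontal transition corresponding to $h$ on the augmented configuration; (iii) perform the vertical transition corresponding to $v$; (iv) remove the catalyst, paying $\eps$ in the final point.

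First I would make precise what ``performing $h$'' means as a valid transition. Writing $h=h^+-h^-$ with $h^\pm\geq 0$ of finite support and disjoint supports, and recalling from \autoref{def:pointgamewithfunctions} and the lemma following it that a valid function $f$ corresponds to the valid transition $\zeta+f^-\to\zeta+f^+$ for any $\zeta\geq 0$ with $\zeta\geq f^-$ pointwise, the obstruction is exactly finding such a $\zeta$ on the current board. At the start the board is $p_0=\tfrac12[0,1]+\tfrac12[1,0]$; after adding the catalyst it is $p_0+\gamma v^-$. Since $h+v=1[\beta,\alpha]-\tfrac12[0,1]-\tfrac12[1,0]$, we get $h = 1[\beta,\alpha]-\tfrac12[0,1]-\tfrac12[1,0]-v$, so $h^- \leq \tfrac12[0,1]+\tfrac12[1,0]+v^+ $ (comparing negative parts, using that $[\beta,\alpha]$ with $\alpha,\beta>1/2$ contributes only to $h^+$ after cancellation — this bookkeeping is the one genuinely fiddly point and must be done carefully line by line). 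The role of the catalyst $\gamma v^-$ is to supply, after the horizontal step produces $h^+$ which contains (a weighted piece of) $v^+$, enough mass of the form $v^+$ so that the subsequent vertical transition $\zeta'+v^-\to\zeta'+v^+$ can be run; here $\zeta'$ is whatever remains on the board, and one checks $\zeta'\geq v^-$ precisely because we seeded $\gamma v^-$ and because $h^+\geq$ (the relevant part of) $v^+$. Concretely, one shows the chain
\begin{align*}
\tfrac12[0,1]+\tfrac12[1,0]+\gamma v^- \ \to\ \tfrac12[0,1]+\tfrac12[1,0]+\gamma v^- - h^- + h^+ \ \to\ [\beta,\alpha]+\gamma v^-,
\end{align*}
where the first arrow is a horizontal valid transition (it is $h$ applied with the catalyst as part of $\zeta$, and validity of $h$ is given), and the second arrow is a vertical valid transition (it is $v$ applied with the appropriate $\zeta'$, using that after the first step the board dominates $v^-$ thanks to the catalyst). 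Both arrows are valid because $h$ and $v$ are valid functions and adding a fixed nonnegative $\zeta$ to both sides of a valid transition keeps it valid (the function $r-l$ is unchanged).

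The remaining step is to delete the catalyst. We are at $[\beta,\alpha]+\gamma v^-$ and we want to reach a single point close to $[\beta,\alpha]$. I would split $v^-$ into finitely many single points $w_j[x_j,y_j]$ and, for each, use a sequence of point raises and point merges (the valid transitions catalogued in \autoref{sec:importantvalidtransitions}) to first lift each catalyst point above/next to $[\beta,\alpha]$ and then merge it into the main point; a point merge of $[\beta,\alpha]$ (weight $1$) with $\gamma w_j[x_j,y_j]$ produces a point whose coordinates are within $O(\gamma)$ of $[\beta,\alpha]$ provided $x_j,y_j$ are bounded, which they are since $v^-$ has finite support. Doing this for all $j$ moves the final point to some $[\beta+\delta,\alpha+\delta']$ with $\delta,\delta'\leq \eps$ once $\gamma$ is chosen small enough as a function of $\eps$, $\|v^-\|_1$, and $\max_j\{x_j,y_j\}$; a final point raise to $[\beta+\eps,\alpha+\eps]$ (valid since it only increases coordinates) cleans up. The number of transitions is: one split to create $\gamma v^-$, one horizontal and one vertical transition for the TIPG core, and $O(|\supp v^-|)$ raises/merges to remove the catalyst — finite, but depending on $\eps$ through how finely we must split and how small $\gamma$ must be, matching the statement.

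The main obstacle I anticipate is the second arrow above — verifying that after the horizontal step the board really does dominate $v^-$ pointwise, i.e. that the catalyst plus the newly created $h^+$ suffices. This reduces to the inequality $\tfrac12[0,1]+\tfrac12[1,0]+\gamma v^- - h^- + h^+ \geq v^-$, equivalently $h^+ + \gamma v^- + \tfrac12[0,1]+\tfrac12[1,0] \geq v^- + h^-$; substituting $h^+ - h^- = [\beta,\alpha]-\tfrac12[0,1]-\tfrac12[1,0]-v^++v^-$ turns the left minus right side into $[\beta,\alpha]-v^+ +\gamma v^-$, and one needs this to be $\geq 0$ — which is \emph{false} in general (e.g. $v^+$ may have mass where $[\beta,\alpha]$ has none). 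This is exactly why the overview says ``we do not know how to do this'' directly and why the catalyst's weight interacts with the order of operations; the correct fix is to interleave: one does not apply all of $h$ then all of $v$, but rather applies $v$ in a way that only ever consumes $v^-$-mass that is currently present, which forces $h$ and $v$ to be realized through several alternating sub-transitions rather than one each, or alternatively to keep the catalyst $\gamma v^-$ literally untouched (as part of $\zeta$) through the horizontal step and only feed it into the vertical step, applying $v$ as $\zeta'+v^-\to\zeta'+v^+$ with $\zeta' = $ everything currently on the board \emph{including} the $\gamma v^-$, so that $\zeta'\geq v^-$ holds because $\gamma v^- $ is sitting there — but then we end at $[\beta,\alpha]+v^+-v^-+\gamma v^- $, not $[\beta,\alpha]+\gamma v^-$, unless $\gamma\geq 1$. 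Resolving this tension cleanly — which I expect requires either scaling so the catalyst has weight $\geq 1$ and then shrinking, or a more careful decomposition of $v$ into pieces each applied when its $v^-$ support is available — is the crux of the proof and where Mochon's ``catalyst'' trick does its real work.
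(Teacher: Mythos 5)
Your high-level plan — add a catalyst $\gamma v^-$, run the TIPG's $h$ and $v$ as two big transitions, remove the catalyst at the end — is the right one, but the middle step as you write it fails, and the two ideas needed to repair it are exactly the ones you flag at the end as "the crux" and then leave unproved.

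First, the order. You do the horizontal step (apply $h$) first and the vertical (apply $v$) second. But your very first arrow, $\tfrac12[0,1]+\tfrac12[1,0]+\gamma v^- \to \tfrac12[0,1]+\tfrac12[1,0]+\gamma v^- - h^- + h^+$, is not a transition between nonnegative functions unless the board dominates $h^-$ pointwise, which there is no reason to expect: the identity $h+v = [\beta,\alpha]-\tfrac12[0,1]-\tfrac12[1,0]$ gives $h^- \leq \tfrac12[0,1]+\tfrac12[1,0]+v^+$, but your board carries $v^-$, not $v^+$. The catalyst must match the transition you do first. The paper's Part~1 starts from $\tfrac12[0,1]+\tfrac12[1,0]+v^-$ (note: weight $1$), applies the \emph{vertical} transition $v^-\to v^+$ (valid with $\zeta=\tfrac12[0,1]+\tfrac12[1,0]\geq 0$), rewrites the result $\tfrac12[0,1]+\tfrac12[1,0]+v^+$ as $\zeta'+h^-$ with $\zeta' = [\beta,\alpha]-h^++v^-$, shows $\zeta'\geq 0$ by the disjoint-support argument you correctly started, and then applies the \emph{horizontal} transition $h^-\to h^+$ to land at $\zeta'+h^+ = [\beta,\alpha]+v^-$. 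The catalyst re-emerges intact, in exactly two valid transitions — but with weight $1$, not $\gamma$.

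Second, shrinking the catalyst. You correctly notice the weight-$1$ issue ("unless $\gamma\geq 1$") and guess that one should "scale so the catalyst has weight $\geq 1$ and then shrink," but you do not prove the shrink, and it is not routine: it is a separate recursion (\autoref{lem:ReduceCatalysts}). If $p+\xi\to q+\xi$ is transitively valid, then for any $\gamma>0$ so is $p+\gamma\xi\to q+\gamma\xi$, by setting $\gamma'=1/\lceil1/\gamma\rceil$ and repeatedly applying the weight-$1$ two-step to a $\gamma'$-scaled convex slice of the remaining $p$-mass, $\lceil1/\gamma\rceil$ times, accumulating $q$. This recursion is where the $\eps$-dependence of the number of transitions comes from, and it is the genuinely missing idea in your write-up; your "one horizontal and one vertical transition for the TIPG core" count cannot be right for small $\gamma$. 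Your steps~(i) (manufacture $\gamma v^-$ by splitting $[0,1]$, $[1,0]$ and raising) and~(iv) (merge the residue into the final point at cost $\eps$) are in the right spirit and match the paper's Parts~2 and~3, though the paper is more careful — e.g. it uses \autoref{thm:nopointattheorigine} to ensure $v^-$ has no mass at the origin, so the split/raise that builds the catalyst is well-defined, and \autoref{lem:FinalMerge} gives explicit $\delta,\delta'$ for the final merge.
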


Before we prove the above theorem let us define \dt{transitively valid transitions}.
\begin{definition}[Transitively valid transition]
	Let $p,q:[0,\infty)\times[0,\infty) \to [0,\infty)$ be two functions with finite support. The transition $p\to q$ is \dt{transitively valid} if there exists a sequence of valid transitions $p_0 \to p_1, p_1\to p_2, \cdots, p_{m-1} \to p_m$ such that $p=p_0$ and $q=p_m$. 
\end{definition}
Our goal is to show that for every $\eps>0$ the transition $ \ahalf [0,1] + \ahalf [1,0] \to 1[\beta+\eps,\alpha+\eps]$ is transitively valid, which implies the theorem.

We start by the following technical lemma:

\begin{lem}\label{transitive}
If $p'\to q'$ is a transitively valid transition and $\zeta : [0,\infty) \times [0,\infty) \rightarrow [0,\infty)$ is a non-negative function with finite support, then $\delta p' + \zeta \to \delta q' + \zeta$ is also a transitively valid transition for all $\delta > 0$.
\end{lem}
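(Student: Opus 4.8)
The plan is to reduce the statement, by induction on the length~$m$ of the chain of valid transitions witnessing $p'\to q'$, to the single-step case: it suffices to prove that if $a\to b$ is a \emph{valid} transition (i.e. $b-a$ is a valid function, horizontal or vertical) and $\zeta$ is any non-negative finitely-supported function, then $\delta a + \zeta \to \delta b + \zeta$ is a valid transition for every $\delta>0$. Indeed, given the witnessing sequence $p_0\to p_1\to\cdots\to p_m$ with $p_0=p'$, $p_m=q'$, applying the single-step claim to each link with the \emph{same} $\zeta$ produces $\delta p_0+\zeta \to \delta p_1+\zeta \to \cdots \to \delta p_m+\zeta$, a sequence of valid transitions, which is exactly a witness that $\delta p'+\zeta \to \delta q'+\zeta$ is transitively valid. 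Note we must also check each intermediate function $\delta p_i+\zeta$ is non-negative, which is immediate since $p_i\geq 0$, $\zeta\geq 0$, $\delta>0$.

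For the single-step claim, say $a\to b$ is a valid \emph{horizontal} transition (the vertical case is symmetric), so on each horizontal line $y$ the function $h_y := b(\cdot,y) - a(\cdot,y)$ is a valid function. We must show $g_y := (\delta b(\cdot,y)+\zeta(\cdot,y)) - (\delta a(\cdot,y)+\zeta(\cdot,y)) = \delta\, h_y$ is valid for each $y$. But validity is exactly membership in the cone $K^{**}$ (\autoref{def:validfunction}): for every operator monotone $f$, $\sum_x f(x) g_y(x) = \delta \sum_x f(x) h_y(x) \geq 0$ since $\delta>0$ and $h_y$ is valid. So $g_y$ is valid, the $\zeta$-part having simply cancelled. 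Thus $\delta a+\zeta \to \delta b+\zeta$ is a valid horizontal transition. This also uses the trivial observation that adding a non-negative function $\zeta$ to both endpoints of a transition does not change the difference function, so validity is preserved verbatim — the only content is that scaling a valid function by a positive constant keeps it valid, which is the cone property from \autoref{lem:ConvexCone} combined with $K\subseteq K^{**}$, or directly from \autoref{thm:simpleconditionsonvalidity} (both $\sum_x h(x)=0$ and $\sum_x \frac{-h(x)}{\lambda+x}\ge 0$ are preserved under multiplication by $\delta>0$).

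There is essentially no obstacle here: the lemma is a bookkeeping statement, and the one thing to be careful about is that the \emph{same} catalyst $\zeta$ is reused at every step of the chain (so that consecutive endpoints match up and the telescoping works), and that $\zeta$ need only be non-negative, not itself part of any point game. The mild subtlety worth a sentence in the write-up is why we are allowed to add the \emph{full} $\zeta$ at once rather than line-by-line: since $\zeta$ is a fixed function of $(x,y)$, restricting to a horizontal line $y$ just restricts $\zeta$ to $\zeta(\cdot,y)$, and the argument above goes through on each line independently. Hence $\delta p' + \zeta \to \delta q' + \zeta$ is transitively valid for all $\delta>0$, as claimed.
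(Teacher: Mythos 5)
Your proof is correct and follows the same approach as the paper's: reduce to the single-step line-transition case by chaining with the same $\zeta$, then observe that the catalyst cancels in the difference and what remains, $\delta(r-l)$, is valid by the cone property. Your write-up is merely more explicit (intermediate non-negativity, line-by-line restriction) than the paper's terse three-line argument.
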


\begin{proof}
It suffices to prove the statement for a line transition. Consider a valid line transition $l \to r$ and a non-negative function with finite support $\xi$, then for all $\delta>0$, $\delta l + \xi \to \delta r +\xi$ is also a valid transition since $\delta( r- l)$ is a valid function.
\end{proof}

In the following construction of a valid point game from a time indepedent point game, we also keep track of the number of valid transitions we need, which will correspond to the number of rounds of the protocol. We make this more precise at the end of the section. 

\begin{proof}[Proof of \autoref{prop:TipgToTdpg}]
  The proof consists of three main parts.\\

\noindent
{\em Part 1}: First, we show that the transition from $\ahalf [0,1] + \ahalf [1,0]$ to $[\beta,\alpha]$ is transitively valid in the presence of an extra set of points that we refer to as a ``catalyst'', since these points remain unchanged through the transition and their weight can be made arbitrarily small. 
Let us write $v = v^+ -v^-$, where $v^+$ and $ v^-$ are positive functions with disjoint supports and  $h = h^+ -h^-$, where  $h^+ $ and $h^-$ are again  positive functions with disjoint supports. Then, for any $\gamma>0$, we show that the following transition is transitively valid:
\begin{align} 
	\label{eqn:SmallCatalyst}
	\frac{1}{2} [0,1] + \frac{1}{2} [1,0] + \gamma v^-  \to  [\beta,\alpha]  + \gamma v^-. 
\end{align}
More precisely, we decompose this transitively valid transition into a sequence of $2\lceil 1/\gamma \rceil$ valid transitions.

By definition of $v$ and $h$,  $v^- \to v^+$ is a valid vertical transition and $h^- \to h^+$ is a valid horizontal transition. Hence, since $\frac{1}{2} [0,1] + \frac{1}{2} [1,0]$ is non-negative, we have by \autoref{transitive}, that
\begin{align*}
	\frac{1}{2} [0,1] + \frac{1}{2} [1,0] + v^- & \rightarrow \frac{1}{2} [0,1] + \frac{1}{2} [1,0] + v^+ 
\end{align*}
is a valid vertical transition. Moreover, remark that 
$h + v = (h^+ - h^-) + (v^+ - v^-) = - (\frac{1}{2} [0,1] + \frac{1}{2} [1,0]) + [\beta,\alpha]$
implies $\frac{1}{2} [0,1] + \frac{1}{2} [1,0] + v^+ = [\beta,\alpha] + h^- - h^+ + v^-.$
Define the function with finite support $\zeta = [\beta,\alpha] - h^+ + v^-$. $\zeta$ is a positive function: the only place where $\zeta$ could be negative is on the support of $h^+$. But $\zeta + h^- = \frac{1}{2} [0,1] + \frac{1}{2} [1,0] + v^+$ is non negative and $\supp(h^+) \cap \supp(h^-) = \emptyset$ so $\zeta$ is non negative. 
By \autoref{transitive}, we get that $\zeta + h^- \to \zeta + h^+ = [\beta,\alpha]  + v^- $ is a valid horizontal transition since $h^- \to h^+$ is a valid horizontal transition and $\zeta$ is non negative. This shows that the transition
\begin{align}
	\label{eqn:TIPGtoValidCatalyst}
	\frac{1}{2} [0,1] + \frac{1}{2} [1,0] + v^-  \to  [\beta,\alpha]  + v^- 
\end{align}
is transitively valid and can be decomposed into a sequence of two valid transitions. It remains to show how to reduce the weight associated to $v^-$ in the transition.

\begin{lem}\label{lem:ReduceCatalysts}
	Suppose we have a transitively valid transition $p + \xi \rightarrow q + \xi$, then for any $\gamma> 0$, the transition $p + \gamma \xi  \to q + \gamma \xi$ is transitively valid.
\end{lem}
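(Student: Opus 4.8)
The plan is to exhibit the transition $p+\gamma\xi \to q+\gamma\xi$ as a concatenation of $N:=\lceil 1/\gamma\rceil$ transitively valid transitions, each obtained from the hypothesis $p+\xi\to q+\xi$ by rescaling it and adjoining a non-negative catalyst through \autoref{transitive}. (Here, as in \autoref{transitive}, $\xi\ge 0$ is the catalyst, a non-negative function with finite support.) The underlying picture is to interpolate linearly from $p$ to $q$ while carrying the already-reduced catalyst $\gamma\xi$ at every intermediate step. First I would set up, for $k=0,1,\dots,N$, the coefficients $c_k=\min\{k\gamma,\,1\}$ — so $c_0=0$, $c_N=1$, and $c_k=k\gamma$ whenever $k\le N-1$, because $(N-1)\gamma<1\le N\gamma$ — and the functions $P_k=(1-c_k)\,p+c_k\,q+\gamma\xi$. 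Then $P_0=p+\gamma\xi$, $P_N=q+\gamma\xi$, and each $P_k$ is non-negative since $p,q,\xi\ge 0$ and $0\le c_k\le 1$.

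The one step to produce is the transition $P_k\to P_{k+1}$ for each $k$. I would first note two stability facts about transitively valid transitions, both immediate: they are closed under concatenation (by definition), and the whole chain witnessing $p'\to q'$ may be rescaled by any $\mu>0$ to witness $\mu p'\to\mu q'$, because each difference $\mu(p_{j+1}-p_j)$ is still a valid function (valid functions form a cone). Writing $\mu_k=c_{k+1}-c_k$ — which equals $\gamma$ for $k\le N-2$ and equals $1-(N-1)\gamma\in(0,\gamma]$ for $k=N-1$ — the rescaled hypothesis $\mu_k p+\mu_k\xi\to\mu_k q+\mu_k\xi$ is transitively valid. Adjoining the catalyst $\zeta_k:=(1-c_{k+1})\,p+c_k\,q+(\gamma-\mu_k)\,\xi$, which is non-negative (for $k\le N-2$ one has $c_{k+1}=(k+1)\gamma<1$ and $\gamma-\mu_k=0$; for $k=N-1$ one has $1-c_{k+1}=0$ and $\gamma-\mu_k=N\gamma-1\ge 0$), \autoref{transitive} with $\delta=1$ yields that $(\mu_k p+\mu_k\xi)+\zeta_k\to(\mu_k q+\mu_k\xi)+\zeta_k$ is transitively valid; substituting $\mu_k=c_{k+1}-c_k$ shows the left-hand side equals $P_k$ and the right-hand side equals $P_{k+1}$. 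Concatenating $P_0\to P_1\to\cdots\to P_N$ gives the claimed transition, and if the hypothesis is witnessed by $m$ valid transitions this construction uses $Nm=m\lceil 1/\gamma\rceil$ of them.

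I do not expect a genuine obstacle here: once \autoref{transitive} is available, this is pure bookkeeping. The only points that demand care are (i) checking that every intermediate function $P_k$ and every catalyst $\zeta_k$ really is non-negative — this is precisely where non-negativity of $p$, $q$, and $\xi$ enters — and (ii) remembering that ``transitively valid'' is preserved both under concatenating chains and under positively rescaling an entire chain. The whole reason the catalyst reformulation was introduced is that weight adjustments of this kind become elementary.
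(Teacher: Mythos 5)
Your proof is correct and uses essentially the same strategy as the paper: linearly interpolate from $p$ to $q$ in $\lceil 1/\gamma\rceil$ steps, each realized by rescaling the hypothesis $p+\xi\to q+\xi$ and padding with a non-negative catalyst via \autoref{transitive}. The only cosmetic difference is bookkeeping — the paper uses a uniform step size $\gamma'=1/\lceil 1/\gamma\rceil$, carries $\gamma'\xi$ through all transitions, and adds $(\gamma-\gamma')\xi$ to both endpoints at the end, whereas you carry $\gamma\xi$ throughout and absorb the remainder by making the final step smaller; both yield the same chain of $\lceil 1/\gamma\rceil$ transitively valid transitions.
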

\begin{proof}
	Pick $\gamma'$ the largest inverse of an integer such that $\gamma \geq \gamma'$, that is $\gamma'= 1 / \lceil 1/\gamma \rceil$. By \autoref{transitive}, the following transition
	\begin{align*}
		p + \gamma' \xi &=  (1-\gamma')p + \gamma'\left(p + \xi\right) 
		\rightarrow (1-\gamma')p + \gamma'\left(q + \xi\right) = (1 - \gamma')p + \gamma'\xi + \gamma' q 
	\end{align*}
	is transitively valid, since $\gamma'(p + \xi) \rightarrow \gamma'(q + \xi)$ is transitively valid and  $ (1-\gamma')p$ is non-negative. If we repeat one more time, we can see that the following transition is again transitively valid
\begin{align*}
(1 - \gamma')p + \gamma'\xi + \gamma' q   = (1- 2\gamma')p + \gamma'\left(p + \xi\right) + \gamma' q \\
 \rightarrow (1-\gamma')p + \gamma'\left(q + \xi\right) + \gamma' q = (1 - 2\gamma')p + \gamma'\xi + 2\gamma' q.
\end{align*}
By repeating this process $1/\gamma' = \lceil 1/\gamma \rceil$ times, we end up with $q+\gamma'\xi$ and by adding on both sides $(\gamma-\gamma')\xi$, we obtain that the transition $p + \gamma\xi \to q+\gamma\xi$ is transitively valid. 
\end{proof}
From this Lemma, we conclude the proof that the transition of \autoref{eqn:SmallCatalyst} is transitively valid. Note that we also showed that the transition in \autoref{eqn:SmallCatalyst} can be decomposed into a sequence of $2\lceil1/\gamma\rceil$ valid transitions.\\

\noindent
{\em Part 2}: 
The second part of the proof consists in showing how to construct this small weight catalyst. In fact, we start from our initial distribution of points $\ahalf [0,1] + \ahalf [1,0]$ and by using a small part of their weight we construct the catalyst. Then, by performing the transitively valid transition we explained in Part 1, we can move to the final point $[\beta,\alpha]$ with weight almost one; nevertheless there is a small weight left in other points that we will deal with in Part 3.

More precisely, let us define
\begin{align*}
	m = \min_{(x,y)\in\supp(v^-)} \{\max\{x,y\}\}.
\end{align*}
By \autoref{thm:nopointattheorigine}, $v^-(0,0) = 0$ and hence $m>0$. In addition, for all $(x,y)\in\supp(v^-)$ it holds that $(x \ge m \ \mbox{or} \ y \ge m)$. This means, that there exist $a,b \ge 0$ with $\sum_{x,y} v^-(x,y) = \norm{v^-} = a + b$ such that the transition
\begin{align}
	\label{eqn:CreateCatalyst}
	a[0,m] +b [m,0] \rightarrow v^-
\end{align}
is transitively valid. All points $[x,y]$ in the support of $v^-$ can be reached through a point raise of $[0,m]$ or $[m,0]$, and thus the transition \autoref{eqn:CreateCatalyst} can be decomposed into a sequence of two valid transitions.

Let us now assume that $m<1$ (in fact, the case $m \geq 1$ is simpler and we will consider it afterwards). Let $m_x,m_y$ such that 
\begin{align}\label{mtransition}
[0,1] \rightarrow \frac{am}{a+b}[0,m] + \frac{b+a(1-m)}{a+b}[0,m_y] \; \mbox{and} \; [1,0] \rightarrow \frac{bm}{a+b}[m,0] + \frac{a+b(1-m)}{a+b}[m_x,0]
\end{align}
 are valid line transitions (such $m_x,m_y$ always exist). 

For any $\delta > 0$, we prove that the following transitions are transitively valid:

\begin{align*}
	\frac{1}{2} [0,1] + \frac{1}{2}[1,0]
	\rightarrow   
	  &\frac{1 -\delta}{2} [0,1] + \frac{\delta a m}{2(a+b)}[0,m] + \frac{\delta (b+a(1-m))}{2(a+b)}[0,m_y]\\
	 \quad &+  \frac{1-\delta}{2} [1,0] +  \frac{\delta b m}{2(a+b)}[m,0] + \frac{\delta (a+b(1-m))}{2(a+b)}[m_x,0] &\text{by \autoref{mtransition} }\\
\rightarrow & (1-\delta) \left( \frac{1}{2} [0,1] + \frac{1}{2} [1,0] + \frac{\delta m}{2(1-\delta)(a+b)} v^- \right) \\
\quad & + \quad \frac{\delta (b+a(1-m))}{2(a+b)}[0,m_y] + \frac{\delta (a+b(1-m))}{2(a+b)}[m_x,0] &\text{by \autoref{eqn:CreateCatalyst} }\\
\rightarrow & (1-\delta) [\beta,\alpha] + \frac{\delta m}{2(a+b)} v^- \\
\quad & + \quad \frac{\delta (b+a(1-m))}{2(a+b)}[0,m_y] + \frac{\delta (a+b(1-m))}{2(a+b)}[m_x,0] &\text{by \autoref{eqn:SmallCatalyst}}
\end{align*}
For $m\geq 1$, we start by considering the raises $[0,1] \rightarrow [0,m]$ and $[1,0] \rightarrow [m,0]$ and then continue as above.
Let
$
	\xi = \frac{m}{2(a+b)} v^-  + \frac{ b+a(1-m)}{2(a+b)}[0,m_y] + \frac{a+b(1-m)}{2(a+b)}[m_x,0]
$. We have shown that for any $\delta > 0$, the transition 
\begin{align}\label{xi}
\frac{1}{2} [1,0] + \frac{1}{2}[0,1] \to (1-\delta) [\beta,\alpha] + \delta\xi
\end{align}
is transitively valid. 
Note that we also showed that the transition in \autoref{xi} can be decomposed into a sequence of $2 + 2 + 2\left\lceil \frac{2(1-\delta)\norm{v^-}}{\delta m} \right\rceil$ valid transitions.\\

\noindent
{\em Part 3}:
In this part, we get rid of the $\delta\xi$ in \autoref{xi} by merging it with the final point $[\beta,\alpha]$. This has as effect that the final point moves to $[\beta+\eps,\alpha+\eps]$. To do this, we use the following Lemma. 
\begin{lem}\label{lem:FinalMerge}
Given $\eps > 0$ and a function $\xi : [0,\infty) \times [0,\infty) \rightarrow [0,\infty)$ with finite support and $\sum_{(x,y)\in\supp(\xi)} \xi(x,y) = 1$, there exists $0 < \delta < 1$ such that $
(1-\delta)[\beta,\alpha] + \delta \xi \rightarrow [\beta + \eps, \alpha + \eps]$ is transitively valid.
\end{lem}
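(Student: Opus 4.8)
The plan is to reduce the lemma to the claim that, for a suitably small $\delta>0$, the single transition $(1-\delta)[\beta,\alpha]+\delta\xi\to[\beta+\eps,\alpha+\eps]$ is transitively valid; combined with the transition of \autoref{xi} this is exactly what is needed to finish the proof of \autoref{prop:TipgToTdpg}. The guiding intuition is that $\delta\xi$ carries only a tiny amount of weight, so it can be absorbed into the heavy point $(1-\delta)[\beta,\alpha]$ through a bounded sequence of point raises and point merges (the transitions of \autoref{sec:importantvalidtransitions}), during which the heavy clump moves by only $O(\delta)$ in each coordinate; taking $\delta$ small then forces this total displacement to stay below $\eps$.

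Concretely, let $N$ be a bound on all coordinates occurring in the finite set $\supp(\xi)$. First I would use two point raises to move $(1-\delta)[\beta,\alpha]$ to $(1-\delta)[\beta+\tfrac{\eps}{2},\alpha+\tfrac{\eps}{2}]$, buying a headroom of $\eps/2$ in each coordinate. Then, enumerating $\supp(\xi)=\{(x_j,y_j)\}$ with weights $\delta w_j$ (so $\sum_j w_j=1$), I would absorb the points one at a time: if the current clump sits at $[\beta^*,\alpha^*]$ and $(x_j,y_j)$ lies weakly below and to the left of it, a horizontal or vertical point raise slides $(x_j,y_j)$ onto the clump's column or row, and a point merge then folds it in without moving the clump at all, since the weighted average stays $\le[\beta^*,\alpha^*]$ coordinatewise — and \autoref{transitive} lets us keep the rest of the configuration untouched. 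If instead $(x_j,y_j)$ overshoots the clump only in one coordinate, the same slide-and-merge along that coordinate's line displaces the clump by at most $\delta w_j N/(1-\delta)$, in that coordinate only. Summing over the at most $|\supp(\xi)|$ points, the clump drifts by at most $\delta N/(1-\delta)$ in each coordinate; choosing $\delta$ small enough that $\delta N/(1-\delta)<\eps/2$ keeps it inside $[\beta+\eps,\alpha+\eps]$, and a final point raise lands it exactly on $[\beta+\eps,\alpha+\eps]$. (By \autoref{thm:nopointattheorigine} no weight ever sits at the origin, so all these merges and raises are between points with positive coordinates and the formulas of \autoref{sec:importantvalidtransitions} apply.)

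The step I expect to be the genuine obstacle is the treatment of points of $\xi$ that are dominated by the heavy clump in neither coordinate — those whose first coordinate exceeds the clump's first coordinate and whose second coordinate exceeds its second. A naive slide-and-merge for such a point would have to first raise the clump until one of its coordinates matches that point's (possibly large) value, which could blow the $\eps$-budget. Handling these points requires being careful about the order in which $\supp(\xi)$ is processed and about which of the clump's two lines is used for each merge, so that one always merges along a line where the incoming point is already ahead of the clump; here the freedom to take $\delta$ — and hence every intermediate weight — arbitrarily small, together with the finiteness of $\supp(\xi)$, is exactly what keeps the cumulative displacement under control. Pinning down $\delta$ as an explicit function of $\eps$, $N$ and $|\supp(\xi)|$, and verifying that the displacement bound survives this more delicate bookkeeping, is the technical heart of the argument.
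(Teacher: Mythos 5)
Your sketch has a genuine gap, and you have correctly located it yourself: the points of $\xi$ that exceed the heavy clump in \emph{both} coordinates cannot be absorbed by slide-and-merge. The reason small $\delta$ does not save you there is worth spelling out. To merge $\delta w_j[x_j,y_j]$ (with $x_j>\beta^*$, $y_j>\alpha^*$) into the clump you must first bring the two onto a common horizontal or vertical line, and raises only move points \emph{up}; the only way to equalize a coordinate is therefore to raise the clump itself to height $y_j$ or abscissa $x_j$. That raise moves the \emph{entire} $(1-\delta)$ mass, not a $\delta$-sized piece of it, so its cost is $\sim y_j-\alpha^*$ (or $x_j-\beta^*$), which can be as large as $N$ and is completely insensitive to $\delta$. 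Processing order and choice of line cannot fix this either: by the validity condition $x_3\ge\tfrac{w_1x_1+w_2x_2}{w_1+w_2}$ for merges, a merge never moves mass below the weighted average, so there is no way to pull a low-weight point at large coordinates down onto the clump's row or column unless something already sits on that row or column to act as a counterweight. That counterweight is precisely the missing ingredient, and it is also the step where your plan to keep the total displacement $O(\delta N)$ breaks.

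The paper's proof supplies exactly this missing idea and is structurally different from your one-point-at-a-time absorption. It first raises all of $\xi$ to a single far-away point $[n_x,n_y]$ with $n_x>\beta+\eps$, $n_y>\alpha+\eps$ (so the dominating case is now the \emph{only} case, but handled once). It then splits a small auxiliary mass $\delta'$ off the clump, raises it horizontally to $[n_x,\alpha]$, and merges it vertically with $\delta[n_x,n_y]$: because $\delta'$ is comparable to $\delta$ and sits at height $\alpha$, the merged point lands at $[n_x,\alpha+\eps]$ — the low anchor \emph{drags} the small high mass down, something no sequence of raises can do. A final vertical raise of the remaining clump to height $\alpha+\eps$ followed by a horizontal merge lands everything at $[\beta+\eps,\alpha+\eps]$, with $\delta=\tfrac{\eps^2}{(n_x-\beta)(n_y-\alpha)}$ and $\delta'=\tfrac{\eps}{n_x-\beta}\bigl(1-\tfrac{\eps}{n_y-\alpha}\bigr)$ chosen to make both merges exact. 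Compared to your approach, this uses the freedom in $\delta$ not to bound a cumulative drift but to tune a single two-stage merge; it is shorter, avoids any bookkeeping over $|\supp(\xi)|$, and sidesteps the dominance problem entirely rather than trying to schedule around it.
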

\begin{proof}
By point raising, there exist values $n_x$ and $n_y$ such that $\xi \rightarrow [n_x,n_y]$ is transitively valid. Moreover, by further point raising, we can have $n_x > \beta + \eps$ and $n_y > \alpha + \eps$. We pick $\delta$ and $\delta'$ such that the following two point merges are valid:
\begin{align*}
\delta'[n_x,\alpha] + \delta[n_x,n_y] & \rightarrow (\delta + \delta')
[n_x,\alpha + \eps], \\
(1 - \delta - \delta')[\beta,\alpha + \eps] + (\delta' + \delta)[n_x,\alpha + \eps] & \rightarrow [\beta + \eps,\alpha + \eps]. 
\end{align*}
This is possible by taking $\delta,\delta'>0$ that satisfy $\delta' \alpha + \delta n_y = (\delta + \delta')(\alpha + \eps)$ and $(1 - \delta - \delta')\beta + (\delta' + \delta)n_x = \beta + \eps$. In other words,
\begin{align*}
	\delta = \frac{\eps^2}{(n_x-\beta)(n_y-\alpha)} \quad \text{and} \quad \delta' = \frac{\eps}{n_x-\beta}\left(1-\frac{\eps}{n_y-\alpha}\right).
\end{align*}
We conclude that
\begin{align*}
(1-\delta)[\beta,\alpha] + \delta \xi & \rightarrow (1-\delta)[\beta,\alpha] + \delta [n_x,n_y] &\text{$\xi \rightarrow [n_x,n_y]$ transitively valid}\\
& \rightarrow (1-\delta - \delta')[\beta,\alpha] + \delta'[n_x,\alpha] + \delta [n_x,n_y]  &\text{valid  point raise}\\
& \rightarrow (1-\delta - \delta')[\beta,\alpha] + (\delta' + \delta) [n_x,\alpha + \eps] &\text{valid merge}\\
& \rightarrow (1-\delta - \delta')[\beta,\alpha + \eps] + (\delta' + \delta) [n_x,\alpha + \eps] &\text{valid point raise}\\
& \rightarrow [\beta+ \eps,\alpha + \eps] &\text{valid merge.}
\end{align*}
\end{proof}
Note that the transition in \autoref{lem:FinalMerge} can be decomposed into a sequence of six valid transitions. 

This concludes the proof of the \autoref{prop:TipgToTdpg}.
\end{proof}

\paragraph{Number of rounds} The proof of \autoref{prop:TipgToTdpg} gives an explicit way of constructing a valid point game with final point $[\beta+\eps,\alpha+\eps]$ from any time independent point game with final point $[\beta,\alpha]$. This construction creates a point game with $10+2\left\lceil \frac{2(1-\delta)\norm{v^-}}{\delta m} \right\rceil$ valid transitions where $\delta = \frac{\eps^2}{(n_x-\beta)(n_y-\alpha)}$ and $v^-, m,n_x,n_y,\alpha$ and $\beta$ are parameters of the original TIPG. In the TIPGs we will consider, we  have that $m \geq \frac{1}{2}$ and let $\Gamma=\max \{ n_x,n_y\}$. Note also that $\norm{v^-}=\norm{v}/2=\norm{h}/2$. 
Then, the number of transitions is $O\left( \frac{ \norm{h} \Gamma^2}{\eps^2} \right)$. This corresponds to the number of rounds of the protocol.  Hence, we can restate \autoref{prop:TipgToTdpg} as

\begin{cor}
	\label{cor:numberOfRounds}
Assume there exists a time independent game with a valid horizontal function $h=h^+-h^-$ and a valid vertical function $v=v^+-v^-$ such that $h + v = 1[\beta,\alpha] - \ahalf [0,1] - \ahalf [1,0]$. Let $\Gamma$ the largest coordinate of all the points that appear in the TIPG game. Then, for all $\eps>0$, we can construct a point game with $O\left( \frac{ \norm{h} \Gamma^2}{\eps^2}\right)$ valid transitions and final point $[\beta+\eps,\alpha+\eps]$.
\end{cor}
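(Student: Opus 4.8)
The plan is to read off the bound directly from the proof of \autoref{prop:TipgToTdpg}, keeping a careful count of how many valid transitions are produced in each of its three parts and then substituting the explicit parameter choices made there. First I would revisit Part~1: the catalyst transition of \autoref{eqn:SmallCatalyst} with weight parameter $\gamma$ is assembled from the two valid transitions of \autoref{eqn:TIPGtoValidCatalyst} (one vertical, namely $v^-\to v^+$ dressed by the non-negative function $\tfrac12[0,1]+\tfrac12[1,0]$, and one horizontal, namely $\zeta+h^-\to\zeta+h^+$), after which \autoref{lem:ReduceCatalysts} replays this two-transition block $\lceil 1/\gamma\rceil$ times; hence \autoref{eqn:SmallCatalyst} costs $2\lceil 1/\gamma\rceil$ valid transitions.

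Next I would tally Part~2: the transition of \autoref{xi} for a parameter $\delta$ is obtained by chaining \autoref{mtransition} (two splits of the initial points, contributing $2$ transitions), \autoref{eqn:CreateCatalyst} (reaching $v^-$ from two axis points by point raises, contributing $2$ transitions), and an instance of \autoref{eqn:SmallCatalyst} with $\gamma=\frac{\delta m}{2(1-\delta)(a+b)}$ where $a+b=\norm{v^-}$; so \autoref{xi} costs $2+2+2\bigl\lceil\frac{2(1-\delta)\norm{v^-}}{\delta m}\bigr\rceil$ valid transitions. Part~3 is \autoref{lem:FinalMerge}, which adds a fixed $6$ valid transitions and pins down $\delta=\frac{\eps^2}{(n_x-\beta)(n_y-\alpha)}$, where $n_x,n_y$ are the (raised) coordinates of the auxiliary merge point; after raising, both are at most the quantity $\Gamma$ in the statement. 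Summing the three contributions gives a total of $10+2\bigl\lceil\frac{2(1-\delta)\norm{v^-}}{\delta m}\bigr\rceil$ valid transitions, which is exactly the count recorded in the paragraph preceding the corollary, and the final point is $[\beta+\eps,\alpha+\eps]$, which is precisely the conclusion of \autoref{prop:TipgToTdpg}.

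It then remains to bound this quantity using the properties of the TIPGs we actually use: $m\ge\tfrac12$, so $1/m\le2$; $n_x,n_y\le\Gamma$, so $(n_x-\beta)(n_y-\alpha)\le\Gamma^2$ (using $\alpha,\beta>1/2>0$), hence $\delta\ge\eps^2/\Gamma^2$ and $1/\delta\le\Gamma^2/\eps^2$; and $1-\delta\le 1$. For the norm, since $v=[\beta,\alpha]-\tfrac12[0,1]-\tfrac12[1,0]-h$ we get $\norm{v}\le\norm{h}+2$, so $\norm{v^-}=\norm{v}/2\le(\norm{h}+2)/2=O(\norm{h})$ (and in the symmetric construction of \autoref{sec:construction} one in fact has $\norm{v}=\norm{h}$, so $\norm{v^-}=\norm{h}/2$ exactly). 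Plugging these in, $\frac{2(1-\delta)\norm{v^-}}{\delta m}=O\!\left(\frac{\norm{h}\Gamma^2}{\eps^2}\right)$, and the additive constant $10$ and the ceiling are absorbed into the $O(\cdot)$, yielding the claimed $O\!\left(\frac{\norm{h}\Gamma^2}{\eps^2}\right)$ valid transitions.

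Since this is essentially bookkeeping, the only point that requires care — and hence the one ``obstacle'' — is the interplay of the two roles of the symbol $\delta$: the free parameter governing the leftover weight $\delta\xi$ in \autoref{xi}, and the specific value that \autoref{lem:FinalMerge} demands in order to push the final point to $[\beta+\eps,\alpha+\eps]$. These must be taken to be one and the same value, which is what forces the Part~1 catalyst weight $\gamma$, and hence the dominant term $1/\gamma\asymp 1/\delta$, to scale like $\Gamma^2/\eps^2$; one should also check that this choice indeed gives $\delta<1$, which holds once $n_x>\beta+\eps$ and $n_y>\alpha+\eps$ as arranged in \autoref{lem:FinalMerge}, so that the substitution is legitimate.
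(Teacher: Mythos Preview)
Your proposal is correct and follows essentially the same approach as the paper: the paper's own justification (the ``Number of rounds'' paragraph immediately preceding the corollary) records exactly the count $10+2\bigl\lceil\tfrac{2(1-\delta)\norm{v^-}}{\delta m}\bigr\rceil$ from the three parts of the proof of \autoref{prop:TipgToTdpg}, substitutes $\delta=\tfrac{\eps^2}{(n_x-\beta)(n_y-\alpha)}$, and then uses $m\ge\tfrac12$, $\Gamma=\max\{n_x,n_y\}$, and $\norm{v^-}=\norm{v}/2=\norm{h}/2$ to obtain the stated bound. Your write-up is in fact more explicit than the paper's, and your handling of the $\norm{v^-}$ bound via $\norm{v}\le\norm{h}+2$ (rather than assuming the symmetric identity $\norm{v}=\norm{h}$) is a small improvement in generality.
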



\section{Construction of a time independent point game achieving bias $\eps$}
\label{sec:construction}

In this section we construct for every $\eps>0$ a game with final point $[1/2 + \eps, 1/2 + \eps]$. Moreover, the number of qubits used in the protocol will be $O(\log \frac{1}{\eps})$ and the number of rounds $\left(\frac{1}{\eps}\right)^{O(\frac{1}{\eps})}$.

\subsection{Overview of the game}

First, to simplify the analysis, we will place all points, except the initial points $[0,1]$ and $[1,0]$, on a regular grid of step $\omega$, i.e. all points can be written as $[a\omega,  b\omega]$ for some $a,b\in\N$.

We will describe a family of protocols parametrized by $k$, whose final point $[\alpha,\alpha]$ is such that $\alpha=\zeta \omega = \frac{1}{2}+O( \frac{1}{k})$. Hence, to achieve a small bias $\eps$, one needs to use the protocol with $k = O(\frac{1}{\eps})$.

\begin{figure}
\begin{tikzpicture}[scale=.645]
\draw (-.5,-.5) rectangle (7.5,7.5);
\node at (0,7) {a)};
\draw[->] (0.5,0.5) -- (0.5,7); \draw[->] (0.5,0.5) -- (7,0.5);
\node at (0,3) {\tiny $\zeta\omega$};
\node at (0,4) {\tiny $1$};
\node at (0,6) {\tiny $\Gamma\omega$};
\node at (3,0) {\tiny $\zeta\omega$};
\node at (4,0) {\tiny $1$};
\node at (6,0) {\tiny $\Gamma\omega$};
\draw[line width=2pt,color=red!80!black] (.5,3) -- (.5,6);
\draw[line width=2pt,color=green!80!black] (3,.5) -- (6,.5);
\fill[color=black] (.5,4) circle (.1);
\fill[color=black] (4,.5) circle (.1);
\draw[->,opacity=.6] (.6,4) to [bend right=90] (.5,6);
\draw[->,opacity=.6] (.6,4) to [bend right=80] (.5,5.5);
\draw[->,opacity=.6] (.6,4) to [bend right=70] (.5,5);
\draw[->,opacity=.6] (.6,4) to [bend right=60] (.5,4.5);
\draw[->,opacity=.6] (.6,4) to [bend left=70] (.5,3);
\draw[->,opacity=.6] (.6,4) to [bend left=60] (.5,3.5);
\draw[->,opacity=.6] (4,.6) to [bend left=90] (6,.5);
\draw[->,opacity=.6] (4,.6) to [bend left=80] (5.5,.5);
\draw[->,opacity=.6] (4,.6) to [bend left=70] (5,.5);
\draw[->,opacity=.6] (4,.6) to [bend left=60] (4.5,.5);
\draw[->,opacity=.6] (4,.6) to [bend right=70] (3,.5);
\draw[->,opacity=.6] (4,.6) to [bend right=60] (3.5,.5);
\end{tikzpicture}
\begin{tikzpicture}[scale=.645]
\draw (-.5,-.5) rectangle (7.5,7.5);
\node at (0,7) {b)};
\draw[->] (0.5,0.5) -- (0.5,7); \draw[->] (0.5,0.5) -- (7,0.5);
\node at (0,3) {\tiny $\zeta\omega$};
\node at (0,4) {\tiny $1$};
\node at (0,6) {\tiny $\Gamma\omega$};
\node at (3,0) {\tiny $\zeta\omega$};
\node at (4,0) {\tiny $1$};
\node at (6,0) {\tiny $\Gamma\omega$};
\draw[line width=2pt,color=black] (.5,3) -- (.5,6);
\draw[line width=2pt,color=black] (3,.5) -- (6,.5);
\fill[color=red!80!black] (2,3) circle (.1);
\fill[color=green!80!black] (3,2) circle (.1);
\draw[rounded corners=14pt,opacity=.6] (4.5,1) -- (5.6,2.7) -- (6,4) -- (6,5.5) -- (5,5);
\draw[rounded corners=14pt,opacity=.6] (1,4.5) -- (2.7,5.6) -- (4,6) -- (5.5,6) -- (5,5);
\draw[opacity=.6] (5,5) -- (4,4);
\draw[->,opacity=.6] (4,4) to [bend right=20] (3,2.2);
\draw[->,opacity=.6] (4,4) to [bend left=20] (2.2,3);
\end{tikzpicture}
\begin{tikzpicture}[scale=.645]
\draw (-.5,-.5) rectangle (7.5,7.5);
\node at (0,7) {c)};
\draw[->] (0.5,0.5) -- (0.5,7); \draw[->] (0.5,0.5) -- (7,0.5);
\node at (0,3) {\tiny $\zeta\omega$};
\node at (0,4) {\tiny $1$};
\node at (0,6) {\tiny $\Gamma\omega$};
\node at (3,0) {\tiny $\zeta\omega$};
\node at (4,0) {\tiny $1$};
\node at (6,0) {\tiny $\Gamma\omega$};
\fill[color=black] (2,3) circle (.1);
\fill[color=black] (3,2) circle (.1);
\fill[color=red!80!black,opacity=0.7] (3,3) circle (.1);
\fill[color=green!80!black,opacity=0.7] (3,3) circle (.1);
\draw[->,opacity=.6] (2.2,3) -- (2.8,3);
\draw[->,opacity=.6] (3,2.2) -- (3,2.8);
\end{tikzpicture}
\caption{Schematic representation of the game.  The initial points are in black, the final points are colored in red if they are part of the horizontal ladder and in green of the vertical ladder. The arrows represents the idea of the movements of the points. a) Each point is split into many points (represented by a line) on their axes. b) The ladder combines the points on the axes into 2 points. c) The raises create the final point of the game.}
\label{fig:overview}
\end{figure}
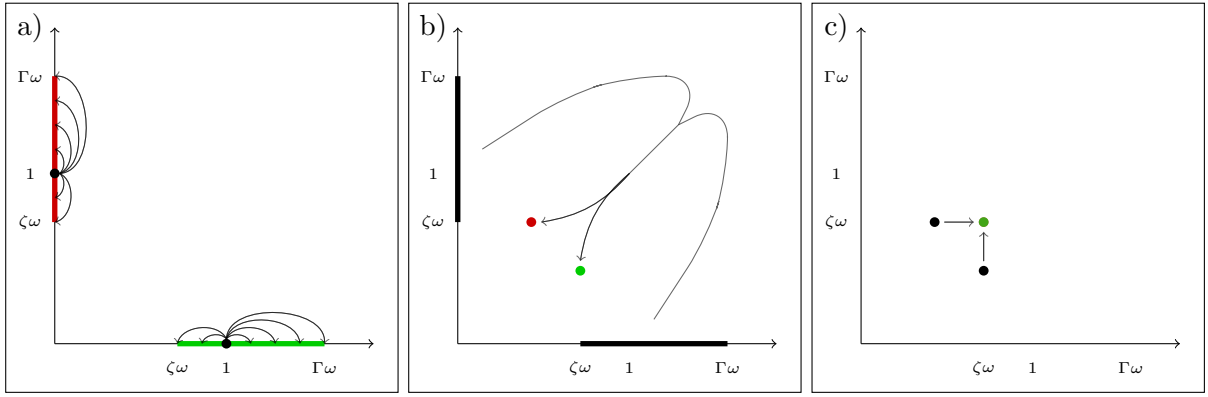

The point game with parameter $k$ consists of the three following steps
\begin{description}
	\item[Split:] The point $[0,1]$ is split into points on the vertical axis between positions $\alpha=\zeta \omega$ and $\Gamma\omega$, and same for the point $[1,0]$. See a) in \autoref{fig:overview}. The weight of the points $[0,j\omega]$ and $[j\omega,0]$ for $\zeta \leq j \leq \Gamma$ is given by a function $\mathrm{split}(j)$, so that the transitions are valid (in fact, they will be strictly valid). 
	
	\item [Ladder of width $k$:] 
	The points on the axes will be transitioned to two final points $[\alpha-k\omega,\alpha]$ and $[\alpha, \alpha-k\omega]$. See b) in \autoref{fig:overview}. This transition is transitively valid, meaning that there exists a sequence of valid transitions starting from the points on the axes and ending at the two final points. These transitions use more points on the grid, in fact, points whose $x$ and $y$ coordinates are between $\alpha-k\omega$ and $\Gamma \omega$. Moreover, on every line, except the two axes, there are at most $2k+1$ points.
		
	\item [Raise:] The two points are raised into a final point $[\alpha,\alpha]$. See c) in \autoref{fig:overview}. This raise is valid. 
\end{description}
More formally:
\begin{align}
	\frac{1}{2}[0,1] + \frac{1}{2}[1,0]
	& \quad \xrightarrow{\mathrm{\,\,split\,\,\,}}  \quad \sum_{j=\zeta}^{\Gamma} \mathrm{split}(j) [0,j\omega] + \sum_{j=\zeta}^{\Gamma} \mathrm{split}(j) [j\omega,0] \label{eqn:splittransition}\\
	& \quad\xrightarrow{\mathrm{ladder}} \quad \frac{1}{2}[\alpha-k\omega,\alpha] + \frac{1}{2} [\alpha,\alpha-k\omega] \label{eqn:LadderTransition} \\
	& \quad\xrightarrow{\mathrm{\,\,\,raise\,\,}} \quad 1 [\alpha,\alpha] \nonumber
\end{align}

The next two sections are devoted to proving that for any $k$ there exist values for the parameters $\omega$ and $\Gamma$, such that the two initial splits are valid,  the ladder is a transitively valid transition, and $\alpha = \frac{1}{2}+O(\frac{1}{k})$.

\subsection{The ladder}

\subsubsection{Description}

We define a {\em ladder} as a time independent point game, described by a valid horizontal function $h_{\mathrm{lad}}$  and a valid vertical function $v_{\mathrm{lad}}$ such that
\begin{align*}
h_{\mathrm{lad}} + v_{\mathrm{lad}} = \frac{1}{2} [ \alpha-k\omega,\alpha] + \frac{1}{2} [ \alpha,\alpha-k\omega] -\sum_{j=\zeta}^\Gamma\mathrm{split}(j)\left([0,j\omega]+[j\omega,0]\right),
\end{align*}
where for each axis, $\mathrm{split}(j)$ is a distribution on points on the axis that arises from a split of the initial point $[0,1]$ or $[1,0]$ (see \autoref{sec:importantvalidtransitions} for the definition of a split); its exact parameters will be defined shortly. 

Our goal is to find functions $h_{\mathrm{lad}}$  and $v_{\mathrm{lad}}$ that may put weight also on other points on the grid, such that both $h_{\mathrm{lad}}$  and $v_{\mathrm{lad}}$ are valid functions. Moreover, we need to do this while finding a function $\mathrm{split}(j)$ such that the initial split is a valid transition. 

We restrict ourselves to symmetric point games, i.e. games where the horizontal function $h$ and the vertical function $v$ satisfy:
\begin{align}
	\label{eqn:symmetric}
	v(x, y)= - h(x, y)\ \mbox{ and also }\ h(x,y) = -h(y, x),
\end{align}
except for the final and initial points.
This implies that there are no points on the diagonal, i.e.  $\forall z,\ h(z, z) = v(z, z)=0$ and that if $h_{\mathrm{lad}}$ is a valid function, then $v_{\mathrm{lad}}$ will be valid too. 

In fact, we do not know of a simple way of transitioning the points on the axes to the two final points. To do so, we must add new points and perform a sequence of transitions that will gradually transition the points on the axes to the final two points.  We now describe these extra points. 

We call a \emph{rung} in the ladder, the function corresponding to the points at a fixed height, i.e. $h_{\mathrm{rung}} (\cdot, y) = \sum_x h(x,y)$ for some $y$.  Our ladder will have rungs from height $\alpha=\zeta\omega$ to $\Gamma\omega$.
A \emph{ladder of width $k$} has rungs that have $2k$ points centered on the diagonal and one point on the $y$-axis. More formally for $\zeta \leq  j \leq \Gamma$, the $x$-coordinate of the points of the rung at height $j\omega$ are
\begin{align}
	\label{eqn:BluePoints}
	\left\{ 0,\,\left(j-k\right)\omega,\, \left(j-k+1\right)\omega, \dots,\,(j-1)\omega,\,(j+1)\omega,\dots,\left(j+k-1\right)\omega,\,\left(j+k\right)\omega\right\}.
\end{align}

Note that we will have $k\omega \ll \zeta\omega$, meaning that all points are at least a constant away from the axes.  
Some principles governing the ladder are shown in \autoref{fig:ladder}. First in a), we give a schematic representation of all the points defined in \autoref{eqn:BluePoints} that are the horizontal part of the ladder. For every point of the vertical axis, we consider $2k$ new points on the same height and centered on the diagonal.The vertical part is constructed using the symmetry \hyperref[eqn:symmetric]{relation \ref*{eqn:symmetric}}  we imposed. Both, the horizontal part and the vertical part of the ladder are represented in~b). By symmetry, we also have that if a point is in the horizontal and in the vertical part of the ladder, the sum of its weights is null. All these points are located on the overlap of the two parts of the ladder.  There are then only a few remaining points: the initial points on the axes, the final points in $[\alpha-k\omega,\alpha] $ and  $[\alpha,\alpha-k\omega] $ and 4 ``triangles''. We get rid of these triangles by considering ladders where the weight on these triangles is 0. Hence, all the points we consider are shown in c). 

\subsubsection{Finding weights for the points in the ladder}

Our goal, now, is to find weights for the points on the axes and the remaining points on the plane such that the function $h_{\mathrm{lad}}$ is valid. By symmetry, $v_{\mathrm{lad}}$ will also be valid. As we have said, $h_{\mathrm{lad}} = \sum_{j=\zeta}^{\Gamma} h_{\mathrm{rung}}^j$, where $h_{\mathrm{rung}}^j$ is the rung function on height $j\omega$.

Note that when finding the functions $h_{\mathrm{rung}}^j$, we also fix the weights on the points of the original split (this gives an explicit definition of the split function in \autoref{eqn:splittransition}). In \autoref{sec:Split} we show that for any $k$, we can choose parameters $\Gamma$ and $\omega$ such that this split is valid, and such that  $\alpha = \frac{1}{2}+O(\frac{1}{k})$.

\begin{figure}
\begin{tikzpicture}[scale=.645]
\draw (-.5,-.5) rectangle (7.5,7.5);
\node at (0,7) {a)};
\draw[->] (0.5,0.5) -- (0.5,7); \draw[->] (0.5,0.5) -- (7,0.5);
\node at (0,3) {\tiny $\zeta\omega$};
\node at (0,4) {\tiny $1$};
\node at (0,6) {\tiny $\Gamma\omega$};
\node at (3,0) {\tiny $\zeta\omega$};
\node at (4,0) {\tiny $1$};
\node at (6,0) {\tiny $\Gamma\omega$};
\filldraw[color=red!80!black,opacity=0.5] (2,3) -- (4,3) -- (7,6) -- (5,6) -- (2,3);
\draw[line width=2pt,color=red!80!black,opacity=0.5] (.5,3) -- (.5,6);
\end{tikzpicture}
\begin{tikzpicture}[scale=.645]
\draw (-.5,-.5) rectangle (7.5,7.5);
\node at (0,7) {b)};
\draw[->] (0.5,0.5) -- (0.5,7); \draw[->] (0.5,0.5) -- (7,0.5);
\node at (0,3) {\tiny $\zeta\omega$};
\node at (0,4) {\tiny $1$};
\node at (0,6) {\tiny $\Gamma\omega$};
\node at (3,0) {\tiny $\zeta\omega$};
\node at (4,0) {\tiny $1$};
\node at (6,0) {\tiny $\Gamma\omega$};
\filldraw[color=red!80!black,opacity=0.5] (2,3) -- (4,3) -- (7,6) -- (5,6) -- (2,3);
\filldraw[color=green!80!black,opacity=0.5] (3,2) -- (6,5) -- (6,7) -- (3,4) -- (3,2);
\draw[line width=2pt,color=red!80!black,opacity=0.5] (.5,3) -- (.5,6);
\draw[line width=2pt,color=green!80!black,opacity=0.5] (3,.5) -- (6,.5);
\end{tikzpicture}
\begin{tikzpicture}[scale=.645]
\draw (-.5,-.5) rectangle (7.5,7.5);
\node at (0,7) {c)};
\draw[->] (0.5,0.5) -- (0.5,7); \draw[->] (0.5,0.5) -- (7,0.5);
\node at (0,3) {\tiny $\zeta\omega$};
\node at (0,4) {\tiny $1$};
\node at (0,6) {\tiny $\Gamma\omega$};
\node at (3,0) {\tiny $\zeta\omega$};
\node at (4,0) {\tiny $1$};
\node at (6,0) {\tiny $\Gamma\omega$};
\filldraw[color=red!80!black,opacity=0.5] (3,3) -- (4,3) -- (6,5)  -- (6,6) -- (5,6) -- (3,4) -- (3,3);
\filldraw[color=green!80!black,opacity=0.5] (3,3) -- (4,3) -- (6,5)  -- (6,6) -- (5,6) -- (3,4) -- (3,3);
\draw[line width=2pt,color=red!80!black,opacity=0.5] (.5,3) -- (.5,6);
\draw[line width=2pt,color=green!80!black,opacity=0.5] (3,.5) -- (6,.5);
\fill[color=red!80!black,opacity=0.5] (2,3) circle (.05);
\fill[color=green!80!black,opacity=0.5] (3,2) circle (.05);
\end{tikzpicture}
\caption{Schematic construction of the ladder. a) The horizontal part of the ladder. b) Superposition of the horizontal part and the vertical part of the ladder. By symmetry, the sum of the weights of the point in the overlap is 0. Except the final points, the weights of the points in the 4 ``triangles'' with no overlap will be set to 0 by truncation. c) All the points actually involved in the ladder transition.}
\label{fig:ladder}
\end{figure}
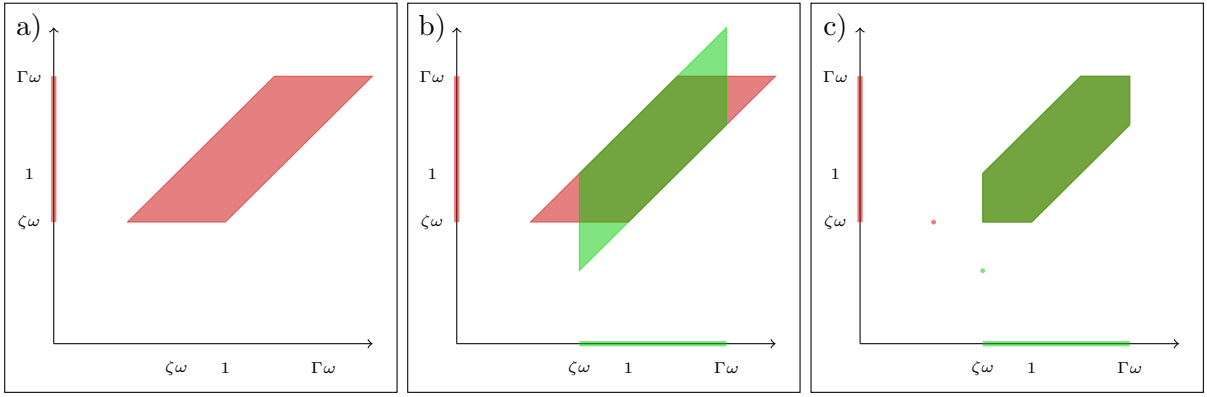

Finding valid functions $h_{\mathrm{rung}}^j$ is not an easy task. Let us assume that $h_{\mathrm{rung}}^j = \sum_i w_i [x_i]$ and that we would like to verify that the function is valid. According to \autoref{thm:simpleconditionsonvalidity}, it would be necessary to prove that for all $\lambda>0$, $\sum_i \frac{-w_i}{\lambda+x_i} = \frac{-\sum_i w_i \prod_{k\neq i}(\lambda+x_k)}{\prod_k (\lambda+x_k)} \geq 0$. In practice, this means checking that the polynomial $f(-\lambda) = -\sum_i w_i \prod_{k\neq i}(\lambda+x_k) \geq 0$ for all $\lambda<0$. In other words, from a given function we can construct a polynomial $f$ such that the validity of the function holds when $f(\lambda) \geq 0$ for all $\lambda<0$. The following Lemma basically does the reverse. Given a low degree polynomial $f$ with $f(\lambda) \geq 0$ for all $\lambda<0$, we construct a valid function $h_{\mathrm{rung}}$ by assigning weights $\frac{-f(x_i)}{\prod_{k\neq i}(x_k-x_i)}$ to $[x_i]$.

\begin{lem}
\label{thm:WeightPolynomial}
Let $x_{1},\ldots,x_{2k+1}\in\mathbb{R}_{+}$ be different points, $f\in\mathbb{R}\left[X\right]$
be a real polynomial such that:
\begin{itemize}
	\item the absolute value of its leading coefficient is 1,
	\item $\deg(f) \leq 2k-1$,
	\item $\forall x<0,\ f(x) \geq 0$,
\end{itemize}
then $\forall C>0$ the following function $h_{\mathrm{rung}}$ is a valid function:
\begin{align}
	h_{\mathrm{rung}} &= \sum_{i=1}^{2k+1} \frac{-C\cdot f(x_{i})}{\underset{j \neq i}{\prod}(x_{j}-x_{i})}\left[x_i\right].
	\label{h as polynom}
\end{align}
\end{lem}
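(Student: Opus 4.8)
The plan is to verify the two conditions characterizing valid functions from \autoref{thm:simpleconditionsonvalidity}: namely that $\sum_i w_i = 0$ and that for all $\lambda > 0$, $\sum_i \frac{-w_i}{\lambda + x_i} \geq 0$, where $w_i = \frac{-C f(x_i)}{\prod_{j\neq i}(x_j - x_i)}$ are the weights of $h_{\mathrm{rung}}$. The central tool will be the Lagrange interpolation formula and its consequences for sums of the form $\sum_i \frac{f(x_i)}{\prod_{j\neq i}(x_j - x_i)}$.

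\textbf{First,} I would recall the classical identity: for any polynomial $g$ of degree at most $2k-1$ on the $2k+1$ distinct nodes $x_1,\ldots,x_{2k+1}$, one has $\sum_{i=1}^{2k+1} \frac{g(x_i)}{\prod_{j\neq i}(x_j-x_i)} = 0$, while for $g$ of degree exactly $2k$ with leading coefficient $1$ the sum equals $1$. This is just the statement that the leading coefficient of the degree-$2k$ interpolating polynomial through the data $(x_i, g(x_i))$ is the divided difference $g[x_1,\ldots,x_{2k+1}]$, which vanishes when $\deg g \le 2k-1$. Applying this with $g = f$ (which has degree $\le 2k-1$) immediately gives $\sum_i \frac{-C f(x_i)}{\prod_{j\neq i}(x_j-x_i)} = 0$, so the first condition $\sum_x h_{\mathrm{rung}}(x) = 0$ holds, regardless of the sign condition on $f$.

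\textbf{Second,} for the inequality $\sum_i \frac{-w_i}{\lambda+x_i} \ge 0$ for each fixed $\lambda>0$, I would compute
\[
\sum_{i=1}^{2k+1} \frac{-w_i}{\lambda + x_i} = \sum_{i=1}^{2k+1} \frac{C f(x_i)}{(\lambda+x_i)\prod_{j\neq i}(x_j-x_i)}.
\]
The key trick: introduce the node $-\lambda$ and consider the $2k+2$ points $\{-\lambda, x_1,\ldots,x_{2k+1}\}$. Using partial fractions / the divided-difference identity on these $2k+2$ nodes applied to the polynomial $f$ (now of degree $\le 2k-1 \le (2k+2)-1$, so the full alternating sum over all $2k+2$ nodes vanishes), one isolates the term corresponding to the node $-\lambda$ and finds
\[
\sum_{i=1}^{2k+1} \frac{f(x_i)}{(\lambda+x_i)\prod_{j\neq i}(x_j-x_i)} = \frac{f(-\lambda)}{\prod_{i=1}^{2k+1}(x_i + \lambda)}.
\]
Since $\lambda > 0$ and all $x_i \ge 0$, the denominator $\prod_i(x_i+\lambda)$ is strictly positive; and by hypothesis $f(-\lambda) \ge 0$. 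Hence the whole expression is $\ge 0$, and multiplying by $C>0$ preserves this. This establishes the second condition and completes the proof that $h_{\mathrm{rung}}$ is valid. (The hypothesis on the leading coefficient having absolute value $1$ is only needed when this lemma is later used to control the total weight $\|h\|$ and the final point $\alpha$, not for validity itself; I would remark on this.)

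\textbf{The main obstacle} is getting the partial-fraction/divided-difference bookkeeping exactly right — in particular correctly handling the extra node $-\lambda$ and the sign of the factor $(x_i - (-\lambda)) = (x_i + \lambda)$ versus $(\lambda + x_i)$, and verifying that the degree bound $\deg f \le 2k-1$ is precisely what makes the $(2k+2)$-node divided difference vanish so that the $-\lambda$ term can be moved to the other side. Once the identity $\sum_i \frac{f(x_i)}{(\lambda+x_i)\prod_{j\neq i}(x_j-x_i)} = \frac{f(-\lambda)}{\prod_i(x_i+\lambda)}$ is in hand, the rest is immediate. An alternative, perhaps cleaner, route avoiding divided differences: write $F(t) = \prod_{i=1}^{2k+1}(t - x_i)$ and note $w_i = \frac{-C f(x_i)}{F'(x_i)}$, then recognize $\sum_i \frac{-w_i}{\lambda+x_i} = \sum_i \frac{C f(x_i)}{(\lambda+x_i)F'(x_i)}$ as (minus) the sum of residues of the rational function $\frac{C f(t)}{(\lambda+t)F(t)}$ at the poles $t = x_i$; since $\deg f \le 2k-1 < \deg(F) + 1$, the residue at infinity vanishes, so this sum equals minus the residue at $t = -\lambda$, which is $-\frac{C f(-\lambda)}{F(-\lambda)} = -\frac{C f(-\lambda)}{\prod_i(-\lambda - x_i)} = \frac{C f(-\lambda)}{\prod_i(\lambda + x_i)} \cdot (-1)^{2k+1}\cdot(-1) = \frac{C f(-\lambda)}{\prod_i(\lambda+x_i)} \ge 0$. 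I would present whichever version is shorter after checking the signs carefully.
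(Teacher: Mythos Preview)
Your proposal is correct and follows essentially the same route as the paper: apply the vanishing divided-difference identity (the paper states and proves it separately as \autoref{thm:MochonLemma30}, by induction, rather than citing it as classical) first on the $2k+1$ nodes to get $\sum_x h_{\mathrm{rung}}(x)=0$, then on the $2k+2$ nodes $\{-\lambda,x_1,\dots,x_{2k+1}\}$ to obtain $\sum_i \frac{f(x_i)}{(\lambda+x_i)\prod_{j\neq i}(x_j-x_i)} = \frac{f(-\lambda)}{\prod_i(x_i+\lambda)} \ge 0$. Your residue-theorem alternative and your observation that the unit-leading-coefficient hypothesis is unused for validity are both correct and are additions not present in the paper's proof.
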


This lemma gives us some freedom in the choice of the polynomials.  
At height $j\omega$ we are looking for a function $h_{\mathrm{rung}}(\cdot,j\omega)$ of the form
\begin{align*}
	h_{\mathrm{rung}}^j =
	\frac{-C_j\cdot f(0,j\omega)}{\prod_{\substack{l=-k \\ l \neq 0}}^k ((j+l)\omega) }[0,j\omega]
	+
	\sum_{\substack{i=-k \\ i \neq 0} }^{k}
		\frac	{-C_j \cdot f((j+i)\omega, j\omega)}
			{-(j+i)\omega\underset{\substack{l\neq i \\ l \neq 0} }{\prod}(\left(l-i\right)\omega)}
		\left[(j+i)\omega,j\omega \right].
\end{align*}
Note that the first term corresponds to the point on the axis and the remaining $2k$ points are centered on the diagonal (see \autoref{eqn:BluePoints} for the $x$-coordinate of the points of the rung at height $j\omega$). 

We start by taking $C_j = C/j\omega$, so that the weights are symmetric and we can add the rung functions more easily. We have

\begin{align*}
	h_{\mathrm{rung}}^j =
	\frac{- C\cdot f(0,j\omega)}{\prod_{l=-k}^k ((j+l)\omega) }[0,j\omega]
	+
	\sum_{\substack{i=-k \\ i \neq 0} }^{k}
		\frac	{C \cdot f((j+i)\omega, j\omega)}
			{((j+i)\omega)(j\omega)\underset{\substack{l\neq i \\ l \neq 0}}{\prod}((l-i)\omega)}
		\left[(j+i)\omega,j\omega \right].
\end{align*}
Adding all the rungs of different heights, we get:
\begin{align}\label{hlad}
h_{\mathrm{lad}} =
\sum_{j=\zeta}^{\Gamma}\left(
	\frac{- C\cdot f(0,j\omega)}{\prod_{l=-k}^k ((j+l)\omega) }  \left[0,j\omega\right]
	+
	\sum_{\substack{i=-k \\ i\neq 0}}^{k}
		\frac	{C\cdot f(\left(j+i\right)\omega,\, j\omega)}
			{((j+i)\omega)(j\omega) \underset{\substack{l \neq i \\ l \neq 0}}{\prod}((l-i)\omega)}
	[(j+i)\omega,\ j\omega]
	\right).
\end{align}

Now, we need to ensure that when we add the the functions $h_{\mathrm{lad}}$ and $v_{\mathrm{lad}}$ we are only left with the points on the axes and the two final points $[ \alpha-k\omega,\alpha]$ and $[ \alpha,\alpha-k\omega]$. Since they are symmetric functions, this means that it is necessary and sufficient to put zero weight on the points that appear only in $h_{\mathrm{lad}}$ or only in $v_{\mathrm{lad}}$. This corresponds to the ladder truncation we described in \autoref{fig:ladder}.
To do so, we choose the symmetric polynomial $f$ so that the weight on the points with $x$-coordinate in $\{\alpha-(k-1)\omega, \alpha-(k-2)\omega, \dots, \alpha-\omega\}$ as well as in $\{(\Gamma+1)\omega,\dots,(\Gamma+k)\omega\}$ is zero (both for the horizontal and vertical function). Since we want a polynomial of degree at most $2k-1$ on each variable, we have only one possibility:
\begin{align}\label{f}
	f(x,y) =
	(-1)^{k+1}\prod_{i=1}^{k-1}\left(\alpha-i\omega-x\right)\left(\alpha - i\omega-y\right) \prod_{i=1}^{k}\left(\Gamma\omega+i\omega-x\right)\left(\Gamma\omega+i\omega-y\right).
\end{align}

When we look at the polynomial $f(x,y)$ as a polynomial on one variable, then it needs to satisfy the hypothesis of \autoref{thm:WeightPolynomial}. Indeed, the absolute value of its leading coefficient is 1, its degree is $2k-1$ and $\forall\,\zeta\leq j\leq\Gamma$ and $\forall x<0$ we have $f(x,j\omega)\geq 0$. This is true, since the only negative terms in the product are the $(k-1)$ negative values $(\alpha-i\omega-j\omega)$ and another $(k+1)$ from the $(-1)^{k+1}$ factor.

Hence, we conclude that $h_{\mathrm{lad}}$ is a valid horizontal function (and similarly $v_{\mathrm{lad}}$ is a valid vertical function). Since $v_{\mathrm{lad}}(x,y)=-h_{\mathrm{lad}}(x,y)$ everywhere except on the points on the axes and $[ \alpha,\alpha-k\omega]$ and $[ \alpha-k\omega, \alpha]$, adding the two function leaves us with the desired outcome. We have proved that
\begin{lem}\label{lem:ladder}
The function $h_{\mathrm{lad}}$ as defined in \autoref{hlad} and \autoref{f} and, by symmetry, the function $v_{\mathrm{lad}}$, are valid functions that satisfy
	\begin{align*}
	h_{\mathrm{lad}} + v_{\mathrm{lad}} = \frac{1}{2} [ \alpha-k\omega,\alpha] + \frac{1}{2} [ \alpha,\alpha-k\omega] -\sum_{j=\zeta}^\Gamma\mathrm{split}(j)\left([0,j\omega]+[j\omega,0]\right)
	\end{align*}
with
\begin{align}\label{split}
	\mathrm{split}(j) = \frac{ C\cdot f(0,j\omega)}{\prod_{l=-k}^k ((j+l)\omega)}.
\end{align}
\end{lem}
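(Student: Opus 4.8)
The plan is to prove the two assertions of the lemma separately: first that $h_{\mathrm{lad}}$ is a valid horizontal function (so that, by symmetry, $v_{\mathrm{lad}}$ is a valid vertical function), and then that the sum $h_{\mathrm{lad}}+v_{\mathrm{lad}}$ collapses to the stated expression. Validity will follow by applying \autoref{thm:WeightPolynomial} one rung at a time; the sum identity will follow from the antisymmetry built into $h_{\mathrm{lad}}$ together with the placement of the roots of $f$ from \autoref{f}.

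For validity, I would fix a height $j\omega$ with $\zeta\le j\le\Gamma$. The restriction of $h_{\mathrm{lad}}$ to the horizontal line $y=j\omega$ is exactly the rung $h_{\mathrm{rung}}^j$, supported on the $2k+1$ points of \autoref{eqn:BluePoints}; these are distinct and non-negative since we work in the regime $k<\zeta$ (fixed in the next subsection), so $j-k\ge1$. Comparing with \autoref{h as polynom}, the weights of $h_{\mathrm{rung}}^j$ are precisely those produced by the univariate polynomial $x\mapsto f(x,j\omega)$ with the positive constant $C_j=C/(j\omega)$, up to rescaling $f(\cdot,j\omega)$ by the absolute value of its leading coefficient in $x$ (a nonzero number for each such $j$, absorbed into $C_j$). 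Then I check the hypotheses of \autoref{thm:WeightPolynomial}: the degree of $f(\cdot,j\omega)$ in $x$ is $(k-1)+k=2k-1$; the rescaled leading coefficient has absolute value $1$; and $f(x,j\omega)\ge0$ for all $x<0$, because for $x<0$ the only negative factors in \autoref{f} are the $k-1$ terms $(\alpha-i\omega-j\omega)$ (negative since $j\ge\zeta$) together with the prefactor $(-1)^{k+1}$, for an even total. Hence each $h_{\mathrm{rung}}^j$ is valid, and since distinct rungs occupy distinct horizontal lines, $h_{\mathrm{lad}}=\sum_{j=\zeta}^{\Gamma}h_{\mathrm{rung}}^j$ is valid on every horizontal line, i.e.\ it is a valid horizontal function. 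By \autoref{eqn:symmetric}, $v_{\mathrm{lad}}$ is $h_{\mathrm{lad}}$ reflected across the diagonal, so its restriction to each vertical line is a restriction of $h_{\mathrm{lad}}$ to a horizontal line and is therefore valid; thus $v_{\mathrm{lad}}$ is a valid vertical function.

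For the identity, the structural facts are: (i) with $C_j=C/(j\omega)$, the weight $h_{\mathrm{lad}}$ assigns to $[a\omega,b\omega]$ is proportional to $f(a\omega,b\omega)$ divided by a quantity that is symmetric in $a,b$ up to the sign $(-1)^{2k-1}=-1$ arising from reindexing the product $\prod_{l}((l-i)\omega)$; since $f$ is symmetric this gives $h_{\mathrm{lad}}(a\omega,b\omega)=-h_{\mathrm{lad}}(b\omega,a\omega)$ for every pair of non-axis rung points that both appear in $h_{\mathrm{lad}}$; (ii) the roots of $f$, namely the factors $\prod_{i=1}^{k-1}(\alpha-i\omega-x)$ and $\prod_{i=1}^{k}(\Gamma\omega+i\omega-x)$, are placed so that the rung points with $x$-coordinate in $\{\alpha-\omega,\dots,\alpha-(k-1)\omega\}$ and in $\{(\Gamma+1)\omega,\dots,(\Gamma+k)\omega\}$ receive zero weight. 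After this truncation, the only points of $h_{\mathrm{lad}}$ whose mirror is absent from $h_{\mathrm{lad}}$ are the axis points $[0,j\omega]$ and the single corner point $[\alpha-k\omega,\alpha]$ (whose mirror would lie in a non-existent rung at height $\alpha-k\omega$). Adding $v_{\mathrm{lad}}$, which is $h_{\mathrm{lad}}$ transposed: every non-axis, non-corner point cancels by (i); $[0,j\omega]$ retains $h_{\mathrm{lad}}(0,j\omega)=-C f(0,j\omega)/\prod_{l=-k}^{k}((j+l)\omega)=-\mathrm{split}(j)$ and, symmetrically, $[j\omega,0]$ retains $-\mathrm{split}(j)$; and $[\alpha-k\omega,\alpha]$, $[\alpha,\alpha-k\omega]$ each retain the common weight $w:=h_{\mathrm{lad}}(\alpha-k\omega,\alpha)$. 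Evaluating the product in \autoref{hlad} at that point yields an explicit positive formula for $w$ in terms of $C,\omega,k,\zeta,\Gamma$; choosing the free constant $C$ so that $w=\frac{1}{2}$ (a single linear equation with positive solution) makes $h_{\mathrm{lad}}+v_{\mathrm{lad}}$ equal the claimed right-hand side with $\mathrm{split}(j)$ as in \autoref{split}. Summing both sides, and using that a valid function sums to $0$, then forces $\sum_{j=\zeta}^{\Gamma}\mathrm{split}(j)=\frac{1}{2}$ for free, confirming consistency.

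I expect the main obstacle to be the bookkeeping in the last step: tracking exactly which rung points are annihilated by the two truncations of $f$ and verifying that, after cancellation against the transposed function, precisely the axis points together with the two corners $[\alpha-k\omega,\alpha]$ and $[\alpha,\alpha-k\omega]$ survive, and with the correct weights. This uses the parameter regime $k<\zeta$ and $\zeta+k\le\Gamma$ (so that no rung touches the axes or overshoots the top), which is arranged in the next subsection; the present lemma is to be read for parameters in that regime. The validity half, by contrast, is a direct invocation of \autoref{thm:WeightPolynomial} once the degree and the positivity of $f(\cdot,j\omega)$ on the negative axis are checked.
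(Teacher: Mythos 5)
Your proof is correct and takes the same overall route as the paper: validity of $h_{\mathrm{lad}}$ via \autoref{thm:WeightPolynomial} applied rung by rung, and the identity via the built-in antisymmetry together with the placement of the roots of $f$ in \autoref{f}. Where you go further is in making the cancellation bookkeeping and the determination of $C$ explicit: you verify the antisymmetry $h_{\mathrm{lad}}(a\omega,b\omega)=-h_{\mathrm{lad}}(b\omega,a\omega)$ by tracking the sign $(-1)^{2k-1}=-1$ that appears on reindexing the denominator product, you trace exactly which points survive the cancellation (axis points plus the two corners), you note the corner weight is positive, and you fix $C$ by demanding that this corner weight equal $1/2$, observing that $\sum_j\mathrm{split}(j)=\tfrac12$ then follows for free from $\sum h_{\mathrm{lad}}=0$. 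The paper, by contrast, simply asserts that ``adding the two functions leaves us with the desired outcome'' and postpones the normalization of $C$ to \autoref{thm:validsplits}, where it is chosen so that $\sum_j\mathrm{split}(j)=\tfrac12$; the two normalizations are equivalent for exactly the reason you note. Strictly read, \autoref{lem:ladder} with the prefactor $\tfrac12$ on the corner points does require a specific value of $C$, which the paper never pins down at this stage, so your version closes a small presentational gap.
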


In \autoref{sec:Split}, we will see that for every $k$, we can find values for the parameters $C, \omega$ and $\Gamma$  and take $\alpha = \frac{1}{2}+ \frac{c}{k}$ for some constant $c$, such that the two initial splits defined by the function $\mathrm{split}(j)$ are valid. Before that, we provide the proof of \autoref{thm:WeightPolynomial}.

\subsubsection{Proof of \autoref{thm:WeightPolynomial}}

We start with the following technical Lemma

\begin{lem}
	\label{thm:MochonLemma30}
	Let $x_{1},\dots,x_{m}$ be $m\ (\geq2)$ distinct values in $\mathbb R$ and $f$ a polynomial such that $\mathrm{deg}(f) \leq m-2$, then
	\begin{align*}
		\sum_{i=1}^{m}\frac{f(x_{i})}{\prod_{j \neq i} (x_{j} - x_{i})} = 0.
	\end{align*}
\end{lem}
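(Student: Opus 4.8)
\textbf{Plan for the proof of \autoref{thm:MochonLemma30}.}

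The plan is to recognize the sum as the leading coefficient in a Lagrange interpolation identity, or equivalently as the ``residue at infinity'' of a proper rational function; I will present the partial-fractions version since it is the shortest. First I would introduce the rational function
\begin{align*}
	R(x) = \frac{f(x)}{\prod_{j=1}^{m}(x - x_j)}.
\end{align*}
Since the $x_j$ are distinct, $R$ has only simple poles, and since $\deg(f) \leq m-2 < m$ it is a proper rational function, so it admits a partial fraction decomposition $R(x) = \sum_{i=1}^{m} \frac{c_i}{x - x_i}$ with $c_i = f(x_i)/\prod_{j\neq i}(x_i - x_j)$, obtained in the usual way by multiplying through by $(x-x_i)$ and evaluating at $x = x_i$.

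Next I would multiply by $x$ and take the limit $x \to \infty$. On the one hand $x R(x) = x f(x)/\prod_j (x-x_j)$ has numerator of degree at most $m-1$ and denominator of degree $m$, so $x R(x) \to 0$. On the other hand $\sum_i \frac{x c_i}{x - x_i} \to \sum_i c_i$. Hence $\sum_{i=1}^m c_i = \sum_{i=1}^m f(x_i)/\prod_{j\neq i}(x_i - x_j) = 0$. Finally I would reconcile the sign with the statement: $\prod_{j\neq i}(x_j - x_i) = (-1)^{m-1}\prod_{j\neq i}(x_i - x_j)$, and this common factor $(-1)^{m-1}$ pulls out of the whole sum, so $\sum_{i=1}^m f(x_i)/\prod_{j\neq i}(x_j - x_i) = 0$ as well, which is exactly the claimed identity.

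There is essentially no hard step here; this is a classical identity and the only thing to be careful about is (i) checking the degree bound $\deg(f)\le m-2$ is what makes $xR(x)\to 0$ rather than tending to a nonzero constant (the bound $\deg(f)\le m-1$ would only give $\sum_i c_i$ equal to the leading coefficient of $f$, not $0$), and (ii) tracking the sign convention in the denominator. As an alternative one-line argument I could instead invoke Lagrange interpolation directly: the unique polynomial of degree $\le m-1$ agreeing with $f$ at $x_1,\dots,x_m$ is $\sum_i f(x_i)\prod_{j\ne i}\frac{x-x_j}{x_i-x_j}$, and comparing the coefficient of $x^{m-1}$ on both sides (which is $0$ on the left because $\deg f\le m-2$) gives the same conclusion.
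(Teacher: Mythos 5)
Your proof is correct, but it takes a genuinely different route from the paper's. You introduce the proper rational function $R(x) = f(x)/\prod_j(x-x_j)$, read off the partial-fraction coefficients $c_i = f(x_i)/\prod_{j\ne i}(x_i-x_j)$, and observe that $\sum_i c_i$ is the coefficient of $1/x$ in the Laurent expansion at infinity, which vanishes because $\deg f \le m-2$ makes $xR(x)\to 0$; the final sign flip $\prod_{j\ne i}(x_j-x_i)=(-1)^{m-1}\prod_{j\ne i}(x_i-x_j)$ is handled correctly. The paper instead argues by a double induction: it first treats $\deg f = 0$ by induction on $m$ using the telescoping identity $\frac{1}{(x_1-x_i)(x_m-x_i)} = \frac{1}{x_m-x_1}\bigl(\frac{1}{x_1-x_i}-\frac{1}{x_m-x_i}\bigr)$, and then reduces a general $f$ of degree $\le k$ to the base case plus a lower-degree polynomial via $f(x)=\kappa\prod_{j=1}^{k}(x_j-x)+g(x)$. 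Your argument is noticeably shorter and conceptually cleaner (it is the classical ``residue at infinity'' or Lagrange leading-coefficient proof), whereas the paper's induction is more elementary in that it never invokes partial fractions or limits, only algebraic manipulations of finite sums. Either is acceptable; yours would have saved space.
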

\begin{proof}
By induction on $\deg(f)$. For $\deg(f)=0$ we need to prove that $\sum_{i=1}^{m}\prod_{j \neq i} \frac{1}{x_{j} - x_{i}} = 0$. This proof is also done by induction, this time on the number of points. The initialization is trivial. For $m>2$ and $1<i<m$ we have the identity:
\begin{align*}
\frac{1}{(x_{1}-x_{i})(x_{m}-x_{i})} &= \frac{1}{x_{m}-x_{1}}\left(\frac{1}{x_{1}-x_{i}} - \frac{1}{x_{m}-x_{i}} \right).
\end{align*}
which gives us:
\begin{align*}
	\sum_{i=1}^{m} \prod_{\substack{j=1 \\ j\neq i}}^{m} \frac{1}{x_{j}-x_{i}}
	&= \prod_{j=2}^m\frac{1}{x_{j}-x_{1}} + \frac{1}{x_{m}-x_{1}} \sum_{i=2}^{m-1} \prod_{\substack{j=2 \\ j\neq i}}^{m-1}\frac{1}{x_{j}-x_{i}}\left(\frac{1}{x_{1}-x_{i}} - \frac{1}{x_{m}-x_{i}}\right) + \prod_{j=1}^{m-1}\frac{1}{x_{j}-x_{m}} \\
	&= \frac{1}{x_{m}-x_{1}} \left[ \sum_{i=2}^{m-1} \left( \prod_{\substack{j=1 \\ j\neq i}}^{m-1}\frac{1}{x_{j}-x_{i}} - \prod_{\substack{j=2 \\ j\neq i}}^{m}\frac{1}{x_{j}-x_{i}} \right) +\prod_{j=2}^{m-1} \frac{1}{x_{j}-x_{1}} - \prod_{j=2}^{m-1} \frac{1}{x_{j}-x_{m}} \right] \\
	&= \frac{1}{x_{m}-x_{1}} \left( \sum_{i=1}^{m-1} \prod_{\substack{j=1 \\ j\neq i}}^{m-1}\frac{1}{x_{j}-x_{i}} -  \sum_{i=2}^{m} \prod_{\substack{j=2 \\ j\neq i}}^{m}\frac{1}{x_{j}-x_{i}}   \right)
\end{align*}
by induction, each of the terms in the parenthesis is 0. That concludes the proof for $\deg(f) = 0$.
If $\deg(f)\leq k$, there exists a constant $\kappa\neq0$ and a polynomial $g$ with $\deg(g) < k$ such that $f(x)=\kappa\prod_{j=1}^{k}(x_{j}-x) + g(x)$. We then have:
\begin{align*}
\sum_{i=1}^{m}\frac{f(x_{i})}{\prod_{\substack{j=1 \\ j\neq i}}^{m} (x_{j} - x_{i})} &=
\underbrace{\kappa \sum_{i=k+1}^{m}\prod_{\substack{j=k+1 \\ j\neq i}}^{m} \frac{1}{x_{j}-x_{i}}}_{=0\ \text{initialization case}} + \underbrace{\sum_{i=1}^{m}\frac{g(x_{i})}{\prod_{\substack{j=1 \\ j\neq i}}^{m} (x_{j} - x_{i})}}_{=0\ \text{by induction}}.
\end{align*}
\end{proof}

\begin{proof}[Proof of \autoref{thm:WeightPolynomial}] Fix $C>0$. We have two statements to show: first that $\sum_{x} h_{\mathrm{rung}}(x) = 0$. This is immediate using the previous lemma. Secondly, let us fix $\lambda > 0$, we need to show that $Q = \sum_{x}\left(\frac{-1}{\lambda + x}\right)h_{\mathrm{rung}}(x) \geq 0$.
\begin{align*}
Q \geq0  \iff & \sum_{i=1}^{2k+1} \frac{1}{\lambda + x_{i}} \cdot \frac{f(x_{i})}{\prod_{\substack{j=1 \\ j\neq i}}^{2k+1} x_{j}-x_{i}} \geq 0 &\text{ since } C > 0.
\end{align*}
Using \autoref{thm:MochonLemma30}, with the points $\{-\lambda,x_{1},\dots,x_{2k+1}\}$ we have:
\begin{align*}
\frac{f(-\lambda)}{\prod_{j=1}^{2k+1}\left(x_{j}-(-\lambda)\right)} + \sum_{i=1}^{2k+1}\frac{f(x_{i})}{\left((-\lambda) - x_{i}\right) \prod_{\substack{j=1 \\ j \neq i}}^{2k+1}(x_{j}-x_{i})} = 0.
\end{align*}
Combining the two previous equations, we get:
\begin{align*}
	Q \geq 0 \iff \frac{f(-\lambda)}{\prod_{j=1}^{2k+1}\left(x_{j}-(-\lambda)\right)}  \geq 0.
\end{align*}
By assumption, $f(-\lambda) \geq 0$ and all the terms $(x_j+\lambda)$ are positive.

\end{proof}


\subsection{Validity of initial splits}
\label{sec:Split}


Here we show that for every $k$, we can find values for the parameters $C, \omega$, and $\Gamma$ as functions of $k$,  and take $\alpha = \frac{1}{2}+ \frac{c}{k}$ for some constant $c$, such that the 
two initial splits defined in \autoref{split} by the function $\mathrm{split}(j)$ are valid.

\begin{lem}[The splits are valid]
	\label{thm:validsplits}
For any $k$, we can find $\omega$ and $\Gamma$, such that by taking $\alpha = \frac{1}{2}+ \frac{c}{k}$ for some constant $c$, the functions
	\begin{align*}
		h_{\mathrm{split}} = \sum_{j=\zeta}^{\Gamma} \mathrm{split}(j) [j\omega,0] - \frac 1 2 [1,0]
		\quad \mathrm{and} \quad
		v_\mathrm{split} = \sum_{j=\zeta}^{\Gamma} \mathrm{split}(j) [0,j\omega] - \frac 1 2 [0,1]
	\end{align*}
	are valid functions, where $\mathrm{split}(j) = \frac{C\cdot f(0,j\omega)}{\prod_{l=-k}^k ((j+l)\omega)}$
	and  $C = \frac{1}{2} \cdot \left(\sum_{j=\zeta}^{\Gamma}\frac{f(0,j\omega)}{\underset{l=-k}{\overset{k}{\prod}}\omega\left(j+l\right)}\right)^{-1}$.
\end{lem}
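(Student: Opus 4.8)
The plan is to verify the two conditions of \autoref{thm:simpleconditionsonvalidity} for the one–variable function underlying $h_{\mathrm{split}}$; by the diagonal symmetry of the construction, $v_{\mathrm{split}}$ restricted to the $y$–axis is the same one–variable function as $h_{\mathrm{split}}$ restricted to the $x$–axis, so $v_{\mathrm{split}}$ is valid as soon as $h_{\mathrm{split}}$ is. First, whenever $\omega$ is small enough that $\zeta>k$ one has $\mathrm{split}(j)>0$ for $\zeta\le j\le\Gamma$: in the product defining $f(0,j\omega)$ (see \autoref{f}), all $2k+1$ factors $(j+l)\omega$ and all $k$ factors $\Gamma\omega+i\omega-j\omega$ are positive while the $k-1$ factors $\alpha-i\omega-j\omega$ are negative, so the sign is $(-1)^{k+1}(-1)^{k-1}=1$. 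Thus $C$ is a well-defined positive constant and $\sum_j\mathrm{split}(j)=\ahalf$ by definition of $C$, which is the first condition. Writing out the second condition and using $\sum_j\mathrm{split}(j)=\ahalf$, what remains is to show that for every $\lambda>0$,
\begin{align*}
	\mathbb{E}_\nu\!\left[\frac{1}{\lambda+t}\right]\ \le\ \frac{1}{\lambda+1},
	\qquad\text{where }\ \nu:=\sum_{j=\zeta}^{\Gamma}2\,\mathrm{split}(j)\,[j\omega]\ \text{ is a probability measure.}
\end{align*}

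The key observation is that this whole family of inequalities collapses to a single one via a convexity argument. Every point $t=j\omega$ in the support of $\nu$ satisfies $t\ge\zeta\omega=\alpha$, so $u:=\alpha/t\in(0,1]$, and on $(0,\infty)$ the function $g_\lambda(u):=\frac{u}{\lambda u+\alpha}$ is increasing and concave (its second derivative is $-2\alpha\lambda(\lambda u+\alpha)^{-3}<0$). Since $\frac{1}{\lambda+t}=g_\lambda(\alpha/t)$, Jensen's inequality gives
\begin{align*}
	\mathbb{E}_\nu\!\left[\tfrac{1}{\lambda+t}\right]=\mathbb{E}_\nu\bigl[g_\lambda(\alpha/t)\bigr]\ \le\ g_\lambda\!\bigl(\mathbb{E}_\nu[\alpha/t]\bigr)=\frac{m}{\lambda m+\alpha},\qquad m:=\alpha\,\mathbb{E}_\nu[1/t],
\end{align*}
and since $m\mapsto\frac{m}{\lambda m+\alpha}$ is increasing with value $\frac{1}{\lambda+1}$ at $m=\alpha$, the right–hand side is $\le\frac{1}{\lambda+1}$ as soon as $m\le\alpha$, i.e. as soon as $\mathbb{E}_\nu[1/t]\le 1$, i.e. $\sum_{j=\zeta}^{\Gamma}\frac{\mathrm{split}(j)}{j\omega}\le\ahalf$. (Jensen is strict for $\lambda>0$ because $g_\lambda$ is strictly concave and $\nu$ is not a point mass, so this in fact yields the strict validity announced in the overview; and letting $\lambda\to\infty$ above recovers $\sum_j(j\omega)\,\mathrm{split}(j)\ge\ahalf$, i.e. the ``$f(x)=x$'' instance of validity, for free.) So it only remains to choose the parameters so that $\sum_{j=\zeta}^{\Gamma}\frac{\mathrm{split}(j)}{j\omega}\le\ahalf$.

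For this last step I would make the weights explicit — cancelling powers of $\omega$ in \autoref{split} and using $\alpha=\zeta\omega$ gives $\mathrm{split}(j)=C'\,\frac{\prod_{i=1}^{k-1}(j-\zeta+i)\,\prod_{i=1}^{k}(\Gamma+i-j)}{\prod_{l=-k}^{k}(j+l)}$ with $C'>0$ — and pass to the continuum. As $\omega\to0$, $\Gamma\to\infty$ with $k$ fixed, $\nu$ converges weakly to the probability law $\bar\mu$ on $[\alpha,\infty)$ with density $\propto(t-\alpha)^{k-1}t^{-(2k+1)}$; the substitution $u=\alpha/t$ identifies $\bar\mu$ with the law of $\alpha/U$ for $U\sim\mathrm{Beta}(k+1,k)$, so $\sum_j\frac{\mathrm{split}(j)}{j\omega}=\ahalf\,\mathbb{E}_\nu[1/t]\to\ahalf\,\mathbb{E}_{\bar\mu}[1/t]=\tfrac{1}{2\alpha}\,\mathbb{E}[U]=\frac{k+1}{2\alpha(2k+1)}$. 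This limit is $\le\ahalf$ exactly when $\alpha\ge\frac{k+1}{2k+1}=\ahalf+\frac{1}{4k+2}$ — this is the constraint that forces $\alpha=\ahalf+\frac{c}{k}$ — and the slack is $\Theta(1/k)$ once $\alpha$ exceeds $\frac{k+1}{2k+1}$ by a constant multiple of $\frac1k$ (e.g. $c>\tfrac14$). Because $t\mapsto 1/t$ is a fixed bounded Lipschitz function on $[\alpha,\infty)$ — crucially with no dependence on $\lambda$, the Jensen step having removed it — the difference between the finite sum $\sum_j\frac{\mathrm{split}(j)}{j\omega}$ and its continuum value is $\bigO{\omega}+\bigO{k/(\Gamma\omega)}+\bigO{(\Gamma\omega)^{-(k+1)}}$, coming from the Riemann–sum discretisation, the $\bigl(1-\tfrac{t}{\Gamma\omega}\bigr)^k$ correction in the finite–$\Gamma$ weights, and the truncation at $\Gamma\omega$. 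Choosing $\omega$ small and $\Gamma$ large — both polynomial in $k$, $\omega$ chosen so that in addition $\zeta=\alpha/\omega\in\N$ — makes this discrepancy smaller than the slack, so $\sum_j\frac{\mathrm{split}(j)}{j\omega}\le\ahalf$ holds, completing the proof with $\alpha=\ahalf+\frac ck$. I expect the main remaining work to be exactly this continuum estimate: quantifying the convergence $\nu\to\bar\mu$, in particular near the left endpoint $t=\alpha$ where the approximation of $f(0,j\omega)$ by $\alpha^{k-1}(t-\alpha)^{k-1}(\Gamma\omega)^k(\Gamma\omega-t)^k$ is least uniform, and controlling the finite–$\Gamma$ truncation; a fully finite alternative is to note that $f(0,j\omega)$ vanishes for $j\in\{\zeta-k+1,\dots,\zeta-1\}\cup\{\Gamma+1,\dots,\Gamma+k\}$, so the sum extends to a contiguous block of indices and can be evaluated by residues.
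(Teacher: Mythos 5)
Your proof is correct and follows essentially the same route as the paper: the paper invokes the multi-point split validity criterion from \autoref{sec:importantvalidtransitions} (whose proof uses the same convexity/Jensen argument with $g(t)=-t/(1+\lambda t)$ that you rederive), and the resulting $\lambda$-free condition $\sum_j \mathrm{split}(j)/(j\omega)\le\tfrac12$, the Beta-function computation of the continuum limit, and the resulting threshold $\alpha>\frac{k+1}{2k+1}$ all coincide with \autoref{eqn:Mochon176} and \autoref{easy}. The sum-to-integral discrepancy you flag as the ``main remaining work'' is exactly the content of \autoref{lem:FinalSufficientCondition} (and, quantitatively, \autoref{LastLemma}), so your sketch is complete to roughly the same level of detail as the paper's \autoref{easy}.
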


\begin{proof}
We consider the vertical split (similarly for the horizontal one). By the analysis of point splits in \autoref{sec:importantvalidtransitions}, it suffices to verify two conditions: first, that  $\sum_{j=\zeta}^\Gamma\mathrm{split}(j)=\frac{1}{2}$, which holds for our choice of $C$. Second, we need to show that
\begin{align*}
\frac{1}{2} > \sum_{j=\zeta}^{\Gamma}\frac{C \cdot f(0,j\omega)}{j\omega\underset{l=-k}{\overset{k}{\prod}}\omega\left(j+l\right)}.
\end{align*}
By replacing the value of $C$ we get:
\begin{align*}
\sum_{j=\zeta}^{\Gamma}\frac{f(0,j\omega)}{\underset{l=-k}{\overset{k}{\prod}}\omega\left(j+l\right)} >
\sum_{j=\zeta}^{\Gamma}\frac{f(0,j\omega)}{j\omega\underset{l=-k}{\overset{k}{\prod}}\omega\left(j+l\right)}.
\end{align*}
We know that
\begin{align*}
f(0,j\omega)
	&= (-1)^{k+1}\prod_{i=1}^{k-1}(\alpha - i\omega) (\alpha - i\omega - j\omega)   \prod_{i=1}^{k}  (\Gamma\omega + i\omega) (\Gamma\omega + i\omega - j\omega)  \\
	&= \left[ \prod_{i=1}^{k-1} (j\omega - (\alpha- i\omega)) \prod_{i=1}^{k} (\Gamma\omega + i\omega - j\omega) \right] \left[\prod_{i=1}^{k-1} (\alpha - i\omega) \prod_{i=1}^{k} (\Gamma\omega + i\omega) \right].
\end{align*}
We can divide each side of the inequality with the square of the second bracket, which is independent of $j$ and non-zero, and have
\begin{align}
	\label{eqn:Mochon176}
	\sum_{j=\zeta}^\Gamma p(j\omega) > \sum_{j=\zeta}^\Gamma \frac{p(j\omega)}{j\omega}
	\,\,\,\,\, \text{with} \,\,\,\,\,
	 p(j\omega) = \prod_{i=1}^{k-1} \frac{j\omega-(\alpha-i\omega)}{\alpha-i\omega}
	                    \prod_{i=1}^k \frac{\Gamma\omega+i\omega-j\omega}{\Gamma\omega+i\omega}
	                    \prod_{i=-k}^k \frac{1}{j\omega + i\omega}.
\end{align}

To conclude the proof we will need to show that \hyperref[eqn:Mochon176]{Inequality~\ref*{eqn:Mochon176}} holds for $\alpha = \frac{1}{2}+ \frac{c}{k}$ for some constant $c$ and for some chosen $\omega$ and $\Gamma$. This is done in \autoref{conclude}; an alternative proof is given in \autoref{sec:resources}.
\end{proof}

\subsubsection{Concluding the proof of existence}\label{conclude}

To conclude the proof of \autoref{thm:validsplits}, and with it the proof of existence of a WCF protocol with an arbitrarily small bias, we will need to show that \hyperref[eqn:Mochon176]{Inequality~\ref*{eqn:Mochon176}} holds for $\alpha = \frac{1}{2}+ \frac{c}{k}$ for some constant $c$ and for some chosen $\omega$ and $\Gamma$. 
In this subsection we do exactly this. To this end, we provide the following technical lemma. 
This lemma shows that \hyperref[eqn:Mochon176]{Inequality~\ref*{eqn:Mochon176}} can be simplified by approximating $p$ by the function $f(x) = \left(\frac{x-\alpha}{\alpha}\right)^{k-1}x^{-2k-1}$ and the sum by an integral. The approximation of $p$ by $f$ is quantified by the function $E$ and the approximation of the sum by the integrals by the functions $\epsilon_l$ and $\epsilon_r$.

\begin{lem}
	\label{lem:FinalSufficientCondition}
	Fix any values for the parameters $\omega,\Gamma,\alpha$, such that $\omega < 1$, $\Gamma \omega^2 > 1$, and $\alpha > 1/2$.
	Define the functions 
		$f(x) = \left(\frac{x-\alpha}{\alpha}\right)^{k-1}x^{-2k-1}$,
		$\tilde{f}(x) = f(x)/x$,
		$\epsilon_l(\omega,\Gamma) = \Gamma\omega^4 \abs{\fpp}_\infty$,
		$\epsilon_r(\omega,\Gamma) = \Gamma\omega^4 \abs{\tilde{\fpp}}_\infty$, and
		$E(\omega) = \frac{\left(1+2k\omega\right)^{2k+1}}{\left(1-4k\omega \right)^{4k+1}}$.
If the following inequality holds
\begin{align}
	\label{eqn:FinalSufficientCondition}
	\int_\alpha^{\Gamma\omega^2} f(x) \d x -\epsilon_l(\omega,\Gamma) > E(\omega) \left[\int_\alpha^\infty \tilde{f}(x) \d x + \epsilon_r(\omega,\Gamma) \right],
\end{align}
then \hyperref[eqn:Mochon176]{Inequality~\ref*{eqn:Mochon176}} also holds for the same $\omega,\Gamma$, and $\alpha$.
\end{lem}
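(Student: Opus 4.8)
The plan is to convert the discrete inequality~\ref{eqn:Mochon176} into the analytic inequality~\ref{eqn:FinalSufficientCondition} by two successive approximations: first replacing the polynomial $p$ by the rational function $f$ pointwise, and then replacing the Riemann sums by integrals. Note that both sides of~\ref{eqn:Mochon176} are sums over $j$ with step $\omega$ in the variable $x=j\omega$ (recall $\alpha=\zeta\omega$), so after multiplying through by $\omega$ both become Riemann sums of step $\omega$; the common factor $\omega$ cancels, and it suffices to bound $\omega\sum_{j=\zeta}^{\Gamma}p(j\omega)$ from below and $\omega\sum_{j=\zeta}^{\Gamma}\frac{p(j\omega)}{j\omega}$ from above.

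For the pointwise step I would split $p$ into the three factor-groups appearing in~\ref{eqn:Mochon176} and compare each to the corresponding factor of $f(x)=\left(\frac{x-\alpha}{\alpha}\right)^{k-1}x^{-2k-1}$. Elementarily, $\frac{x-\alpha+i\omega}{\alpha-i\omega}\ge\frac{x-\alpha}{\alpha}$ (the difference of cross-products is $i\omega x\ge 0$), $\prod_{i=-k}^{k}(x+i\omega)=x\prod_{i=1}^{k}(x^2-i^2\omega^2)\le x^{2k+1}$, and $\prod_{i=1}^{k}\frac{\Gamma\omega+i\omega-x}{\Gamma\omega+i\omega}\in\left[(1-\omega)^k,\,1\right]$ provided $x\le\Gamma\omega^2$ --- which is precisely why the integral of $f$ is truncated at $\Gamma\omega^2$, where this last product is still well-controlled, and not at $\Gamma\omega$, where it degenerates. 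Using $\alpha>1/2$ and $\omega<1$, and apportioning the denominator budget $(1-4k\omega)^{4k+1}$ of $E(\omega)$ between the factors $(1-i\omega/\alpha)^{-1}$ and $(1-i^2\omega^2/x^2)^{-1}$ while absorbing $\prod_i\bigl(1+\tfrac{i\omega}{x-\alpha}\bigr)$ into the numerator $(1+2k\omega)^{2k+1}$ for $x$ bounded away from $\alpha$, one obtains the two pointwise bounds $p(x)\ge f(x)$ on $[\alpha,\Gamma\omega^2]$ and $\frac{p(x)}{x}\le E(\omega)\,\tilde f(x)$ on $[\alpha+\sqrt{\omega},\Gamma\omega]$ (the first forces $k$ to exceed an absolute constant and $\omega$ to be at most inverse-polynomial in $k$, which is harmless since the construction uses such parameters).

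For the sum-to-integral step, a standard midpoint/Taylor estimate shows that a Riemann sum of step $\omega$ of a $C^2$ function over an interval of length $O(\Gamma\omega^2)$ differs from its integral by $O(\Gamma\omega^4)$ times the supremum of the second derivative --- these are exactly $\epsilon_l$ for $f$ and $\epsilon_r$ for $\tilde f$ (one checks that $\fpp$ and $\tilde{\fpp}$ are bounded on $[\alpha,\infty)$). Hence $\omega\sum_{j=\zeta}^{\Gamma}p(j\omega)\ge\omega\sum_{\alpha\le j\omega\le\Gamma\omega^2}p(j\omega)\ge\omega\sum_{\alpha\le j\omega\le\Gamma\omega^2}f(j\omega)\ge\int_\alpha^{\Gamma\omega^2}f(x)\,\d x-\epsilon_l$, the first inequality just dropping the nonnegative terms with $j\omega>\Gamma\omega^2$. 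Dually, $\omega\sum_{j=\zeta}^{\Gamma}\frac{p(j\omega)}{j\omega}\le E(\omega)\left[\int_\alpha^\infty\tilde f(x)\,\d x+\epsilon_r\right]$: the bulk $[\alpha+\sqrt{\omega},\Gamma\omega^2]$ is handled by the pointwise bound and the Riemann estimate, the tail beyond $\Gamma\omega^2$ by the eventual monotonicity of $\tilde f$ together with extending the integral to $+\infty$, and the $O(k/\sqrt{\omega})$ terms with $x-\alpha\le\sqrt{\omega}$ by the crude bound $p(x)\le O\!\left(\bigl(4(x-\alpha+k\omega)\bigr)^{k-1}\right)$, which makes their total $O(\omega^{(k-1)/2})$ and hence negligible. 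Chaining the two displayed estimates, the conclusion $\sum_{j}p(j\omega)>\sum_{j}\frac{p(j\omega)}{j\omega}$, i.e.\ inequality~\ref{eqn:Mochon176}, follows from exactly the hypothesis~\ref{eqn:FinalSufficientCondition}.

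The step I expect to be the main obstacle is the behaviour near $x=\alpha$: there $f$ vanishes while $p$ does not, and in fact the term-by-term inequality $p(j\omega)>\frac{p(j\omega)}{j\omega}$ is reversed for every $x<1$, so no clean pointwise comparison can hold in that region and one must instead show that the offending terms contribute a vanishing total. Squeezing those leftover terms, together with the $\tilde f$-tail past $\Gamma\omega^2$, inside the single factor $E(\omega)$ and the two error functions $\epsilon_l,\epsilon_r$ --- rather than into some cruder expression --- is the delicate bookkeeping; the remaining calculations are routine.
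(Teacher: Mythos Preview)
Your two-step architecture (pointwise comparison of $p$ with $f$, then Riemann-sum estimate) is exactly the paper's, and your lower bound $p(x)\ge f(x)$ on $[\alpha,\Gamma\omega^2]$ is correct and matches the paper's Step~1. The gap is in the upper bound for the right-hand side.

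You claim the pointwise inequality $p(x)/x\le E(\omega)\,\tilde f(x)$ on the bulk region $[\alpha+\sqrt{\omega},\Gamma\omega]$, absorbing the factor $\prod_{i=1}^{k-1}\bigl(1+\tfrac{i\omega}{x-\alpha}\bigr)$ into the numerator $(1+2k\omega)^{2k+1}$ of $E(\omega)$. This fails at the left edge of the bulk: for $x-\alpha=\sqrt{\omega}$ that product is $\prod_{i=1}^{k-1}(1+i\sqrt{\omega})\approx e^{\Theta(k^2\sqrt{\omega})}$, whereas $E(\omega)\approx e^{\Theta(k^2\omega)}$. Since $\sqrt{\omega}\gg\omega$ for small $\omega$, the ratio $p(x)/f(x)$ exceeds $E(\omega)$ there, so no pointwise bound of the claimed form holds on your bulk region, and moving the cutoff to any fixed constant makes the ``boundary'' contribution non-negligible instead. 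Your crude estimate of the sub-$\sqrt{\omega}$ boundary is fine; the problem is that the region just beyond it is neither covered by your pointwise bound nor by your boundary argument.

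The paper avoids this entirely with a trick you are missing: after bounding each numerator factor by $\tfrac{x-\alpha+k\omega}{\alpha-k\omega}$ and each denominator factor by $\tfrac{1}{x-k\omega}$, it \emph{shifts the summation index} by $k$ on the right-hand side, so that $(j\omega-\alpha+k\omega)$ becomes exactly $(j'\omega-\alpha)$ with $j'=j+k$. After the shift, the summand is precisely of the form $\bigl(\tfrac{x-\alpha}{\alpha-k\omega}\bigr)^{k-1}(x-2k\omega)^{-(2k+1)}(x-k\omega)^{-1}$, and now the remaining correction factors $(1-2k\omega)^{-1}$ and $(1-4k\omega)^{-1}$ come only from the denominators, are uniform in $x$, and assemble exactly into $E(\omega)$. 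The shifted sum is then compared to $\int_{\alpha+k\omega}^{\Gamma\omega^2+k\omega}\tilde f\le\int_\alpha^\infty\tilde f$ via the rectangle method, giving $\epsilon_r$. No boundary layer near $\alpha$ is ever singled out.
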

The proof of this lemma is delayed to the next subsection. This lemma implies that in order to prove Lemma \ref{thm:validsplits}, we just need to find $\Gamma,\omega$ satisfying $\omega < 1,\Gamma\omega^2 > 1$ and $\alpha = \frac{1}{2} + \frac{c}{k}$ for some constant $c$. 

We introduce the Beta function $B(a,b)$. In \autoref{sec:resources} we will also need a related function, the  Regularized Incomplete Beta function $I(z;a,b)$. These two functions are defined by:
\begin{align*}
	B(a,b) &= \int_0^1 t^{a-1} (1-t)^{b-1}\d t = \alpha^a \int_\alpha^\infty (t-\alpha)^{b-1}t^{-a-b} \d t,\\
	I(z;a,b) &= \frac{1}{B(a,b)} \int_0^z t^{a-1} (1-t)^{b-1}\d t  = \frac{\alpha^a}{B(a,b)}  \int_{\alpha/z}^\infty (t-\alpha)^{b-1}t^{-a-b} \d t.
\end{align*}
Both functions have very nice form when $a$ and $b$ are integers 
\begin{align}
	\label{eqn:BetaWithInteger}
	I(z;a,b) = \sum_{j=a}^{a+b-1} {a+b-1 \choose j} z^j (1-z)^{a+b-1-j} \text{\quad and \quad} B(a,b) = \frac{(a-1)!(b-1)!}{(a+b-1)!}.
\end{align}
We now prove our final lemma which concludes the proof of \autoref{thm:validsplits}.

\begin{lem}\label{easy}
For any $k$, we can find $\Gamma,\omega$ such that \hyperref[eqn:Mochon176]{Inequality~\ref*{eqn:FinalSufficientCondition}} is satisfied for any $\alpha > \frac{k+1}{2k+1}$
\end{lem}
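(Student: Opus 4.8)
The plan is to substitute the definitions of $f$ and $\tilde f$ in terms of the Beta and Regularized Incomplete Beta functions, so that Inequality~\eqref{eqn:FinalSufficientCondition} becomes an inequality between explicit expressions in $\alpha$, $k$, $\omega$, and $\Gamma$. Recall $f(x)=\left(\frac{x-\alpha}{\alpha}\right)^{k-1}x^{-2k-1}$, which matches (up to the $\alpha^{-(k-1)}$ prefactor) the integrand $t\mapsto (t-\alpha)^{b-1}t^{-a-b}$ with $b=k$ and $a=k+1$; and $\tilde f(x)=f(x)/x=\left(\frac{x-\alpha}{\alpha}\right)^{k-1}x^{-2k-2}$ matches it with $b=k$ and $a=k+2$. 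So $\int_\alpha^\infty \tilde f(x)\,\d x = \alpha^{-(k-1)}\cdot\alpha^{-(k+2)}B(k+2,k)=\alpha^{-(2k+1)}B(k+2,k)$, and the truncated integral $\int_\alpha^{\Gamma\omega^2} f(x)\,\d x$ equals $\alpha^{-(2k+1)}B(k+1,k)\cdot I(\alpha/(\Gamma\omega^2);k+1,k)$. Using \eqref{eqn:BetaWithInteger} one can write $B(k+1,k)=\frac{k!(k-1)!}{(2k)!}$ and $B(k+2,k)=\frac{(k+1)!(k-1)!}{(2k+1)!}$, so the ratio $B(k+2,k)/B(k+1,k)=\frac{k+1}{2k+1}$.

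Next I would analyze the ``clean'' inequality obtained by discarding the error terms $\epsilon_l,\epsilon_r$ and setting $E(\omega)=1$, namely
\begin{align*}
	I\!\left(\tfrac{\alpha}{\Gamma\omega^2};k+1,k\right) > \frac{k+1}{2k+1}.
\end{align*}
Since $I(z;k+1,k)\to 1$ as $z\to 1^-$ (equivalently $\Gamma\omega^2\to\alpha^+$) and $I(0;k+1,k)=0$, and since $\frac{k+1}{2k+1}<1$, there is a threshold value $z^\star=z^\star(k)\in(0,1)$ with $I(z^\star;k+1,k)=\frac{k+1}{2k+1}$; for any $z$ with $z^\star<z<1$ the strict inequality holds. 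Here the hypothesis $\alpha>\frac{k+1}{2k+1}$ enters: one checks (using $I(\alpha;k+1,k)$ and the symmetry/monotonicity of the incomplete Beta function, or by a direct estimate) that $\alpha>\frac{k+1}{2k+1}$ guarantees one can choose $\Gamma\omega^2$ close enough to $\alpha$ so that $\frac{\alpha}{\Gamma\omega^2}>z^\star$. Concretely, I would pick $\Gamma\omega^2$ of the form $\alpha(1+\eta)$ for a small constant $\eta>0$ depending only on $k$; then $\frac{\alpha}{\Gamma\omega^2}=\frac{1}{1+\eta}$, and shrinking $\eta$ makes this as close to $1$ as needed.

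Finally I would restore the error terms. The quantities $\epsilon_l(\omega,\Gamma)=\Gamma\omega^4|\fpp|_\infty$ and $\epsilon_r(\omega,\Gamma)=\Gamma\omega^4|\tilde\fpp|_\infty$ are $O(\Gamma\omega^4)$ with constants depending on $k$ and on $\alpha$ (bounded away from $1/2$), and $E(\omega)=\frac{(1+2k\omega)^{2k+1}}{(1-4k\omega)^{4k+1}}\to 1$ as $\omega\to 0$. So having first fixed the product $\Gamma\omega^2=\alpha(1+\eta)$ to make the clean inequality hold with some fixed slack $s>0$, I then send $\omega\to 0$ (and correspondingly $\Gamma=\alpha(1+\eta)/\omega^2\to\infty$): then $\Gamma\omega^4=\alpha(1+\eta)\omega^2\to 0$, so $\epsilon_l,\epsilon_r\to 0$, and $E(\omega)\to 1$, hence the perturbed inequality \eqref{eqn:FinalSufficientCondition} still holds for all sufficiently small $\omega$. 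Taking $\alpha=\frac12+\frac{c}{k}$ for a suitable constant $c$ (any $c$ making $\alpha>\frac{k+1}{2k+1}=\frac12+\frac{1}{2(2k+1)}$, e.g. $c=1$) completes the proof. The main obstacle I anticipate is the bookkeeping in the clean step: verifying cleanly that $\alpha>\frac{k+1}{2k+1}$ is exactly the right condition to push $I(\alpha/(\Gamma\omega^2);k+1,k)$ above $\frac{k+1}{2k+1}$ — this requires care with the monotonicity of the incomplete Beta function in both its argument $z$ and (implicitly) in how close $\Gamma\omega^2$ can be taken to $\alpha$ while keeping $\omega<1$ and $\Gamma\omega^2>1$ — but since $\alpha>1/2$ these side constraints are easily met by taking $\omega$ small.
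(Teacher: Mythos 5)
Your plan of rewriting the integrals via the Beta and Regularized Incomplete Beta functions is reasonable in spirit — the paper uses exactly this machinery in its quantitative sequel (\autoref{LastLemma}) — but there is a sign error at the crux of the algebra that drives the rest of the argument in the wrong direction. From the identities $B(a,b)=\alpha^a\int_\alpha^\infty(t-\alpha)^{b-1}t^{-a-b}\,\d t$ and $I(z;a,b)=\frac{\alpha^a}{B(a,b)}\int_{\alpha/z}^\infty(t-\alpha)^{b-1}t^{-a-b}\,\d t$, the truncated integral is
\[
\int_\alpha^{\Gamma\omega^2}\! f(x)\,\d x \;=\; \int_\alpha^\infty\! f \;-\; \int_{\Gamma\omega^2}^\infty\! f \;=\; \frac{B(k+1,k)}{\alpha^{2k}}\Bigl(1 - I\bigl(\tfrac{\alpha}{\Gamma\omega^2};k+1,k\bigr)\Bigr),
\]
not $B(k+1,k)\,\alpha^{-(2k+1)}\,I(\alpha/\Gamma\omega^2;k+1,k)$ as you wrote. (Sanity check: at $\Gamma\omega^2=\alpha$ one has $1-I(1;\cdot)=0=\int_\alpha^\alpha$; as $\Gamma\omega^2\to\infty$ one has $1-I(0;\cdot)=1$, recovering the full integral.) With the correct formula the ``clean'' inequality is $1 - I(\alpha/\Gamma\omega^2;k+1,k) > \frac{k+1}{\alpha(2k+1)}$, and to satisfy it you must drive $\Gamma\omega^2\to\infty$ so that $1-I\to 1$ — which suffices exactly because $\alpha>\frac{k+1}{2k+1}$ makes the right-hand side strictly less than $1$ — and \emph{not} drive $\Gamma\omega^2\to\alpha$. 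Your choice $\Gamma\omega^2 = \alpha(1+\eta)$ with $\eta$ small goes in the opposite direction: it sends $\int_\alpha^{\Gamma\omega^2}f$ toward $0$, which can never exceed the strictly positive right-hand side, and for small $\eta$ it also violates the hypothesis $\Gamma\omega^2>1$ of \autoref{lem:FinalSufficientCondition}, since $\alpha<1$.

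The paper's own proof of this lemma avoids $I$ altogether: it sets $\Gamma=\omega^{-3}$, so $\Gamma\omega^2=\omega^{-1}\to\infty$ and $\Gamma\omega^4=\omega\to 0$, whence the error terms $\epsilon_l,\epsilon_r$ vanish and $E(\omega)\to 1$ as $\omega\to 0$; then it compares the full integrals $\int_\alpha^\infty f = B(k+1,k)/\alpha^{2k}$ and $\int_\alpha^\infty\tilde f = B(k+2,k)/\alpha^{2k+1}$ to get the threshold $\alpha > B(k+2,k)/B(k+1,k) = \frac{k+1}{2k+1}$. Your argument would be rescued by adopting this coupling of $\Gamma$ to $\omega$ (any choice with $\Gamma\omega^2\to\infty$ and $\Gamma\omega^4\to 0$) and replacing $I$ with $1-I$ throughout.
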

\begin{proof}
Let us take $\Gamma=\omega^{-3}$, so that when we look at the limit $\omega \to 0$, we have $\Gamma \to \infty$,  $\Gamma \omega^2 \to \infty$ and $\Gamma\omega^4 \to 0$.  We have 
\begin{align*}
\lim_{\omega \rightarrow 0} \int_\alpha^{\Gamma\omega^2} f(x) \d x -\epsilon_l(\omega,\Gamma) = \int_\alpha^{\infty} f(x) \d x
\end{align*}
and 
\begin{align*}
\lim_{\omega \rightarrow 0}E(\omega) \left[\int_\alpha^\infty \tilde{f}(x) \d x + \epsilon_r(\omega,\Gamma) \right] = \int_\alpha^\infty \tilde{f}(x) \d x.
\end{align*}
Using the Beta function, we have $\int_\alpha^\infty f(x) \d x = \frac{B(k+1,k)}{\alpha^{2k+2}}$ and $\int_\alpha^\infty \tilde{f}(x) \d x = \frac{B(k+2,k)}{\alpha^{2k+3}}$ which implies  
\begin{align*}
\int_\alpha^\infty f(x) \d x > \int_\alpha^\infty \tilde{f}(x) \d x \quad \mbox{ for } \alpha > \frac{B(k+2,k)}{B(k+1,k)} = \frac{k+1}{2k+1}.
\end{align*}
From there, we have
\begin{align*}
\lim_{\omega \rightarrow 0} \int_\alpha^{\Gamma\omega^2} f(x) \d x -\epsilon_l(\omega,\Gamma) > \lim_{\omega \rightarrow 0}E(\omega) \left[\int_\alpha^\infty \tilde{f}(x) \d x + \epsilon_r(\omega,\Gamma) \right].
\end{align*}
This means that we can find a small $\omega$ and $\Gamma = \omega^{-3}$ such that \hyperref[eqn:Mochon176]{Inequality~\ref*{eqn:FinalSufficientCondition}} holds for any $\alpha > \frac{k+1}{2k+1}$.
\end{proof}

Together with the proof of \autoref{lem:FinalSufficientCondition} (see \autoref{fsc} below) this concludes the proof of existence of a WCF protocol.

In \autoref{sec:resources}, we analyze the resources needed for the protocol, in terms of qubits and number of rounds. To do this, we need to have explicit expressions of $\omega,\Gamma$ as a function of $k$. Unfortunately, \autoref{easy} only shows the existence of $\omega,\Gamma$ without expliciting what those terms are. In \autoref{sec:resources}, we provide a  more detailed version of \autoref{easy} that allows us to keep track of the resources. Before that, we provide the proof of \autoref{lem:FinalSufficientCondition}.

\subsubsection{Proof of \autoref{lem:FinalSufficientCondition}} \label{fsc}

\begin{proof}

\noindent
\textit{Step 1: Approximation of $p$ by $f$.}
First note that, since $\Gamma \omega^2 > 1$, \hyperref[eqn:Mochon176]{Inequality~\ref*{eqn:Mochon176}} holds if the following inequality holds
\begin{align*}
	\sum_{j=\zeta}^{\Gamma\omega} p(j\omega) > \sum_{j=\zeta}^{\Gamma\omega} \frac{p(j\omega)}{j\omega}.
\end{align*}
Notice than for $j> \Gamma \omega$ we have $j\omega >1$. 

The first order of business is to show that the term $\prod_{i=1}^k \frac{\Gamma\omega+i\omega-j\omega}{\Gamma\omega+i\omega}\approx1$. We have
\begin{align*}
	1 \geq \prod_{i=1}^k \frac{\Gamma\omega+i\omega-j\omega}{\Gamma\omega+i\omega} \geq \left(1-\frac{j}{\Gamma}\right)^{k} \geq (1 - \omega)^k.
\end{align*}
Hence for all $j\in[\zeta,\Gamma\omega]$ we have
\begin{align*}
	(1-\omega)^k
	 \prod_{i=1}^{k-1} \frac{j\omega-(\alpha-i\omega)}{\alpha-i\omega}
	\prod_{i=-k}^k \frac{1}{j\omega + i\omega}
		< p(j\omega) <
	 \prod_{i=1}^{k-1} \frac{j\omega-(\alpha-i\omega)}{\alpha-i\omega}
	\prod_{i=-k}^k \frac{1}{j\omega + i\omega}.	
\end{align*}
We now bound all the term in the products individually,
\begin{align*}
	(1-\omega)^k
	\left(\frac{j\omega-\alpha}{\alpha}\right)^{k-1}
	\left(\frac{1}{j\omega+k\omega}\right)^{2k+1}
		< p(j\omega) <
	\left(\frac{j\omega-\alpha+k\omega}{\alpha-k\omega}\right)^{k-1}
	\left(\frac{1}{j\omega-k\omega}\right)^{2k+1}.	
\end{align*}
This means that the following inequality is a sufficient condition for \hyperref[eqn:Mochon176]{Inequality~\ref*{eqn:Mochon176}}:
\begin{align}
	\label{eqn:NoProduct}
	(1-\omega)^k
	\sum_{j=\zeta}^{\Gamma\omega} 
		\!\left(\frac{j\omega-\alpha}{\alpha}\right)^{\! k-1}
		\!\left(\frac{1}{j\omega+k\omega}\right)^{\! 2k+1} 
	> \sum_{j=\zeta}^{\Gamma\omega} 
		\!\left(\frac{j\omega-\alpha+k\omega}{\alpha-k\omega}\right)^{\! k-1}
		\!\left(\frac{1}{j\omega-k\omega}\right)^{\! 2k+1}
		\frac{1}{j\omega}.
\end{align}

We want to take the terms in $k\omega$ out of the sum in order to keep only terms in $j\omega$ and hence be able to replace the sums by integrals. For the LHS, we have:
\begin{align*}
	\frac{1}{j\omega+k\omega} 
		&= \frac{1}{j\omega} \cdot \frac{1}{(1+\frac{k\omega}{j\omega})}  > \frac{1}{j\omega} \cdot \frac{1}{1 + 2k\omega}
\end{align*}
since $j\omega \geq \alpha > 1/2$.

For the RHS, we first shift the sum in order to remove the term $k\omega$ in the numerator.
\begin{align*}
	\sum_{j=\zeta}^{\Gamma\omega} 
		\left(\frac{j\omega-\alpha+k\omega}{\alpha-k\omega}\right)^{k-1}
		\left(\frac{1}{j\omega-k\omega}\right)^{2k+1}
		\frac{1}{j\omega}
	&=
	\sum_{j=\zeta+k}^{\Gamma\omega+k} 
		\left(\frac{j\omega-\alpha}{\alpha-k\omega}\right)^{k-1}
		\left(\frac{1}{j\omega-2k\omega}\right)^{2k+1}
		\frac{1}{j\omega-k\omega} \\
	& < \sum_{j=\zeta+k}^{\Gamma\omega+k} 
		\left(\frac{j\omega-\alpha}{\alpha-k\omega}\right)^{k-1}
		\left(\frac{1}{j\omega-2k\omega}\right)^{2k+2}.
\end{align*}

We now bound the terms in the new RHS. We have
\begin{align}
	\label{eqn:Prefactor2}
	\frac{1}{\alpha-k\omega} < \frac{1}{\alpha} \cdot \frac{1}{1-2k\omega} < \frac{1}{\alpha} \cdot \frac{1}{1-4k\omega} \text{\quad and \quad}
	\frac{1}{j\omega-2k\omega} < \frac{1}{j\omega} \cdot  \frac{1}{1-4k\omega}.
\end{align}

Plugging these inequalities in the LHS and RHS of \autoref{eqn:NoProduct}, we get that the following inequality is a sufficient condition for \hyperref[eqn:Mochon176]{Inequality~\ref*{eqn:Mochon176}}. 
\begin{align*}
	\frac{(1-\omega)^k}{(1+2k\omega)^{2k + 1}}
	\sum_{j=\zeta}^{\Gamma\omega} 
		\left(\frac{j\omega-\alpha}{\alpha}\right)^{k-1}
		\left(\frac{1}{j\omega}\right)^{2k+1} 
	> \frac{1}{(1 - 4k\omega)^{(3k+1)}} \sum_{j=\zeta + k}^{\Gamma\omega + k} 
		\left(\frac{j\omega-\alpha}{\alpha}\right)^{k-1}
		\left(\frac{1}{j\omega}\right)^{2k+2}
\end{align*}
Consequently, using $(1 - \omega) \geq (1 - 4k\omega)$, we have that if the following inequality holds
 \begin{align*}
 	\sum_{j=\zeta}^{\Gamma\omega}
		\left(\frac{j\omega-\alpha}{\alpha}\right)^{k-1}
		\left(\frac{1}{j\omega}\right)^{2k+1}
	>
	E(\omega)
	\sum_{j=\zeta+k}^{\Gamma\omega+k}
		\left(\frac{j\omega-\alpha}{\alpha}\right)^{k-1}
		\left(\frac{1}{j\omega}\right)^{2k+2}
\end{align*}
with $E(\omega) = \frac{\left(1+2k\omega\right)^{2k+1}}{\left(1-4k\omega \right)^{4k+1}}$,
then \hyperref[eqn:Mochon176]{Inequality~\ref*{eqn:Mochon176}} also holds. 

{~}

\noindent
\textit{Step 2: Approximating sums by integrals.} The two sums can be approximately computed by replacing them by an integral, using the so-called rectangle method:
\begin{align*}
 \omega \sum_{j=\zeta}^{\Gamma\omega}
		\left(\frac{j\omega-\alpha}{\alpha}\right)^{k-1}
		\left(\frac{1}{j\omega}\right)^{2k+1}
	&>
	\int_\alpha^{\Gamma\omega^2} f(x)\d x - \frac{\Gamma\omega^2-\alpha}{24}\omega^2\abs{\fpp}_\infty, \\
	 \omega \sum_{j=\zeta+k}^{\Gamma\omega+k}
		\left(\frac{j\omega-\alpha}{\alpha}\right)^{k-1}
		\left(\frac{1}{j\omega}\right)^{2k+2}
	&<
	\int_{\alpha+k\omega}^{\Gamma\omega^2+k\omega} \tilde{f}(x)\d x  + \frac{\Gamma\omega^2-\alpha}{24}\omega^2\abs{\tilde\fpp}_\infty,
\end{align*}
where $\tilde{f}(x) = f(x)/x$. Notice that the error terms are upper-bounded by $\epsilon_l(\omega,\Gamma) = \Gamma\omega^4\abs{\fpp}_\infty$ and $\epsilon_r(\omega,\Gamma) = \Gamma\omega^4\abs{\tilde{\fpp}}_\infty$ respectively.

Last, observe that  $\int_{\alpha+k\omega}^{\Gamma\omega^2+k\omega} \tilde{f}(x)\d x  < \int_{\alpha}^{\infty} \tilde{f}(x)\d x$ since $\tilde{f} \geq 0$ on $[\alpha,+\infty)$.
\end{proof}

\section{Resources of the protocol}\label{sec:resources}

We now provide an alternative way to finishing the proof of existence provided in \autoref{conclude}, which allows us to analyze the resources needed for the protocol.
In \autoref{subsec:split}, we provide a  more detailed version of \autoref{easy} that allows us to keep track of the resources. In \autoref{subsec:resources}, we show that the number of qubits used for a protocol with bias $\eps$ is $O(\log \frac{1}{\eps})$, while the number of rounds is $\left(\frac{1}{\eps}\right)^{O(\frac{1}{\eps})}$.

\subsection{Refined analysis of strict validity of initial splits}\label{subsec:split}

Given \autoref{lem:FinalSufficientCondition}, we will find specific values for the parameters $\alpha, \omega$, and $\Gamma$ for which \autoref{eqn:FinalSufficientCondition} is satisfied. These parameters will allow us to calculate the resources necessary for the protocol. 
\begin{lem}\label{LastLemma}
For any $k$, by taking $\alpha = \frac{1}{2}+ \frac{c}{k}$ for some constant $c$, 
	$\omega = k^{-4}$ and $\Gamma = 2k^8$, \hyperref[eqn:FinalSufficientCondition]{Inequality~\ref*{eqn:FinalSufficientCondition}}  holds.
\end{lem}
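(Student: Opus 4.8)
The plan is to substitute the stated values $\omega = k^{-4}$ and $\Gamma = 2k^8$ directly into \eqref{eqn:FinalSufficientCondition} and verify it holds for $\alpha = \ahalf + \frac{1}{k}$ (i.e.\ $c=1$; any $c>1/4$ works), once $k$ is large enough; the finitely many remaining values of $k$ are irrelevant, as the construction is only used in the regime $\eps \to 0$, i.e.\ $k \to \infty$. With these choices one has $\Gamma\omega^2 = 2$, $\Gamma\omega^4 = 2k^{-8}$ and $k\omega = k^{-3}$, so in particular $\omega < 1$, $\Gamma\omega^2 > 1$ and $\alpha > \ahalf$: the hypotheses of \autoref{lem:FinalSufficientCondition} are met, and it suffices to establish \eqref{eqn:FinalSufficientCondition}. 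The strategy is to divide \eqref{eqn:FinalSufficientCondition} through by $\int_\alpha^\infty f$ and show that its left side becomes $\ge 1 - \bigO{k^{-5}}$ while its right side becomes $\le \frac{k+1}{(2k+1)\alpha} + \bigO{k^{-2}}$, so that the inequality reduces to $1 - \frac{k+1}{(2k+1)\alpha} > \bigO{k^{-2}}$ — and this ``relative gap'' will turn out to be $\Omega(1/k)$.

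For the main terms I would use the Beta-function evaluation already carried out in the proof of \autoref{easy}, namely $\int_\alpha^\infty f = B(k+1,k)\,\alpha^{-2k-2}$ and $\int_\alpha^\infty \tilde f = B(k+2,k)\,\alpha^{-2k-3}$; since $B(k+2,k)/B(k+1,k) = (k+1)/(2k+1)$ this gives
\begin{align*}
	\int_\alpha^\infty \tilde f = \frac{k+1}{(2k+1)\alpha}\int_\alpha^\infty f,
	\qquad
	1 - \frac{k+1}{(2k+1)\alpha} = \frac{(2k+1)\alpha - (k+1)}{(2k+1)\alpha} = \frac{\tfrac32 + \tfrac1k}{(2k+1)\alpha},
\end{align*}
where the last numerator uses $c=1$. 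Thus the relative gap is positive (equivalently $\frac{k+1}{(2k+1)\alpha} < 1$, which holds since $\alpha > \frac{k+1}{2k+1}$) and is $\Omega(1/k)$. I only need a crude lower bound on $\int_\alpha^\infty f$: from $B(k+1,k) = \bigl(k\binom{2k}{k}\bigr)^{-1} \ge (k\,4^k)^{-1}$ and $\alpha^{-2k-2} = 4^{k+1}(1+2/k)^{-2k-2} \ge 4^{k+1}e^{-8}$, we get $\int_\alpha^\infty f \ge c_1/k$ for an absolute constant $c_1>0$.

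It then remains to bound three error contributions and check each is negligible against this $\Omega(1/k)$ gap. \textbf{(i) Truncation of the left integral.} Writing $f(x) = \bigl(\tfrac{x-\alpha}{\alpha x^2}\bigr)^{k-1}x^{-3}$, the map $x \mapsto \tfrac{x-\alpha}{\alpha x^2}$ is decreasing on $[2\alpha,\infty)$ and at $x=2$ equals $\tfrac{2-\alpha}{4\alpha} < \tfrac34$, so $\int_{\Gamma\omega^2}^\infty f = \int_2^\infty f \le (3/4)^{k-1}\!\int_2^\infty x^{-3}\,\d x = (3/4)^{k-1}/8$, which is exponentially small; hence $\int_\alpha^{\Gamma\omega^2}f \ge \int_\alpha^\infty f - (3/4)^{k-1}/8$. \textbf{(ii) Rectangle-rule errors.} On $[\alpha,2]$ the same form of $f$ with $\tfrac{x-\alpha}{\alpha x^2} \le \tfrac{1}{4\alpha^2} < 1$ gives $\abs{f}_\infty \le \alpha^{-3} < 8$; differentiating $f$ in the variable $u = x-\alpha$, every term of $f'$ and $\fpp$ carries a factor $\bigl(\tfrac{u}{(u+\alpha)^2}\bigr)^m \le (4\alpha)^{-m}$ — which exactly neutralises the order-$(k-1)$ zero of $f$ at $x = \alpha$ — times coefficients that are polynomial in $k$, so $\abs{\fpp}_\infty = \bigO{k^2}$, and likewise $\abs{\tilde\fpp}_\infty = \bigO{k^2}$ (since $x$ stays bounded away from $0$ on the relevant interval). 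Therefore $\epsilon_l = \Gamma\omega^4\abs{\fpp}_\infty = 2k^{-8}\bigO{k^2} = \bigO{k^{-6}}$ and likewise $\epsilon_r = \bigO{k^{-6}}$. \textbf{(iii) The prefactor.} With $k\omega = k^{-3}$, both $(2k+1)\log(1+2k^{-3})$ and $-(4k+1)\log(1-4k^{-3})$ are $\bigO{k^{-2}}$, so $E(\omega) = 1 + \bigO{k^{-2}}$.

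Assembling: dividing \eqref{eqn:FinalSufficientCondition} by $\int_\alpha^\infty f \ge c_1/k$, its left side is $\ge 1 - \frac{(3/4)^{k-1}/8 + \epsilon_l}{\int_\alpha^\infty f} \ge 1 - \bigO{k^{-5}}$, and its right side is $E(\omega)\bigl(\frac{\int_\alpha^\infty\tilde f}{\int_\alpha^\infty f} + \frac{\epsilon_r}{\int_\alpha^\infty f}\bigr) = \bigl(1 + \bigO{k^{-2}}\bigr)\bigl(\frac{k+1}{(2k+1)\alpha} + \bigO{k^{-5}}\bigr) \le \frac{k+1}{(2k+1)\alpha} + \bigO{k^{-2}}$, where we used $\frac{k+1}{(2k+1)\alpha}\le 1$. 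Hence \eqref{eqn:FinalSufficientCondition} follows from $1 - \frac{k+1}{(2k+1)\alpha} > \bigO{k^{-2}}$, and since the left side here equals $\frac{3/2 + 1/k}{(2k+1)\alpha} = \Omega(1/k)$, this holds for all sufficiently large $k$. Together with the reduction provided by \autoref{lem:FinalSufficientCondition}, this proves \autoref{LastLemma} (the small values of $k$ being absorbed by enlarging $c$ or checked directly). I expect the one genuinely non-routine step to be (ii): bounding $\abs{\fpp}_\infty$ and $\abs{\tilde\fpp}_\infty$ by $\bigO{k^2}$ on $[\alpha,2]$ in spite of the high-order zero of $f$ at the left endpoint — which is precisely why one first rewrites $f$ as $\bigl(\tfrac{x-\alpha}{\alpha x^2}\bigr)^{k-1}x^{-3}$, so that the dangerous power is uniformly bounded by $1$ before differentiating.
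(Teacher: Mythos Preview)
Your proof is correct and structurally identical to the paper's: both evaluate $\int_\alpha^\infty f$ and $\int_\alpha^\infty \tilde f$ via Beta functions to isolate the main gap $1-\tfrac{k+1}{(2k+1)\alpha}=\Omega(1/k)$, then show the three error contributions (tail truncation, the rectangle errors $\epsilon_l,\epsilon_r$, and $E(\omega)-1$) are each $o(1/k)$. The only differences are in two estimates, and in both places your version is the more elementary one. For the tail $\int_2^\infty f$, the paper rewrites it via the regularized incomplete Beta function $I(\alpha/2;k+1,k)$ and then invokes a Chernoff bound, whereas your pointwise estimate $f(x)\le(3/4)^{k-1}x^{-3}$ on $[2,\infty)$ gives the exponential decay directly. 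For $|f''|_\infty$, the paper bounds each of the three terms of $f''$ by $O(k^2\alpha^{-k})\,(x-\alpha)^{k-3}/x^{2k+1}$ and then explicitly locates the maximum of that function, obtaining the sharper bound $O(k^2(2\alpha)^{-2k})$; your rewriting $f=\bigl(\tfrac{x-\alpha}{\alpha x^2}\bigr)^{k-1}x^{-3}$ with base uniformly $\le(4\alpha^2)^{-1}<1$ yields $O(k^2)$ without any optimization. Either bound suffices, since the prefactor $\Gamma\omega^4=2k^{-8}$ already forces $\epsilon_l,\epsilon_r=O(k^{-6})$, well below the $\Omega(1/k)$ gap.
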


Note first, that the values  $\alpha = \frac{1}{2}+\frac{c}{k}$ for some constant $c$, $\omega = k^{-4}$ and $\Gamma = 2k^8$ satisfy the assumptions of the \autoref{lem:FinalSufficientCondition}.

\begin{proof}
We are going to prove that for the chosen values of $\omega$ and $\Gamma$, it is sufficient to chose $\alpha > \frac{k+1}{2k+1} + O\!\left(\frac{1}{k^2}\right) = \frac{1}{2} + \frac{1}{4k+2}+O\!\left(\frac{1}{k^2}\right)$ for  \hyperref[eqn:FinalSufficientCondition]{Inequality~\ref*{eqn:FinalSufficientCondition}}  to hold.

We have
\begin{align*}
\int_\alpha^{\Gamma\omega^2} f(x) \d x 
	&= \int_\alpha^{\infty} f(x) \d x - \int_{\Gamma\omega^2}^\infty f(x) \d x 
	= \frac{B(k+1,k)}{\alpha^{2k+2}} - B(k+1,k)\frac{I(\alpha/\Gamma\omega^2;k+1,k)}{\alpha^{2k+2}} \\
	& = \frac{B(k+1,k)}{\alpha^{2k+2}}(1-I(\alpha/\Gamma\omega^2;k+1,k)).
\end{align*}
Recall that we have $\Gamma\omega^2 = 2$, and use \autoref{eqn:BetaWithInteger} to get
\begin{align*}
	1- I(\alpha/2;k+1,k) 
		= \sum_{j=0}^k  {2k \choose j} \left(\frac{\alpha}{2}\right)^j \left(1-\frac{\alpha}{2}\right)^{2k-j}.
\end{align*}
Since we have $\alpha/2 < 1/2$, we can use the Chernoff bound~\cite{Che52}:
\begin{align}
	1-I(\alpha/2;k+1,k)  > 1-\alpha^k e^{(k-k\alpha)} = 1 - e^{k(1-\alpha+\log\alpha)} > 1-e^{-\Omega(k)}. \label{eqn:integralBigO}
\end{align}

Let us now bound the error term $\epsilon_l(\omega,\Gamma)=\epsilon_l(k^{-4},2k^8) = 2k^{-8}\abs{\fpp}_\infty$. We have
\begin{align*}
	\fpp(x) 
		&= \frac{1}{\alpha^{k+1}}
			\left[ 
				(k-1)(k-2)\frac{(x-\alpha)^{k-3}}{x^{2k+1}}
				-2(k-1)(2k+1)\frac{(x-\alpha)^{k-2}}{x^{2k+2}}
				+(2k+1)(2k+2)\frac{(x-\alpha)^{k-1}}{x^{2k+3}}
			\right]	
\end{align*}
hence, $\abs{\fpp(x)} \leq O\!\left(\frac{k^2}{\alpha^k}\right) \frac{(x-\alpha)^{k-3}}{x^{2k+1}}$. We define $g(x) = \frac{(x-\alpha)^{k-3}}{x^{2k+1}}$ and we find that $g'(x) = 0 \Leftrightarrow x = \alpha\frac{2k+1}{k+4}$, thus 
\begin{align*}
	\abs{g}_\infty  = \frac{1}{\alpha^{k+2}}\left(\frac{k-3}{k+4}\right)^{k-3}\left(\frac{k+4}{2k+1}\right)^{2k+1}.
\end{align*}	
Recall that $\epsilon_l(\omega,\Gamma) = \Gamma\omega^4\abs{\fpp}_\infty$.
Using the fact that $\lim_{k\to\infty} \left(1+\Theta(1/k)\right)^k = \Theta(1)$, we get 
\begin{align*}
\abs{g}_\infty & = \frac{1}{\alpha^{k+2}}\left(\frac{k-3}{k+4}\right)^{k-3}\left(\frac{k+4}{2k+1}\right)^{2k+1} \\
& \le \frac{1}{\alpha^{k+2}} \frac{1}{2^{2k+1}} \left(1 + \frac{7}{2k+1}\right)^k \le O\!\left((4\alpha)^{-k}\right).
\end{align*}
From there, we have $\abs{f''}_\infty \leq O\!\left(\frac{k^2}{(2\alpha)^{2k}}\right)$ and in turn
\begin{align}
	\label{eqn:epsilonLBigO}
	\epsilon_l(k^{-4},2k^8) \leq 2k^{-8} \abs{f''}_\infty \leq O\!\left(k^{-6}(2\alpha)^{-2k}\right).
\end{align}
Using a similar technique, we also get
\begin{align}
	\label{eqn:epsilonRBigO}
	\epsilon_r(k^{-4},2k^8) \leq O\!\left(k^{-6}(2\alpha)^{-2k}\right).
\end{align}

The last term to bound is $E(\omega)= E(k^{-4}) = \frac{\left(1+2k^{-3}\right)^{2k+1}}{\left(1-4k^{-3} \right)^{4k+1}}$. Since $\left(1+O(k^{-3})\right)^k = 1 + O(k^{-2})$, we get
\begin{align}
	\label{eqn:CBigO}
	E(k^{-4}) = 1 + O\!\left(k^{-2}\right).
\end{align}
To conclude, in \hyperref[eqn:FinalSufficientCondition]{Inequality~\ref*{eqn:FinalSufficientCondition}}, we replace all the terms by their asymptotic equivalent computed in Equations~\ref{eqn:integralBigO}, \ref{eqn:epsilonLBigO}, \ref{eqn:epsilonRBigO}, and \ref{eqn:CBigO}: 
\begin{align*}
	\frac{B(k+1,k)}{\alpha^{2k+2}} > \frac{B(k+2,k)}{\alpha^{2k+3}} + O\!\left(\frac{1}{k^2}\right)\frac{B(k+2,k)}{\alpha^{2k+3}} + O\!\left(\frac{k^{-6}}{(2\alpha)^{2k}}\right).
\end{align*}
This means that \hyperref[eqn:FinalSufficientCondition]{Inequality~\ref*{eqn:FinalSufficientCondition}} holds for
\begin{align*}
	\alpha > \frac{k+1}{2k+1} + O\!\left(\frac{1}{k^2}\right) + O\!\left(\frac{1}{k^6 2^{2k} B(k+1,k)}\right) = \frac{1}{2} + O\!\left(\frac{1}{k}\right).
\end{align*}
We used Sterling's formula to show that the last term is $o(\frac{1}{k})$, since we have $\frac{1}{B(k+1,k)} = O\!\left(\frac{\sqrt{k}(2k/e)^{2k}}{(k/e)^{2k}}\right) = O(\sqrt{k}2^{2k})$. 
\end{proof}

\subsection{Bounds on the resources of the protocol}
\label{subsec:resources}

We are now ready to quantify the necessary resources for the protocol.
\begin{thm}\label{thm:resources}
For all $\eps >0$, there exists a quantum weak coin flipping protocol with cheating probabilities $P_A^* = P_B^* < 1/2 + \eps$ with $O(\log \frac{1}{\eps})$ qubits and $\left( \frac{1}{\eps} \right) ^{O(\frac{1}{\eps})}$ messages. 
\end{thm}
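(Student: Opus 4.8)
The plan is to collect the chain of reductions established in the paper and instantiate it with the explicit construction of Sections~\ref{sec:construction}--\ref{subsec:split}, keeping track of the parameters along the way. First I would fix $k$ and apply \autoref{LastLemma}: with $\alpha = \frac12 + \frac{c}{k}$, $\omega = k^{-4}$, and $\Gamma = 2k^8$, \hyperref[eqn:FinalSufficientCondition]{Inequality~\ref*{eqn:FinalSufficientCondition}} holds, hence by \autoref{lem:FinalSufficientCondition} so does \hyperref[eqn:Mochon176]{Inequality~\ref*{eqn:Mochon176}}, hence by \autoref{thm:validsplits} (via \autoref{lem:ladder}) we obtain a time independent point game with valid functions $h = h_{\mathrm{split}}+h_{\mathrm{lad}}$ and $v = v_{\mathrm{split}}+v_{\mathrm{lad}}$ and final point $[\alpha,\alpha]$ with $\alpha = \frac12 + O(\frac1k)$. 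Choosing $k = \Theta(1/\eps)$ makes $\alpha - \frac12$ as small as desired.

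Next I would feed this TIPG through the succession of models summarized in \autoref{fig:models}, but now reading off resource bounds rather than mere existence. From \autoref{cor:numberOfRounds}, the TIPG yields a point game with valid transitions and final point $[\alpha+\eps',\alpha+\eps']$ using $O\!\left(\frac{\norm{h}\,\Gamma^2}{\eps'^2}\right)$ transitions; then \autoref{thm:ValidReverse} converts this to a point game with the same number of EBM transitions and final point shifted by another additive $\eps'$; finally \autoref{thm:EBMReverse} turns the EBM point game into a WCF protocol with $P_A^* , P_B^* \le \alpha + O(\eps')$, where by \autoref{thm:EBMReverse}'s construction each EBM line transition corresponds to one message round, so the number of messages $n$ equals (up to constants) the number of transitions. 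Symmetry of the construction, \autoref{eqn:symmetric}, gives $P_A^* = P_B^*$; rescaling $\eps'$ so that $\alpha + O(\eps') < \frac12 + \eps$ costs only constant factors. It remains to estimate $\norm{h}$, $\Gamma$, and the size of the point sets $S_A, S_B$ in terms of $k$, hence in terms of $\eps$.

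For the qubit count I would invoke the dimension bookkeeping in the proof of \autoref{thm:EBMReverse}: the protocol uses $2\lceil\log|S_A|\rceil + 2\lceil\log|S_B|\rceil + 2$ qubits, where $S_A, S_B$ are the sets of all first/second coordinates of points appearing in the game. All such points lie on the grid of step $\omega = k^{-4}$ with largest coordinate $\Gamma\omega = 2k^4$, so $|S_A|, |S_B| \le \Gamma = 2k^8$, giving $O(\log k) = O(\log\frac1\eps)$ qubits. For the round count I would bound $\norm{h}$: $h$ is supported on $O(\Gamma\cdot k)$ points (a ladder of $\Gamma - \zeta$ rungs, each with $2k+1$ points), and each weight, being of the form $\frac{C f(x_i)}{\prod_{j\ne i}(x_j - x_i)}$ with $f$ the explicit polynomial~\eqref{f} evaluated on a grid of step $\omega$, is bounded by $(1/\omega)^{O(k)} = k^{O(k)}$ after estimating the products of gaps (each gap is a multiple of $\omega$) and the normalization $C$ (bounded via the Beta-function estimates already used in \autoref{LastLemma}). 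Thus $\norm{h} \le k^{O(k)}$ and $\Gamma = 2k^8$, so the number of messages is $O\!\left(\frac{\norm{h}\Gamma^2}{\eps'^2}\right) = k^{O(k)} = \left(\frac1\eps\right)^{O(1/\eps)}$.

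The main obstacle I expect is the last estimate: carefully bounding $\norm{h}$ (equivalently, the individual ladder weights and the normalizing constant $C$) as a function of $k$. The weights involve differences of grid points in the denominator, so one must check no catastrophic cancellation blows them up beyond $k^{O(k)}$; the numerator polynomial $f$ has degree $2k-1$ in each variable with coordinates of size $O(k^4)$, and the product $\prod_{l=-k}^k\omega(j+l)$ in the denominator of $\mathrm{split}(j)$ is roughly $\omega^{2k+1}(j!)$-like, so the bound $k^{O(k)}$ should follow from Stirling-type estimates, but the arithmetic is where the real work lies. Everything else is assembling already-proven theorems and tracking constants through the reductions.
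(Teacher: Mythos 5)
Your proposal is correct and matches the paper's own proof essentially step for step: same parameter choices ($k=\Theta(1/\eps)$, $\omega=k^{-4}$, $\Gamma=2k^8$, $\alpha=\frac12+\frac{c}{k}$), same chain of reductions through \autoref{cor:numberOfRounds}, \autoref{thm:ValidReverse}, and \autoref{thm:EBMReverse}, same qubit bookkeeping via $\log|S_A|+\log|S_B|=O(\log k)$, and the same strategy for bounding $\norm{h}\le k^{O(k)}$ by estimating the normalization constant $C$ and the individual ladder weights via the grid-spacing products in the denominators. The only part you leave unfinished — the Stirling-type arithmetic for $\norm{h}$ and $C$ — is precisely what the paper carries out in its Step 2, and your sketch of how to do it (bound the single term $j=\zeta$ in the sum defining $1/(2C)$ from below, bound $|f|$ from above, bound the denominator products $\prod\omega|l-i|$ from below) is the correct route.
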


\begin{proof}
We consider the above TIPG with $k = \lceil\frac{c}{\eps}\rceil$,  $\omega = k^{-4}$, $\Gamma = 2k^8$ and $\alpha = \frac{1}{2}+ \frac{c}{k} = \frac{1}{2}+\eps$ where $c$ is a constant such that \autoref{thm:validsplits} holds with this $\alpha$ for each $k$.

{~} 

\textit{Step 1: Number of qubits.}
It is clear from the different steps of the construction that the number of qubits needed is equal to the logarithm of the number of different points in the point game. The number of points are no more than $O(\Gamma k) = O(k^9)$, which implies that the number of qubits is $O(\log \frac{1}{\eps})$. 

{~} 

\textit{Step 2: Number of rounds.} 
By \autoref{cor:numberOfRounds}, we need to to upper-bound $\norm{h}$ where \
\begin{align*}
	\norm{h}=1/2 + \sum_{j=\zeta}^{\Gamma} \sum_{\substack{i=-k \\ i\neq 0}}^{k}
		\frac	{C\cdot \abs{f(\left(j+i\right)\omega,\, j\omega)}}
			{((j+i)\omega)(j\omega) \underset{\substack{l \neq i \\ l \neq 0}}{\prod}\omega\abs{l-i}}.
\end{align*}
The other terms in the Corollary are all polynomial in $k$. We will do some crude approximations in the terms below, but the asymptotic behaviour is not going to change. The main property is that there are points in the ladder, for example the ones at the edges, whose weight is of the order $k^{O(k)}$.

Let us first bound $C$ defined by $\frac{1}{2C} = \sum_{\zeta}^\Gamma \frac{f(0,j\omega)}{\prod \omega(j+l)}$. Since all the terms in the sum are positive, we can lower bound the sum by the term for $j=\zeta$.
We have
\begin{align*}
	f(0, \zeta \omega) 
		&= \prod_{i=1}^{k-1}(\alpha-i\omega)(i\omega)
	              \prod_{i=1}^{k} (\Gamma\omega+i\omega)(\Gamma \omega + i\omega-\alpha) 
	        \geq (\alpha-k\omega)^{k-1} ( \omega)^{k-1} (\Gamma\omega)^k (\Gamma \omega -\alpha)^k \\
	        & \geq k^{\Omega(k)},
\end{align*}
and the denominator by,
\begin{align*}
	\prod_{l\neq0}(\alpha+ l \omega ) \leq (\alpha+k\omega)^{2k} = 2^{-O(k)}.
\end{align*}
This gives us $C \leq k^{-O(k)}$.
\COMMENT{
\begin{align*}
C & \leq \frac{(\Gamma \omega+k\omega)^{2k}}{(\Gamma \omega-\alpha)^{k} } = k^{O(k)}
\end{align*}
\begin{align*}
	\frac{1}{2C} 
		&\geq (\alpha-k\omega)^{k-1} (\Gamma\omega)^k \omega^k  
			\sum_{\zeta}^\Gamma \frac{ (j\omega-\alpha)^{k-1} }{(j\omega+k\omega)^{2k}}.
\end{align*}
Moreover, we know from previous calculations that 
\begin{align*}
	\omega \sum_{\zeta}^\Gamma \frac{ (j\omega-\alpha)^{k-1} }{(j\omega+k\omega)^{2k}} 
		= \Theta\!\left(  \int_\alpha^\infty \left(x-\alpha\right)^{k-1}x^{-2k} \d x\right) 
		= \Theta\!\left(\frac{B(k,k)}{\alpha^k}\right).
\end{align*}
Hence,
\begin{align*}
C &\leq O\!\left( \Gamma^{-k} \omega^{1-2k} B(k,k)^{-1} \right)
	\leq O\!\left( 2^{-k} k^{-16k-4+8k} \sqrt{k} 2^{2k} \right) \\
    &\leq O\!\left( 2^k k^{-7/2} k^{-8k}\right)
\end{align*}
}

Now, note that the norm is given by a sum with a number of terms polynomial in $k$, hence it suffices to give an upper bound on the ratio 
\begin{align*}
S  = 		\frac	{\abs{f(\left(j+i\right)\omega,\, j\omega)}}
			{((j+i)\omega)(j\omega) \underset{\substack{l \neq i \\ l \neq 0}}{\prod}\omega\abs{l-i}}.
\end{align*}
The numerator can be bounded by:
\begin{align*}
\abs{f((j+i)\omega,j\omega)}  
	&= \prod_{l=1}^{k-1}((j+i+l)\omega-\alpha)((j+l)\omega-\alpha)
	              \prod_{l=1}^{k} (\Gamma\omega+(l-i-j)\omega)(\Gamma\omega+(l-j)\omega)\\
	& \leq (\Gamma\omega + 2k\omega -\alpha)^{4k-2}  = k^{O(k)},
\end{align*}
and the denominator by,
\begin{align*}
	((j+i)\omega)(j\omega) \underset{\substack{l \neq i \\ l \neq 0}}{\prod}\omega\abs{l-i}
		\geq (\alpha-k\omega)\alpha\omega^{2k-1} (k!)^2
		\geq k^{-\Omega(k)}.
\end{align*}
Thus 
$S \leq k^{O(k)}$. 

Hence, we have proved that
$\norm{h} \leq k^{O(k)}$.
Using \autoref{cor:numberOfRounds}, we finally get that the number of rounds is upper-bounded by
$\left( \frac{1}{\eps}\right)^{O(\frac{1}{\eps})} $.
\COMMENT{
\begin{align*}
O\!\left( 2^k (4e)^{2k} k^{3} k^{30k} \right) = k^{O(k)} = \left( \frac{1}{\eps}\right)^{O(\frac{1}{\eps})} .
\end{align*}
}
\end{proof}

We see that the protocol is very efficient in the number of qubits that it uses. However, our analysis shows that the number of rounds is exponential. It could be the case that by choosing different values for $\omega$ and $\Gamma$, one can reduce the number of rounds. Nevertheless, our intuition, backed with numerical evidence, is that for this protocol one would always need an exponential number of rounds. Another way would be to try to find a more efficient way to turn a TIPG into a point game with valid transitions. Last, it remains open to find a simpler and more efficient point game that can be easily transformed into an easy-to-describe protocol.

\section*{Acknowledgments}
We would like to thank Peter H\o{}yer for useful comments on a preliminary version of the paper. 

\bibliographystyle{alphaurl}
\bibliography{wcf}

\end{document}